\documentclass[a4paper,UKenglish,cleveref,autoref,thm-restate]{article}

\usepackage{microtype}
\usepackage{hyperref}
\usepackage{authblk}
\usepackage{orcidlink}
\usepackage{amsmath}
\usepackage{amsfonts}
\usepackage{amssymb}
\usepackage{amsthm}
\usepackage{mathtools}
\usepackage{scalerel}
\usepackage{moreverb}
\usepackage{cite}
\usepackage{braket}
\usepackage{stmaryrd}
\usepackage{cmll}
\usepackage{array}
\usepackage{algorithm2e}
\usepackage{bussproofs}
\usepackage{tikz-cd}
\usepackage{xspace}
\usepackage{tipa}
\usepackage[draft]{tikzit}
\input{blocks.tikzdefs}

\tikzstyle{box}=[fill=white, draw=black, shape=rectangle, minimum width=5mm, minimum height=5mm]
\tikzstyle{dot}=[fill=black, draw=black, shape=circle]
\tikzstyle{minidot}=[dot, inner sep=1pt]
\tikzstyle{discard}=[circuit ee IEC, thick, ground, rotate=90, scale=1.5, inner sep=-2mm, tikzit fill={rgb,255: red,176; green,176; blue,176}]
\tikzstyle{maxmix}=[circuit ee IEC, thick, ground, rotate=-90, scale=1.5, inner sep=-2mm, tikzit fill={rgb,255: red,102; green,102; blue,102}]

\tikzstyle{arr}=[->]
\tikzstyle{shade}=[-, fill={rgb,255: red,191; green,191; blue,191}, opacity=.5]
\tikzstyle{empty}=[-, fill=white]
\tikzstyle{dashes}=[-, dashed]
\tikzstyle{blue}=[-, draw=blue]
\tikzstyle{red}=[-, draw=red]
\tikzstyle{redarr}=[->, draw=red]

\allowdisplaybreaks

\newcommand{\MatRp}{\ensuremath{\textrm{Mat}[\mathbb R^+]}\xspace}
\newcommand{\MatRaff}{\ensuremath{\textrm{Mat}[\mathbb R]}\xspace}

\newcommand{\CPs}{\ensuremath{\textrm{CP}^*}\xspace}
\newcommand{\aff}{\textsf{aff}}
\newcommand{\affp}{\textsf{aff}^+}

\newcommand{\id}{\mathrm{id}}
\newcommand{\mll}{\mathrm{MLL}}
\newcommand{\mllmix}{\mathrm{MLL+Mix}}
\newcommand{\bv}{\mathrm{BV}}
\newcommand{\caus}[1]{\mathrm{Caus}\left[#1\right]}
\newcommand{\sub}[1]{\mathrm{Sub}\left(#1\right)}
\newcommand{\catc}{\mathcal{C}}
\newcommand{\ob}[1]{\mathrm{Ob}\left(#1\right)}
\newcommand{\focat}[1]{\mathrm{FO}\left(#1\right)}
\newcommand{\Var}{\mathrm{Var}}
\newcommand{\fsub}[1]{\left[#1\right]}
\newcommand{\tbool}{\texttt{t}}
\newcommand{\fbool}{\texttt{f}}

\makeatletter
\def\namedlabel#1#2{\begingroup
    #2%
    \def\@currentlabel{#2}%
    \phantomsection\label{#1}\endgroup
}
\makeatother

\newcommand\discard{
\mathbin{\text{\begin{tikzpicture}[circuit ee IEC,yscale=0.6,xscale=0.5]
\draw (0,-2ex) to (0,0) node[ground,rotate=90,xshift=.65ex] {};
\end{tikzpicture}}}
}
\newcommand\maxmix{
\mathbin{\text{\begin{tikzpicture}[circuit ee IEC,yscale=0.6,xscale=0.5]
\draw (0,2ex) to (0,0) node[ground,rotate=-90,xshift=.65ex] {};
\end{tikzpicture}}}
}
\newcommand\ndiscard{
\mathbin{\text{\begin{tikzpicture}[yscale=0.6,xscale=0.5]
\draw (0,-2ex) to (0,0) node[minidot] {};
\end{tikzpicture}}}
}
\newcommand\nmaxmix{
\mathbin{\text{\begin{tikzpicture}[yscale=0.6,xscale=0.5]
\draw (0,2ex) to (0,0) node[minidot] {};
\end{tikzpicture}}}
}

\DeclareMathOperator*{\bigseq}{\scalerel*{<}{\textstyle\sum}}

\newenvironment{scprooftree}[1]%
  {\gdef\scalefactor{#1}\begin{center}\proofSkipAmount \leavevmode}%
  {\scalebox{\scalefactor}{\DisplayProof}\proofSkipAmount \end{center} }

\theoremstyle{definition}
\newtheorem{theorem}{Theorem}[section]
\newtheorem{lemma}[theorem]{Lemma}
\newtheorem{corollary}[theorem]{Corollary}
\newtheorem{definition}[theorem]{Definition}
\newtheorem{remark}[theorem]{Remark}
\newtheorem{proposition}[theorem]{Proposition}

\newtheorem{example}[theorem]{Example}

\title{A complete logic for causal consistency}

\author[1,2]{Will Simmons \orcidlink{0000-0001-7996-6577} \thanks{will.simmons@quantinuum.com}}
\author[1]{Aleks Kissinger \orcidlink{0000-0002-6090-9684}}

\affil[1]{Department of Computer Science, University of Oxford, Oxford, UK}
\affil[2]{Quantinuum, Terrington House, 13-15 Hills Road, Cambridge, UK}

\date{}

\begin{document}

\maketitle

\begin{abstract}
The $\caus{-}$ construction takes a base category of ``raw materials'' and builds a category of higher order causal processes, that is a category whose types encode causal (a.k.a. signalling) constraints between collections of systems. Notable examples are categories of higher-order stochastic maps and higher-order quantum channels. Well-typedness in $\caus{-}$ corresponds to a composition of processes being causally consistent, in the sense that any choice of local processes of the prescribed types yields an overall process respecting causality constraints. It follows that closed processes always occur with probability 1, ruling out e.g. causal paradoxes arising from time loops. It has previously been shown that $\caus{\mathcal C}$ gives a model of MLL+MIX and BV logic, hence these logics give sufficient conditions for causal consistency, but they fail to provide a complete characterisation. In this follow-on work, we introduce graph types as a tool to examine causal structures over graphs in this model. We explore their properties, standard forms, and equivalent definitions; in particular, a process obeys all signalling constraints of the graph iff it is expressible as an affine combination of factorisations into local causal processes connected according to the edges of the graph. The properties of graph types are then used to prove completeness for causal consistency of a new causal logic that conservatively extends pomset logic. The crucial extra ingredient is a notion of distinguished atoms that correspond to first-order states, which only admit a flow of information in one direction. Using the fact that causal logic conservatively extends pomset logic, we finish by giving a physically-meaningful interpretation to a separating statement between pomset and $\bv$.
\end{abstract}

\section{Introduction}\label{sec:intro}

\textit{Black box causal reasoning} aims to reason about causal relationships - whether choices made at one point can influence behaviour at another - between the inputs and outputs of a (possibly unknown) process purely by studying its compositional properties. For example, one can formulate a family of so-called \textit{non-signalling} processes by positing the existence of a preferred \textit{discarding map} $\discard$ and requiring that a process $\Phi$ with two inputs and two outputs decomposes as follows when one or the other system is discarded:
\begin{equation}\label{eq:non-signalling}
\tikzfig{non-signalling}  
\end{equation}
for some reduced processes $\phi_1, \phi_2$. Intuitively, this captures the notion that an agent who only has access to the left input/output pair of $\Phi$ cannot use the $\Phi$ process to send a message to another agent that only has access to the right input/output pair, and vice-versa.

A related concept to non-signalling is \textit{localisability}. A localisable process is one that decomposes as:
\begin{equation}\label{eq:localisable}
\tikzfig{localise}
\end{equation}
where each process $\phi_i$ satisfies the so-called \textit{causality} equation:
\begin{equation}\label{eq:causality}
  \tikzfig{causal}
\end{equation}
Such processes are sometimes referred to as \textit{Bell scenarios} in the foundations of quantum theory, and play a central role in the study of quantum non-locality. Using just the the equations above, it is straightforward to show that any localisable process is non-signalling. More generally, one can enumerate the non-signalling-type equations for a black-box process with many input/output pairs that factorises according to a generic directed acyclic graph~\cite{Kissinger2017a}.

We can formalise such results in a theory-independent manner by treating them as equations between morphisms in a symmetric monoidal category $\mathcal C$ equipped with a family of discarding maps $\discard_A : A \to I$ from any object into the monoidal unit. Notable examples are the symmetric monoidal categories of stochastic matrices, where discarding corresponds to marginalisation, and categories of quantum channels, where discarding is given by the trace functional (the quantum analogue to marginalisation).

One can make some progress in characterising the causal structure of first-order processes by decomposing them into sequential and parallel compositions of simpler processes using the symmetric monoidal structure of $\mathcal C$. However, things become more interesting when we consider \textit{higher order causal processes}. Rather than first-order boxes, which correspond to single, atomic events, higher-order processes may decompose into components that themselves have non-trivial causal structures such as interleaving of inputs and outputs over time. For example:
\begin{equation}\label{eq:boxes_with_holes}
 \tikzfig{ho_box} = \tikzfig{ho_scenario} = \tikzfig{fo_scenario} 
\end{equation}
Decompositions into higher-order maps such as the middle diagram above can be formalised in terms of compact closed structure, which enables us to turn inputs into outputs and vice-versa by ``bending wires'':
\begin{equation}\label{eq:boxes_with_holes_compact}
\tikzfig{ho_scenario} := \tikzfig{ho_scenario_compact1}
\end{equation}
Higher-order decompositions can be used, for example, to model standard techniques from statistical causal inference such as \textit{$c$-factorisation}~\cite{jacobs2019causal} and perhaps more interestingly, allow us to study processes with unknown or even (quantum) indefinite causal structures~\cite{Oreshkov2012,Chiribella2013}.

Compact closed structure enables us to abstract over notions like ``input'' and ``output'', instead composing components over more general higher-order interfaces. However, not all such compositions yield processes that can be recreated in a laboratory. For instance, nothing in the categorical structure is preventing us from introducing time loops, which could in general introduce logical paradoxes, as witnessed e.g. by probabilistic processes that are no longer normalised. Examples include connecting together the input and output of a $\mathrm{NOT}$ gate, describing a simple instance of the Grandfather Paradox~\cite{BaumelerParadox} that ``occurs'' with probability $0$, or doing the same for an identity for the Bootstrap Paradox with probability generally exceeding $1$.
\begin{align}
\tikzfig{paradox_grandfather} = 0 \neq \id_I \neq \id_I + \id_I = \tikzfig{paradox_bootstrap} \label{eq:paradoxes}
\end{align}
Whilst it is possible to determine conditions for causally consistent compositions of specific kinds of higher-order processes on an ad-hoc basis \cite{Jia2018,Wilson2022}, it generally reduces down to checking concretely for the preservation of normalisation \cite{Chiribella2013} or non-signalling conditions \cite{Apadula2022}.

The $\caus{-}$ construction \cite{Kissinger2019a}, as an instance of double-glueing \cite{Hyland2003}, restricts morphisms to those that preserve normalisation, so every closed diagram describing a complete event sequence evaluates to probability $1$, and every composition preserves this sense of  determinism. It can be applied abstractly to suitable base theories like $\MatRp$ (the category of finite matrices of positive reals) or $\CPs$ (the category of finite-dimensional C*-algebras and completely positive maps) to give us deterministic theories of higher-order probabilistic or mixed quantum processes respectively. Objects in the category correspond to causal structures between a collection of local systems, such as states of $\mathbf{A < B}$ being compatible with ``$\mathbf{A}$ before $\mathbf{B}$'', and $\mathbf{A \otimes B}$ with ``$\mathbf{A}$ and $\mathbf{B}$ unrelated''. In both of the bipartite examples, there are formulations based on signalling constraints or affine combinations of factorisations, which have been shown to be equivalent \cite{Gutoski2008,Chiribella2013,Simmons2022,Cavalcanti2022}.

The first major contribution of this paper is the introduction of graph types to categories of the form $\caus{\catc}$, generalising the signalling and factoriation definitions of $\otimes$ and $<$ and their equivalence to causal structures over arbitrary acyclic graphs. These graph types admit a standard form, factoring in all relevant information from the local systems, which reduces the checking of inclusion or compatibility of graph types (with respect to forming deterministic closed scenarios under composition) to simple properties of the graphs like set-wise inclusion or acyclicity.

The question of which string diagrams preserve deterministic realisability (the ability for a processes to be reliably performed with probability $1$) is comparable to asking how causal structures should compose, and the way we choose to solve this is to provide a sound and complete logic (in the sense of the Curry-Howard-Lambek correspondence) for this compositional structure. Such a logical correspondence provides a mechanisable way to test a given string diagram's validity and highlights the structural similarities and differences between different models. The basic operators of causal categories are known to model at least Multiplicative Linear Logic with the MIX rule \cite{Kissinger2019a} and $\bv$ \cite{Simmons2022} (see Appendix \ref{sec:logic_primer} for a brief primer on the relevant logics), but these were not shown to completely describe \textit{all} valid string diagrams. Perhaps most notably, both logics gave no special status to first-order systems, despite the fact that these systems have many useful properties, seemingly arising from the fact that information can only flow in one direction (namely, from negative occurrences of first-order atoms to positive ones).

The second major output of this paper is a sound and complete logic for wiring diagrams in $\caus{\catc}$ based on a proof-net criterion that we build synthetically to match the semantics. This logic is a conservative extension of pomset logic \cite{Retore2020} (adding in directed axioms for first-order atoms) and can itself be faithfully encoded into pomset. From the characterisation theorem, we show that compatibility of causal structures for non-degenerate atomic systems is independent of the choice of local system (e.g. cardinalities of classical variables / Hilbert space dimension) and even the underlying theory $\catc$. Hence, causal consistency in classical and quantum theory are both characterised by the same abstract logical criteria.

\subsection{Related Work}


Categories for deterministic theories are much simpler for first-order processes, just requiring terminality of the unit \cite{Chiribella2010,Coecke2013,Coecke2014}. Second-order processes have also been well-studied recently, characterising them as certain classes of natural transformations \cite{Wilson2022,Wilson2022a} or viewing them as contexts or kinds of diagrams with holes \cite{Hefford2023,Earnshaw2024}. In comparison, the $\caus{-}$ construction sacrifices applications to symmetric monoidal categories that aren't compact closed to achieve a greater flexibility of compositions that escapes the presumption of a definite causal structure over the components of a diagram. Recent work by Hefford and Wilson \cite{Hefford2024} have closed this gap with profunctorial descriptions of causal structures with operators acting like $\otimes$ and $<$ from $\caus{\catc}$. The logical interaction between the operators generally limited, though richer logical content is shown to precisely correspond to decomposition theorems in the base category, e.g. whether higher-order processes decompose into networks of first-order morphisms; in this case we can view $\caus{\catc}$ as an extreme with very strong decompositions (Theorem \ref{thrm:all_graph_types_equivalent}) granting a rich logic. Where the $\caus{-}$ construction conjures an inextricable link between the origin of causal structures and preservation of normalisation conditions, their framework provides causal structures without links to normalisation, though it may require the base category to already be a first-order deterministic theory for this to be meaningful. Promising future directions to investigate the overlap involve applying the profunctor constructions to suitably enriched base categories to express non-signalling conditions and compare them to the existing causal structures on optics.

$\caus{\catc}$ strictly generalises similar inductively-defined higher-order frameworks \cite{Bisio2019,Hoffreumon2022} to permit any mathematically feasible system - this gives more robustness over time as our model does not change with the discovery of additional structure, in contrast to extending an inductively-defined framework with new operators which may invalidate existing results or require extensive reworking of proofs to maintain them.

The graphical proof system $\mathrm{GV}$ \cite{Acclavio2022} was also built to reason about compatibility between causal structures (referred to there as ``logical time''), with the primary applications aimed at event structures for concurrent programming. Despite the obvious similarities between the intents of graph types presented here and the prime directed graph constructors of $\mathrm{GV}$, there are still some differences in how they interact with the logic that we do not yet have reasonable explanations for. Why does one logic conservatively extend pomset and the other only $\mathrm{BV}$? Why do inclusions of edge sets of graphs always induce linear implications in causal logic (see Lemma \ref{lemma:transitive_closure_graph_type}) and not in $\mathrm{GV}$? The bipartite undirected graph for $\parr$ admits the interpretation as a union of directed graphs in each direction, but how should this generalise to other prime undirected graphs? One possible explanation for the logical differences relates to the equivalence between signalling constraints and (affine) factorisations in $\caus{\catc}$ which may not be reflected in other settings that e.g. require states of a tensor system to be strictly local. Having a concrete model that is similarly characterised by $\mathrm{GV}$ would help to test this hypothesis.

Instances of the $\caus{-}$ construction now stand alongside coherence spaces as additional sources of faithful categorical models of pomset logic. The connections between coherence spaces to pomset \cite{Retore2020} and their probabilistic counterparts to $\bv$ \cite{Blute2012} inspired several of the operator definitions and proofs for $\caus{\catc}$. Towards the search for a proper Curry-Howard-Lambek categorical description of pomset logic, we know any solution must capture the essence of both models. Given they both arise from restricting double-glueing by some orthogonality \cite{Hyland2003}, it may be fruitful to examine similar categories to search for a characterisation of pomset models.

\section{The $\caus{-}$ Construction}\label{sec:caus_construction}

Given a category $\catc$ describing an underlying physical theory, one can summarise the causal category $\caus{\catc}$ as a process theory of higher-order deterministic processes through which interpretations of causal structures emerge. In the interest of readers who are unfamiliar with the previous work \cite{Kissinger2019a,Simmons2022}, this section will cover some essential background material, motivate the goals of the construction, derive it, and cover the basic structural operators we will use in the remainder of the paper.

Throughout this paper, we will use string diagrams to depict morphisms in a symmetric monoidal category. A morphism $f : A \to B$ is depicted as a box with zero or more wires coming into the bottom, corresponding to inputs, and zero or more wires coming out the top for outputs. Composition, which should be read from bottom-to-top, is depicted as plugging boxes together, whereas the monoidal product $\otimes$ is depicted as placing boxes side-by-side. For the monoidal unit $I$, we refer to morphisms $\rho : I \to A$ as \textit{states}, morphisms $\pi : A \to I$ as \textit{effects}, and morphisms $\lambda : I \to I$ as \textit{scalars}.

In order to represent higher-order processes, we will make use of the structure of compact closed categories. A symmetric monoidal category is \textit{compact closed} when each object $A$ has a dual $A^*$, and ``cup'' and ``cap'' morphisms
\begin{align}
\eta_A &:= \tikzfig{compact_cup} &
\epsilon_A &:= \tikzfig{compact_cap}
\end{align}
that satisfy the following ``yanking'' equations:
\begin{align}
\tikzfig{snake_equation} &= \tikzfig{snake_equation_id} &
\tikzfig{snake_equation_dual} &= \tikzfig{snake_equation_dual_id}
\end{align}

Drawing these morphisms as curved identity wires is convenient for demonstrating how they permit any morphism $f$ on one side to slide over to the other much like naturality of an identity, up to the transposition into $f^*$. This specialises to mapping between states and effects, such as transposing the discard map $\discard$ to give a \textit{uniform state} $\maxmix$ for any object.

\begin{align}
\tikzfig{extranatural_cup} &= \tikzfig{extranatural_cup1} &
\tikzfig{transpose_def1} &:= \tikzfig{transpose_def} \label{eq:transpose_def} &
\tikzfig{maxmix_def1} &:= \tikzfig{maxmix_def}
\end{align}

When sliding multiple morphisms in this way, their order reverses (i.e. transposition is a contravariant functor $\left( - \right)^* : \catc^{op} \to \catc$). This allows us to dissociate from the idea of a canonical direction of time in our diagrams, or the distinction of input and output roles. One can formalise this kind of time-reversing sliding behaviour in the form of extranatural transformations, generalising the time-directed sense of naturality.

\begin{definition}
Given functors $F : \mathcal{A} \times \mathcal{B} \times \mathcal{B}^{op} \to \mathcal{D}$ and $G : \mathcal{A} \times \mathcal{C} \times \mathcal{C}^{op} \to \mathcal{D}$, a family of morphisms
\begin{equation}
\left\{ \alpha_{A,B,C} : F(A, B, B) \to G(A, C, C) \right\}_{A \in \ob{\mathcal{A}}, B \in \ob{\mathcal{B}}, C \in \ob{\mathcal{C}}}
\end{equation}
is \textit{natural} in $A$ and \textit{extranatural} in $B$ and $C$, if for all $f \in \mathcal{A}(A, A'), g : \mathcal{B}(B, B'), h : \mathcal{C}(C, C')$:
\begin{align}
F(f, \id, \id) \fatsemi \alpha_{A', B, C} &= \alpha_{A, B, C} \fatsemi G(f, \id, \id) \\
F(\id, g, \id) \fatsemi \alpha_{A, B', C} &= F(\id, \id, g) \fatsemi \alpha_{A, B, C} \\
\alpha_{A, B, C} \fatsemi G(\id, h, \id) &= \alpha_{A, B, C'} \fatsemi G(\id, \id, h)
\end{align}
\end{definition}

The flexibility of string diagrams is especially important for ``boxes with holes'' that can now be encoded as genuine morphisms in our category as in Equation \ref{eq:boxes_with_holes_compact}. In the study of higher-order quantum processes, the induced isomorphism between states and transformations (dubbed the Choi-Jamio\l{}kowski isomorphism) is employed to reduce channels and higher-order processes down to simple states which are simpler mathematical objects to work with.

String diagrams additionally grant complete freedom of composition - instead of restricting ourselves to parallel and sequential composition of channels and functional application of higher-order processes, we can arbitrarily connect any two subsystems with matching interfaces. However, permitting arbitrary connections can easily give rise to unphysical process descriptions, encoding paradoxical situations like the Grandfather Paradox or the Bootstrap Paradox which require sending information to the past. It is clear that not every string diagram is going to preserve properties relating to normalisation or information flow (see Equation \ref{eq:paradoxes}). The $\caus{-}$ construction yields theories containing precisely the valid processes in this sense, and hence its internal structure should describe the most flexible general composition scheme that preserves physical realisability of processes.

This brings us to the goals of the $\caus{-}$ construction:
\begin{enumerate}
\item\label{enum:goal_theory_independence} Theory-independence/abstraction - being able to reuse as much as possible between the study of different theories really helps to understand what properties are truly fundamental versus determined by the specifics of a given theory. Physical motivations drive us to at least want versions for both classical probability theory and (finite-dimensional) quantum theory.
\item\label{enum:goal_determinism} Deterministic processes - in this context, we use the term \textit{deterministic} to mean that the process can be reliably performed with probability 1, without any requirement to be free of probabilistic behaviour (for example, a coin flip may generate a random outcome but is still deterministic in this sense since coarse-graining over all outcomes gives a combined probability of 1). This provides an operational foundation with which we can describe any scenario we can prepare in a laboratory. The term \textit{causal} is often used synonymously with this definition elsewhere in the literature, and we will use both interchangeably.
\item\label{enum:goal_free_composition} Freely compositional - we wish to impose minimal a priori constraints on how processes may be composed. In particular, this means we should support as many string diagrams as possible without violating determinism. Having open diagrams with composition only limited by the matching of interfaces means we can make reasonable abstractions such as grouping the operations within a protocol a single process for each participant, or combining complex, interleaved interactions between physical processes occurring over non-convex regions of spacetime into a single interaction.
\end{enumerate}

For Goal \ref{enum:goal_theory_independence}, we start with a base category $\catc$ of unnormalised processes within which we will work (expecting a forgetful functor $\caus{\catc} \to \catc$). We require $\catc$ to be compact-closed to support the string diagrams imposed by Goal \ref{enum:goal_free_composition}. Setting the standard examples of $\MatRp$ to investigate finite-dimensional classical probability theory and $\CPs$ for quantum theory, Definition \ref{def:apc} abstractly presents the properties of these categories we rely on for this work. These properties are satisfied by other settings such as $\MatRaff$ for affine/quasi-probabilistic maps, but notably excludes settings with finite scalars like $\mathrm{Rel}$, infinite-dimensional systems like $\mathrm{Hilb}$, or settings lacking local tomography like real quantum theory.

\begin{definition}\label{def:apc}\cite[Definition 8]{Simmons2022}
Let $\catc$ be a compact closed category with products. Such a category always has biproducts and additive enrichment~\cite{Houston2008}: each homset is a commutative monoid, writing summation of $f, g \in \catc(A, B)$ as $f + g \in \catc(A, B)$ and zero morphisms as $0_{A,B} \in \catc(A, B)$. $\catc$ is an \textit{additive precausal category} if:
\begin{enumerate}
\item[\namedlabel{apc:discard}{APC1}.] $\catc$ has a discarding process $\discard_A \in \catc(A, I)$ for every system $A$, compatible with the monoidal and biproduct structures as below;
\begin{align}
\discard_{A \otimes B} &= \discard_A \otimes \discard_B \label{eq:discard_monoidal} \\
\discard_I &= \id_I \label{eq:discard_unit} \\
\discard_{A \oplus B} &= \left[ \discard_A, \discard_B \right] \label{eq:discard_biproduct}
\end{align}
\item[\namedlabel{apc:dimension}{APC2}.] The dimension $d_A$ is invertible for all non-zero $A$;
\begin{equation}
d_A := \tikzfig{dim_def}
\end{equation}
\item[\namedlabel{apc:basis}{APC3}.] Each object $A \in \ob{\catc}$ has a finite causal basis: a minimal finite set $\{\rho_i\}_i \subseteq \catc\left( I, A \right)$ of causal states which are sufficient to distinguish morphisms;
\begin{align}
\forall i &.\  \tikzfig{causal_state} = \id_I \\
\forall B .\  \forall f, g \in \catc(A, B) &.\  \left( \forall i .\ \tikzfig{basis_f} = \tikzfig{basis_g} \right) \Longrightarrow \tikzfig{basis_f1} = \tikzfig{basis_g1}
\end{align}
\item[\namedlabel{apc:scalars}{APC4}.] Addition of scalars is \textit{cancellative} ($\forall x,y,z.\ x + z = y + z \Rightarrow x = y$), \textit{totally pre-ordered} ($\forall x,y .\exists z .\ x=y+z \vee y=x+z$), and all non-zero scalars have a multiplicative inverse.
\item[\namedlabel{apc:complement}{APC5}.] All effects have a complement with respect to discarding: for any $\pi \in \catc(A,I)$, there exists some $\pi' \in \catc(A,I)$ and scalar $\lambda$ such that $\pi + \pi' = \lambda \cdot \discard_A$.
\end{enumerate}
\end{definition}

From these properties, we can already utilise a lot of standard techniques of linear algebra. In particular, even in settings with positivity constraints such as those in $\MatRp$ and $\CPs$ we will benefit from using a larger category without such a constraint (e.g. containing all linear maps). Any additive precausal category $\catc$ has a faithful embedding $\fsub{-} : \catc \to \sub{\catc}$ into the free subtractive closure which preserves almost all of the relevant categorical structure \cite[Propositions 38-42]{Simmons2022}. We will freely make use of this in our proofs, distinguishing between equality $=$ of morphisms in $\catc$ and equivalence $\sim$ of morphisms in $\sub{\catc}$ to make it clear which category we are working in, and with faithfulness of the embedding ($\fsub{f} \sim \fsub{g} \Rightarrow f = g$) allowing us to always draw conclusions in $\catc$.



The following are more useful variants of \ref{apc:basis} and \ref{apc:complement} which follow from the above assumptions.

\begin{proposition}\label{prop:dual_basis}\cite[Lemma 11]{Simmons2022}
Given any set of morphisms in $\catc(A, B)$ that are linearly independent in $\sub{\catc}$, they can be extended to a basis in $\catc$ (in the sense of \ref{apc:basis}) with a dual basis in $\sub{\catc}$.
\end{proposition}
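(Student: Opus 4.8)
The plan is to push the statement through the Choi--Jamio\l{}kowski correspondence so that it becomes a fact about states, and then to argue by finite-dimensional linear algebra inside the subtractive closure. Since $\catc$ is compact closed, $\catc(A,B) \cong \catc(I, M)$ for $M := A^{*} \otimes B$, and the same holds in $\sub{\catc}$, so it suffices to show that any family of states of $M$ that is linearly independent in $\sub{\catc}$ can be completed to a basis of $M$ in the sense of \ref{apc:basis} whose adjoined elements are morphisms of $\catc$, together with a dual family of effects in $\sub{\catc}$. Throughout, $\sub{\catc}(I,M)$ and $\sub{\catc}(M,I)$ are vector spaces over the field $k := \sub{\catc}(I,I)$ (a field, by \ref{apc:scalars}), paired by $(\rho,\pi) \mapsto \rho \fatsemi \pi$.

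The first and key step is to recognise a causal basis $\{\rho_{1},\dots,\rho_{n}\}$ of $M$ supplied by \ref{apc:basis} as a genuine linear basis of $\sub{\catc}(I,M)$. Linear independence of $\{\fsub{\rho_{i}}\}$ follows from minimality: a linear dependence would, by faithfulness of $\fsub{-}$, let a proper subfamily of the $\rho_{i}$ already distinguish morphisms. Spanning is where compact closure does the work. The distinguishing clause of \ref{apc:basis} at codomain $I$ says precisely that $\pi \mapsto (\rho_{i} \fatsemi \pi)_{i}$ is injective, hence $\dim \sub{\catc}(M,I) \le n$; running the same argument for $M^{*}$ with its own causal basis and combining with the compact-closed isomorphisms $\sub{\catc}(I,M) \cong \sub{\catc}(M^{*},I)$ and $\sub{\catc}(I,M^{*}) \cong \sub{\catc}(M,I)$ sandwiches all the dimensions and forces $\dim\sub{\catc}(I,M) = \dim\sub{\catc}(M,I) = n$, with the pairing perfect. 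So the $n$ independent vectors $\fsub{\rho_{i}}$ span $\sub{\catc}(I,M)$.

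The remainder is bookkeeping. Given the linearly independent input family $S$, the Steinitz exchange lemma completes it to a linear basis $B = S \cup S'$ of $\sub{\catc}(I,M)$ with $S' \subseteq \{\rho_{1},\dots,\rho_{n}\}$; the adjoined states are causal and lie in $\catc$. That $B$ is a basis in the sense of \ref{apc:basis} follows from the dimension count: any spanning family of states distinguishes morphisms (transport the spanning relation through $\fsub{-}$ and invoke \ref{apc:basis} for the $\rho_{i}$), and conversely any family distinguishing morphisms has at least $\dim\sub{\catc}(M,I) = n = |B|$ members, so $B$ is minimal too. For the dual basis, perfectness of $(\rho,\pi)\mapsto\rho\fatsemi\pi$ identifies $\sub{\catc}(M,I)$ with the $k$-linear dual of $\sub{\catc}(I,M)$, so the coordinate functionals of $B$ are realised by effects $\pi_{b} \in \sub{\catc}(M,I)$ with $b' \fatsemi \pi_{b} = \delta_{b,b'}\,\id_{I}$; transporting back along compact closure yields the required dual basis of $\sub{\catc}(A,B)$.

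I expect the middle step to be the main obstacle: upgrading the one-sided ``distinguishing'' condition of \ref{apc:basis} into genuine non-degeneracy of the state/effect pairing, equivalently the equality $\dim\sub{\catc}(I,M) = \dim\sub{\catc}(M,I)$. Compact closure is essential here, as it is what allows the injectivity bound for effects of $M$ to be played against that for effects of $M^{*}$. One should also flag a minor abuse in the statement: since the given family $S$ need not consist of causal states, ``basis in the sense of \ref{apc:basis}'' must be read as ``minimal family of morphisms sufficient to distinguish morphisms'' rather than literally a causal basis. Finally, if one is content to cite the finite-dimensionality and self-duality of the homsets of $\sub{\catc}$ already established in the development this proposition is drawn from, the argument collapses to the Steinitz step together with the dual-basis observation.
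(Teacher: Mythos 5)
The paper offers no proof of this proposition---it is imported verbatim from \cite[Lemma 11]{Simmons2022}---so there is no in-paper argument to compare against; I can only assess your reconstruction on its own terms, and it is sound. The dimension sandwich $n \le \dim\sub{\catc}(I,M) = \dim\sub{\catc}(M^*,I) \le m \le \dim\sub{\catc}(I,M^*) = \dim\sub{\catc}(M,I) \le n$ is exactly the right way to upgrade the one-sided distinguishing clause of \ref{apc:basis} to a perfect state/effect pairing, after which Steinitz exchange and the coordinate functionals do finish the job. Two points deserve an explicit sentence if this were written out in full. First, \ref{apc:basis} is stated for morphisms of $\catc$, so its lift to injectivity of $\pi \mapsto \left( \fsub{\rho_i} \fatsemi \pi \right)_i$ on $\sub{\catc}(M,I)$ needs the (routine but not free) observation that every morphism of $\sub{\catc}$ is a formal difference of morphisms of $\catc$, combined with faithfulness of $\fsub{-}$; the same lift is needed to get linear independence of the causal basis from minimality. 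Second, your reading of ``basis in the sense of \ref{apc:basis}'' as ``minimal family sufficient to distinguish morphisms'', with the adjoined elements drawn from the default causal basis, is precisely how the paper uses the result---see the first lines of the proof of Lemma \ref{lemma:binary_channel_encoding_morphism}---so that caveat is a feature of the statement rather than a defect of your proof.
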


\begin{proposition}\label{prop:complement}\cite[Proposition 12]{Simmons2022}
For any $f : A \to B$ in an additive precausal category, there exists $f' : A \to B$ and scalar $\lambda$ such that:
\begin{equation}\label{eq:complement}
\tikzfig{basis_f1} + \tikzfig{channel_fp} = \lambda \tikzfig{noisy_channel}
\end{equation}
\end{proposition}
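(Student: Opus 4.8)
The plan is to deduce Equation~\ref{eq:complement} from \ref{apc:complement}, which is exactly the special case where the codomain is the monoidal unit $I$, by using the compact closed structure of $\catc$ to bend the output leg of $f$ into an input. Given $f : A \to B$, cap its $B$-output against a fresh $B^*$-input to form an effect $\hat f : A \otimes B^* \to I$. Folding a leg down and then back up is the identity by the yanking equations, so $g \mapsto \hat g$ is a bijection $\catc(A,B) \cong \catc(A \otimes B^*, I)$; and since the cap composes linearly with respect to the additive enrichment, this bijection and its inverse are additive and commute with scalar multiplication, e.g.\ $\widehat{g_1 + g_2} = \hat g_1 + \hat g_2$ and $\widehat{\mu \cdot g} = \mu \cdot \hat g$ for a scalar $\mu$.

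Now apply \ref{apc:complement} to the effect $\hat f$: there exist an effect $\pi' : A \otimes B^* \to I$ and a scalar $\lambda$ with $\hat f + \pi' = \lambda \cdot \discard_{A \otimes B^*}$, and by \ref{apc:discard} the right-hand side equals $\lambda \cdot (\discard_A \otimes \discard_{B^*})$. Apply the unfolding operation (bend the $B^*$ leg back up) to both sides of this equation. On the left, $\hat f$ returns to $f$ and $\pi'$ becomes some morphism $f' : A \to B$; by the additivity and scalar-linearity just noted, the right-hand side becomes $\lambda$ times the morphism obtained from $\discard_A \otimes \discard_{B^*}$ by bending its second leg up. Bending $\discard_{B^*}$ up produces the transpose of a discard map, i.e.\ the uniform state, now on $B^{**} \cong B$ (recall that the transpose of a discarding map is the uniform state --- see the discussion around Equation~\ref{eq:transpose_def}), so that morphism is $\discard_A$ on $A$ in parallel with preparation of the uniform state on $B$ --- precisely the noisy channel $\maxmix_B \circ \discard_A$ on the right of Equation~\ref{eq:complement}. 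Hence $f + f' = \lambda \cdot (\maxmix_B \circ \discard_A)$, as required.

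The argument is little more than careful bookkeeping of the compact closed structure, and I do not anticipate a real obstacle; the only delicate point is choosing the direction of the fold consistently. One must cap the $B$-output against a $B^*$-input so that, on unfolding, it is the transpose of $\discard_{B^*}$ (and not of $\discard_B$) that appears, matching the uniform state on $B$ itself in the intended right-hand side; the opposite choice would yield a reflected variant. As a sanity check and fallback that avoids compact closure altogether, one could instead extend a suitable spanning set containing $f$ and the noisy channel to a basis via Proposition~\ref{prop:dual_basis}, read off $f$'s coordinates against the dual basis, and assemble $f'$ from componentwise applications of \ref{apc:complement} to those scalar coordinates --- correct but longer, whereas the wire-bending argument has the added advantage of staying entirely within $\catc$.
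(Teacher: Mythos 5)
Your argument is correct, and it is essentially the intended proof: the paper defers to \cite[Proposition 12]{Simmons2022}, where the statement is obtained in exactly this way, by using process--state duality to turn $f$ into an effect on $A \otimes B^*$, applying \ref{apc:complement} there, and unbending via \ref{apc:discard} and the definition of $\maxmix$ as a transposed discard. The points you flag as delicate (additivity and scalar-linearity of the bending bijection, and the orientation of the fold so that the transpose of $\discard_{B^*}$ rather than $\discard_B$ appears) are handled correctly.
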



Let's return to the remaining goal (\ref{enum:goal_determinism}) and formalise how we will treat determinism. 

\begin{definition}
A \textit{closed scenario} is a process description with no open systems (i.e. a scalar), and a \textit{context} for a morphism $f$ describes a diagram into which $f$ can be placed to form a closed scenario. For any state $\rho \in \catc(I, A)$ and effect $\pi \in \catc(A, I)$, we can interpret the scalar $\rho \fatsemi \pi \in \catc(I, I)$ as the (abstract) probability with which the closed scenario represented by this diagram can be realised, in the vain of the Born rule for quantum theory. When the scalar is $\id_I$, we say that the scenario described by the composite diagram is \textit{deterministic}, or that $\rho$ is \textit{deterministically realisable} in the context of $\pi$ (and vice versa).
\end{definition}


Double-glueing along a hom-functor \cite{Hyland2003} provides an all-purpose construction for enforcing generic normalisation conditions such as determinism.

\begin{definition}\label{def:double_glueing}
Double-glueing along a hom-functor (simplified): given a symmetric monoidal closed $\catc$, the glued category $\mathbf{G}_I(\catc)$ has objects of the form $\mathbf{A} = \left( A \in \ob{\catc}, c_{\mathbf{A}} \subseteq \catc(I, A), d_{\mathbf{A}} \subseteq \catc(A, I) \right)$ and morphisms $f : \mathbf{A} \to \mathbf{B}$ are some $f \in \catc(A, B)$ which preserve both the chosen states and effects:
\begin{align}
\forall \rho \in c_{\mathbf{A}} &. \tikzfig{preserve_state} \in c_{\mathbf{B}} \label{eq:glueing_forwards} \\
\forall \pi \in d_{\mathbf{B}} &. \tikzfig{preserve_effect} \in d_{\mathbf{A}} \label{eq:glueing_backwards}
\end{align}
\end{definition}

$\mathbf{G}_I(\catc)$ inherits a lot of the categorical structure from $\catc$ - for each of symmetric monoidal closed structures, $*$-autonomy, finite products, and finite coproducts, if they exist in $\catc$, then they exist in $\mathbf{G}_I(\catc)$ and are preserved by the forgetful functor $\mathbf{G}_I(\catc) \to \catc$ \cite{Hyland2003}. We will see this in more detail across Sections~\ref{sec:monoidal_unit}-\ref{sec:products_coproducts} when reviewing how these structures lift for the $\caus{-}$ construction.

There are multiple intuitions one can use to view the objects of $\mathbf{G}_I(\catc)$ which we will freely appeal to throughout to explain concepts:
\begin{itemize}
\item The direct mathematical interpretation views objects $\mathbf{A}$ as the pair of a set of states $c_{\mathbf{A}}$ and effects $d_{\mathbf{A}}$ which we deem normalised against the criterion of $\mathbf{A}$.
\item Taking an operational perspective, an object describes a black-box system which an external agent may interact with. The black-box has an internal state taken from $c_\mathbf{A}$ which the agent can update by appending morphisms. Monoidal products like $\mathbf{A \otimes B}$ represent composite or multi-partite systems. An agent who only has access to $\mathbf{A}$ can now only apply morphisms of the form $f \otimes \id_{\mathbf{B}}$ as they must commute with every action another agent could take on $\mathbf{B}$. The effects $d_\mathbf{A}$ describe the possible complete sequences of actions the agent can take which yield a closed scenario. In general there may be multiple distinct marginal states at $\mathbf{B}$ induced by the different effects, from which we can define non-signalling conditions and new objects whose states are admissibile with respect to some causal structure. The freedom of picking any state space $c_{\mathbf{A}}$ and effect space $d_{\mathbf{B}}$ allows us to offer agents an artificially restricted set of actions as needed - for example, in $\mathbf{G}_I(\CPs)$ we will not only have a qubit object that supports all (mixed) states, but also an object that restricts states to specific planes or axes of the Bloch sphere or, dually, allows an agent to postselect within a given plane. In Section \ref{sec:products_coproducts}, we will introduce a way to encode observations via binary tests, representing the agent's knowledge as an auxiliary object they also have access to on which we can condition future actions (e.g. to internalise the agent's decision making). With this in mind, the effects $d_{\mathbf{A}}$ involve marginalising away any such knowledge as the other agent on $\mathbf{B}$ would not have access to it.
\item We may view an object as a choice of protocol or interface. We can view a state and effect pair as implementations of two parties participating in the protocol to completion, one as the party in focus and the other as the environment or external party. The analog of causal structure here is the format and ordering requirements of the messages that are sent between the parties as part of the protocol.
\end{itemize}

Applying double-glueing to an additive precausal category is close to fulfilling our goals, but we need to restrict the objects such that the conditions of Equations \ref{eq:glueing_forwards}-\ref{eq:glueing_backwards} give us precisely the deterministically realisable processes. If we consider fixing some set $c \subseteq \catc(I, A)$, we can construct the maximal set of effects that are deterministically realisable in the context of every state in $c$.

\begin{definition}\label{def:dual_set}\cite[Definition 4.1]{Kissinger2019a}
Given $c \subseteq \catc(I, A)$, the \textit{dual set} is 
\begin{equation}
c^* := \left\{ \pi \in \catc(A,I) \middle| \forall \rho \in c . \tikzfig{dual_set} = \id_I \right\}
\end{equation}
By compact closure and transposing elements, we may choose to equivalently interpret $c$/$c^*$ as either a set of states in $\catc(I,A)$/$\catc(I,A^*)$ or a set of effects $\catc(A^*,I)$/$\catc(A,I)$. A set $c \subseteq \catc(I,A)$ is \textit{closed} if $c = c^{**}$.
\end{definition}

$\left(-\right)^*$ is an instance of a focussed orthogonality \cite{Hyland2003}, which immediately gives $\left(-\right)^{***} = \left(-\right)^*$.

For an object $\left(A, c_{\mathbf{A}}, d_{\mathbf{A}}\right) \in \mathbf{G}_I(\catc)$ to represent deterministic normalisation, we need each combination of a state $\rho \in c_{\mathbf{A}}$ and an effect $\pi \in d_{\mathbf{A}}$ to give $\rho \fatsemi \pi = \id_I$, i.e. $c_{\mathbf{A}} \subseteq d_{\mathbf{A}}^*$ and $d_{\mathbf{A}} \subseteq c_{\mathbf{A}}^*$. If we ask that such sets are maximal (i.e. the only restriction we make on morphisms is determinism), we have $c_{\mathbf{A}} = d_{\mathbf{A}}^* = c_{\mathbf{A}}^{**}$ (the tight orthogonality subcategory). In this case, it is sufficient to just provide one of the sets to identify an object.

The additional symmetry here also makes Equations \ref{eq:glueing_forwards}, \ref{eq:glueing_backwards}, and \ref{eq:deterministic_morphism} equivalent, further driving the idea that deterministic morphisms are those that can be applied with probability $1$ in any closed context.
\begin{equation}\label{eq:deterministic_morphism}
\forall \rho \in c_{\mathbf{A}}, \pi \in d_{\mathbf{B}} \left( = c_{\mathbf{B}}^* \right) .\ \tikzfig{preserve_scalar} = \id_I
\end{equation}

The only remaining edge case that doesn't necessarily lead to determinism is when either $c_{\mathbf{A}}$ or $c_{\mathbf{A}}^*$ is empty (the other will be the full homset $\catc(I, A)$ or $\catc(A, I)$ by duality, notably including the zero morphism which is not deterministic in any context). The final step to get to the $\caus{-}$ construction is to ask that the states are at least deterministic against some uniform effect (and, dually, the existence of a uniform state).

\begin{definition}\label{def:flat}\cite[Definition 4.2]{Kissinger2019a}
A set $c \subseteq \catc(I,A)$ is \textit{flat} if there exist invertible scalars $\mu, \theta \in \catc(I,I)$ such that $\mu \cdot \maxmix_A \in c$ and $\theta \cdot \discard_A \in c^*$. When these scalars exist, they are necessarily unique.
\end{definition}

\begin{remark}\label{remark:weak_flatness}
Previous work \cite{Simmons2022} had weakened the flatness condition to permit empty sets for the purpose of studying additive units (an initial and terminal object) in $\caus{\catc}$. We will choose to use the original strict version of flatness for this paper for simplicity, though most of the results will follow vacuously for the additional weak cases.
\end{remark}

We will freely make use of the following result throughout the paper to view any flat, closed set $c^{**}$ as the affine combinations over $c$ (when $\catc$ does not natively support negatives, take affine combinations in $\sub{\catc}$ and then restrict to the homsets of $\catc$, e.g. intersecting with the positive cone for $\MatRp$ or $\CPs$).

\begin{definition}\label{def:affine-closure}\cite[Definition 16]{Simmons2022}
For a set of states $c \subseteq \catc(I, A)$, we define sets $\aff(c) \subseteq \sub{\catc}(I, A)$ and $\affp(c) \subseteq \catc(I,A)$ as follows, for $K := \sub{\catc}(I,I)$:
\begin{align*}
\aff(c) & := \left\{ \rho \, \middle| \, \exists \{\rho_i\}_i \subseteq \catc(I, A), \{\lambda_i\}_i \subseteq K . \ \sum_i \lambda_i \sim \id_I, \rho \sim \sum_i \lambda_i \cdot \fsub{\rho_i} \right\} \\
\affp(c) & := \left\{ \rho \, \middle| \, \exists \rho' \in \aff(c).\ \rho' \sim \fsub{\rho} \right\}
\end{align*}
\end{definition}

\begin{proposition}\label{prop:affine}\cite[Theorem 17]{Simmons2022}
Given any flat set $c \subseteq \catc(I,A)$ for a non-zero $A$, $c^{**} = \affp(c)$.
\end{proposition}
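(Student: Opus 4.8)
The plan is to establish the two inclusions of $c^{**} = \affp(c)$ separately. The inclusion $\affp(c) \subseteq c^{**}$ is routine: if $\fsub{\rho}$ is an affine combination $\sum_i \lambda_i \cdot \fsub{\rho_i}$ of states $\rho_i \in c$ (so $\sum_i \lambda_i \sim \id_I$), then for every $\pi \in c^*$ we have $\rho_i \fatsemi \pi = \id_I$, hence $\fsub{\rho} \fatsemi \fsub{\pi} \sim \sum_i \lambda_i \cdot \id_I \sim \id_I$; faithfulness of $\fsub{-}$ then gives $\rho \fatsemi \pi = \id_I$ in $\catc$ for all $\pi \in c^*$, i.e.\ $\rho \in c^{**}$. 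The real content is the reverse inclusion $c^{**} \subseteq \affp(c)$.

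For that, I would first extract two facts from flatness: since $\theta \cdot \discard_A \in c^*$, every $\rho_i \in c$ satisfies $\rho_i \fatsemi \discard_A = \theta^{-1} \cdot \id_I$, and dually every $\rho \in c^{**}$ satisfies $\rho \fatsemi \discard_A = \theta^{-1} \cdot \id_I$ as well. Using \ref{apc:basis} together with Proposition \ref{prop:dual_basis}, fix a basis of $\sub{\catc}(I,A)$ with a dual basis in $\sub{\catc}(A,I)$; this exhibits $\sub{\catc}(I,A)$ as a finite-dimensional $K$-vector space and shows every $K$-linear functional on it is implemented by composition with some effect in $\sub{\catc}(A,I)$. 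Now take $\rho \in c^{**}$ and suppose $\fsub{\rho} \notin \aff(c)$, aiming for a contradiction. The discard computation upgrades this to $\fsub{\rho} \notin \mathrm{span}_K(\fsub{c})$: any relation $\fsub{\rho} \sim \sum_i \lambda_i \cdot \fsub{\rho_i}$ with $\rho_i \in c$, composed with $\fsub{\discard_A}$, forces $(\sum_i \lambda_i) \cdot \theta^{-1} \sim \theta^{-1}$, hence $\sum_i \lambda_i \sim \id_I$, contradicting $\fsub{\rho} \notin \aff(c)$. So choose a $K$-linear functional vanishing on $\mathrm{span}_K(\fsub{c})$ and sending $\fsub{\rho}$ to $\id_I$, and let $\phi \in \sub{\catc}(A,I)$ implement it.

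The crux is to replace $\phi$ by a genuine effect of $\catc$ that still normalises all of $c$. Using \ref{apc:complement} I can write $\phi \sim \fsub{\psi} - \lambda_1 \cdot \fsub{\discard_A}$ with $\psi \in \catc(A,I)$ and $\lambda_1 \in \catc(I,I)$. Then $\theta \cdot \fsub{\discard_A} + t \cdot \phi \sim (\theta - t\lambda_1) \cdot \fsub{\discard_A} + \fsub{t\psi}$, and the $\discard_A$-coefficient can be killed by taking $t := \id_I$ if $\lambda_1 = 0$ and $t := \theta \lambda_1^{-1}$ otherwise (legitimate since non-zero scalars invert by \ref{apc:scalars}); in both cases $t \ne 0$ and the right-hand side is $\fsub{\pi^\star}$ for an honest $\pi^\star \in \catc(A,I)$. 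Composing $\pi^\star$ with each $\rho_i \in c$ yields $\theta \cdot \theta^{-1} + t \cdot 0 = \id_I$, so $\pi^\star \in c^*$; but composing with $\rho$ yields $\id_I + t \ne \id_I$ by cancellativity (\ref{apc:scalars}), contradicting $\rho \in c^{**}$. Hence $\fsub{\rho} \in \aff(c)$ and so $\rho \in \affp(c)$.

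I expect the third step — promoting the abstract separating functional to an actual effect of $\catc$ lying in $c^*$ — to be the main obstacle: the separation argument lives only in the subtractive closure $\sub{\catc}$, and it is the complement axiom \ref{apc:complement}, together with invertibility and cancellativity of scalars from \ref{apc:scalars}, that lets one absorb the ``negative'' $\discard_A$-part of the functional without spoiling its values on $c$. The affine-versus-linear bookkeeping (noticing that membership in $c^{**}$ already pins down the discard value, so that separating $\rho$ from the linear span of $\fsub{c}$ suffices) is the other point that needs care, but it is less delicate.
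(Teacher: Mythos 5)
Your proof is correct, and it follows essentially the same route as the source of this result: the paper itself does not reprove Proposition~\ref{prop:affine} but imports it from \cite[Theorem 17]{Simmons2022}, whose argument is likewise a separating-functional one (observe that membership in $c^{**}$ fixes the discard value so affine and linear spans of $\fsub{c}$ coincide on the relevant states, separate $\fsub{\rho}$ from that span in $\sub{\catc}$, then use \ref{apc:complement} and \ref{apc:scalars} to convert the separating functional into a genuine effect in $c^*$ that fails to normalise $\rho$). The two points you flag as delicate are indeed the load-bearing ones, and your handling of both is sound.
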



Combining the steps so far, we can define $\caus{\catc}$ as the full subcategory of $\mathbf{G}_I(\catc)$ on objects $\left( A, c_{\mathbf{A}}, d_{\mathbf{A}} \right)$ with $c_{\mathbf{A}} = d_{\mathbf{A}}^* = c_{\mathbf{A}}^{**}$ flat. For simplicitly, we will use the following explicit definition:

\begin{definition}\label{def:caus_category}\cite[Definition 4.3]{Kissinger2019a}
Given an additive precausal category $\catc$, the causal category $\caus{\catc}$ has as objects pairs $\mathbf{A} := \left( A \in \ob{\catc}, c_\mathbf{A} \subseteq \catc(I,A) \right)$ where $c_\mathbf{A}$ is closed and flat. A morphism $f : \mathbf{A} \to \mathbf{B}$ is a morphism $f : A \to B$ in $\catc$ such that $\forall \rho \in c_\mathbf{A} . \rho \fatsemi f \in c_\mathbf{B}$.

This has an obvious forgetful functor $\mathcal{U} : \caus{\catc} \to \catc$ which drops the sets from the objects and is identity on morphisms. We may use $\rho : \mathbf{A}$ to say $\rho \in c_\mathbf{A}$, and we introduce notation $\mu_{\mathbf{A}}, \theta_{\mathbf{A}} \in \catc(I, I)$, $\nmaxmix_{\mathbf{A}} := \mu_{\mathbf{A}} \cdot \maxmix_A \in c_\mathbf{A}$, $\ndiscard_{\mathbf{A}} := \theta_{\mathbf{A}} \cdot \discard_A \in c_\mathbf{A}^*$ for the normalisation constants and normalised uniform state and effect required by flatness.
\end{definition}

\begin{remark}\label{remark:scalar_multiples}
For any object $\mathbf{A} \in \ob{\caus{\catc}}$ and non-zero scalar $\lambda \in \catc(I, I)$, we also have the isomorphic object $\lambda \mathbf{A} \left( \cong \mathbf{A} \right) \in \ob{\caus{\catc}}$ where we scale the states $c_{\lambda \mathbf{A}} := \left\{ \lambda \cdot \rho \middle| \rho \in c_{\mathbf{A}} \right\}$ (and inversely scale effects $c_{\lambda \mathbf{A}}^*$). The scalar is meaningless for physical interpretations as it will commute with all actions or observations made by an agent, and will always meet its inverse in a closed scenario.
\end{remark}

For the remainder of this section, we will summarise the structural operators for $\caus{\catc}$ discovered across previous work~\cite{Kissinger2019a,Simmons2022} which make it a $\bv$-category (generalisation of a $*$-autonomous category with the sequence operator of $\bv$) with finite products and coproducts. Many of these are the standard constructions for preserving categorical structure of the base category by double glueing with an orthogonality \cite{Hyland2003}. For brevity, we will omit proofs of preserving closure and flatness or any equivalences of definitions, each of which is either straightforward using duality or shown in~\cite{Kissinger2019a,Simmons2022}.

\subsection{Monoidal Unit}\label{sec:monoidal_unit}

The monoidal unit of $\caus{\catc}$ is lifted from the unit $I$ of $\catc$.

\begin{equation}\label{eq:monoidal_unit_def}
\mathbf{I} := \left( I, \left\{ \id_I \right\} \right)
\end{equation}

$\mathbf{I}$ has a single valid state, which we typically view as an empty diagram. This means that for any $\mathbf{A}$, its states are precisely the homset $\caus{\catc}\left(\mathbf{I}, \mathbf{A} \right) = c_{\mathbf{A}}$ and effects are $\caus{\catc}\left( \mathbf{A}, \mathbf{I} \right) = c_{\mathbf{A}}^*$. Again, this reinforces the concept of deterministic normalisation since the composition of any state $\rho : \mathbf{I} \to \mathbf{A}$ and effect $\pi : \mathbf{A} \to \mathbf{I}$ must be $\id_I$, the unit (abstract) probability.


\subsection{Dual Objects}

Compact closure of $\catc$ allows us to transpose any morphism $f \in \catc(A, B)$ to $f^* \in \catc(B^*, A^*)$ as defined in Equation \ref{eq:transpose_def}, giving us a time-symmetric view of our processes. Transposition exchanges notions of states and effects and formally expresses a duality between viewing morphisms as transforming states or transforming effects. This duality is lifted to $\caus{\catc}$ as a similar (strong monoidal) functor $(-)^* : \caus{\catc}^{op} \to \caus{\catc}$.

\begin{equation}\label{eq:dual_object_def}
\mathbf{A^*} := \left( A^*, c_{\mathbf{A}}^* \right)
\end{equation}

When viewing an object as describing a set of processes or implementations of an interface, the dual object describes the set of effects/contexts that can be applied to them deterministically or ways to consume or interact with a black-box presenting that interface.

\subsection{First-order Objects}

We say that an object $\mathbf{A}$ of $\caus{\catc}$ is first-order when there is a unique causal effect $\left| c_{\mathbf{A}}^* \right| = 1$. By the assumption of flatness, the unique effect must be $\ndiscard_{\mathbf{A}}$, i.e. $\discard_A$ up to some invertible scalar - by Remark \ref{remark:scalar_multiples}, this gives a unique first-order object for each $A \in \ob{\catc}$ up to isomorphism. We denote a first-order object on $A$ as $\mathbf{A^1}$, with the following canonical construction:

\begin{equation}\label{eq:fo_object_def}
\mathbf{A^1} := \left(A, \left\{\discard_A\right\}^*\right)
\end{equation}

First-order systems $\mathbf{A^1}$ and their duals $\mathbf{(A^1)^*}$ are the simplest atomic components of a causal scenario, capturing degenerate interfaces that respectively represent outputs with (in general) multiple distinguishable states but only one choice of effect $\discard_A$ that an external agent can apply, or inputs where an external agent can choose how to interact with the system but they cannot receive information from it. In most theories, states of a first-order system will correspond to basic data (distributions over a finite set) or descriptions of a physical system (density matrices of a finite-dimensional Hilbert space) at a single point in time.

If a system is degenerate in both senses (i.e. it has a single state and effect), it is isomorphic to the trivial system $\mathbf{I}$.

\begin{remark}\label{remark:fo_subcategory}
We denote the full subcategory of first-order objects as $\focat{\caus{\catc}}$. Within this subcategory, $\mathbf{I}$ is terminal ($\caus{\catc}\left( \mathbf{A^1}, \mathbf{I} \right) = \left\{ \discard_A \right\}$), following the usual treatment of causality as the existence of a unique effect for each system.
\end{remark}

\subsection{Tensor, Sequence, and Par}\label{sec:monoidal_products}

In $\catc$, $A \otimes B$ represents parallel composition of two systems. There are multiple ways of lifting this to $\caus{\catc}$ depending on what kinds of states and effects we permit on the joint system.

\begin{align}
\mathbf{A \otimes B} := & \left( A \otimes B, \left\{ \tikzfig{tensor_def} \middle| \rho_A \in c_{\mathbf{A}}, \rho_B \in c_{\mathbf{B}} \right\}^{**} \right) \label{eq:tensor_object_separable_def} \\
= & \left( A \otimes B, \left\{ h \middle| \begin{array}{rl} & \forall \pi_B \in c_{\mathbf{B}}^* . \tikzfig{no_sig_rl_eff} = \tikzfig{no_sig_rl_eff2} \in c_{\mathbf{A}} \\
\wedge & \forall \pi_A \in c_{\mathbf{A}}^* . \tikzfig{no_sig_lr_eff} = \tikzfig{no_sig_lr_eff2} \in c_{\mathbf{B}} \end{array} \right\} \right) \label{eq:tensor_object_non_signalling_def}
\end{align}

$\mathbf{A \otimes B}$ is the smallest closed space containing all separable states over the joint system (Equation \ref{eq:tensor_object_separable_def}). Even though this includes some states that are not separable or even representable as local processes with some shared history (as in Equation \ref{eq:localisable}), the separable ones still span the entire space under affine combination. This generalises similar equivalences for first-order channels~\cite{Gutoski2008,Chiribella2013,Cavalcanti2022} to arbitrary affine-closed systems.

States of this kind will present an interface that acts like $\mathbf{A}$ in parallel with another that acts like $\mathbf{B}$ such that there is no signalling between them - the marginal state at $\mathbf{A}$ is independent of any local context applied at the $\mathbf{B}$ side and vice-versa (Equation \ref{eq:tensor_object_non_signalling_def}). For example, if $\mathbf{A}$ and $\mathbf{B}$ each describe implementations of some multi-round communication protocol, none of the inputs on the $\mathbf{A}$ side can cause any observable difference to the outputs on the $\mathbf{B}$ side. There may still be correlation or even entanglement between the two sides so long as this does not enable communication between local agents on each side.

We can think of the $\mathbf{A}$ and $\mathbf{B}$ interfaces as being simultaneously accessible, so contexts can freely pass information between them. For example, if $\mathbf{B} = \mathbf{A^*}$, then a separable state is precisely a pair of a state and effect for $\mathbf{A}$ which we can plug together since the cap $\epsilon_A \in \catc \left( A \otimes A^*, I \right)$ of the compact structure is causal $\epsilon_A \in \caus{\catc} \left( \mathbf{A \otimes A^*}, \mathbf{I} \right)$. We extend this intuition to contracting non-separable states by treating them as affine combinations of separable states and applying linearity. In the multi-round communication protocol picture, $\epsilon_A$ would play a copycat, forwarding each output from the $\mathbf{A}$ side to the corresponding input on the $\mathbf{A^*}$ side and vice versa.

Dropping the non-signalling conditions, we define another operator on objects giving the full space of bipartite processes that act locally like $\mathbf{A}$ and $\mathbf{B}$ but can permit arbitrary signalling of information between them, yielding objects with more states but fewer effects.

\begin{align}
\mathbf{A \parr B} := & \left( A \otimes B, \left\{ h \middle| \forall \pi_A \in c_{\mathbf{A}}^*, \pi_B \in c_{\mathbf{B}}^* . \tikzfig{parr_local_eff} = \id_I \right\} \right) \label{eq:parr_object_local_effect_def} \\
= & \left( A \otimes B, \left\{ h \middle| \forall \pi_A \in c_{\mathbf{A}}^* . \tikzfig{no_sig_lr_eff} \in c_{\mathbf{B}} \right\} \right) \label{eq:parr_object_left_control_def} \\
= & \left( A \otimes B, \left\{ h \middle| \forall \pi_B \in c_{\mathbf{B}}^* . \tikzfig{no_sig_rl_eff} \in c_{\mathbf{A}} \right\} \right) \label{eq:parr_object_right_control_def}
\end{align}

These three definitions are straightforwardly equivalent by Definition \ref{def:dual_set}.

Whilst the local marginals show that this acts like an $\mathbf{A}$ and $\mathbf{B}$ in parallel, causal contexts can only act locally on them. This allows us to model distributed protocols or physical systems separated by a substantial distance in space or time, where there may not be a way to present the $\mathbf{A}$ and $\mathbf{B}$ interfaces simultaneously for a localised agent to interact with both. This is exhibited in the de Morgan duality between $\otimes$ and $\parr$, since the contexts for $\mathbf{A \parr B}$ must be non-signalling, whereas contexts for $\mathbf{A \otimes B}$ can feature arbitrary signalling:

\begin{equation}\label{eq:tensor_parr_de_morgan}
\mathbf{\left( A \otimes B \right)^*} = \mathbf{A^* \parr B^*}
\end{equation}

We can understand $\parr$ a little better by looking at the internal hom $\mathbf{A \multimap B}$ which represents the space of transformations from $\mathbf{A}$ to $\mathbf{B}$.

\begin{align}
\mathbf{A \multimap B} := & \mathbf{A^* \parr B} \\
= & \left( A^* \otimes B, \left\{ h \middle| \forall \rho_A \in c_{\mathbf{A}} . \tikzfig{internal_hom_on_states1} \in c_{\mathbf{B}} \right\} \right) \label{eq:internal_hom_left_control_def} \\
= & \left( A^* \otimes B, \left\{ h \middle| \forall \pi_B \in c_{\mathbf{B}}^* . \tikzfig{internal_hom_on_effects1} \in c_{\mathbf{A}}^* \right\} \right) \label{eq:internal_hom_right_control_def}
\end{align}

Each state of $\mathbf{A \multimap B}$ is an encoding of a transformation which we can view in either direction: composing it with any state of $\mathbf{A}$ gives a state of $\mathbf{B}$, or dually applying any effect $\mathbf{B^*}$ yields an effect $\mathbf{A^*}$. $\multimap$ makes $\caus{\catc}$ monoidal closed wrt $\otimes$, with encodings and evaluations given by composing with the cup $\eta_A : \mathbf{I} \to \mathbf{A^* \parr A}$ and cap $\epsilon_A : \mathbf{A \otimes A^*} \to \mathbf{I}$. For example, if $\catc = \CPs$, $\eta_A$ and $\epsilon_A$ are a maximally entangled state and effect (e.g. the Bell state/effect for qubits) capturing the Choi-Jamio\l{}kovski isomorphism from quantum theory.

Finally, if we drop just one of the non-signalling conditions from $\otimes$, we obtain a non-commutative monoidal product $<$ (and similarly $>$ by symmetry).

\begin{align}
\mathbf{A < B} := & \left( A \otimes B, \left\{ h \middle| \forall \pi_B \in c_{\mathbf{B}}^* . \tikzfig{no_sig_rl_eff} = \tikzfig{no_sig_rl_eff2} \in c_{\mathbf{A}} \right\} \right) \label{eq:seq_object_one_way_def} \\
= & \left( A \otimes B, \left\{ \tikzfig{semi_local_HO_named} \middle| \begin{array}{l} h_{AZ} \in c_{\mathbf{A \parr Z^1}}, \\ h_{ZB} \in c_{\mathbf{\left(Z^1\right)^* \parr B}} \end{array} \right\}^{**} \right) \label{eq:seq_object_semi_localisable_def} \\
= & \left( A \otimes B, \left\{ h \middle| \begin{array}{l} \exists \mathcal{I}, \{f_i\}_{i \in \mathcal{I}} \subseteq \sub{\catc}(I, A \otimes B), \\ \{g_i\}_{i \in \mathcal{I}} \subseteq c_\mathbf{B}, f \in c_\mathbf{A} . \\ \fsub{h} \sim \sum_{i \in \mathcal{I}} f_i \otimes \fsub{g_i} \wedge \fsub{f} \sim \sum_{i \in \mathcal{I}} f_i \end{array} \right\} \right) \label{eq:seq_object_local_sum_def}
\end{align}

Just like $\otimes$, Equation \ref{eq:seq_object_semi_localisable_def} provides an equivalent definition based on decomposition into an affine combination of simple morphisms. In this case, one-way signalling processes are spanned by \textit{semi-localisable processes} - those which factorise into local processes connected by a side-channel passing state in just one direction. If $\mathbf{A}$ and $\mathbf{B}$ are spaces of first-order quantum channels, we know that semi-localisable processes are already closed since every one-way signalling channel is semi-localisable using purifications \cite{Chiribella2010}, though it is not known if we can more generally drop the closure on Equation \ref{eq:seq_object_semi_localisable_def} for any higher-order $\mathbf{A}$, $\mathbf{B}$ and any base category $\catc$. Equation \ref{eq:seq_object_local_sum_def} is provided for completeness, showing that one-way signalling processes can be written as a sum of causal states of $\mathbf{B}$ and (possibly negative) terms which combine to a causal state of $\mathbf{A}$, though we will not make use of this form in this paper outside of proofs.

We can think of a process of $\mathbf{A < B}$ as something that presents an $\mathbf{A}$ interface, and then presents a $\mathbf{B}$ interface at some later time, possibly requiring the $\mathbf{A}$ protocol to be completed beforehand. Our ways to interact with this are therefore to first completely consume the $\mathbf{A}$, before later consuming the $\mathbf{B}$.

\begin{equation}\label{eq:seq_de_morgan}
\mathbf{\left( A < B \right)^*} = \mathbf{A^* < B^*}
\end{equation}

These constructions come equipped with a number of equations, including those standard to $\bv$ logic \cite{Simmons2022}, which incorporates linear logic with isomorphic monoidal units, as well as the following natural transformations for mixing and interchanges with the sequence operator.

\begin{align}
\mathbf{A \otimes B} &\Rightarrow \mathbf{A < B} \Rightarrow \mathbf{A \parr B} \label{eq:mixing_nat_trans} \\
\mathbf{\left( A < B \right) \otimes \left( C < D \right)} &\Rightarrow \mathbf{\left( A \otimes C \right) < \left( B \otimes D \right)} \label{eq:weak_interchange_nat_trans}
\end{align}

Additionally, we recall the special role of first-order objects giving additional known equations.

\begin{proposition}\label{prop:first_order_equations}\cite[Corollary 5.5]{Kissinger2019a}\cite[Theorem 32]{Simmons2022}
$\mathbf{A^* \parr A} = \mathbf{A^* < A} \Leftrightarrow \left| c_{\mathbf{A}}^* \right| = 1$. Consequently, the following equations hold for any first-order objects $\mathbf{A^1}, \mathbf{A'^1}$ and any object $\mathbf{B}$:
\begin{align}
\mathbf{A^1 \parr B} &= \mathbf{A^1 > B} \label{eq:fo_parr_is_one_way} \\
\mathbf{A^1 \otimes B} &= \mathbf{A^1 < B} \label{eq:fo_tensor_is_one_way} \\
\mathbf{A^1 \parr A'^1} = \mathbf{A^1 < A'^1} &= \mathbf{A^1 > A'^1} = \mathbf{A^1 \otimes A'^1} \label{eq:fo_fo_parr_is_tensor} \\
\mathbf{\left(A^1\right)^* \parr B} &= \mathbf{\left(A^1\right)^* < B} \label{eq:fod_parr_is_one_way} \\
\mathbf{\left(A^1\right)^* \otimes B} &= \mathbf{\left(A^1\right)^* > B} \label{eq:fod_tensor_is_one_way} \\
\mathbf{\left(A^1\right)^* \parr \left(A'^1\right)^*} = \mathbf{\left(A^1\right)^* < \left(A'^1\right)^*} &= \mathbf{\left(A^1\right)^* > \left(A'^1\right)^*} = \mathbf{\left(A^1\right)^* \otimes \left(A'^1\right)^*} \label{eq:fod_fod_parr_is_tensor}
\end{align}
\end{proposition}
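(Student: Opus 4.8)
The plan is to derive all of the displayed equations from the single biconditional $\mathbf{A^* \parr A} = \mathbf{A^* < A} \Leftrightarrow \left| c_{\mathbf{A}}^* \right| = 1$, which I will prove first, and then leverage the monoidal-closed structure together with de Morgan duality \eqref{eq:tensor_parr_de_morgan}, \eqref{eq:seq_de_morgan} to propagate it. For the biconditional itself, the forward direction I would handle by unfolding the definitions: $\mathbf{A^* < A}$ has effects of the form where the $A$-side may signal to the $A^*$-side; the inclusion $\mathbf{A^* < A} \Rightarrow \mathbf{A^* \parr A}$ is already known (a mixing natural transformation \eqref{eq:mixing_nat_trans}), so equality is equivalent to the reverse inclusion, i.e. every state non-signalling in the relevant direction is already one-way-signalling-free, which I would argue collapses exactly when $c_{\mathbf{A}}^*$ is a singleton because then the ``forbidden'' signalling direction has no witnessing effect to detect it. The converse direction is the more delicate one: assuming $\mathbf{A^* \parr A} = \mathbf{A^* < A}$ and some distinct $\pi_1, \pi_2 \in c_{\mathbf{A}}^*$, I would construct an explicit state of $\mathbf{A^* \parr A}$ (using e.g. $\eta_A$ composed with a state-dependent correction built from $\pi_1 - \pi_2$ via Proposition \ref{prop:complement} in $\sub{\catc}$) that genuinely signals, contradicting membership in $\mathbf{A^* < A}$; the caveat is that the construction must land back in $c_{\mathbf{A^* \parr A}}$, so I expect to need the affine-span characterisation (Proposition \ref{prop:affine}) and \ref{apc:basis} to realise the signalling witness as an affine combination of separable states.

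Granting the biconditional, the remaining equations are bookkeeping. Since $\mathbf{A^1}$ has $\left| c_{(\mathbf{A^1})^*} \right| = 1$ by definition of first-order, the biconditional gives $\mathbf{(A^1)^* \parr A^1} = \mathbf{(A^1)^* < A^1}$ directly; for \eqref{eq:fod_parr_is_one_way}, $\mathbf{(A^1)^* \parr B} = \mathbf{(A^1)^* < B}$, I would reuse Equation \ref{eq:internal_hom_left_control_def} to view $\mathbf{(A^1)^* \parr B} = \mathbf{A^1 \multimap B}$ as the transformations $\mathbf{A^1} \to \mathbf{B}$, observe that the only effect available to probe the $\mathbf{A^1}$ leg is $\discard_A$, hence no signalling from $B$ back to $A^1$ is possible in any context, forcing the $\parr$ to coincide with the one-way $<$. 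Equation \eqref{eq:fo_parr_is_one_way}, $\mathbf{A^1 \parr B} = \mathbf{A^1 > B}$, follows by taking duals of \eqref{eq:fod_parr_is_one_way} (with $\mathbf{B}$ replaced by $\mathbf{B^*}$) through \eqref{eq:seq_de_morgan} and \eqref{eq:tensor_parr_de_morgan} and using $\mathbf{(A^*)^*} = \mathbf{A}$. The tensor versions \eqref{eq:fo_tensor_is_one_way}, \eqref{eq:fod_tensor_is_one_way} are the de Morgan duals of the par versions again: $\mathbf{A^1 \otimes B} = \mathbf{((A^1)^* \parr B^*)^*} = \mathbf{((A^1)^* < B^*)^*} = \mathbf{A^1 < B}$.

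Finally, the ``sandwich'' equalities \eqref{eq:fo_fo_parr_is_tensor}, \eqref{eq:fod_fod_parr_is_tensor} follow by chaining: $\mathbf{A^1 \parr A'^1} = \mathbf{A^1 < A'^1}$ by \eqref{eq:fo_parr_is_one_way} with $\mathbf{B} = \mathbf{A'^1}$ (using $>$-version symmetrically for the other association), while $\mathbf{A^1 \otimes A'^1} = \mathbf{A^1 < A'^1}$ by \eqref{eq:fo_tensor_is_one_way}; since the mixing transformations \eqref{eq:mixing_nat_trans} always give inclusions $\otimes \Rightarrow {<} \Rightarrow \parr$ and here the two ends agree, all four operators collapse, and $\mathbf{A^1 > A'^1} = \mathbf{A'^1 < A^1}$ is just the symmetric reading. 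I expect the genuinely hard step to be the converse of the biconditional — constructing a concrete signalling state and verifying it is causal — whereas everything downstream is duality chasing that should go through mechanically once the definitions in Section \ref{sec:monoidal_products} are unfolded carefully.
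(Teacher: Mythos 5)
Your overall strategy --- unfold the signalling definitions, observe that a singleton effect or state set makes the relevant constancy quantifier vacuous, and propagate by de Morgan duality --- is sound, and it matches how the result is actually established; note that this paper does not reprove the proposition but imports it from \cite{Kissinger2019a,Simmons2022}, so there is no in-paper proof to compare against. One simplification: for the converse of the biconditional you do not need any correction term built from $\pi_1 - \pi_2$. The cup $\eta_A : \mathbf{I} \to \mathbf{A^* \parr A}$ is already causal (it encodes the identity channel), and applying $\pi \in c_{\mathbf{A}}^*$ to its $A$ leg leaves the transpose of $\pi$ on the $A^*$ leg, so any two distinct causal effects immediately violate the constancy demanded by \eqref{eq:seq_object_one_way_def}. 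The worry you raise about your witness landing back in $c_{\mathbf{A^* \parr A}}$ disappears once you notice the cup suffices.

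There is, however, one step that fails as written. Dualising \eqref{eq:fod_parr_is_one_way} with $\mathbf{B}$ replaced by $\mathbf{B^*}$ gives $\mathbf{\left( \left(A^1\right)^* \parr B^* \right)^*} = \mathbf{A^1 \otimes B}$ on the left and $\mathbf{A^1 < B}$ on the right, i.e. it yields \eqref{eq:fo_tensor_is_one_way}, \emph{not} \eqref{eq:fo_parr_is_one_way}. The four equations with a generic $\mathbf{B}$ split into two de Morgan pairs, \eqref{eq:fod_parr_is_one_way}/\eqref{eq:fo_tensor_is_one_way} and \eqref{eq:fo_parr_is_one_way}/\eqref{eq:fod_tensor_is_one_way}, and you must supply a direct argument for one member of each pair; your text only does so for the first. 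As a result \eqref{eq:fo_parr_is_one_way} and \eqref{eq:fod_tensor_is_one_way} are not actually derived. The missing argument is symmetric to the one you give: by \eqref{eq:parr_object_left_control_def}, membership in $\mathbf{A^1 \parr B}$ requires the marginal on $\mathbf{B}$ to lie in $c_{\mathbf{B}}$ for every $\pi \in c_{\mathbf{A^1}}^*$, and since $c_{\mathbf{A^1}}^* = \left\{ \ndiscard_{\mathbf{A^1}} \right\}$ is a singleton, the additional constancy required by $\mathbf{A^1 > B}$ is automatic; this gives \eqref{eq:fo_parr_is_one_way}, and \eqref{eq:fod_tensor_is_one_way} then follows by duality. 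With that repaired, your collapse of all four operators in \eqref{eq:fo_fo_parr_is_tensor} and \eqref{eq:fod_fod_parr_is_tensor} via the mixing inclusions \eqref{eq:mixing_nat_trans} goes through.
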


The signalling conditions of $\otimes$ and $<$ may not look exactly like those in Equation \ref{eq:non-signalling}, but when we consider local systems to be first-order channels $\mathbf{A^1 \multimap B^1}$ we can show them to be the same. First, consider what the effects $\pi \in c_{\mathbf{A}^1 \multimap \mathbf{B}^1}^*$ look like. As these are dual to the set of first-order processes $\mathbf{A}^1 \multimap \mathbf{B}^1$, we can visualise them this way:
\ctikzfig{proc-effect}
Consequently, the condition in the definition~\eqref{eq:seq_object_one_way_def} applied to $(\mathbf{A}^1 \multimap \mathbf{B}^1) < (\mathbf{C}^1 \multimap \mathbf{D}^1)$ says that for any causal effect $\pi \in c_{\mathbf{C}^1 \multimap \mathbf{D}^1}$ we have:
\ctikzfig{seq-effect-reduce}
However, as was shown in~\cite{Kissinger2019a}, the effects in $c_{\mathbf{C}^1 \multimap \mathbf{D}^1}^*$ are of a very restricted form: namely plugging a causal state of $\mathbf{C}^1$ into the process and discarding the result. Hence, the equation above reduces to say that for all states $\rho \in c_{\mathbf{C}^1}$, there exists $\phi_1$ such that:
\ctikzfig{seq-effect-reduce-2}
which, by APC3 from Definition~\ref{def:apc}, recovers the first non-signalling equation in~\eqref{eq:non-signalling}:
\ctikzfig{one-way-sig}
Similarly, the symmetric conditions present in the definition of $\otimes$ recovers both equations in~\eqref{eq:non-signalling}.

By iterating these connectives, we obtain various objects that have previously been studied in the context of higher order (quantum) causal structures, such as combs \cite{Gutoski2007,Chiribella2009} and process matrices \cite{Oreshkov2012}, as demonstrated by the examples in Table \ref{table:causal_concepts}. A notable example is that of \textit{process matrices}, which are the dual of non-signalling processes. Intuitively, the two ``holes'' in this process can be thought as locations where some agent recieves an input and produces an output. The type of process matrices allows for these two locations to be causally ordered in either direction, or even be in unknown or (quantum) indefinite causal orderings.

\begin{table}
\centering
\begin{tabular}{l@{}c@{}c}
Bipartite channel & $\mathbf{(A^1 \multimap B^1) \parr (C^1 \multimap D^1)}$ & \tikzfig{parr_channel} \\
One-way & $\mathbf{(A^1 \multimap B^1) < (C^1 \multimap D^1)}$ & \tikzfig{seq_channel} \\
Non-signalling & $\mathbf{(A^1 \multimap B^1) \otimes (C^1 \multimap D^1)}$ & \tikzfig{tensor_channel} \\
2-comb & $\mathbf{(A^1 \multimap B^1) \multimap (C^1 \multimap D^1)}$ & \tikzfig{2comb_side_channel} \\
Process matrix & $\mathbf{\left( (A^1 \multimap B^1) \otimes (C^1 \multimap D^1) \right)^*}$ & \tikzfig{process_matrix}
\end{tabular}
\caption{Encodings of concepts from quantum causality as objects in $\caus{\catc}$. The pictorial presentations are informal suggestions of ways to intuitively think of the concepts in which we view time/information as flowing up the wires and in any direction within a box. The state sets of the objects are closed under affine combinations of such diagrams.}
\label{table:causal_concepts}
\end{table}

\begin{example}\label{ex:signalling_condition}
The encoding of a non-signalling condition via $<$ can be extended to non-signalling between subsets of parties in the multi-partite case. Fix a set $V$ of agents and objects $\left\{ \mathbf{A}_v \right\}_{v \in V} \subseteq \ob{\caus{\catc}}$ describing the local system provided to each agent. A multi-partite process $\Phi$ is a state of $\bigparr_{v \in V} \mathbf{A}_v$ precisely when it acts locally according to $\mathbf{A}_v$ from the perspective of each agent $v$, regardless of the actions of the other agents. Given disjoint subsets $P, Q \subseteq V$, $\Phi$ is non-signalling from $P$ to $Q$ if, regardless of the actions chosen by $V \setminus (P \cup Q)$, the local behaviour at $Q$ is independent of any choices of actions by agents in $P$.
\begin{equation}
\begin{split}
\forall \left\{ \pi_r \right\}_{r \in V \setminus (P \cup Q)} .& \forall \left\{ \pi_p \right\}_{p \in P} . \\
\tikzfig{non_signalling_multi_party} &= \tikzfig{non_signalling_multi_party1}
\end{split}
\end{equation}
Combining the definitions of Equations \ref{eq:parr_object_left_control_def} and \ref{eq:seq_object_one_way_def}, this is precisely captured by the typing:
\begin{equation}
\Phi : \left( \bigparr_{r \in V \setminus (P \cup Q)} \mathbf{A}_r \right) \parr \left( \left( \bigparr_{p \in P} \mathbf{A}_p \right) > \left( \bigparr_{q \in Q} \mathbf{A}_q \right) \right)
\end{equation}
Again, if we pick each $\mathbf{A}_v$ to be a first-order channel $\mathbf{B^1}_v \multimap \mathbf{C^1}_v$ this exactly coincides with the usual presentation, and taking affine combinations of local actions at $P$ permits correllated and entangled states at their inputs.
\begin{equation}
\begin{split}
\forall \rho &: \bigotimes_{p \in P} \mathbf{B^1}_p . \\
\tikzfig{non_signalling_multi_channel} &= \tikzfig{non_signalling_multi_channel1}
\end{split}
\end{equation}
In Section \ref{sec:union_intersection}, we introduce a mechanism to express multiple such non-signalling properties for a single process. We then use this in Section \ref{sec:graph_types} to describe types for causal structures given by graphs in which every such non-signalling condition will have $P \cup Q = V$, partitioning the vertices into a logical past $Q$ and a logical future $P$ via a cut through the graph.
\end{example}

\subsection{Product and Coproduct}\label{sec:products_coproducts}

As well as the monoidal product of $\catc$ lifting to multiple non-degenerate monoidal products on $\caus{\catc}$, we can lift the biproduct structure to distinct products and coproducts, modelling the additives of linear logic.

\begin{align}
\mathbf{A \times B} := & \left( A \oplus B, \left( \begin{array}{rl} & \left\{p_A \fatsemi \pi_A \middle| \pi_A \in c_\mathbf{A}^* \subseteq \catc(A,I) \right\} \\ \cup & \left\{p_B \fatsemi \pi_B \middle| \pi_B \in c_\mathbf{B}^* \subseteq \catc(B,I) \right\} \end{array} \right)^* \right) \\
= & \left( A \oplus B, \left\{ \langle\rho_A, \rho_B\rangle \middle| \rho_A \in c_\mathbf{A}, \rho_B \in c_\mathbf{B} \right\} \right) \\
\mathbf{A \oplus B} := & \left( A \oplus B, \left( \left\{ \rho_A \fatsemi \iota_A \middle| \rho_A \in c_\mathbf{A} \right\} \cup \left\{ \rho_B \fatsemi \iota_B \middle| \rho_B \in c_\mathbf{B} \right\} \right)^{**} \right) \\
= & \left( A \oplus B, \left\{ \left[ \pi_A, \pi_B \right] \middle| \pi_A \in c_\mathbf{A}^* \subseteq \catc(A,I), \pi_B \in c_\mathbf{B}^* \subseteq \catc(B,I) \right\}^* \right)
\end{align}

Instead of presenting a pair of interfaces in parallel, $\mathbf{A \oplus B}$ and $\mathbf{A \times B}$ present a single interface which can be chosen to act either like $\mathbf{A}$ or $\mathbf{B}$. The distinction is whether this choice is made in advance during state preparation for $\mathbf{A \oplus B}$, or if the context is allowed to make the choice of projecting into $\mathbf{A}$ or $\mathbf{B}$ for $\mathbf{A \times B}$. As a linear resource, the choice is made to be exactly one of these, i.e. we can't choose to interact with both the $\mathbf{A}$ and $\mathbf{B}$ components of $\mathbf{A \times B}$ simultaneously, but we can make the choice probabilistically. This interpretation allows us to view $\mathbf{A \oplus B}$ instead as a kind of probabilistic test which sometimes gives one result and prepares an $\mathbf{A}$, and otherwise prepares a $\mathbf{B}$; dually, we could view $\mathbf{A \times B}$ as a system which accepts a binary test result and conditionally prepares an $\mathbf{A}$ or a $\mathbf{B}$. These operators are key to encoding (abstract) probability distributions in arbitrary causal categories.


A particular first-order object that will play a special role in this paper is the binary system $\mathbf{2} := \mathbf{I \oplus I}$. This is the smallest non-trivial state space (i.e. a space with multiple distinguishable states), and hence the simplest system that can be used to send information. By definition, the states of this object are affine mixtures of the left and right injections which we can model as two-dimensional vectors of scalars summing to $1$. In our standard examples for $\catc$, the scalars for $\MatRp$ and $\CPs$ are $\mathbb{R^+}$ and the (complete) positivity requirement for the category constrains each element of the vector to be positive, giving two-outcome probability distributions. For $\MatRaff$ we have $\mathbb{R}$ without any such positivity restriction, giving two-outcome pseudo-probabilities. In light of this interpretation, we will denote the two pure states as $\tbool := \iota_1$ and $\fbool := \iota_2$ and use this to interpret Proposition \ref{prop:complement} as the fact that any morphism from $\catc$ can be realised probabilistically (i.e. in the $\tbool$ component of a causal morphism into $\mathbf{2}$, up to a non-zero scalar).

\begin{proposition}\label{prop:complement_binary_test}
For any $f \in \catc(A, B)$ and $\mathbf{A} = \left( A, c_{\mathbf{A}} \right), \mathbf{B} = \left( B, c_{\mathbf{B}} \right) \in \ob{\caus{\catc}}$, there exists some non-zero scalar $\kappa \in \catc(I, I)$, morphism $f' \in \catc(A, B)$ and causal morphism $t_f \in \caus{\catc}\left( \mathbf{A}, \mathbf{B \parr 2} \right)$ such that $t_f = (\kappa \cdot f \otimes \tbool) + (f' \otimes \fbool)$.
\end{proposition}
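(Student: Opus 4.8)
The plan is to build $t_f$ directly from the complement construction of Proposition~\ref{prop:complement} and then verify the causality condition by testing against effects. Starting from $f : A \to B$, Proposition~\ref{prop:complement} gives some $f'' : A \to B$ and scalar $\lambda$ with $f + f'' = \lambda \cdot (\nmaxmix \circ \discard_A)$ where I am abbreviating the noisy channel $\maxmix_B \circ \discard_A$; more precisely the RHS of Equation~\ref{eq:complement} is the constant channel sending everything to the uniform state. The natural candidate is then $t_f := (\kappa \cdot f \otimes \tbool) + (f' \otimes \fbool)$ where $\kappa$ and $f'$ are chosen so that the ``marginal'' $\kappa \cdot f + f'$ is (a scalar multiple of) the noisy channel into $\mathbf{B}$, ensuring that when we forget the $\mathbf{2}$ component we land in $c_{\mathbf{B}}$, and so that the $\mathbf{2}$-marginal is a valid state of $\mathbf{2}$ for every input state of $\mathbf{A}$.

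The key steps, in order, are as follows. First, apply Proposition~\ref{prop:complement} to $f$ to obtain $f''$ and $\lambda$ with $f + f'' \sim \lambda \cdot (\maxmix_B \circ \discard_A)$ in $\sub{\catc}$. Second, choose a non-zero scalar $\kappa$ small enough (using the total pre-order and invertibility of nonzero scalars from \ref{apc:scalars}) that $\kappa \cdot \lambda^{-1}$ is ``at most'' the relevant normalisation; concretely set $f' := \theta_{\mathbf{B}}^{-1} \mu_{\mathbf{B}}^{-1}\cdot(\maxmix_B \circ \discard_A) - \kappa \cdot f$ as an element of $\sub{\catc}(A,B)$, rescaled so that $\kappa \cdot f + f'$ equals the normalised noisy channel $\nmaxmix_{\mathbf{B}} \circ \ndiscard_{\mathbf{A}}$-style constant map landing in $c_{\mathbf{B}}$; then argue $f'$ lies in $\catc(A,B)$ (not just $\sub{\catc}$) by choosing $\kappa$ so that $\theta_{\mathbf{B}}^{-1}\mu_{\mathbf{B}}^{-1}(\maxmix_B\circ\discard_A) - \kappa f \sim \kappa\cdot f''$ plus a positive multiple of the noisy channel, which is a genuine morphism of $\catc$ by \ref{apc:complement}/Proposition~\ref{prop:complement} applied again, or simply by positivity arguments in the concrete cases and faithfulness of $\fsub{-}$. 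Third, verify causality of $t_f : \mathbf{A} \to \mathbf{B \parr 2}$: by Equation~\ref{eq:parr_object_left_control_def} it suffices to show that for every $\pi_2 \in c_{\mathbf{2}}^*$ and every $\rho \in c_{\mathbf{A}}$, plugging $\rho$ into $t_f$ and then $\pi_2$ into the $\mathbf{2}$ output yields an element of $c_{\mathbf{B}}$. Since $c_{\mathbf{2}}^*$ is spanned affinely by $\discard_{\mathbf{2}}$ and one other effect, and $\discard_{\mathbf{2}} \circ (\kappa f \otimes \tbool + f'\otimes \fbool) = \kappa f + f'$ which we arranged to be the normalised noisy channel in $c_{\mathbf{B}}$, while discarding $\mathbf{B}$ instead gives a causal effect test on $\mathbf{A}$ landing in $c_{\mathbf{2}}$; closure of $c_{\mathbf{B}}$ under affine combinations (Proposition~\ref{prop:affine}, using flatness) finishes this. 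Fourth, note $t_f = (\kappa \cdot f \otimes \tbool) + (f' \otimes \fbool)$ holds on the nose in $\catc$ by faithfulness of the embedding.

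The main obstacle I expect is the bookkeeping in the second step: ensuring simultaneously that (i) $f'$ is a genuine morphism of $\catc$ rather than merely of $\sub{\catc}$, and (ii) the $\mathbf{B}$-marginal $\kappa f + f'$ is exactly a scalar multiple of the noisy/uniform channel lying in $c_{\mathbf{B}}$, and (iii) the $\mathbf{2}$-marginal is causal, i.e.\ the ``probability weight'' $\kappa \cdot (\discard_B \circ f \circ \rho)$ never exceeds $1$ for $\rho \in c_{\mathbf{A}}$. Item (iii) is what forces $\kappa$ to be chosen small, and pinning down ``small enough'' abstractly requires the total pre-order and cancellativity of \ref{apc:scalars} together with finiteness of the causal basis \ref{apc:basis} (so that only finitely many constraints $\kappa \cdot (\discard_B \circ f \circ \rho_i) \le 1$ must be met, and their common solution exists by invertibility of nonzero scalars). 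Once $\kappa$ is fixed this way, items (i) and (ii) follow by running Proposition~\ref{prop:complement} once more, or directly, on the residual morphism. Everything else is routine manipulation of the $\parr$ definition and the affine-closure characterisation already established in the excerpt.
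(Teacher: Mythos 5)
Your overall strategy --- apply Proposition \ref{prop:complement} to $f$, package the two halves into the $\tbool$ and $\fbool$ branches, and verify causality by testing against the effects of $\mathbf{B \parr 2}$ --- is the same as the paper's, and your verification step is essentially right. But the middle of your argument contains a genuine error and an unnecessary complication. The error is the normalisation constant: the constant channel $c \cdot \discard_A \fatsemi \maxmix_B$ sends a state $\rho \in c_{\mathbf{A}}$ to $c\,\theta_{\mathbf{A}}^{-1} \cdot \maxmix_B$, so it lands in $c_{\mathbf{B}}$ (i.e.\ equals $\nmaxmix_{\mathbf{B}}$) precisely when $c = \theta_{\mathbf{A}}\mu_{\mathbf{B}}$, not $\theta_{\mathbf{B}}^{-1}\mu_{\mathbf{B}}^{-1}$ as you wrote; with your constant the $\mathbf{B}$-marginal $\kappa f + f'$ is generally not causal and the final check fails. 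The complication is your subtractive definition $f' := (\text{const}) - \kappa f$, which manufactures the positivity problem that then consumes most of your proposal, and whose resolution you partly defer to ``positivity arguments in the concrete cases'' --- not available in the abstract setting.

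The paper avoids both issues at once: writing $f + f'' = \lambda \cdot \discard_A \fatsemi \maxmix_B$ for the complement $f''$ (which Proposition \ref{prop:complement} already delivers as a genuine morphism of $\catc$), it sets $\kappa := \theta_{\mathbf{A}}\mu_{\mathbf{B}}\lambda^{-1}$ and $f' := \kappa \cdot f''$. Then $f'$ lies in $\catc(A,B)$ for free, and $\kappa f + f' = \kappa(f + f'') = \theta_{\mathbf{A}}\mu_{\mathbf{B}} \cdot \discard_A \fatsemi \maxmix_B$ is exactly the normalised noisy channel --- no appeal to ``small enough'' scalars, the total pre-order, or finiteness of the causal basis is needed. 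Your own remark that the residual equals $\kappa f''$ up to a multiple of the noisy channel is precisely this observation; the fix is to make that the definition rather than an afterthought. One further small correction: $\mathbf{2}$ is first-order, so $c_{\mathbf{2}}^*$ is the singleton $\left\{ \ndiscard_{\mathbf{2}} \right\}$ rather than being affinely spanned by two effects; the single condition that $\kappa f + f'$ maps $c_{\mathbf{A}}$ into $c_{\mathbf{B}}$ then suffices by Equation \ref{eq:parr_object_right_control_def}, which is what you (and the paper, dually, via separable effects $\pi \otimes \ndiscard_{\mathbf{2}}$) actually check.
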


\begin{remark}
Informally, distributivity between the additives and multiplicatives holds whenever the parallel content cannot influence the internal/external choices. Taking a step back to $\catc$, the monoidal and biproduct structures distribute nicely, so we don't need to be cautious about distributivity at the level of morphisms.
\begin{align}
\delta &:= \left[ \id_A \otimes \iota_B, \id_A \otimes \iota_C \right] = p_{A \otimes B} \fatsemi \left( \id_A \otimes \iota_B \right) + p_{A \otimes C} \fatsemi \left( \id_A \otimes \iota_C \right) \\
&: (A \otimes B) \oplus (A \otimes C) \to A \otimes (B \oplus C) \\
\delta^{-1} &:= \langle \id_A \otimes p_B, \id_A \otimes p_C \rangle = \left( \id_A \otimes p_B \right) \fatsemi \iota_{A \otimes B} + \left( \id_A \otimes p_C \right) \fatsemi \iota_{A \otimes C} \\
&: A \otimes (B \oplus C) \to (A \otimes B) \oplus (A \otimes C)
\end{align}
When does this lift up to $\caus{\catc}$? For $\otimes$ and $\parr$, it follows the regular expectations of $\mathrm{MALL}$ ($\mll$ with additives). Any $*$-autonomous category has an adjunction between $\otimes$ and $\multimap$, so the left-adjunct $\otimes$ preserves all colimits and dually $\parr$ will preserve all limits.
\begin{align}
\mathbf{\left( A \otimes B \right) \oplus \left( A \otimes C \right)} &\cong \mathbf{A \otimes \left( B \oplus C \right)} \\
\mathbf{\left( A \parr B \right) \times \left( A \parr C \right)} &\cong \mathbf{A \parr \left( B \times C \right)}
\end{align}
As for the other combinations, $\delta$ and $\delta^{-1}$ are still causal for one direction.
\begin{align}
\delta &: \mathbf{(A \parr B) \oplus (A \parr C)} \to \mathbf{A \parr (B \oplus C)} \\
\delta^{-1} &: \mathbf{A \otimes (B \times C)} \to \mathbf{(A \otimes B) \times (A \otimes C)}
\end{align}
However, when we look at the final cases (the distributivity linear implications that are not provable in $\mathrm{MALL}$), they do not preserve causality. For a concrete example, let's take $\mathbf{A} = \mathbf{2^*}$ and $\mathbf{B} = \mathbf{C} = \mathbf{I}$ and consider the following state:
\begin{equation}
\rho := \tbool \otimes \tbool + \fbool \otimes \fbool : \mathbf{2^* \parr (I \oplus I)} = \mathbf{2 \multimap 2}
\end{equation}
This just encodes an identity channel on a classical bit, so the choice between $\mathbf{B}$ and $\mathbf{C}$ will clearly be dependent on external choices at $\mathbf{A}$. This is in contrast with $\mathbf{(A \parr B) \oplus (A \parr C)}$ where the (probabilistic) choice between $\mathbf{B}$ and $\mathbf{C}$ must be fixed for any state and hence independent of actions at $\mathbf{A}$. This is clear when we apply $\delta^{-1}$ to $\rho$ and compare the result with $c_{\mathbf{(A \parr B) \oplus (A \parr C)}}$.
\begin{align}
\rho \fatsemi \delta^{-1} &= \left( \tbool \otimes \id_I \right) \fatsemi \iota_{A \otimes B} + \left( \fbool \otimes \id_I \right) \fatsemi \iota_{A \otimes C} \\
&\notin c_{\mathbf{(A \parr B) \oplus (A \parr C)}} = \left\{ \left( \maxmix_{2^*} \otimes \id_I \right) \fatsemi \iota_{A \otimes B}, \left( \maxmix_{2^*} \otimes \id_I \right) \fatsemi \iota_{A \otimes C} \right\}^{**}
\end{align}
The argument that $\delta$ is not causal for $\mathbf{(A \otimes B) \parr (A \otimes C)} \to \mathbf{A \otimes (B \times C)}$ follows by duality.

As for $<$, the non-signalling condition means we always preserve internal and external choices in the past.
\begin{align}
\mathbf{(A < C) \oplus (B < C)} &\cong \mathbf{(A \oplus B) < C} \\
\mathbf{(A < C) \times (B < C)} &\cong \mathbf{(A \times B) < C}
\end{align}
But for choices in the future, we similarly have $\delta$ and $\delta^{-1}$ only preserving causality in one direction (using the same counterexample as for $\parr$).
\begin{align}
\delta &: \mathbf{(A < B) \oplus (A < C)} \to \mathbf{A < (B \oplus C)} \\
\delta^{-1} &: \mathbf{A < (B \times C)} \to \mathbf{(A < B) \times (A < C)}
\end{align}
\end{remark}

\subsection{Union and Intersection}\label{sec:union_intersection}


Previous work on causal categories and similar theories have presented intersections of objects either as pullbacks from embeddings of state sets \cite{Kissinger2019a} or by combining projectors \cite{Hoffreumon2022}. Since neither of these are well-defined for every pair of objects, we first define when it is possible to take intersections.

\begin{definition}\label{def:union_intersection}
Objects $\mathbf{A}, \mathbf{A'} \in \ob{\caus{\catc}}$ are \textit{set-compatible} $\mathbf{A} \pitchfork \mathbf{A'}$ when they share the same carrier object $A \in \ob{\catc}$ and normalisation scalars $\mu_{\mathbf{A}} = \mu_{\mathbf{A'}}$ and $\theta_{\mathbf{A}} = \theta_{\mathbf{A'}}$, i.e. $c_\mathbf{A} \cap c_{\mathbf{A'}} \neq \emptyset$ (see Proposition \ref{prop:set_compatible_by_intersection}).

Given set-compatible $\mathbf{A} \pitchfork \mathbf{A'}$, we define their intersection and closed union as:
\begin{align}
\mathbf{A \cap A'} &:= \left( A, c_\mathbf{A} \cap c_{\mathbf{A'}} \right) \\
\mathbf{A \cup A'} &:= \left( A, \left( c_{\mathbf{A}} \cup c_{\mathbf{A'}} \right)^{**} \right)
\end{align}
\end{definition}

\begin{lemma}\label{lemma:union_intersection_rules}
Given objects $\mathbf{A}, \mathbf{A'}, \mathbf{B} \in \ob{\caus{\catc}}$ with $\mathbf{A} \pitchfork \mathbf{A'}$, we have the following for each $\Box \in \left\{ \otimes, \parr, <, >, \times, \oplus \right\}$ and $\lozenge \in \left\{ \cup, \cap \right\}$:
\begin{align}
\mathbf{(A \cup A')^*} &= \mathbf{A^* \cap A'^*} \label{eq:union_intersection_de_morgan} \\
\mathbf{A \otimes B} &= \mathbf{(A < B) \cap (A > B)} \label{eq:tensor_intersection_decomposition} \\
\mathbf{A \parr B} &= \mathbf{(A < B) \cup (A > B)} \label{eq:parr_union_decomposition} \\
\mathbf{(A \lozenge A') \Box B} &= \mathbf{(A \Box B) \lozenge (A' \Box B)} \label{eq:iu_monoidal_distribution}
\end{align}
\end{lemma}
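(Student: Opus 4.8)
The plan is to treat the four equations in order, the last carrying all the weight. Throughout I use: closed sets satisfy $S^{**}=S$; $(-)^{*}$ is antitone with $(S\cup S')^{*}=S^{*}\cap S'^{*}$ and $(-)^{***}=(-)^{*}$ (Definition~\ref{def:dual_set}); on a non-zero carrier every flat set $S$ satisfies $S^{**}=\affp(S)$ (Proposition~\ref{prop:affine}); and every operation below is assembled from composition, hence $\sub{\catc}$-linear in each slot and compatible with affine combinations. I also use without comment that intersections of closed flat sets with matching normalisation are again closed and flat, so all objects written below are well formed, and that the set-compatibility of the displayed intersections (such as $\mathbf{A<B}\pitchfork\mathbf{A'<B}$) is routine, since these share a carrier, have equal normalisation scalars, and contain a common separable state.

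For~\eqref{eq:union_intersection_de_morgan}: unfolding Definition~\ref{def:union_intersection}, $c_{\mathbf{(A\cup A')^{*}}}=((c_{\mathbf{A}}\cup c_{\mathbf{A'}})^{**})^{*}=(c_{\mathbf{A}}\cup c_{\mathbf{A'}})^{*}=c_{\mathbf{A}}^{*}\cap c_{\mathbf{A'}}^{*}=c_{\mathbf{A^{*}}}\cap c_{\mathbf{A'^{*}}}$, where $\mathbf{A^{*}}\pitchfork\mathbf{A'^{*}}$ holds because dualising swaps the normalisation scalars, so set-compatibility is preserved. For~\eqref{eq:tensor_intersection_decomposition}: the two-sided non-signalling description of $\mathbf{A\otimes B}$ in~\eqref{eq:tensor_object_non_signalling_def} is, on the nose, the conjunction of the one-way condition~\eqref{eq:seq_object_one_way_def} defining $\mathbf{A<B}$ with its mirror image defining $\mathbf{A>B}$, so the two state sets are literally equal. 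Then~\eqref{eq:parr_union_decomposition} is~\eqref{eq:tensor_intersection_decomposition} dualised: applying $(-)^{*}$ and rewriting with $\mathbf{(A\parr B)^{*}}=\mathbf{A^{*}\otimes B^{*}}$ from~\eqref{eq:tensor_parr_de_morgan}, $\mathbf{(A<B)^{*}}=\mathbf{A^{*}<B^{*}}$ and $\mathbf{(A>B)^{*}}=\mathbf{A^{*}>B^{*}}$ from~\eqref{eq:seq_de_morgan}, and~\eqref{eq:union_intersection_de_morgan}, identifies $\mathbf{((A<B)\cup(A>B))^{*}}$ with $\mathbf{(A\parr B)^{*}}$, and $(-)^{*}$ is involutive on $\caus{\catc}$.

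For the distributivity laws~\eqref{eq:iu_monoidal_distribution} I would first cut down the twelve instances. The symmetry isomorphism $\mathbf{A>B}\cong\mathbf{B<A}$ makes each $(>,\lozenge)$ instance the statement that $<$ distributes over its \emph{second} argument (whereas $(<,\lozenge)$ is distribution over the first), and since $\otimes,\parr,\times,\oplus$ are symmetric this distinction is vacuous for them. Applying $(-)^{*}$ with the de Morgan laws ($\mathbf{(X\otimes Y)^{*}}=\mathbf{X^{*}\parr Y^{*}}$, $\mathbf{(X<Y)^{*}}=\mathbf{X^{*}<Y^{*}}$, $\mathbf{(X\times Y)^{*}}=\mathbf{X^{*}\oplus Y^{*}}$) and~\eqref{eq:union_intersection_de_morgan} turns a $(\Box,\cap)$ statement into the $(\Box',\cup)$ statement for the de Morgan dual $\Box'$, and $(\otimes,\cap)$ follows from $(<,\cap)$ together with its second-argument analogue $(>,\cap)$ via~\eqref{eq:tensor_intersection_decomposition} after reorganising the resulting iterated intersection. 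So it suffices to prove directly the five $\cap$-type statements $(\parr,\cap)$, $(<,\cap)$, $(>,\cap)$, $(\times,\cap)$, $(\oplus,\cap)$. In each, the inclusion $\mathbf{(A\cap A')\Box B}\subseteq\mathbf{(A\Box B)\cap(A'\Box B)}$ is free, by monotonicity of the functor $-\Box B$; the reverse inclusion is the content.

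Three of these are immediate from the right ``control'' presentation: the right-control form~\eqref{eq:parr_object_right_control_def} for $\parr$ and the one-way form~\eqref{eq:seq_object_one_way_def} for $<$ both read ``$\forall\pi_B\in c_{\mathbf{B}}^{*}.\ h\fatsemi(\id_A\otimes\pi_B)\in c_{\mathbf{A}}$'', so replacing $\mathbf{A}$ by $\mathbf{A\cap A'}$ merely replaces $c_{\mathbf{A}}$ by $c_{\mathbf{A}}\cap c_{\mathbf{A'}}$ in the landing set, which is visibly the intersection of the two conditions; and for $\times$ one uses the unique pairing decomposition $h=\langle h\fatsemi p_A,h\fatsemi p_B\rangle$ on a biproduct carrier, so $h$ lies in both $\mathbf{A\times B}$ and $\mathbf{A'\times B}$ iff $h\fatsemi p_A\in c_{\mathbf{A}}\cap c_{\mathbf{A'}}$ and $h\fatsemi p_B\in c_{\mathbf{B}}$. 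The remaining cases $(>,\cap)$ and $(\oplus,\cap)$ are the main obstacle: here $\mathbf{A}$ occurs in a \emph{quantifier}, over $c_{\mathbf{A\cap A'}}^{*}=(c_{\mathbf{A}}^{*}\cup c_{\mathbf{A'}}^{*})^{**}$, which is strictly larger than $c_{\mathbf{A}}^{*}\cup c_{\mathbf{A'}}^{*}$ — the dual of an intersection of state sets is the \emph{closed} union, not the naive union, of the dual effect sets — so the two descriptions are not syntactically equal. I would bridge this with an affine-pullback lemma: if $S$ is flat (over a non-zero carrier) and $Q$ is closed and flat, then for any composition-built family $\Phi_{(-)}$ linear in its argument, $\{h\mid\forall\pi\in S.\ \Phi_\pi(h)\in Q\}=\{h\mid\forall\pi\in S^{**}.\ \Phi_\pi(h)\in Q\}$; the non-obvious inclusion takes $\pi'\in S^{**}=\affp(S)$, writes $\fsub{\pi'}\sim\sum_i\lambda_i\fsub{\pi_i}$ with $\pi_i\in S$ and $\sum_i\lambda_i\sim\id_I$, so that $\fsub{\Phi_{\pi'}(h)}\sim\sum_i\lambda_i\fsub{\Phi_{\pi_i}(h)}\in\aff(Q)$, whence $\Phi_{\pi'}(h)\in\affp(Q)=Q$. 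For $(>,\cap)$ apply this with $S=c_{\mathbf{A}}^{*}\cup c_{\mathbf{A'}}^{*}$ (flat, since it contains the common normalised uniform effect and its dual $c_{\mathbf{A}}\cap c_{\mathbf{A'}}$ contains the common normalised uniform state) and $Q=c_{\mathbf{B}}$, using the one-way description of $\mathbf{A>B}\cong\mathbf{B<A}$; for $(\oplus,\cap)$ apply the same lemma to the effect set $\{[\pi,\pi_B]\mid\pi\in c_{\mathbf{A}}^{*}\cup c_{\mathbf{A'}}^{*},\ \pi_B\in c_{\mathbf{B}}^{*}\}$ underlying the dual description of $\oplus$, using bilinearity of the copairing. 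This collapses the oversized quantifiers back down and finishes those cases, hence — via the reductions above — all of~\eqref{eq:iu_monoidal_distribution}.
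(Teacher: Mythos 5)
Your proof is correct, and the underlying toolkit is the same as the paper's: De Morgan duality to halve the case analysis, direct unfolding of the signalling/control presentations for the easy instances, and an affine-closure argument to collapse quantification over $c_{\mathbf{A \cap A'}}^* = \left(c_{\mathbf{A}}^* \cup c_{\mathbf{A'}}^*\right)^{**}$ down to the generating union. The organisation differs in a worthwhile way, though. You reduce all twelve distribution instances to five $\cap$-type statements and prove $(>,\cap)$ and $(\oplus,\cap)$ directly via your affine-pullback lemma, whereas the paper takes $(\otimes,\cup)$ and $(\times,\cup)$ as additional primitive cases, proving them by explicitly re-bracketing nested affine combinations of separable states and of pairings, and then obtains $(\parr,\cap)$ and $(\oplus,\cap)$ by duality; it also imports $\mathbf{A\otimes B}=\mathbf{(A<B)\cap(A>B)}$ from earlier work rather than reading it off the non-signalling presentation as you do (which is legitimate, since the paper already asserts the equivalence of the separable and non-signalling descriptions of $\otimes$). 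Your version buys a more uniform treatment: one lemma covers both problematic quantifier cases, and it makes explicit the step that the paper's $(>,\cap)$ chain of equivalences uses silently when it passes from $\forall\pi\in\left(c_{\mathbf{A}}^*\cup c_{\mathbf{A'}}^*\right)^{**}$ to $\forall\pi\in c_{\mathbf{A}}^*\cup c_{\mathbf{A'}}^*$. One small point to tighten: the one-way condition defining $<$ and $>$ is not a bare membership $\Phi_\pi(h)\in Q$ but the conjunction of the equality $h\fatsemi(\pi\otimes\id)=h\fatsemi(\ndiscard\otimes\id)$ with a membership, so your pullback lemma should be stated so that it also covers the equality clause --- which your argument does for free, since an affine combination (with coefficients summing to $\id_I$) of copies of the constant marginal is again that marginal.
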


The de Morgan duality of Equation \ref{eq:union_intersection_de_morgan} is not too surprising here. Equation \ref{eq:tensor_intersection_decomposition} comes from the characterisation of the non-signalling space as the affine combination of local operators \cite{Simmons2022,Hoffreumon2022,Cavalcanti2022,Chiribella2013,Gutoski2008,Al-Safi2013}. By duality, we get Equation \ref{eq:parr_union_decomposition} which, at a high level, demonstrates the principle we will elaborate on in Section \ref{sec:logic} that even settings with arbitrary communication (including random, dynamic, or indefinite causal orderings) can be expressed as an affine combination of terms with definite causal orders. Finally, Equation \ref{eq:iu_monoidal_distribution} shows that both union and intersection distribute over any of the monoidal products (a similar equation holds for $\multimap$, but the mixed variance may switch between union and intersection in line with the de Morgan duality). Proofs of these equations are given in Appendix \ref{sec:caus_construction_proofs} (some are lifted from Hoffreumon and Oreshkov's type system \cite{Hoffreumon2022} and some are new for this paper).


\begin{remark}\label{remark:no_distribution_of_union_intersection}
Whilst Hoffreumon and Oreshkov's type system permitted a distribution law $(A \cup B) \cap C = (A \cap C) \cup (B \cap C)$ \cite{Hoffreumon2022}, this doesn't generalise to our causal categories here. Consider the space $3 = I \oplus I \oplus I$ in $\MatRp$, which has a dual basis formed by the injections and projections. We can define three flat, closed spaces by $\mathbf{A_i} = \left( 3, \left\{ \iota_i, \tfrac{1}{3} \maxmix_3 \right\}^{**} \right)$ ($i \in \{1, 2, 3\}$). The intersection between any pair of these spaces is the singleton space $\left( 3, \left\{ \tfrac{1}{3} \maxmix_3 \right\} \right) \cong \mathbf{3^{1*}}$ and the union of any pair is the full first-order space $\mathbf{3^1}$. Then $\mathbf{(A_1 \cup A_2) \cap A_3} = \mathbf{3^1 \cap A_3} = \mathbf{A_3}$ but $\mathbf{(A_1 \cap A_3) \cup (A_2 \cap A_3)} \cong \mathbf{3^{1*} \cup 3^{1*}} = \mathbf{3^{1*}}$.
\end{remark}

\section{Graph Types}\label{sec:graph_types}

In causality literature, a causal structure may be defined either by factorisation (a process is compatible with a graph if it factorises into local functions dependent only on the predecessors in the graph) or signalling (a process is compatible with a graph if it is non-signalling between any two sets of parties unless connected by the successor relation). It is often crucial to distinguish between these as they are not equivalent notions: whilst every process that is localisable according to a graph satisfies the appropriate non-signalling constraints, even in the simple case of a bipartite system there exist non-signalling processes that are not localisable \cite{Popescu1994}. In this section, we will build types in $\caus{\catc}$ to directly encode these notions of causal structures up to affine closure and examine their behaviour.

\subsection{The Equivalence of Factorisations, Signalling, and Orderings}\label{sec:equivalences}


We can consider $\otimes$ and $<$ as starting points for describing causal structures, where they represent the two vertex graphs with zero and one edge respectively. However, the numerous equivalent definitions give us multiple starting points from which we can generalise the constructions. The first direction we will look at generalises the (semi-)localisable definitions of Equations \ref{eq:tensor_object_separable_def} and \ref{eq:seq_object_semi_localisable_def}. These show the spaces as spanned (under affine combination) by terms which factorise into local causal processes connected by wires along the edges of a graph (trivially for $\otimes$). In Equation \ref{eq:seq_object_semi_localisable_def} in particular, we require the intermediate system to be first-order, allowing information to travel from $\mathbf{A}$ to $\mathbf{B}$ but never in the opposite direction. This concept is one that generalises straightforwardly to factorising with wires given by the edges of a graph.

\begin{definition}\label{def:graph_morphism}
Let $G = (V, E)$ be a directed graph, and suppose $V$ is ordered. A \textit{local interpretation for $G$} is a function $\Gamma : V \to \ob{\caus{\catc}}$ assigning a causal object to each vertex. An \textit{edge interpretation for $G$} is a function $\Delta : E \to \ob{\focat{\caus{\catc}}}$. Together, these determine a \textit{component typing for $G$} $\mathrm{Comp}_G^{\Gamma;\Delta} : V \to \ob{\caus{\catc}}$ by:
\begin{equation}\label{eq:component_type_def}
\mathrm{Comp}_G^{\Gamma;\Delta}(v) := \left( \begin{array}{r@{}l} & \Gamma(v) \\ \parr & \left( \bigparr_{(u \to v) \in E} \Delta(u \to v)^* \right) \\ \parr & \left( \bigparr_{(v \to w) \in E} \Delta(v \to w) \right) \end{array} \right)
\end{equation}
The \textit{contraction morphism for $G$} is the morphism
\begin{equation}\label{eq:contraction_morphism_def}
\epsilon_G^{\Gamma;\Delta} \in \catc \left(\bigotimes_{v \in V} \mathcal{U}\left( \mathrm{Comp}_G^{\Gamma;\Delta}(v) \right), \bigotimes_{v \in V} \mathcal{U} \left( \Gamma(v) \right) \right)
\end{equation}
formed by the following rules:
\begin{itemize}
\item For each edge $(u \to v) \in E$, apply a cap $\epsilon_{\Delta(u \to v)}  : \Delta(u \to v) \otimes \Delta(u \to v)^* \to \mathbf{I}$ between the $(u \to v)$ component of $\mathrm{Comp}_G^{\Gamma;\Delta}(u)$ and the $(u \to v)$ component of $\mathrm{Comp}_G^{\Gamma;\Delta}(v)$.
\item For each vertex $v \in V$, apply an identity on the $\Gamma(v)$ within $\mathrm{Comp}_G^{\Gamma;\Delta}(v)$.
\end{itemize}
A \textit{graph state over $G$} is a morphism $g : I \to \bigotimes_{v \in V} \mathcal{U}\left( \Gamma(v) \right)$ which factorises as $g = \left( \bigotimes_{v \in V} g_v \right) \fatsemi \epsilon_G^\Gamma$ with the component at each vertex matching the component typing $\forall v \in V . g_v : \mathrm{Comp}_G^{\Gamma;\Delta}(v)$.
\end{definition}

\begin{example}
Let $N$ be the following graph.
\begin{equation}
\tikzfig{n_graph}
\end{equation}
For simplicity, consider constant interpretations $\Gamma$, $\Delta$ that map vertices to bit channels $\mathbf{2 \multimap 2}$ and each edge to an a single bit $\mathbf{2}$. The graph states over $N$ are then morphisms of the following form where each component $g_v$ is a causal state of $\mathrm{Comp}_N^{\Gamma;\Delta}(v)$:
\begin{equation}
\tikzfig{n_graph_state}
\end{equation}
Unpacking each component $g_v$ into the corresponding channel $f_v$, we see more clearly that the graph states encode networks that factorise according to the graph $N$.
\begin{equation}
\tikzfig{n_graph_channel}
\end{equation}
\end{example}

\begin{lemma}\label{lemma:acyclic_graph_morphisms_flat}
The set of graph states over $G$ with a fixed edge interpretation $\Delta$ is flat iff in every cycle of $G$ there is some edge $(u \to v)$ for which $\Delta(u \to v) \cong \mathbf{I}$.
\end{lemma}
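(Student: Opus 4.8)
The plan is to split flatness of the graph-state set $c$ into its two halves --- the presence of a rescaled uniform state $\mu\maxmix$ in $c$, and of a rescaled discard $\theta\discard$ in $c^*$ --- and to observe that the first is automatic. Taking $g_v := \nmaxmix_{\mathrm{Comp}_G^{\Gamma;\Delta}(v)}$ at every vertex (legal since each $\mathrm{Comp}_G^{\Gamma;\Delta}(v)$ is a $\parr$ of causal objects, hence flat), multiplicativity of $\maxmix$ over $\otimes$ together with the fact that capping a pair of uniform states on $\Delta(e)$ produces a dimension scalar, invertible by \ref{apc:dimension}, show that $g := \bigl(\bigotimes_v g_v\bigr)\fatsemi\epsilon_G^{\Gamma;\Delta}$ equals $\maxmix_{\bigotimes_v\Gamma(v)}$ up to an invertible scalar; so the required $\mu\maxmix$ lies in $c$ regardless of the structure of $G$. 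Hence the lemma reduces to: discarding an arbitrary graph state always yields the same (necessarily invertible) scalar.

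The first real step is to turn this into a purely edge-theoretic condition. Writing $\mathrm{Comp}_G^{\Gamma;\Delta}(v) = \Gamma(v)\parr\mathbf{E}_v$ with $\mathbf{E}_v := \bigl(\bigparr_{u\to v}\Delta(u\to v)^*\bigr)\parr\bigl(\bigparr_{v\to w}\Delta(v\to w)\bigr)$, applying $\ndiscard_{\Gamma(v)}$ to the $\Gamma(v)$-leg of $g_v$ is legal by the $\parr$ property (Eq.~\ref{eq:parr_object_left_control_def}) and produces an \emph{arbitrary} $\tilde g_v\in c_{\mathbf{E}_v}$, so that $g\fatsemi\discard_{\bigotimes_v\Gamma(v)}$ equals, up to the fixed invertible scalar $\prod_v\theta_{\Gamma(v)}^{-1}$, the contraction of the $\tilde g_v$ along the edge-caps. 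Equations~\ref{eq:fo_fo_parr_is_tensor} and~\ref{eq:fod_fod_parr_is_tensor} then rewrite $\mathbf{E}_v$ as $\bigl(\bigotimes_{u\to v}\Delta(u\to v)\bigr)\multimap\bigl(\bigotimes_{v\to w}\Delta(v\to w)\bigr)$, so each $\tilde g_v$ is a genuine causal channel between (tensors of) first-order systems, and the edge contraction is simply these channels wired along the edges of $G$; an edge $e$ with $\Delta(e)\cong\mathbf{I}$ carries a one-dimensional wire, deletable at the cost of an invertible scalar. After deleting all such edges the hypothesis says exactly ``$G$ is acyclic'', and flatness is equivalent to the edge contraction being constant.

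Then for the forward direction ($\Leftarrow$), with $G$ acyclic, processing the vertices in topological order makes the edge contraction literally a composite, via $\otimes$ and $\fatsemi$, of the causal channels $\tilde g_v$ with no dangling wires --- a morphism $\mathbf{I}\to\mathbf{I}$ in $\caus{\catc}$, hence $\id_I$ --- so it is constant and $c$ is flat. For the converse ($\Rightarrow$): if the hypothesis fails, pick a simple cycle $v_0\to\cdots\to v_{k-1}\to v_0$ with every $\mathbf{B}_i := \Delta(v_i\to v_{i+1})\not\cong\mathbf{I}$, set every off-cycle vertex and wire to ``discard the input, output $\nmaxmix$'' (causal, contributing only invertible scalars and trivialising all off-cycle edges), and compare two on-cycle choices. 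The discard-and-reprepare choice makes the whole contraction $\id_I$; for a competing choice, at each cycle edge pick distinct states $\rho_i^0\ne\rho_i^1\in c_{\mathbf{B}_i}$ (they exist, as a first-order object with a single state is $\cong\mathbf{I}$), let $r_i:\mathbf{2}\to\mathbf{B}_i$ be their affine extension (causal by affine closure, Proposition~\ref{prop:affine}) and $q_i:\mathbf{B}_i\to\mathbf{2}$ a causal channel separating them (from a separating effect, via \ref{apc:basis} and Proposition~\ref{prop:complement_binary_test}). Cyclicity of the categorical trace then shows the cycle contraction equals $\mathrm{Tr}(R_0\fatsemi\cdots\fatsemi R_{k-1})$ where $R_i := r_i\fatsemi q_i:\mathbf{2}\to\mathbf{2}$ is a non-constant bit channel (a $2\times 2$ array of scalars with unit column sums and distinct columns); since such matrices are closed under products and $\det$ is multiplicative, this equals $\id_I + \prod_i\det R_i$ with each $\det R_i\ne 0$, hence differs from $\id_I$ by \ref{apc:scalars}. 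So the two graph states are discarded to distinct scalars and $c$ is not flat.

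The hard part will be the $(\Rightarrow)$ direction: one cannot route a ``perfect bit'' around the cycle, since two distinct states of a first-order system need not be perfectly distinguishable, so the competing choice must be built only from \emph{non-constant} bit channels $R_i$, and it is the determinant identity $\mathrm{Tr}(R_0\fatsemi\cdots\fatsemi R_{k-1}) = \id_I + \prod_i\det R_i$ (with $\det R_i\neq 0$ and cancellativity of scalars) that forces a non-unit trace. The remaining points --- that $\tilde g_v$ really exhausts $c_{\mathbf{E}_v}$, that the $\parr$/$\multimap$ rewrites of $\mathbf{E}_v$ line up with the caps of $\epsilon_G^{\Gamma;\Delta}$, and that off-cycle data is uniformly trivialisable --- are routine bookkeeping but should be carried out carefully.
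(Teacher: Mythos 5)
Your proof is correct in outline and shares the paper's skeleton (the uniform state is automatically a graph state; the acyclic case follows by consuming vertices in topological order, which is the paper's induction on $|V|$ in disguise; the cyclic case is refuted by routing a signalling loop around a minimal cycle and showing the resulting scalar is $\id_I$ plus a nonzero term). Where you genuinely diverge is in how the loop scalar is computed. The paper builds, for each cycle vertex, a bespoke noisy channel via a dual-basis construction (its Lemma~\ref{lemma:binary_channel_encoding_morphism}), carefully massaged into a genuine morphism of $\catc$ using Proposition~\ref{prop:complement}, and then evaluates the feedback loop directly to get $\id_I + \prod_v(\lambda_v\theta^{-1}\mu^{-1}+\id_I)^{-1}$. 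You instead factor every cycle channel through $\mathbf{2}$ using a state-preparation $r_i$ (a copairing, automatically in $\catc$) and a binary test $q_i$ from Proposition~\ref{prop:complement_binary_test}, reduce by cyclicity of the trace to a product of column-stochastic $2\times 2$ matrices over the scalar ring, and invoke $\mathrm{Tr}(M)=\id_I+\det(M)$ together with multiplicativity of $\det$. This is more elementary and sidesteps the positivity headaches the paper's Lemma~\ref{lemma:binary_channel_encoding_morphism} exists to solve; it also makes visible why the answer has the form $\id_I+\prod(\cdot)$. What each approach buys: the paper's channel construction is reused verbatim in Lemmas~\ref{lemma:transitive_graph_edge_pruning} and~\ref{lemma:transitive_closure_dual_graph_type}, whereas your $\mathbf{2}$-factored version is self-contained but local to this lemma.

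Two small points you should make explicit when writing this up. First, the separating effect for $\rho_i^0\neq\rho_i^1$ supplied by the dual basis a priori lives in $\sub{\catc}$; you need to observe that one of its positive representatives already separates the two states before feeding it to Proposition~\ref{prop:complement_binary_test} to get a causal $q_i$ in $\catc$. Second, the step from ``each $\det R_i\neq 0$'' to ``$\prod_i\det R_i\neq 0$'' needs that $\sub{\catc}(I,I)$ has no zero divisors; this does follow from \ref{apc:scalars} (a nonzero difference of scalars is, up to sign, a nonzero and hence invertible scalar of $\catc$ by the total preorder and cancellativity), but it is not automatic and deserves a sentence. Neither issue threatens the argument.
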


The first candidate definition of a graph type considers taking affine combinations of graph states with arbitrary edge interpretations. We will also consider artificially restricting all intermediate edges to the object $\mathbf{2}$, the simplest object that can still carry information, as we will eventually see that this is sufficient to recover arbitrary systems under affine combination.

\begin{definition}\label{def:local_graph_types}
Given a DAG $G = (V, E)$ with ordered vertices and a local interpretation $\Gamma$, the \textit{local graph type over $G$} is
\begin{equation}\label{eq:local_graph_type_def}
\mathbf{LoGr}_G^\Gamma := \left( \bigotimes_{v \in V} \mathcal{U}\left( \Gamma(v) \right), \left\{ g \middle| \begin{array}{l} g \text{ is a graph state} \\ \text{over } G \text{ for some edge} \\ \text{interpretation } \Delta \end{array} \right\}^{**} \right)
\end{equation}
and the \textit{$\mathbf{2}$-local graph type over $G$} is
\begin{equation}\label{eq:2local_graph_type_def}
\mathbf{LoGr2}_G^{\Gamma} := \left( \bigotimes_{v \in V} \mathcal{U}\left( \Gamma(v) \right), \left\{ g \middle| \begin{array}{l} g \text{ is a graph state} \\ \text{over } G \text{ using } \\ \Delta :: (u \to v) \mapsto \mathbf{2} \end{array} \right\}^{**} \right)
\end{equation}
\end{definition}

Since all of the connections between components in a graph state are fixed to precisely the shape of the graph itself, local graph types give a good representation for \textit{definite causal structures} (black box processes placed in a circuit in a definite order).



The next definitions we will look at build up a graph type as the intersection of elementary signalling constraints like those of Example \ref{ex:signalling_condition}. Each constraint considers a cut through the graph that respects the directions of edges, partitioning the set of vertices into past and future; it imposes that no choice of local effects applied at the future vertices have any impact on the joint state of the past vertices. We specify the cut just by giving the past set, with validity of the cut given by asking for the past set to be \textit{down-closed} (i.e. containing all ancestors in the graph).

We provide two definitions of this kind: one which imposes no constraints on the kinds of marginal states observed in the past, and one which recursively asks the past marginals to continue respecting the signalling conditions of the graph.

\begin{definition}\label{def:signalling_graph_type}
Let $G = (V, E)$ be a DAG with ordered vertices and fix a local interpretation $\Gamma$. A subset of vertices $U \subseteq V$ is \textit{down-closed} if $\forall (u \to v) \in E . v \in U \Rightarrow u \in U$. The \textit{signalling graph type over $G$} is defined as:
\begin{equation}\label{eq:signalling_graph_type_def}
\mathbf{SiGr}_G^\Gamma := \bigcap_{U \subset V \text{ down-closed}} \mathrm{perm}_V \left( \bigparr_{u \in U} \Gamma(u) < \bigparr_{v \in V \setminus U} \Gamma(v) \right)
\end{equation}

The \textit{recursive signalling graph type over $G$} is defined as:
\begin{equation}\label{eq:recursive_signalling_def_main}
\mathbf{RSiGr}_G^\Gamma := \bigcap_{U \subset V \text{ down-closed}} \mathrm{perm}_V \left( \mathbf{RSiGr}_{G[U]}^{\Gamma[U]} < \bigparr_{v \in V \setminus U} \Gamma(v) \right)
\end{equation}
where $G[U]$ is the induced subgraph over $U$ and $\Gamma[U]$ similarly restricts the domain of $\Gamma$ to $U$. $\mathrm{perm}_V$ permutes the components wrt the tensor product $\otimes$ of $\catc$ to match the standard ordering of $V$ (and, in the case of $U = \emptyset$, applies a unitor to absorb the trivial side of the $<$) so that each term in the intersection has the same carrier object $\bigotimes_{v \in V} \mathcal{U} \left( \Gamma (v) \right)$.
\end{definition}

In the seminal paper on the $\caus{-}$ construction, Kissinger and Uijlen \cite{Kissinger2019a} explored a definition for compatibility of a multi-partite first-order channel with a partial order over the parties similar to the signalling graph type. The division of each local system into an input and an output made it possible to consider elementary constraints stating that a subset of outputs are only dependent on their inputs of their ancestors in the graph. If we split each vertex in two for input and output, we can encode this as above by considering our past set to contain the chosen set of outputs and their ancestors inputs.

They notably showed that compatibility with a partial order in this sense was equivalent to compatibility with every totalisation of the partial order \cite[Theorem 6.12]{Kissinger2019a}. This brings us to the final graph type definition we will consider:

\begin{definition}\label{def:ordered_graph_type}
The \textit{ordered graph type over $G$} is defined as:
\begin{equation}\label{eq:ordered_graph_type_def}
\mathbf{OrGr}_G^\Gamma := \bigcap_{\left[v_1, \ldots, v_n\right] \in \mathrm{sort}(G)} \mathrm{perm}_V \left( \Gamma\left(v_1\right) < \cdots < \Gamma\left(v_n\right) \right)
\end{equation}
where $\mathrm{sort}(G)$ is the set of all topological sorts of the graph $G$.
\end{definition}

Here, the graph acts as a partial order in the same sense as event orderings in distributed systems \cite{Lamport1978}, representing concurrent events via their multiple possible linear orderings. 

\begin{remark}
Relating to the category theory literature, the ordered graph type extends $\otimes$ and $<$ from series-parallel orders to arbitrary partial orders over the vertices using the pullback construction of Shapiro \& Spivak \cite{Shapiro2022}, elevating the duoidal structure to a dependence structure ($\caus{\catc}$ is a pseudo-algebra of the categorical symmetric operad of finite posets).
\end{remark}

For readers familiar with the classical causality literature, the signalling graph type condition (Equation \ref{eq:signalling_graph_type_def}) closely resembles the causal Markov condition: a probability distribution is Markov relative to a DAG iff every variable is independent of its non-descendants conditional on its parents. Since $\catc$ might not support the ability to freely construct conditional distributions, we instead phrase it in terms of signalling (whether the choice of local effects, which may cover marginalisation and some classes of interventions, has any observable effect on the remaining state), and we look at all ancestors since graph states can permit transitive signalling. The ordered graph type condition (Equation \ref{eq:ordered_graph_type_def}) resembles the ordered Markov condition with the same adjustments made.



We bring our attention the first key Theorem for this paper: the equivalence of all of these definitions. Focussing on $\mathbf{SiGr}_G^\Gamma$ and $\mathbf{OrGr}_G^\Gamma$, this lifts the result of Kissinger and Uijlen \cite[Theorem 6.12]{Kissinger2019a} from first-order channels to arbirary local systems, and between $\mathbf{LoGr}_G^\Gamma$ and $\mathbf{OrGr}_G^\Gamma$ it generalises Equations \ref{eq:tensor_object_separable_def}-\ref{eq:tensor_object_non_signalling_def} and \ref{eq:seq_object_one_way_def}-\ref{eq:seq_object_semi_localisable_def} from the empty graph for $\otimes$ or linear graphs for $<$ to arbitrary DAGs.

\begin{theorem}\label{thrm:all_graph_types_equivalent}
For any DAG $G$ with ordered vertices and local interpretation $\Gamma$, $\mathbf{LoGr}_G^\Gamma = \mathbf{LoGr2}_G^\Gamma = \mathbf{RSiGr}_G^\Gamma = \mathbf{SiGr}_G^\Gamma = \mathbf{OrGr}_G^\Gamma$.
\end{theorem}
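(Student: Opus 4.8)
The plan is to prove the chain of equalities by establishing a cycle of inclusions, leveraging the fact that all five objects share the same carrier $\bigotimes_{v \in V} \mathcal{U}(\Gamma(v))$ and are flat and closed (by Lemma~\ref{lemma:acyclic_graph_morphisms_flat} for the local versions and by construction as intersections for the signalling/ordered versions), so that by Proposition~\ref{prop:affine} it suffices to compare spanning sets of states up to affine combination. Concretely I would prove $\mathbf{LoGr2}_G^\Gamma \subseteq \mathbf{LoGr}_G^\Gamma \subseteq \mathbf{SiGr}_G^\Gamma$, then $\mathbf{SiGr}_G^\Gamma \subseteq \mathbf{OrGr}_G^\Gamma$, then $\mathbf{OrGr}_G^\Gamma \subseteq \mathbf{RSiGr}_G^\Gamma$, and finally close the loop with $\mathbf{RSiGr}_G^\Gamma \subseteq \mathbf{LoGr2}_G^\Gamma$, so that all five coincide.

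The easy inclusions come first. The inclusion $\mathbf{LoGr2}_G^\Gamma \subseteq \mathbf{LoGr}_G^\Gamma$ is immediate since a $\mathbf{2}$-edge-interpretation is a special edge interpretation. For $\mathbf{LoGr}_G^\Gamma \subseteq \mathbf{SiGr}_G^\Gamma$, it suffices (by closure and flatness) to check that every graph state $g = (\bigotimes_v g_v) \fatsemi \epsilon_G^{\Gamma;\Delta}$ lies in each term $\mathrm{perm}_V(\bigparr_{u \in U}\Gamma(u) < \bigparr_{v \in V\setminus U}\Gamma(v))$ of the intersection defining $\mathbf{SiGr}_G^\Gamma$; this is a direct generalisation of the first-order non-signalling argument sketched after Equation~\ref{eq:fod_fod_parr_is_tensor}: when $U$ is down-closed, no edge runs from $V\setminus U$ into $U$, so discarding/applying local effects at the future vertices $V\setminus U$ factors through the caps on edges internal to $V\setminus U$ or crossing out of $U$, leaving the $U$-marginal untouched, exactly the $<$-signalling condition of Equation~\ref{eq:seq_object_one_way_def} iterated via the $\parr$-definitions \ref{eq:parr_object_left_control_def}. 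The inclusion $\mathbf{SiGr}_G^\Gamma \subseteq \mathbf{OrGr}_G^\Gamma$ follows because every totalisation $[v_1, \ldots, v_n] \in \mathrm{sort}(G)$ gives rise to a chain of down-closed prefixes $\{v_1\}, \{v_1, v_2\}, \ldots$, and asking the signalling conditions for all of these prefixes is, by repeated application of the $\parr/<$ characterisations, precisely asking that $g : \Gamma(v_1) < \cdots < \Gamma(v_n)$; taking the intersection over all topological sorts yields $\mathbf{OrGr}_G^\Gamma$. Dually $\mathbf{OrGr}_G^\Gamma \subseteq \mathbf{RSiGr}_G^\Gamma$ can be handled by induction on $|V|$: a state compatible with every topological sort is in particular compatible with the sub-sorts of any down-closed $U$ followed by a sort of $V\setminus U$, which gives both the outer $<$-condition and, by the induction hypothesis applied to $G[U]$, membership in $\mathbf{RSiGr}_{G[U]}^{\Gamma[U]}$.

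The main obstacle is the closing inclusion $\mathbf{RSiGr}_G^\Gamma \subseteq \mathbf{LoGr2}_G^\Gamma$, i.e. showing that a process satisfying all the (recursive) signalling constraints actually decomposes, \emph{up to affine combination}, as a network of local causal processes wired along the edges of $G$ with only $\mathbf{2}$'s on the wires. I would prove this by induction on $|V|$ using a topological sort: pick a source vertex $v_1$ (no incoming edges), isolate its contribution using the recursive signalling condition for $U = V \setminus \{v_1\}$ — wait, rather $U$ down-closed with $V \setminus U = \{v_n\}$ a sink — to peel off the last vertex $v_n$ as a $<$ over the rest, write the resulting state of $\mathbf{RSiGr}_{G[U]}^{\Gamma[U]} < \Gamma(v_n)$ via the semi-localisable characterisation (Equation~\ref{eq:seq_object_semi_localisable_def}) as an affine combination of factorisations through a first-order intermediate system, apply the induction hypothesis to the $G[U]$-part, and finally use Proposition~\ref{prop:complement_binary_test} together with the fact (to be established, generalising the $\mathbf{2} \multimap \mathbf{2}$ remarks) that any first-order system is recovered under affine combination from copies of $\mathbf{2}$ to replace the intermediate first-order wires and the edge-bundle into $v_n$ by $\mathbf{2}$'s. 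The delicate points are tracking that the edges crossing the cut get bundled correctly into a single first-order system, that affine combinations do not destroy the graph-factorised form (handled by multilinearity of the contraction morphism $\epsilon_G^{\Gamma;\Delta}$), and that flatness is preserved at each inductive step so Proposition~\ref{prop:affine} remains applicable — this last is guaranteed by Lemma~\ref{lemma:acyclic_graph_morphisms_flat} since $G$ is acyclic. I expect this reduction-to-$\mathbf{2}$ step, and the bookkeeping of which edges get absorbed into the peeled-off vertex, to be where the real work lies.
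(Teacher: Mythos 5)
Your overall strategy --- a cycle of inclusions among the five types, with $\mathbf{RSiGr}_G^\Gamma \subseteq \mathbf{LoGr2}_G^\Gamma$ as the closing hard step --- is exactly the paper's. The four ``easy'' inclusions are fine: you traverse the middle of the cycle in the order $\mathbf{LoGr} \subseteq \mathbf{SiGr} \subseteq \mathbf{OrGr} \subseteq \mathbf{RSiGr}$ whereas the paper uses $\mathbf{LoGr} \subseteq \mathbf{OrGr} \subseteq \mathbf{SiGr} \subseteq \mathbf{RSiGr}$, but both orderings work and use the same ingredients (peeling maximal vertices using first-orderness of the edge systems, prefixes of topological sorts as down-closed sets, distribution of $\cap$ over $<$, induction for the recursive type). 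This is an immaterial difference.

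The genuine gap is in the closing inclusion $\mathbf{RSiGr}_G^\Gamma \subseteq \mathbf{LoGr2}_G^\Gamma$. Your plan is to peel off a sink $v_n$, apply the semi-localisable characterisation (Equation \ref{eq:seq_object_semi_localisable_def}) to $\mathbf{RSiGr}_{G[U]}^{\Gamma[U]} < \Gamma(v_n)$, and recurse on the $G[U]$ part. But the side-channel $\mathbf{Z^1}$ produced by Equation \ref{eq:seq_object_semi_localisable_def} emanates from the \emph{whole} past factor, not from the parents $\mathrm{in}(v_n)$ specifically: after your induction you obtain an affine combination of factorisations over the graph in which $v_n$ receives an edge from \emph{every} vertex of $U$. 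That is membership in $\mathbf{LoGr2}_{G'}^{\Gamma}$ for a strictly larger edge set $G'$, which by Lemma \ref{lemma:transitive_closure_graph_type} is in general a strictly larger type than $\mathbf{LoGr2}_G^\Gamma$. Cutting the dependence of $v_n$ down to $\mathrm{in}(v_n)$ requires combining the one-sink cut with all the \emph{other} signalling constraints (those separating $v_n$ from its non-ancestors), and turning a conjunction of non-signalling conditions into a single factorisation is precisely the content of the theorem --- it cannot be dismissed as bookkeeping, and no mechanism for it appears in your sketch. The paper avoids sink-peeling entirely: it expands $h$ against a product of dual basis effects $\bigotimes_v e_{i_v}^v$ and re-synthesises it from explicitly constructed graph states $g_{\vec{\imath}}$ whose components $t_{i_v}^v$ are conditioned only on the $\mathbf{2}$-wires from graph-predecessors and broadcast only to graph-successors; the resulting identity expresses $h$ as an invertible multiple of one graph-state combination plus correction terms that are marginals of $h$ onto down-closed subsets, which the recursive signalling condition and the induction hypothesis then show are themselves affine combinations of graph states. (Your auxiliary claim that arbitrary first-order side-channels reduce to $\mathbf{2}$'s under affine combination is also deferred rather than proved; in the paper it is built into this same construction rather than invoked separately.) As it stands, your argument for the hard direction would only prove the theorem for graphs in which each vertex's in-neighbourhood is all of its predecessors in some fixed topological order.
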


This equivalence offers us great flexibility for proving results about graph types as we can freely switch between many different characterising properties. Many of the properties in the remainder of this paper have multiple proofs using the different kinds of graph types which may be interesting and provide insight in their own right. Now the equivalence is established, we may use $\mathbf{Gr}_G^\Gamma$ to refer to graph types in a canonical way, choosing between the alternative definitions as we see fit. We can observe that the following special cases hold by the definitions:

\begin{align}
\mathbf{Gr}_{\left( \emptyset, \emptyset \right)}^{\emptyset} &= \mathbf{I} \label{eq:empty_graph_type} \\
\mathbf{Gr}_{\left( \left\{ a \right\}, \emptyset \right)}^{a \mapsto \mathbf{A}} &= \mathbf{A} \label{eq:singleton_graph_type} \\
\mathbf{Gr}_{\left( \left\{ a, b \right\}, \emptyset \right)}^{a \mapsto \mathbf{A}, b \mapsto \mathbf{B}} &= \mathbf{A \otimes B} \label{eq:tensor_as_graph_type} \\
\mathbf{Gr}_{\left( \left\{ a, b \right\}, \left\{ a \to b \right\} \right)}^{a \mapsto \mathbf{A}, b \mapsto \mathbf{B}} &= \mathbf{A < B} \label{eq:seq_as_graph_type}
\end{align}


The equivalence of local and $\mathbf{2}$-local graph types demonstrates that any channel with arbitrary capacity over any kind of data (e.g. classical data, quantum states, or other exotic systems) can be simulated by an affine combination of classical channels with 1 bit capacity. This is not too surprising since affine combinations need not preserve information capacity - whilst taking convex combinations of pure channels will only preserve or decrease capacity, affine combinations can invert this allowing us to recover pure channels of higher capacity from mixed channels of lower capacity. The $\mathbf{2}$-local graph type definition gives an additional alternative definition for sequence types, refining Equation \ref{eq:seq_object_semi_localisable_def} to fix the side-channel system to $\mathbf{2}$.

\begin{corollary}\label{corollary:seq_as_2local}
$\mathbf{A < B} = \left( A \otimes B, \left\{ \tikzfig{semi_local_HO_named_2} \middle| \begin{array}{l} h_{A2} \in c_{\mathbf{A \parr 2}}, \\ h_{2B} \in c_{\mathbf{2^* \parr B}} \end{array} \right\}^{**} \right)$
\end{corollary}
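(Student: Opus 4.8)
The plan is to read the statement off as the two-vertex, one-edge instance of Theorem~\ref{thrm:all_graph_types_equivalent}. Take $G = (\{a,b\},\{a\to b\})$ with the vertices ordered $a < b$ and local interpretation $\Gamma$ given by $\Gamma(a) = \mathbf{A}$, $\Gamma(b) = \mathbf{B}$. By Equation~\ref{eq:seq_as_graph_type} we have $\mathbf{A < B} = \mathbf{Gr}_G^\Gamma$, and Theorem~\ref{thrm:all_graph_types_equivalent} identifies $\mathbf{Gr}_G^\Gamma$ with the $\mathbf{2}$-local graph type $\mathbf{LoGr2}_G^\Gamma$. So all that remains is to unfold the definition of $\mathbf{LoGr2}_G^\Gamma$ (Definitions~\ref{def:graph_morphism} and~\ref{def:local_graph_types}) for this particular $G$ and check that it coincides verbatim with the right-hand side of the corollary.

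For the unfolding: the edge interpretation fixed by $\mathbf{LoGr2}$ is $\Delta : (a\to b)\mapsto\mathbf{2}$, the unique edge being sent to $\mathbf{2}$. The component typing of Equation~\ref{eq:component_type_def} then gives $\mathrm{Comp}_G^{\Gamma;\Delta}(a) = \mathbf{A}\parr\mathbf{2}$ (vertex $a$ has no incoming edges and one outgoing edge contributing $\Delta(a\to b) = \mathbf{2}$) and $\mathrm{Comp}_G^{\Gamma;\Delta}(b) = \mathbf{B}\parr\mathbf{2^*}$ (vertex $b$ has one incoming edge contributing $\Delta(a\to b)^* = \mathbf{2^*}$ and no outgoing edges). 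The contraction morphism $\epsilon_G^{\Gamma;\Delta}$ consists of a single cap $\epsilon_{\mathbf{2}}$ linking the $\mathbf{2}$-component of $a$ to the $\mathbf{2^*}$-component of $b$, together with identities on the $\mathbf{A}$ and $\mathbf{B}$ legs. Hence a graph state over $G$ for this $\Delta$ is precisely $h = (h_{A2}\otimes h_{2B})\fatsemi\epsilon_G^{\Gamma;\Delta}$ with $h_{A2}\in c_{\mathbf{A \parr 2}}$ and $h_{2B}\in c_{\mathbf{2^* \parr B}}$ — exactly the picture in the statement — and applying $(-)^{**}$ gives the claimed object.

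There is essentially no obstacle here; the content is entirely in Theorem~\ref{thrm:all_graph_types_equivalent}, and what is left is bookkeeping. The two points worth stating explicitly are (i) that the $\parr$-factor ordering of the component typing in Definition~\ref{def:graph_morphism} differs cosmetically from the order $c_{\mathbf{2^* \parr B}}$ written in the corollary and is reconciled by the symmetry of $\parr$ (and, globally, by the $\mathrm{perm}_V$ conventions that already appear in the graph-type definitions), and (ii) that $\mathbf{LoGr2}_G^\Gamma$ is a bona fide object of $\caus{\catc}$ because $G$ is a DAG, so the flatness criterion of Lemma~\ref{lemma:acyclic_graph_morphisms_flat} holds vacuously. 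With these observations the equality is immediate.
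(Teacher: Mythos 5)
Your proposal is correct and follows exactly the paper's own route: the paper proves this corollary by citing Equation~\ref{eq:seq_as_graph_type} together with Theorem~\ref{thrm:all_graph_types_equivalent}, and your unfolding of $\mathbf{LoGr2}_G^\Gamma$ for the two-vertex, one-edge graph is just the bookkeeping the paper leaves implicit.
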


Looking at other specific examples of graph types beyond $\mathbf{A \otimes B}$ and $\mathbf{A < B}$, we find an exact correspondence between any object generated by $\left\{ \otimes, < \right\}$ and graph types of series-parallel graphs.

\begin{corollary}\label{corollary:series_parallel_graph_types}
Given graph types over non-empty disjoint vertex sets $V$ and $V'$,
\begin{align}
\mathbf{Gr}_{\left( V, E \right)}^{\Gamma} \otimes \mathbf{Gr}_{\left( V', E' \right)}^{\Gamma'} &= \mathbf{Gr}_{\left( V \cup V', E \cup E' \right)}^{\Gamma, \Gamma'} \label{eq:tensor_of_graph_types} \\
\mathbf{Gr}_{\left( V, E \right)}^{\Gamma} < \mathbf{Gr}_{\left( V', E' \right)}^{\Gamma'} &= \mathbf{Gr}_{\left( V \cup V', E \cup E' \cup \left\{ v \to v' \middle| v \in V, v' \in V' \right\} \right)} \label{eq:seq_of_graph_types}
\end{align}
\end{corollary}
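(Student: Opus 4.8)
The plan is to prove each identity by evaluating $\mathbf{Gr}$ in whichever of the five equivalent presentations furnished by Theorem~\ref{thrm:all_graph_types_equivalent} makes the combinatorics of the (directed) disjoint union transparent, and then recognising the outcome as a structural identity in $\caus{\catc}$ that is already in hand.

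For Equation~\ref{eq:tensor_of_graph_types} I would use the local graph type (Definition~\ref{def:local_graph_types}). Write $H_0 := (V \cup V', E \cup E')$, with the evident ordering of $V \cup V'$. Since $V$ and $V'$ are disjoint and no edge of $H_0$ crosses between them, an edge interpretation for $H_0$ is exactly a pair of edge interpretations, one for $G$ and one for $G'$; for $v \in V$ the component typing $\mathrm{Comp}_{H_0}(v)$ of Definition~\ref{def:graph_morphism} coincides with $\mathrm{Comp}_G(v)$ (every edge incident to $v$ lies in $E$), and symmetrically on $V'$; and the contraction morphism of $H_0$ splits, up to reindexing, as the tensor of the contraction morphisms of $G$ and of $G'$, since no cap connects the two parts. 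Hence a graph state over $H_0$ is, up to reindexing, exactly $g \otimes g'$ for $g$ a graph state over $G$ and $g'$ a graph state over $G'$, so $c_{\mathbf{Gr}_{H_0}} = \{g \otimes g' \mid g \text{ over } G,\ g' \text{ over } G'\}^{**}$. On the other hand, the separable-states definition of $\otimes$ (Equation~\ref{eq:tensor_object_separable_def}) gives $c_{\mathbf{Gr}_G^\Gamma \otimes \mathbf{Gr}_{G'}^{\Gamma'}} = \{\rho_A \otimes \rho_B \mid \rho_A \in c_{\mathbf{Gr}_G^\Gamma},\ \rho_B \in c_{\mathbf{Gr}_{G'}^{\Gamma'}}\}^{**}$. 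These two closed sets coincide: one inclusion is monotonicity of $(-)^{**}$; for the other, the graph types here are objects of $\caus{\catc}$ by Theorem~\ref{thrm:all_graph_types_equivalent}, hence flat, so Proposition~\ref{prop:affine} writes $\rho_A$ and $\rho_B$ as affine combinations of graph states, and expanding $\rho_A \otimes \rho_B$ bilinearly (using that the embedding $\fsub{-}$ commutes with $\otimes$) presents it as an affine combination of morphisms of the form $g_i \otimes g'_j$, i.e.\ as an element of $c_{\mathbf{Gr}_{H_0}}$. This is just the observation that a bilinear operation on affine-closed sets is fixed by its values on affine spanning sets.

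For Equation~\ref{eq:seq_of_graph_types} I would instead use the ordered graph type (Definition~\ref{def:ordered_graph_type}), which is the presentation that keeps the argument short. Set $H := (V \cup V', E \cup E' \cup \{v \to v' \mid v \in V,\ v' \in V'\})$. Then $H$ is a DAG (a cycle would need an edge from $V'$ back into $V$, of which there are none), and since every vertex of $V$ precedes every vertex of $V'$ in any linear extension of $H$, the topological sorts of $H$ are exactly the concatenations $\sigma\tau$ of a topological sort $\sigma$ of $G$ with a topological sort $\tau$ of $G'$. Using associativity of $<$, and the fact that the reindexing permutation attached to $\sigma\tau$ is block-diagonal on the $V$- and $V'$-factors and hence commutes with $<$, the term of $\mathbf{OrGr}_H$ indexed by $\sigma\tau$ equals $\langle\sigma\rangle < \langle\tau\rangle'$, where $\langle\sigma\rangle$ (resp.\ $\langle\tau\rangle'$) denotes the permuted $<$-chain over $\Gamma$ (resp.\ over $\Gamma'$) indexed by $\sigma$ (resp.\ $\tau$). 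Therefore
\[
\mathbf{OrGr}_H^{\Gamma,\Gamma'} \;=\; \bigcap_{\sigma \in \mathrm{sort}(G)}\;\bigcap_{\tau \in \mathrm{sort}(G')}\bigl(\langle\sigma\rangle < \langle\tau\rangle'\bigr).
\]
All objects occurring here are set-compatible with a common carrier, so iterating the distributivity of $<$ over $\cap$ in the left argument (Equation~\ref{eq:iu_monoidal_distribution}) and in the right argument (the same equation applied to $>$, via the $<$/$>$ symmetry) collapses the double intersection to $\bigl(\bigcap_\sigma \langle\sigma\rangle\bigr) < \bigl(\bigcap_\tau \langle\tau\rangle'\bigr) = \mathbf{OrGr}_G^\Gamma < \mathbf{OrGr}_{G'}^{\Gamma'}$; Theorem~\ref{thrm:all_graph_types_equivalent} then rewrites each $\mathbf{OrGr}$ as the canonical $\mathbf{Gr}$.

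The substantive work has already been done by Theorem~\ref{thrm:all_graph_types_equivalent}; what remains is bookkeeping, and the only real pitfall is choosing the wrong presentation. Attempting Equation~\ref{eq:seq_of_graph_types} via $\mathbf{LoGr}$ or $\mathbf{LoGr2}$, for instance, would force one to show that the $|V| \cdot |V'|$ first-order cross-edges of $H$ jointly realise, under affine closure, exactly the single one-way side-channel that $<$ provides (cf.\ Corollary~\ref{corollary:seq_as_2local}); this is true but cumbersome to argue directly, whereas the ordering presentation never mentions edge types and sidesteps the point. So the only genuinely nontrivial checks are the two combinatorial facts above (the contraction morphism splitting over a disjoint union, and the description of $\mathrm{sort}(H)$) together with routine permutation bookkeeping, all of which I expect to go through without difficulty.
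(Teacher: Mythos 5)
Your proposal is correct and follows essentially the same route as the paper: the local graph type presentation for the $\otimes$ identity (graph states over the disjoint union split as tensors of graph states, and affine closure matches the separable-states definition of $\otimes$) and the ordered graph type presentation for the $<$ identity (topological sorts of the combined graph are concatenations, collapsed via the distributivity of $<$ and $>$ over $\cap$ from Equation~\ref{eq:iu_monoidal_distribution}). The extra detail you supply on the double-dual closures and the permutation bookkeeping is just a fuller writing-out of the paper's argument.
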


\begin{example}\label{example:one_time_pad}
Graph types on their own are not universally expressive forms of causal structures, as there exist combinations of non-signalling relations which do not coincide with a single graph type.

Consider the encoder for the one-time-pad scheme $\mathrm{enc} : 2 \to 2 \otimes 2$:
\begin{align}
\mathrm{enc}(\tbool) &= \tfrac{1}{2}(\tbool \otimes \fbool + \fbool \otimes \tbool) \\
\mathrm{enc}(\fbool) &= \tfrac{1}{2}(\tbool \otimes \tbool + \fbool \otimes \fbool)
\end{align}
This is a famous example to show that signalling relations between individual parties are not sufficient to infer the signalling relations between subsets of parties. The input cannot signal information successfully to either one of the outputs (i.e. marginalising out the other output means we can no longer recover any information about the input), but it does signal to the joint system over both outputs since we can recover the input by XOR.

We can show that $\mathrm{enc}$ is in the intersection of the graph types $\mathbf{Gr}_{\mathbf{In} \to \mathbf{Out_1}}^\Gamma \cap \mathbf{Gr}_{\mathbf{In} \to \mathbf{Out_2}}^\Gamma$ with a single edge from $\mathbf{In}$ to one of the outputs and the other output disconnected.
\begin{equation}
\mathrm{enc} = \frac{1}{2} \left( \tikzfig{otp_l_t} + \tikzfig{otp_l_f} \right) = \frac{1}{2} \left( \tikzfig{otp_r_t} + \tikzfig{otp_r_f} \right)
\end{equation}
Expanding the signalling graph type definitions, $\mathrm{enc}$ satisfies the following non-signalling conditions:
\begin{align}
\mathbf{Out_1 < \left( In \parr Out_2 \right)} \label{eq:enc_types_1} \\
\mathbf{Out_2 < \left( In \parr Out_1 \right)} \\
\mathbf{In < \left( Out_1 \parr Out_2 \right)} \\
\mathbf{\left( In \parr Out_1 \right) < Out_2} \\
\mathbf{\left( In \parr Out_2 \right) < Out_1} \label{eq:enc_types_5}
\end{align}

However, whilst traditionally we would think of the intersection of these graphs as being the graph with no edges, $\mathrm{enc}$ is not compatible with the corresponding graph type $\mathbf{Gr}_\emptyset^\Gamma$. This type requires all of \ref{eq:enc_types_1}-\ref{eq:enc_types_5} as well as $\mathbf{\left( Out_1 \parr Out_2 \right) < In}$ which $\mathrm{enc}$ fails to satisfy. Moreover, there is no graph type that exactly matches just the intersection of \ref{eq:enc_types_1}-\ref{eq:enc_types_5}.

We conjecture that the distribution of union and intersection of Remark \ref{remark:no_distribution_of_union_intersection} does still hold over graph types with the same local interpretation, meaning Hoffreumon and Oreshkov's union of intersections of linear sequential orderings from \cite{Hoffreumon2022} remains general enough to capture all causal data.
\end{example}

\subsection{Standard Forms}\label{sec:standard_forms}

Even after showing that all definitions of graph types coincide exactly, graph types built from different graphs may coincide. In this section, we will work towards a standard form for graph types which are unique and characterise inclusion and composition. Firstly, instead of the exact graph we find that only the partial order induced by it (i.e. its transitive closure) matters.

\begin{lemma}\label{lemma:transitive_closure_graph_type_without_pruning}
$\mathbf{Gr}_G^\Gamma = \mathbf{Gr}_{G^+}^\Gamma$ where $G^+$ is the transitive closure of $G$.
\end{lemma}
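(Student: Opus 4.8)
The plan is to reduce the claim to a purely combinatorial observation about $G$ and $G^+$ by invoking the equivalence of the graph type definitions (Theorem~\ref{thrm:all_graph_types_equivalent}) and then working with whichever characterisation makes the transitive closure invisible. The cleanest choice is the ordered graph type $\mathbf{OrGr}$, since $\mathbf{OrGr}_G^\Gamma$ is an intersection indexed by $\mathrm{sort}(G)$, the set of topological sorts of $G$, with the $v$-th factor of each term interpreted via $\Gamma$.

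First I would note that $G^+$ is again a DAG on the same ordered vertex set $V$ with the same local interpretation $\Gamma$ (the transitive closure of an acyclic relation creates no cycles), so $\mathbf{Gr}_{G^+}^\Gamma$ is well-defined, and the $\mathbf{OrGr}$ presentations of the two sides have the same carrier object and their terms range over orderings of the same $V$ built from the same $\Gamma$. Then I would show $\mathrm{sort}(G) = \mathrm{sort}(G^+)$. The inclusion $\mathrm{sort}(G^+) \subseteq \mathrm{sort}(G)$ is immediate from $E \subseteq E^+$. For the converse, if $[v_1,\dots,v_n]$ respects every edge of $G$, then for any $(u \to w) \in E^+$ there is a directed path $u = w_0 \to \cdots \to w_k = w$ in $G$; each step forces $w_j$ to precede $w_{j+1}$ in the ordering, so $u$ precedes $w$, i.e.\ the ordering respects $E^+$ as well. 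Hence the two intersections defining $\mathbf{OrGr}_G^\Gamma$ and $\mathbf{OrGr}_{G^+}^\Gamma$ are literally the same, and Theorem~\ref{thrm:all_graph_types_equivalent} closes the argument: $\mathbf{Gr}_G^\Gamma = \mathbf{OrGr}_G^\Gamma = \mathbf{OrGr}_{G^+}^\Gamma = \mathbf{Gr}_{G^+}^\Gamma$.

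An equally short alternative uses $\mathbf{SiGr}$: one checks that a subset $U \subseteq V$ is down-closed in $G$ iff it is down-closed in $G^+$ (again by walking a directed path backwards, exactly as above), so the two signalling graph types are given by identical intersections. I do not expect a genuine obstacle here. The only points requiring care are that $G^+$ must be verified to remain a DAG with unchanged vertices, ordering, and interpretation so that Definitions~\ref{def:signalling_graph_type}--\ref{def:ordered_graph_type} apply verbatim, and that one should resist arguing directly at the level of $\mathbf{LoGr}$: the inclusion $\mathbf{LoGr}_{G^+}^\Gamma \subseteq \mathbf{LoGr}_{G}^\Gamma$ would require expressing a graph state over $G^+$ as an affine combination of graph states over $G$ by rerouting information along paths, which is genuinely more delicate than the topological-sort bookkeeping above.
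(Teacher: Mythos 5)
Your proposal is correct and essentially matches the paper's argument: the paper proves this in one line via the signalling graph type definition, observing that $U \subseteq V$ is down-closed with respect to $G$ iff it is down-closed with respect to $G^+$ — exactly the alternative you sketch at the end. Your primary route via $\mathbf{OrGr}$ and the equality $\mathrm{sort}(G) = \mathrm{sort}(G^+)$ is an equally valid instance of the same idea (reduce to a combinatorial invariant of the graph under transitive closure via Theorem~\ref{thrm:all_graph_types_equivalent}), and your caution about not arguing directly at the level of $\mathbf{LoGr}$ is well placed.
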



Next, we look towards details in the local interpretations and how they induce equivalences between graph types. For example, if it maps every vertex to $\mathbf{I}$ (i.e. the graph states are scalars), then there is a unique state.

\begin{corollary}\label{corollary:scalar_graph_types}
$\mathbf{Gr}_{\left( V, E \right)}^{\left\{ v \mapsto \mathbf{I} \right\}_{v \in V}} \cong \mathbf{I}$
\end{corollary}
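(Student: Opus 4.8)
The plan is to invoke Theorem~\ref{thrm:all_graph_types_equivalent} to replace $\mathbf{Gr}_G^\Gamma$ by whichever of its equivalent presentations is most convenient, and then to collapse that presentation using the fact that $\caus{\catc}$ is a $\bv$-category, so the monoidal unit $\mathbf{I}$ is the unit not only for $\otimes$ and $\parr$ but also for the sequence operator $<$. Concretely I would work with the ordered form $\mathbf{OrGr}_G^\Gamma$ (the signalling form $\mathbf{SiGr}_G^\Gamma$ works equally well, using $\bigparr_{u} \mathbf{I} \cong \mathbf{I}$ in place of the $<$-unit law).

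First I would dispose of the trivial case $V = \emptyset$ via Equation~\ref{eq:empty_graph_type}. If $V \neq \emptyset$ then, $G$ being a DAG, $\mathrm{sort}(G)$ is non-empty, and for each topological sort $[v_1,\dots,v_n] \in \mathrm{sort}(G)$ the corresponding intersand is $\mathrm{perm}_V\!\left(\Gamma(v_1) < \cdots < \Gamma(v_n)\right) = \mathrm{perm}_V\!\left(\mathbf{I} < \cdots < \mathbf{I}\right)$. Iterating the unit isomorphism $\mathbf{I} < \mathbf{X} \cong \mathbf{X}$ collapses $\mathbf{I} < \cdots < \mathbf{I}$ to $\mathbf{I}$; and by construction $\mathrm{perm}_V$ re-indexes every term onto the common carrier $\bigotimes_{v \in V} \mathcal{U}(\mathbf{I}) = I^{\otimes |V|}$, so each intersand is literally the same object $\mathbf{J} := (I^{\otimes |V|}, \{u\})$, where $u$ is the image of $\id_I$ under the structural unitor $I \cong I^{\otimes |V|}$. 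That these state sets are genuinely the singleton $\{u\}$ can be read off the defining formulas: since $c_{\mathbf{I}} = c_{\mathbf{I}}^* = \{\id_I\}$, unwinding Equation~\ref{eq:seq_object_one_way_def} shows $c_{\mathbf{I} < \mathbf{I}}$ is a singleton, and the general case follows by induction on the length of the chain.

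Finally, the intersection of a non-empty family all of whose members equal a single object $\mathbf{J}$ is $\mathbf{J}$ itself (and $\mathbf{J} \pitchfork \mathbf{J}$ holds trivially), so $\mathbf{Gr}_G^\Gamma = \mathbf{OrGr}_G^\Gamma = \mathbf{J} \cong \mathbf{I}$ via the unitor. The one delicate point -- bookkeeping rather than a genuine obstacle -- is verifying that all the re-indexed intersands coincide \emph{on the nose}, i.e.\ that the coherence isomorphisms witnessing the $\otimes$-, $\parr$- and $<$-unit laws in $\caus{\catc}$ all carry the unique state of $\mathbf{I}$ to the same state $u$ of $I^{\otimes |V|}$, so that the intersection of Definition~\ref{def:union_intersection} is between equal objects and hence degenerate. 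This is precisely the coherence content of $\caus{\catc}$ being a $\bv$-category with isomorphic units, which we may take for granted.
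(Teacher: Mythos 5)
Your proposal is correct and follows essentially the same route as the paper: the paper's (one-line) proof also unfolds the ordered graph type definition as an intersection of $\mathrm{perm}_V\bigl(\bigseq_{v \in O}\mathbf{I}\bigr)$ terms, collapses each via the unit laws, and concludes $\cong \mathbf{I}$. Your additional care about the empty-graph case and the on-the-nose coincidence of the intersands is sound bookkeeping that the paper elides.
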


Moreover, this can be observed piecewise by the ability to remove vertices with trivial local interpretation. We still need to take the transitive closure so we don't lose any pathways that passed through this vertex.

\begin{lemma}\label{lemma:unit_elimination_from_graph_types}
$\mathbf{Gr}_{G}^{\Gamma, v \mapsto I} \cong \mathbf{Gr}_{G^+ \setminus \left\{ v \right\}}^{\Gamma}$
\end{lemma}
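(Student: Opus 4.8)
The plan is to compute both sides with the ordered graph type $\mathbf{OrGr}$, which is licensed by Theorem~\ref{thrm:all_graph_types_equivalent}, and reduce the statement to a combinatorial correspondence between topological sorts together with the fact that $\mathbf{I}$ is a two-sided unit for the sequence operator. Writing $\Gamma' := (\Gamma, v\mapsto\mathbf{I})$, we have $\mathbf{Gr}_G^{\Gamma'} = \bigcap_{[v_1,\dots,v_n]\in\mathrm{sort}(G)} \mathrm{perm}_V\!\left(\Gamma'(v_1)<\cdots<\Gamma'(v_n)\right)$. In each term the factor at the position of $v$ is $\mathbf{I}$, and the unitors of the $\bv$-structure give $\cdots<\mathbf{X}<\mathbf{I}<\mathbf{Y}<\cdots \;\cong\; \cdots<\mathbf{X}<\mathbf{Y}<\cdots$. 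The point to be careful about is that this isomorphism is realised at the level of carriers by the single canonical map that deletes the $\mathcal{U}(\mathbf{I})=I$ tensor factor, so it is the \emph{same} carrier isomorphism $\bigotimes_{u\in V}\mathcal{U}(\Gamma'(u))\cong\bigotimes_{u\in V\setminus\{v\}}\mathcal{U}(\Gamma(u))$ for every term of the intersection; since the intersection is taken over objects with a common carrier, applying this fixed isomorphism commutes with $\bigcap$, and it also intertwines $\mathrm{perm}_V$ with $\mathrm{perm}_{V\setminus\{v\}}$.

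It then remains to identify the index sets. If $[v_1,\dots,v_n]$ is a topological sort of $G$, deleting $v$ gives a linear order on $V\setminus\{v\}$; because the edges of $G^+\setminus\{v\}$ are exactly the pairs $u\to w$ (with $u,w\neq v$) that are reachable in $G$, this restricted order is a topological sort of $G^+\setminus\{v\}$. Conversely, let $\sigma$ be any topological sort of $G^+\setminus\{v\}$: for every ancestor $a$ and descendant $d$ of $v$ in $G$ we have $a\to d$ in $G^+$, hence $a\to d$ is an edge of $G^+\setminus\{v\}$, so $\sigma$ places all ancestors of $v$ before all descendants of $v$; inserting $v$ into any resulting gap yields a topological sort of $G$ restricting to $\sigma$. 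Thus the restriction map $\mathrm{sort}(G)\to\mathrm{sort}(G^+\setminus\{v\})$ is well defined and surjective, so the family of terms obtained by pushing $\{\mathrm{perm}_V(\Gamma'(v_1)<\cdots<\Gamma'(v_n))\}$ through the carrier isomorphism of the previous paragraph is exactly $\{\mathrm{perm}_{V\setminus\{v\}}(\Gamma(u_1)<\cdots<\Gamma(u_{n-1})) : [u_1,\dots,u_{n-1}]\in\mathrm{sort}(G^+\setminus\{v\})\}$. Intersecting both families gives $\mathbf{Gr}_G^{\Gamma'}\cong\mathbf{OrGr}_{G^+\setminus\{v\}}^{\Gamma}=\mathbf{Gr}_{G^+\setminus\{v\}}^{\Gamma}$ by Theorem~\ref{thrm:all_graph_types_equivalent}. (The degenerate cases $V=\{v\}$, or $v$ isolated, collapse to \eqref{eq:empty_graph_type}--\eqref{eq:singleton_graph_type} and Corollary~\ref{corollary:scalar_graph_types}.)

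The main obstacle --- the step that genuinely needs the hypothesis as stated --- is the surjectivity of the restriction map on topological sorts: this is exactly where passing to the transitive closure is indispensable, since a topological sort of $G\setminus\{v\}$ need not interleave the ancestors and descendants of $v$ correctly, so without $G^+$ one cannot always re-insert $v$. A secondary, more routine point is confirming that the unitor isomorphisms can be pulled outside the intersection uniformly and are compatible with the permutations $\mathrm{perm}_V$; this follows because they are all restrictions of one carrier-level unitor, but it should be made explicit. An alternative proof via the local graph type $\mathbf{LoGr}$ is possible --- at $v$ the component type degenerates (using that $\mathbf{I}$ is a unit for $\parr$) to a ``relay'' that merely forwards incoming edges to outgoing edges, which one can cut-eliminate into direct edges $u\to w$ --- but this requires more delicate rewiring of the edge interpretations, so I would keep it only as a fallback.
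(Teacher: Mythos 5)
Your proposal is correct and takes essentially the same route as the paper, whose one-line proof invokes exactly the correspondence you spell out: via the ordered graph type definition, restrictions of topological sorts of $G$ to $V\setminus\{v\}$ coincide with topological sorts of $G^+\setminus\{v\}$, with the unitor absorbing the $\mathbf{I}$ factor. Your additional care about surjectivity of the restriction map (and why the transitive closure is needed for it) is a useful elaboration of the step the paper leaves implicit.
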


Suppose, instead, that a vertex is interpreted as a first-order object. Because of the degeneracy of having a single causal effect, there is no way for a local external agent to signal information to other vertices, so we can prune away any outgoing edges from that vertex and preserve the graph type. Again, we need to consider transitive closures so we don't disturb indirect signalling pathways. Dually, for first-order dual objects no information can be observed locally so we can prune away incoming edges. Combining the two recovers Lemma \ref{lemma:unit_elimination_from_graph_types} by removing both incoming and outgoing edges from a vertex.

Taking the transitive closure and then pruning edges at first-order (dual) objects yields a procedure that reduces the graph $G$ to a standard form $\overline{G}$. The final two lemmas for this section show that this form is a canonical normal form in the sense that it uniquely identifies a graph type and characterises both inclusion of graph types and dual inclusion (i.e. whether we can always compose graph states to yield the unit scalar).


\begin{lemma}\label{lemma:transitive_closure_graph_type}
Given a DAG $G = (V, E)$ and a local interpretation $\Gamma$, let $\overline{G} = (V, \overline{E})$ be the DAG with $(u \to v) \in \overline{E} \Leftrightarrow (u \to v) \in E^+ \wedge \left| c_{\Gamma(u)}^* \right| > 1 \wedge \left| c_{\Gamma(v)} \right| > 1$. Then $\mathbf{Gr}_G^\Gamma = \mathbf{Gr}_{\overline{G}}^\Gamma$. Furthermore, for any $G' = (V, E')$ over the same vertices, $c_{\mathbf{Gr}_G^\Gamma} \subseteq c_{\mathbf{Gr}_{G'}^\Gamma}$ iff $\overline{E} \subseteq \overline{E'}$.
\end{lemma}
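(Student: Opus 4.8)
The plan is to prove both claims at once by reducing to the ``standard form'' $\overline G$ and working with the signalling presentation $\mathbf{SiGr}_G^\Gamma$ (justified by Theorem~\ref{thrm:all_graph_types_equivalent}), using throughout the monotonicity principle that adding edges to a DAG can only shrink the set of down-closed subsets, hence only enlarge the intersection's index set is wrong---rather, it shrinks the indexing set and so enlarges the state set; concretely, if $H \subseteq H'$ on the same ordered vertex set then $c_{\mathbf{Gr}_H^\Gamma} \subseteq c_{\mathbf{Gr}_{H'}^\Gamma}$. I will establish the equality $\mathbf{Gr}_G^\Gamma = \mathbf{Gr}_{\overline G}^\Gamma$ first, since it makes the inclusion characterisation fall out.

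For the equality, I would first invoke Lemma~\ref{lemma:transitive_closure_graph_type_without_pruning} to pass to the transitive closure, so it remains to show that pruning edges preserves the type when starting from a transitively closed DAG $H$. I would do this one vertex at a time. If $u$ is first-order (so $c_{\Gamma(u)}^* = \{\ndiscard_{\Gamma(u)}\}$), deleting \emph{all} out-edges of $u$ from $H$ again yields a transitively closed DAG $H'$, and I claim $\mathbf{Gr}_H^\Gamma = \mathbf{Gr}_{H'}^\Gamma$; the dual case---a vertex $v$ with $c_{\Gamma(v)} = \{\nmaxmix_{\Gamma(v)}\}$, where one deletes in-edges, using Equations~\eqref{eq:fod_parr_is_one_way}--\eqref{eq:fod_tensor_is_one_way} in place of \eqref{eq:fo_parr_is_one_way}--\eqref{eq:fo_tensor_is_one_way}---is the mirror image, and iterating over all first-order and first-order-dual vertices turns $H$ into $\overline G$. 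The inclusion $\mathbf{Gr}_{H'}^\Gamma \subseteq \mathbf{Gr}_H^\Gamma$ is monotonicity. For $\mathbf{Gr}_H^\Gamma \subseteq \mathbf{Gr}_{H'}^\Gamma$ I would show every signalling constraint of $H'$ is entailed by those of $H$: a down-closed $U$ of $H'$ that fails to be down-closed in $H$ must have $u \notin U$ while $u$ has a successor in $U$; transitivity of $H$ then forces every strict ancestor of $u$ to lie in $U$ already (such an ancestor would otherwise be a non-$U$ predecessor, via an edge still present in $H'$, of a vertex of $U$), so $U \cup \{u\}$ is down-closed in $H$; finally, since the only causal effect at $\Gamma(u)$ is $\ndiscard_{\Gamma(u)}$, I marginalise $u$ out of the constraint for $U \cup \{u\}$---using functoriality of $<$ and the defining property of $\parr$---to recover exactly the constraint for $U$.

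Given the equality, the ``if'' direction of the second claim is immediate: $\overline E \subseteq \overline{E'}$ makes $\overline G$ a subgraph of $\overline{G'}$, so monotonicity gives $c_{\mathbf{Gr}_{\overline G}^\Gamma} \subseteq c_{\mathbf{Gr}_{\overline{G'}}^\Gamma}$, which the equality rewrites as $c_{\mathbf{Gr}_G^\Gamma} \subseteq c_{\mathbf{Gr}_{G'}^\Gamma}$. For ``only if'' I argue by contraposition: pick $(u\to v) \in \overline E \setminus \overline{E'}$; since $u$ is not first-order and $v$ is not first-order-dual, the only way the edge can be missing from $\overline{E'}$ is $(u\to v) \notin E'^+$, so taking $U$ to be the set of $G'$-ancestors of $v$ together with $v$ gives a down-closed subset of $\overline{G'}$ with $v \in U$ and $u \notin U$; hence every state of $\mathbf{Gr}_{G'}^\Gamma$ is non-signalling from $u$ to $v$. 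It then suffices to exhibit a state of $\mathbf{Gr}_G^\Gamma = \mathbf{Gr}_{G^+}^\Gamma$ that does signal from $u$ to $v$: take the graph state over $G^+$ which is uniform on every leg of every component except the $\Gamma(u)$- and $(u\to v)$-legs of $u$'s component and the $\Gamma(v)$- and $(u\to v)$-legs of $v$'s component, and there places a non-trivial element of $c_{\Gamma(u) \parr \mathbf{2}}$ (which exists because $|c_{\Gamma(u)}^*| > 1$) composed through the $\mathbf{2}$-wire with a non-trivial element of $c_{\mathbf{2}^* \parr \Gamma(v)} = c_{\mathbf{2} \multimap \Gamma(v)}$ (which exists because $|c_{\Gamma(v)}| > 1$). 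Its marginal on $\Gamma(u) \otimes \Gamma(v)$ is a signalling element of $\mathbf{\Gamma(u) < \Gamma(v)}$ by Corollary~\ref{corollary:seq_as_2local}, while every other vertex carries only $\nmaxmix$, so the state lies in $\mathbf{Gr}_G^\Gamma$ yet violates the constraint for $U$.

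The main obstacle is the hard inclusion $\mathbf{Gr}_H^\Gamma \subseteq \mathbf{Gr}_{H'}^\Gamma$ in the equality: precisely the transitivity bookkeeping that makes $U\cup\{u\}$ down-closed, together with a careful application of the $\parr$ and $<$ definitions so that ``applying the unique effect $\ndiscard_{\Gamma(u)}$'' genuinely marginalises $u$ out of the sequence type rather than merely discarding it. A secondary wrinkle is the ``only if'' construction in positivity-constrained base categories such as $\MatRp$ or $\CPs$, where the witnessing states are naturally written with subtractions; I would build them in $\sub{\catc}$ and appeal to the affine closure of $c_{\mathbf{Gr}_G^\Gamma}$ (flatness with Proposition~\ref{prop:affine}) to return to $\catc$, rescaling the noisy summand by a large invertible scalar to restore positivity when needed.
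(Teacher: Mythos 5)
Your proposal is correct and follows essentially the same route as the paper: pass to the transitive closure via Lemma~\ref{lemma:transitive_closure_graph_type_without_pruning}, prune edges at first-order (resp.\ first-order dual) vertices by showing that each extra down-closed set $U$ of the pruned graph is handled by applying the unique effect to the constraint for $U \cup \{u\}$ (resp.\ the dual tomography argument), get the ``if'' direction of the inclusion claim from monotonicity, and refute the ``only if'' direction by a graph state carrying an actively signalling noisy bit channel along the surviving edge $(u \to v)$. The only differences are cosmetic --- you prune all out-/in-edges of a vertex at once where the paper prunes edge-by-edge (Lemma~\ref{lemma:transitive_graph_edge_pruning}), and your ``non-trivial element of $c_{\Gamma(u) \parr \mathbf{2}}$'' together with the positivity-rescaling remark is exactly the construction the paper packages as Lemma~\ref{lemma:binary_channel_encoding_morphism}.
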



\begin{lemma}\label{lemma:transitive_closure_dual_graph_type}
Let $G = (V, E)$ and $G' = (V, E')$ be two DAGs over the same set of vertices, and $\Gamma, \Gamma^* : V \to \ob{\caus{\catc}}$ be dual local interpretation functions, $\forall v \in V . \Gamma^*(v) = \Gamma(v)^*$. Then $c_{\mathbf{Gr}_G^\Gamma} \subseteq c_{\left( \mathbf{Gr}_{G'}^{\Gamma^*} \right)^*}$ iff $(V, \overline{E} \cup \overline{E'})$ is acyclic.
\end{lemma}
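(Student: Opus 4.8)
The plan is to reduce the statement to a question about composing graph states, and then to exploit the standard-form machinery from Lemma~\ref{lemma:transitive_closure_graph_type} together with the ordered/linear characterisation of graph types (Definition~\ref{def:ordered_graph_type} and Theorem~\ref{thrm:all_graph_types_equivalent}). Recall that $c_{\mathbf{Gr}_G^\Gamma} \subseteq c_{(\mathbf{Gr}_{G'}^{\Gamma^*})^*}$ says exactly that every state of $\mathbf{Gr}_G^\Gamma$ deterministically composes with every state of $\mathbf{Gr}_{G'}^{\Gamma^*}$ (viewing the latter as effects via compact closure onto the common carrier $\bigotimes_v \mathcal U(\Gamma(v))$). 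Since both sides are closed, flat sets, this containment can be checked on a spanning set: I would use the $\mathbf{LoGr}$/$\mathbf{LoGr2}$ presentation, so it suffices to check that every graph state over $G$ pairs deterministically with every graph state over $G'$ (a pairing of an affine combination is the corresponding affine combination of pairings, and affine combinations of $\id_I$ give $\id_I$).

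For the ``if'' direction, suppose $(V, \overline E \cup \overline{E'})$ is acyclic. Then there is a topological sort $[v_1,\dots,v_n]$ of this combined DAG, which is simultaneously a topological sort of $\overline G$ and of $\overline{G'}$. By Lemma~\ref{lemma:transitive_closure_graph_type} we may replace $G$ by $\overline G$ and $G'$ by $\overline{G'}$, and then by the ordered-graph-type characterisation, $c_{\mathbf{Gr}_{\overline G}^\Gamma} \subseteq c_{\mathrm{perm}_V(\Gamma(v_1) < \cdots < \Gamma(v_n))}$ and dually $c_{\mathbf{Gr}_{\overline{G'}}^{\Gamma^*}} \subseteq c_{\mathrm{perm}_V(\Gamma^*(v_1) < \cdots < \Gamma^*(v_n))}$. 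Using the de~Morgan duality $\mathbf{(A < B)^*} = \mathbf{A^* < B^*}$ (Equation~\ref{eq:seq_de_morgan}, iterated), the second set is contained in $c_{(\mathrm{perm}_V(\Gamma(v_1) < \cdots < \Gamma(v_n)))^*}$, and the matching-linear-order types are genuinely dual, so any state of the first pairs deterministically with any state of the second. Composing the two inclusions gives the claim.

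For the ``only if'' direction I would argue contrapositively: if $(V, \overline E \cup \overline{E'})$ contains a cycle, I want to build a graph state $g$ over $G$ and a graph state $g'$ over $G'$ whose composite scalar is not $\id_I$. The cycle alternates (after contraction of consecutive edges from the same graph) between ``$G$-arcs'' and ``$G'$-arcs'' connecting vertices whose local interpretations are non-degenerate in the relevant direction (the definition of $\overline E$ guarantees the tail has a non-trivial effect and the head a non-trivial state, so there is genuine room to signal). Along this cycle one can arrange, using the semi-localisable presentations (Equation~\ref{eq:seq_object_semi_localisable_def} and Corollary~\ref{corollary:seq_as_2local}), local components that forward a bit of information around the loop—morally the Grandfather/Bootstrap paradox of Equation~\ref{eq:paradoxes}—so that the closed composite evaluates to something other than $\id_I$ (e.g.\ using a $\mathbf 2$-channel that acts as negation around an odd part of the loop, or an identity to double a scalar). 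Making this precise is the main obstacle: I need to package the local pieces so that $g$ genuinely lands in $c_{\mathbf{Gr}_G^\Gamma}$ and $g'$ in $c_{\mathbf{Gr}_{G'}^{\Gamma^*}}$ simultaneously, which is why the non-degeneracy conditions in the definition of $\overline E$ (and hence Lemma~\ref{lemma:transitive_closure_graph_type} to pass to standard form) are essential; I expect the cleanest route is to first collapse to the case where every vertex on the cycle is a copy of $\mathbf 2$ or $\mathbf{2^*}$ via suitable states/effects witnessing non-degeneracy, reducing to a purely combinatorial ``information cannot flow around a directed cycle'' computation in $\caus{\MatRp}$ that contradicts flatness of $\mathbf{Gr}_G^\Gamma$ against $\mathbf{Gr}_{G'}^{\Gamma^*}$.
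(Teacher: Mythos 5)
Your ``if'' direction is correct but takes a genuinely different route from the paper's: you pass through a common topological sort of $(V,\overline E\cup\overline{E'})$ and the ordered-graph-type characterisation, landing the two sides in a matching pair of mutually dual linear sequence types. The paper instead observes that contracting a graph state over $\overline G$ with one over $\overline{G'}$ vertex-by-vertex yields a graph state over $(V,\overline E\cup\overline{E'})$ with all local interpretations $\mathbf I$, which by Lemma~\ref{lemma:acyclic_graph_morphisms_flat} and Corollary~\ref{corollary:scalar_graph_types} has $\id_I$ as its unique scalar. Both arguments are sound; yours reuses the $\mathbf{OrGr}$ machinery and de~Morgan duality for $<$, the paper's reuses the flatness result for acyclic graph states.

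The ``only if'' direction, however, has a genuine gap, which you partly acknowledge. The strategy (route information around the mixed cycle to produce a Grandfather/Bootstrap-type scalar $\neq\id_I$) is exactly the paper's, but the step you defer --- producing at each vertex $v$ of the cycle a \emph{causal} component of the prescribed component type $\mathrm{Comp}_G^{\Gamma;\Delta}(v)$ that actively forwards information from the incoming edge system to the outgoing one --- is the technical heart of the proof and cannot be waved through. The paper isolates it as Lemma~\ref{lemma:binary_channel_encoding_morphism}: for any $\mathbf A,\mathbf B$ with $\left|c_{\mathbf A}^*\right|,\left|c_{\mathbf B}\right|>1$ there is a causal $f:\mathbf A\to\mathbf B$ that signals, and its construction requires the dual-basis machinery (Proposition~\ref{prop:dual_basis}) plus the complement axiom \ref{apc:complement} to repair positivity; the resulting channel is only \emph{noisily} signalling, so the concluding computation shows the two candidate composites differ by an invertible non-unit factor (whence at least one is not $\id_I$), rather than exhibiting a clean NOT-loop. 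Your proposed reduction ``collapse every vertex on the cycle to $\mathbf 2$ or $\mathbf 2^*$'' is essentially that lemma in disguise, and it must also handle the transition vertices of the cycle, where the contraction between the two graph states happens over a $\Gamma(v)$ that is neither first-order nor first-order dual --- a point the paper addresses explicitly. Until that construction is supplied, the contrapositive direction is not proved.
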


One can intuitively read this as saying that the graph types are compatible exactly when there is a total causal ordering between the vertices that respects both graph types. This falls in line with an exact characterisation of the dual space of a graph type as a union of topological orderings.

\begin{proposition}\label{prop:dual_graph_type_as_union}
$\left( \mathbf{Gr}_G^\Gamma \right)^* = \bigcup_{[v_1, \ldots, v_n] \in \mathrm{sort}(G)} \mathrm{perm}_V \left( \Gamma(v_1)^* < \cdots < \Gamma(v_n)^* \right)$
\end{proposition}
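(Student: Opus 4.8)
The plan is to derive the statement from the ordered-graph-type presentation $\mathbf{Gr}_G^\Gamma = \mathbf{OrGr}_G^\Gamma = \bigcap_{[v_1,\ldots,v_n]\in\mathrm{sort}(G)} \mathrm{perm}_V(\Gamma(v_1) < \cdots < \Gamma(v_n))$ of Theorem \ref{thrm:all_graph_types_equivalent} by applying the dualising operation $(-)^*$ to both sides. Two ingredients are needed: (i) a distributed de Morgan law converting the dual of a finite intersection into the closed union of the duals, and (ii) the fact that $(-)^*$ sends each chain $\Gamma(v_1) < \cdots < \Gamma(v_n)$ to $\Gamma(v_1)^* < \cdots < \Gamma(v_n)^*$ with the order of vertices unchanged.

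For (i), I would start from the binary law $\mathbf{(A \cup A')^*} = \mathbf{A^* \cap A'^*}$ (Equation \ref{eq:union_intersection_de_morgan}). Applying $(-)^*$ to both sides and using that the state set of any object of $\caus{\catc}$ is closed, so $(-)^{**} = (-)$, yields $\mathbf{(A^* \cap A'^*)^*} = \mathbf{A \cup A'}$; replacing $\mathbf{A}, \mathbf{A'}$ by $\mathbf{A}^*, \mathbf{A'}^*$ (legitimate since $\mathbf{A} \pitchfork \mathbf{A'}$ iff $\mathbf{A}^* \pitchfork \mathbf{A'}^*$, duality merely inverting the normalisation scalars) gives the intersection-form de Morgan law $\mathbf{(A \cap A')^*} = \mathbf{A^* \cup A'^*}$. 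A finite induction over $\mathrm{sort}(G)$ then produces $\left(\bigcap_i \mathbf{A}_i\right)^* = \bigcup_i \mathbf{A}_i^*$. For this induction to typecheck one must track set-compatibility: every chain $\mathrm{perm}_V(\Gamma(v_1) < \cdots < \Gamma(v_n))$ has carrier $\bigotimes_{v\in V}\mathcal U(\Gamma(v))$ and the same normalisation scalars $\prod_v \mu_{\Gamma(v)}$ and $\prod_v \theta_{\Gamma(v)}$ (these are preserved by $\otimes$, $<$, and $\mathrm{perm}_V$), and forming $\cap$ or $\cup$ changes neither the carrier nor, since the normalised uniform state stays in the state set, the normalisation scalars; so set-compatibility is maintained at every step. (Alternatively, non-emptiness of the total intersection $\mathbf{Gr}_G^\Gamma$ forces pairwise set-compatibility of the chains directly, by the criterion cited in Definition \ref{def:union_intersection}.)

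For (ii), iterate $\mathbf{(A < B)^*} = \mathbf{A^* < B^*}$ (Equation \ref{eq:seq_de_morgan}) to get $\mathbf{(A_1 < \cdots < A_n)^*} = \mathbf{A_1^* < \cdots < A_n^*}$, and note that $(-)^*$, a strong monoidal functor, commutes with the reindexing isomorphism $\mathrm{perm}_V$, so $\left(\mathrm{perm}_V(\Gamma(v_1) < \cdots < \Gamma(v_n))\right)^* = \mathrm{perm}_V(\Gamma(v_1)^* < \cdots < \Gamma(v_n)^*)$. Since $<$ is self-dual in this de Morgan sense, the order within each topological sort is preserved rather than reversed, so the index set on the right-hand side remains $\mathrm{sort}(G)$. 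Assembling (i), (ii) and Theorem \ref{thrm:all_graph_types_equivalent}: $\left(\mathbf{Gr}_G^\Gamma\right)^* = \left(\bigcap_{[v_1,\ldots,v_n]\in\mathrm{sort}(G)}\mathrm{perm}_V(\Gamma(v_1)<\cdots<\Gamma(v_n))\right)^* = \bigcup_{[v_1,\ldots,v_n]\in\mathrm{sort}(G)}\mathrm{perm}_V(\Gamma(v_1)^*<\cdots<\Gamma(v_n)^*)$, which is the claim.

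The one place that needs care rather than routine manipulation is the set-compatibility bookkeeping in the inductive de Morgan step, because $\cup$ and $\cap$ are only partially defined on $\caus{\catc}$; the self-duality of $<$ and $\mathrm{perm}_V$ under $(-)^*$ and the invocation of Theorem \ref{thrm:all_graph_types_equivalent} are otherwise immediate consequences of results already in hand.
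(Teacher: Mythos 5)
Your proof is correct and follows essentially the same route as the paper, whose own proof simply cites the ordered graph type definition together with Equations \ref{eq:union_intersection_de_morgan} and \ref{eq:seq_de_morgan}. The extra care you take — deriving the intersection-form de Morgan law from the stated union form via closure $(-)^{**}=(-)$, and checking that set-compatibility is preserved across the chains so that the finite $\cup$/$\cap$ induction is well-defined — fills in details the paper leaves implicit but does not change the argument.
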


\begin{example}\label{example:transitive_closure_graphs}
Suppose we have graph types described by the following graphs, with the vertex labels picking out some local interpretation:
\begin{align}
G &= \tikzfig{canonical_graph_example0} & G' &= \tikzfig{canonical_graph_example2} \nonumber \\
\overline{G} &= \tikzfig{canonical_graph_example1} & \overline{G'} &= \tikzfig{canonical_graph_example3} \label{eq:transitive_closure_graphs}
\end{align}
By just looking at $G$ and $G'$, it may not be immediately obvious whether one structure generalises the other, whereas we can easily see that $\overline{G'}$ is a subgraph of $\overline{G}$. To get from $G$ to $\overline{G}$, the arcs out of $\mathbf{A}$ and $\mathbf{\left(B^1\right)^*}$ are introduced by taking the transitive closure. Then the first-order $\mathbf{D^1}$, $\mathbf{E^1}$, and $\mathbf{I}$ act as simple outputs of the network, so we remove their outgoing edges; it is impossible for local agents at them to signal to anything else since they only have a single choice of effect to apply (discarding). Similarly, the first-order dual $\mathbf{\left(B^1\right)^*}$, $\mathbf{\left(C^1\right)^*}$, and $\mathbf{I}$ act as simple inputs; an agent would see that the network acts locally like discarding regardless of actions made at other vertices, so nothing can ever signal to them.
\end{example}

\subsection{Preservation of Local Structure}\label{sec:local_structure}


The remainder of this section will look at further interplay between graph types and the local interpretation functions. In particular, if the local intepretation maps a vertex to an object with some structure, we will see when that structure can be lifted to the level of the graph types themselves. For example, graph types will preserve unions and intersections in the local interpretations.

\begin{proposition}\label{prop:graph_types_preserve_union_intersection}
\begin{align}
\mathbf{Gr}_{G}^{\Gamma, v \mapsto \mathbf{A \cup B}} &= \mathbf{Gr}_{G}^{\Gamma, v \mapsto \mathbf{A}} \cup \mathbf{Gr}_{G}^{\Gamma, v \mapsto \mathbf{B}} \label{eq:graph_types_preserve_union} \\
\mathbf{Gr}_{G}^{\Gamma, v \mapsto \mathbf{A \cap B}} &= \mathbf{Gr}_{G}^{\Gamma, v \mapsto \mathbf{A}} \cap \mathbf{Gr}_{G}^{\Gamma, v \mapsto \mathbf{B}} \label{eq:graph_types_preserve_intersection}
\end{align}
\end{proposition}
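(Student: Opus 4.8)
The plan is to apply Theorem~\ref{thrm:all_graph_types_equivalent} so that each occurrence of a graph type can be computed through whichever of its equivalent presentations is most convenient, and then reduce both equations to the distributivity law~\eqref{eq:iu_monoidal_distribution} of Lemma~\ref{lemma:union_intersection_rules} (which lets $\cup$ and $\cap$ pass through every binary monoidal product) together with the affine-combination description of flat closed sets (Proposition~\ref{prop:affine}). Both statements tacitly assume $\mathbf{A} \pitchfork \mathbf{A'}$; one should also record that the monoidal products preserve set-compatibility -- already implicit in the well-typedness of~\eqref{eq:iu_monoidal_distribution} -- so that all the intermediate $\cup$'s and $\cap$'s below are legitimate. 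The crucial point is that the two cases must be handled with \emph{different} presentations: intersection falls out immediately from the signalling graph type $\mathbf{SiGr}$, whereas for union we cannot use $\mathbf{SiGr}$ at all, since $\cup$ does not distribute over $\cap$ in general (Remark~\ref{remark:no_distribution_of_union_intersection}), and instead we argue directly with the local graph type $\mathbf{LoGr}$.

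For~\eqref{eq:graph_types_preserve_intersection}, expand $\mathbf{Gr}_{G}^{\Gamma, v \mapsto \mathbf{A \cap B}}$ as $\mathbf{SiGr}_{G}^{\Gamma, v \mapsto \mathbf{A \cap B}} = \bigcap_{U}\mathrm{perm}_V\big(\bigparr_{u\in U}\Gamma(u) < \bigparr_{w\in V\setminus U}\Gamma(w)\big)$ from Definition~\ref{def:signalling_graph_type}. In each term the vertex $v$ lies in exactly one of $U$ or $V\setminus U$, so the factor $\mathbf{A \cap B}$ sits inside exactly one of the two $\parr$-blocks. Using~\eqref{eq:iu_monoidal_distribution} with $\lozenge = \cap$ and $\Box = \parr$ (together with associativity and commutativity of $\parr$ on $\caus{\catc}$) we pull $\cap$ out of that block, and using~\eqref{eq:iu_monoidal_distribution} again with $\Box = {<}$ -- on whichever side of the sequence the block sits, noting that distributivity of $\cap$ over $<$ on both sides follows from the $\Box \in \{<,>\}$ cases and $\mathbf{A}>\mathbf{B}=\mathbf{B}<\mathbf{A}$ -- we pull it past the $<$; finally $\mathrm{perm}_V$, being a permutation isomorphism, commutes with the intersection of state sets. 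Hence each $U$-term splits as (the $U$-term for $\mathbf{A}$)$\,\cap\,$(the $U$-term for $\mathbf{B}$), and since $\bigcap_U(X_U\cap Y_U)=(\bigcap_U X_U)\cap(\bigcap_U Y_U)$ we obtain $\mathbf{SiGr}_{G}^{\Gamma, v \mapsto \mathbf{A}}\cap\mathbf{SiGr}_{G}^{\Gamma, v \mapsto \mathbf{B}}$, which equals the right-hand side of~\eqref{eq:graph_types_preserve_intersection} by Theorem~\ref{thrm:all_graph_types_equivalent}.

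For~\eqref{eq:graph_types_preserve_union}, work with $\mathbf{LoGr}$ and fix an edge interpretation $\Delta$. The component typing at $v$ (Definition~\ref{def:graph_morphism}) is $\mathrm{Comp}_{G}^{\Gamma, v \mapsto \mathbf{A \cup B};\Delta}(v) = (\mathbf{A \cup B})\parr C_\Delta$, where $C_\Delta$ collects the (dualised) edge objects incident to $v$; by~\eqref{eq:iu_monoidal_distribution} this is $(\mathbf{A}\parr C_\Delta)\cup(\mathbf{B}\parr C_\Delta)$, and its state set, being a closed union of two set-compatible flat closed sets, consists exactly of affine combinations of states of $\mathbf{A}\parr C_\Delta$ and of $\mathbf{B}\parr C_\Delta$ (Proposition~\ref{prop:affine}). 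So, given any graph state $g = \big(\bigotimes_{w\in V} g_w\big)\fatsemi\epsilon_{G}^{\Gamma;\Delta}$ for the interpretation $\Gamma, v \mapsto \mathbf{A \cup B}$, decompose $g_v$ as an affine combination (in $\sub{\catc}$) of states of $\mathbf{A}\parr C_\Delta$ and of $\mathbf{B}\parr C_\Delta$; by multilinearity of $\otimes$ and $\fatsemi$ this yields a corresponding affine decomposition $g = \sum_i\kappa_i\big(h_i\otimes\bigotimes_{w\neq v}g_w\big)\fatsemi\epsilon_{G}^{\Gamma;\Delta}$ in which each summand is an honest (non-negative) graph state over $G$ -- for the same $\Delta$ -- for the interpretation $\Gamma, v\mapsto\mathbf{A}$ or $\Gamma, v\mapsto\mathbf{B}$, hence a state of the corresponding graph type. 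Thus every graph state -- and since $(-)^{**}$ is monotone and idempotent, every state -- of $\mathbf{Gr}_{G}^{\Gamma, v \mapsto \mathbf{A \cup B}}$ lies in $\big(c_{\mathbf{Gr}_{G}^{\Gamma, v \mapsto \mathbf{A}}}\cup c_{\mathbf{Gr}_{G}^{\Gamma, v \mapsto \mathbf{B}}}\big)^{**}$. The converse inclusion is immediate: a graph state for $\Gamma, v\mapsto\mathbf{A}$ is also one for $\Gamma, v\mapsto\mathbf{A \cup B}$ (since $c_{\mathbf{A}\parr C_\Delta}\subseteq c_{(\mathbf{A}\parr C_\Delta)\cup(\mathbf{B}\parr C_\Delta)}$) and $\mathbf{Gr}_{G}^{\Gamma, v \mapsto \mathbf{A \cup B}}$ is closed, and symmetrically for $\mathbf{B}$; this gives~\eqref{eq:graph_types_preserve_union}.

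The main obstacle is precisely the union case: one must resist running the slick $\mathbf{SiGr}$ argument there (it would need $\cup$ to distribute over $\cap$, which fails by Remark~\ref{remark:no_distribution_of_union_intersection}), so the real content shifts to the interaction of $\cup$ with the \emph{affine span} -- that a state of a union of flat closed sets is an affine combination of states of the pieces, which is where Proposition~\ref{prop:affine} and the set-compatibility/flatness bookkeeping enter -- combined with the fact that the contraction morphism is multilinear, so that this single-vertex decomposition can be threaded through the rest of the network without disturbing the other components.
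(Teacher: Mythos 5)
Your proof is correct and follows essentially the same route as the paper's: the union case is handled exactly as in the paper, via the local graph type, distributivity of $\cup$ over $\parr$ at the component typing of $v$, and an affine decomposition of the $v$-component threaded through the contraction by multilinearity (and your observation that the $\mathbf{SiGr}$ presentation cannot be used here because $\cup$ fails to distribute over $\cap$ is exactly the reason the paper switches presentations too). For the intersection you expand $\mathbf{SiGr}$ where the paper uses $\mathbf{OrGr}$, but both arguments amount to pushing $\cap$ through $\parr$ and $<$ and then through the outer intersection via Equation~\eqref{eq:iu_monoidal_distribution}, so the difference is purely cosmetic.
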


\begin{remark}
Recall from Example \ref{example:one_time_pad} that graph types are not closed under union and intersection. Even if we can recover a union of intersections form \cite{Hoffreumon2022} by having a limited distribution law, that form still won't be unique in general. For example, a graph type on a graph with $n$ vertices and no edges can be written as an intersection of all $n!$ orderings using the ordered graph type definition, but by treating it as an $n$-fold tensor product we can recursively apply Equation \ref{eq:tensor_intersection_decomposition} to write it as an intersection of only $2^n$ terms. It would be interesting to determine exactly when terms can be dropped from an intersection or union in this way.
\end{remark}


More generally, if a local interpretation maps a vertex to another graph type, then we can flatten the nesting of graphs into a single graph type by graph substitution.

\begin{proposition}\label{prop:graph_substitution_in_graph_types}
Let $G = (V, E)$ and $G' = (V', E')$ be two DAGs with disjoint vertices $V \cap V' = \emptyset$ and local interpretations $\Gamma$ and $\Gamma'$. Suppose that for some vertex $v$, $\Gamma(v) = \mathbf{Gr}_{G'}^{\Gamma'}$. Then $\mathbf{Gr}_{G}^\Gamma = \mathbf{Gr}_{G''}^{\Gamma \setminus \left\{ v \right\}, \Gamma'}$ where $G'' = \left(\left( V \setminus \left\{ v \right\} \right) \cup V', E''\right)$ with $(u \to w) \in E'' \Leftrightarrow \begin{cases}
u \to w \in E & u, w \in V \setminus \left\{ v \right\} \\
u \to v \in E & u \in V \setminus \left\{ v \right\}, w \in V' \\
v \to w \in E & u \in V', w \in V \setminus \left\{ v \right\} \\
u \to w \in E' & u, w \in V'
\end{cases}$
\end{proposition}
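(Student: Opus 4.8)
The plan is to establish the identity through the local graph type presentation of Definition~\ref{def:local_graph_types} (legitimate by Theorem~\ref{thrm:all_graph_types_equivalent}), showing that both sides have the same family of graph states up to $(-)^{**}$. First note that $G''$ is a DAG: a cycle in it either lies inside $G'$ or, after collapsing $V'$ back to $v$, projects to a cycle in $G$. Fix any vertex order on $G''$; the combined interpretation $(\Gamma\setminus\{v\},\Gamma')$ is well defined since $V\setminus\{v\}$ and $V'$ are disjoint.

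The crux is a local computation recognising the component type at $v$ as itself a graph type. Fix an edge interpretation $\Delta$ on $G$ and abbreviate $\mathbf{F_j^1}:=\Delta(u_j\to v)$ for the incoming edges $u_j\to v$ and $\mathbf{F_k}:=\Delta(v\to w_k)$ for the outgoing edges $v\to w_k$. Starting from $\mathrm{Comp}_G^{\Gamma;\Delta}(v)=\mathbf{Gr}_{G'}^{\Gamma'}\parr(\bigparr_j\mathbf{(F_j^1)^*})\parr(\bigparr_k\mathbf{F_k})$ and peeling the first-order and first-order-dual par-factors off one at a time with Proposition~\ref{prop:first_order_equations} — turning each $\mathbf{(F_j^1)^*}\parr(-)$ into $\mathbf{(F_j^1)^*}<(-)$ by Equation~\ref{eq:fod_parr_is_one_way}, each $(-)\parr\mathbf{F_k}$ into $(-)<\mathbf{F_k}$ by Equation~\ref{eq:fo_parr_is_one_way}, and grouping parallel factors of like kind via Equations~\ref{eq:fo_fo_parr_is_tensor} and \ref{eq:fod_fod_parr_is_tensor} — Corollary~\ref{corollary:series_parallel_graph_types} then reassembles the result as
\[
\mathrm{Comp}_G^{\Gamma;\Delta}(v)\;=\;\mathbf{Gr}_{G'_{\mathbf{Y}}}^{\Gamma'_{\mathbf{Y}}},
\]
where $G'_{\mathbf{Y}}$ extends $G'$ by a \emph{source} vertex $z_j$ for each incoming edge ($\Gamma'_{\mathbf{Y}}(z_j):=\mathbf{(F_j^1)^*}$, with an edge $z_j\to v'$ to every $v'\in V'$) and a \emph{sink} vertex $y_k$ for each outgoing edge ($\Gamma'_{\mathbf{Y}}(y_k):=\mathbf{F_k}$, with an edge $v'\to y_k$ from every $v'\in V'$), together with edges $z_j\to y_k$; the presence or absence of these last edges, and of any order among the $z_j$ or among the $y_k$, is immaterial by Lemma~\ref{lemma:transitive_closure_graph_type}.

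With this, a graph state over $G$ is the contraction — a linear operation on the components — of states $g_u:\mathrm{Comp}_G^{\Gamma;\Delta}(u)$ for $u\neq v$ with some $g_v\in c_{\mathbf{Gr}_{G'_{\mathbf{Y}}}^{\Gamma'_{\mathbf{Y}}}}$; since that last set is the affine closure (Proposition~\ref{prop:affine}) of the honest graph states over $G'_{\mathbf{Y}}$, the $(-)^{**}$ of the graph states over $G$ equals the $(-)^{**}$ of those composites in which $g_v$ is itself a graph state over $G'_{\mathbf{Y}}$. I would then check that such a composite, read as a morphism, is a graph state over $G''$ (possibly enlarged by transitive edges $u_j\to w_k$, harmless by Lemma~\ref{lemma:transitive_closure_graph_type}): fusing each source component $z_j$ into the neighbour $g_{u_j}$ along the $u_j\to v$ cap only rebrackets the cap structure — so the morphism is unchanged, by the interchange law and yanking — and converts $g_{u_j}$ into a state of $\mathrm{Comp}_{G''}(u_j)$ with separate out-ports for the edges $u_j\to v'$; this is well typed because $\parr$ is a bifunctor and the state $z_j$ names a causal map $\mathbf{F_j^1}\to\bigparr_{v'}\Delta'_{\mathbf{Y}}(z_j\to v')$. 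Fusing the sinks $y_k$ into $g_{w_k}$ dually, the leftover caps become exactly the contraction morphism for $G''$ with the edge interpretation reading $\Delta'_{\mathbf{Y}}$ on the new edges and $\Delta$ elsewhere. Conversely, an arbitrary graph state over $G''$, with any edge interpretation, arises this way by inserting cups at the $z_j,y_k$ slots (mapping each $z_j\to y_k$ edge to $\mathbf{I}$) and \emph{bundling} the parallel crossing edges $\{u\to v'\}_{v'}$ into the single first-order object $\bigotimes_{v'}\Delta''(u\to v')$. Hence the two $(-)^{**}$ sets agree.

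The step I expect to be the main obstacle is the bookkeeping: confirming that the chain of $\parr$/$<$/$\otimes$ rewrites genuinely composes, with the correct permutations and unitors, to the stated $\mathbf{Gr}_{G'_{\mathbf{Y}}}^{\Gamma'_{\mathbf{Y}}}$, and that fusing a source or sink component into a neighbour is well typed against $\mathrm{Comp}_{G''}$ of that vertex and reversible. A tempting shortcut for the inclusion $c_{\mathbf{Gr}_G^\Gamma}\subseteq c_{\mathbf{Gr}_{G''}^{\Gamma\setminus\{v\},\Gamma'}}$ is to use the ordered graph type (Definition~\ref{def:ordered_graph_type}): distributing $<$ over the intersection defining $\mathbf{Gr}_{G'}^{\Gamma'}$ (Equation~\ref{eq:iu_monoidal_distribution}) presents $\mathbf{Gr}_G^\Gamma$ as the intersection of the sequence types along the \emph{$V'$-contiguous} topological sorts of $G''$; but one is then left to prove the remaining topological sorts of $G''$ add nothing to that intersection, which is essentially the content of Theorem~\ref{thrm:all_graph_types_equivalent} again, so the local-graph-type route looks cleaner.
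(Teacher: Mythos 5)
Your proposal is correct and takes essentially the same route as the paper's proof: both recognise the component type at $v$ as a graph type over $G'$ augmented with first-order source and sink vertices for the edges incident to $v$ (via Proposition~\ref{prop:first_order_equations} together with Corollary~\ref{corollary:series_parallel_graph_types}), and then identify the resulting stitched graph states with those of $G''$ up to transitive closure and affine combination. The only tactical difference is that the paper first inserts $\mathbf{I}$-interpreted fan-in/fan-out vertices (Lemma~\ref{lemma:unit_elimination_from_graph_types}) so that $v$ has a single predecessor and successor and the local computation reduces to $\mathbf{2^* <} \Gamma(v) \mathbf{< 2}$, whereas you introduce one source/sink per incident edge and spell out the morphism-level rewiring explicitly.
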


This result shows when it is safe to map between different levels of abstraction for a causal scenario by grouping parties together or splitting them up. We can see this as generalising Corollary \ref{corollary:series_parallel_graph_types}, since both series and parallel composition of graphs are instances of graph substitution. In terms of dependence categories \cite{Shapiro2022}, this substitution result is exactly the natural associativity isomorphism from the pseudo-algebra definition.

\section{Logical Characterisation}\label{sec:logic}

Expanding on the relationships between the operators shown in Section \ref{sec:monoidal_products}, previous work has showed that the $\caus{-}$ construction inherits enough structure from double-glueing to give a model of $\mathrm{ISOMIX}$ logic \cite{Kissinger2019a}, which can be extended to $\bv$ logic when $<$ is introduced \cite{Simmons2022}. However, neither of these logics were shown to be complete and they failed to account for the additional equations satisfied by first-order systems such as Equations \ref{eq:fo_parr_is_one_way} and \ref{eq:fo_tensor_is_one_way}. Over the rest of this paper, we will find an appropriate extension of these logics that completely describes all such equations of causal categories or, more specifically, a logic that characterises the closed diagrams of black boxes and wirings.

For examples of such diagrams, recall the encoding of process matrices from Table \ref{table:causal_concepts}. We wish to verify that, given a pair of process matrices, we can always apply them locally on either side of a single bipartite channel
\begin{equation}
\tikzfig{process_matrix_pair_ok}
\end{equation}
\begin{equation}
\begin{split}
\left( \left( A^1 \multimap B^1 \right) \otimes \left( C^1 \multimap D^1 \right) \right)^*, \left( \left( P^1 \multimap Q^1 \right) \otimes \left( R^1 \multimap S^1 \right) \right)^*, & \\
A^1 \multimap B^1, \left( C^1 \multimap D^1 \right) \parr \left( P^1 \multimap Q^1 \right), R^1 \multimap S^1 & \Vdash
\end{split}
\end{equation}
but applying this to two bipartite channels simultaneously can break causality by the possible introduction of paradoxical scenarios, such as the cycle induced by the red paths.
\begin{equation}
\tikzfig{process_matrix_pair_bad}
\end{equation}
\begin{equation}
\begin{split}
\left( \left( A^1 \multimap B^1 \right) \otimes \left( C^1 \multimap D^1 \right) \right)^*, \left( \left( P^1 \multimap Q^1 \right) \otimes \left( R^1 \multimap S^1 \right) \right)^*, & \\
\left( C^1 \multimap D^1 \right) \parr \left( P^1 \multimap Q^1 \right), \left( A^1 \multimap B^1 \right) \parr \left( R^1 \multimap S^1 \right) & \not\Vdash
\end{split}
\end{equation}

To refine this notion towards something more concrete, let's start with the most general form of two-sided multi-sequents, considering formulae $G_1, \ldots, G_m$ and $F_1, \ldots, F_n$ which are mapped to objects of $\caus{\catc}$ by some interpretation function $\Phi$. Suppose we have a collection of black boxes representing states $\left\{ \rho_j : \mathbf{I} \to \Phi(G_j) \right\}_j$ and effects $\left\{ \pi_i : \Phi(F_i) \to \mathbf{I} \right\}_i$ which we know to be causal, but we don't know the exact choice of state/effect implemented by each box. Furthermore, suppose the interfaces to each box (the atoms of the formulae) are labelled in a balanced way, so there is a unique way to wire up all of the interfaces by matching labels, allowing us to draw a closed/scalar diagram by wiring together the boxes. We then define $G_1, \ldots, G_m \Vdash_\catc^\Phi F_1, \ldots, F_n$ to mean that wiring the boxes together in this way is always causal (the resulting scalar diagram is $\id_I$) regardless of the contents of the boxes.

\begin{equation}\label{eq:two_sided_multi_sequent_def}
\begin{array}{l} \forall \left\{ \rho_j : \mathbf{I} \to \Phi(G_j) \right\}_j, \left\{ \pi_i : \Phi(F_i) \to \mathbf{I} \right\}_i . \\ \tikzfig{twosided_multi_sequent_clean} = \tikzfig{empty_diag} \end{array}
\end{equation}

By exploiting $*$-autonomy, we could always treat the effects of each $\Phi(F_i)$ as states of $\Phi(F_i)^*$, or similarly states of $\Phi(G_j)$ as effects of $\Phi(G_j)^*$, immediately giving a duality between the left and right sides of the sequent.

\begin{equation}
\begin{split}
G_1, \ldots, G_m \Vdash_\catc^\Phi F_1, \ldots, F_n \Longleftrightarrow& G_1, \ldots, G_m, F_1^*, \ldots, F_n^* \Vdash_\catc^\Phi \\
\Longleftrightarrow& \Vdash_\catc^\Phi G_1^*, \ldots, G_m^*, F_1, \ldots, F_n
\end{split}
\end{equation}

The final step we can do to simplify this is combining terms together. Recall from Equation \ref{eq:tensor_object_separable_def} that the separable states $\bigotimes_j \rho_j$ generate all states of $\bigotimes_j \Phi(G_j)$ under affine combination and similarly effects $\bigotimes_i \pi_i$ generate the effects of $\bigparr_i \Phi(F_i)$. This correctness condition is therefore equivalent to asking for the wiring diagram to be a causal morphism $\bigotimes_j \Phi(G_j) \to \bigparr_i \Phi(F_i)$.

\begin{equation}
\begin{split}
& G_1, \ldots, G_m \Vdash_\catc^\Phi F_1, \ldots, F_n \\
\Longleftrightarrow& \tikzfig{twosided_multi_sequent_clean_morphism} : \bigotimes_j \Phi(G_j) \to \bigparr_i \Phi(F_i) \\
\Longleftrightarrow& \bigotimes_j G_j \Vdash_\catc^\Phi \bigparr_i F_i \\
\Longleftrightarrow& \left( \bigotimes_j G_j \right) \otimes \left( \bigotimes_i F_i^* \right) \Vdash_\catc^\Phi \\
\Longleftrightarrow& \Vdash_\catc^\Phi \left( \bigparr_j G_j^* \right) \parr \left( \bigparr_i F_i \right)
\end{split}
\end{equation}

Observe that, in particular, such sequents can be used to give us all inclusions between affine-closed spaces generated by the multiplicative operators: $c_{\Phi(G)} \subseteq c_{\Phi(F)} \Leftrightarrow G \Vdash_\catc^\Phi F$, i.e. the identity is causal $\Phi(G) \to \Phi(F)$.

Building towards a logic that characterises this property of causal categories, suppose we collect all terms into a single right-sided formula, i.e. we consider a single black box which the wiring diagram completely contracts away like the contraction morphisms of Definition \ref{def:graph_morphism}. This will both simplify our definitions and grant easier comparisons with existing logics, at the cost of accepting that this will involve dualising our interpretations at points to view the black boxes as effects.

The intuition for building the logic is to use Equation \ref{eq:parr_union_decomposition} to both break the type into a union of graph types and break the contraction morphism into an affine combination of graph states, then use Lemma \ref{lemma:transitive_closure_dual_graph_type} to reduce the question of compatibility to checking for acyclicity in the composite graphs.


Now that we have this intuition, we will build up the definitions formally. For now, we will focus on formulae with just the multiplicative operators and consider extensions in Section \ref{sec:logic_extensions}. Given the special behaviour of first-order and first-order dual objects (Proposition \ref{prop:first_order_equations}, Lemmas \ref{lemma:transitive_closure_graph_type} and \ref{lemma:transitive_closure_dual_graph_type}), we will allow our formulae to specify when variables are mapped to first-order objects and require that the interpretations \textit{strictly} respect this. Any objects that are both first-order and first-order dual are isomorphic to $\mathbf{I}$, so we will also allow units to appear in our formulae.

\begin{definition}\label{def:causal_formula}
We define causal formulae (in negation normal form) by the grammar $F, G ::= A | A^* | A^1 | \left(A^1\right)^* | I | F \otimes G | F < G | F \parr G$. The \textit{negation} $F^*$ of a formula is defined inductively:
\begin{equation}\label{eq:formula_negation}
\begin{split}
A^{**} &= A \\
\left(A^1\right)^{**} &= A^1 \\
I^* &= I \\
(F \otimes G)^* &= F^* \parr G^* \\
(F < G)^* &= F^* < G^* \\
(F \parr G)^* &= F^* \otimes G^*
\end{split}
\end{equation}
A formula $F$ is \textit{balanced} if any atom $A$/$A^1$ in $F$ appears exactly once in each of positive and negative form. An \textit{interpretation} $\Phi : \Var \to \ob{\caus{\catc}}$ \textit{FO-respects} $F$ when $\Phi(A)$ is first-order iff $A$ appears in $F$ as $A^1$. An interpretation can be extended to formulae inductively:
\begin{equation}\label{eq:interpretation_of_formulae}
\begin{split}
\Phi\left(A^*\right) &= \Phi(A)^* \\
\Phi\left(A^1\right) &= \Phi(A) \\
\Phi\left(\left(A^1\right)^*\right) &= \Phi\left(A^1\right)^* \\
\Phi(I) &= \mathbf{I} \\
\Phi(F \otimes G) &= \Phi(F) \otimes \Phi(G) \\
\Phi(F < G) &= \Phi(F) < \Phi(G) \\
\Phi(F \parr G) &= \Phi(F) \parr \Phi(G)
\end{split}
\end{equation}
\end{definition}

\begin{definition}\label{def:causal_consistency}
Given a balanced formula $F$ and an interpretation $\Phi$ which FO-respects $F$, the \textit{contraction morphism of $F$} is the morphism $\epsilon_F^\Phi \in \catc\left( \mathcal{U} \left( \Phi(F)^* \right), I \right)$ formed by applying a cap $\epsilon_{\Phi(A)} : \Phi(A^*)^* \otimes \Phi(A)^* \to \mathbf{I}$ between the components for each variable (regular and first-order). We say that \textit{$F$ is causally consistent for $\catc$ under $\Phi$} ($\Vdash_{\catc}^{\Phi} F$) when $\epsilon_F^\Phi$ is causal $\Phi(F)^* \to \mathbf{I}$, i.e. $\epsilon_F^\Phi \in c_{\Phi(F)}$.
\end{definition}

\subsection{Causal Proof-Nets}\label{sec:proof_nets}


Now that we have a formal property of our categories to examine, we give a proof-net definition of a logic that we argue soundly and completely characterises causal consistency. It will capture this idea of decomposing into a union of graph types via its switchings.

\begin{definition}\label{def:proof_structure}
A \textit{causal proof-structure} $P$ is a graph inductively defined by the following links:
\begin{itemize}
\item Axiom links with no premises and a pair of conclusions $A$, $A^*$.
\item FO-axiom links with no premises and a pair of conclusions $A^1$, $\left(A^1\right)^*$.
\item Unit links with no premises and conclusion $I$.
\item Cut links with premises $F$, $F^*$, and no conclusions.
\item Tensor links with premises $F$, $G$, and conclusion $F \otimes G$.
\item Seq links with premises $F$, $G$, and conclusion $F < G$.
\item Par links with premises $F$, $G$, and conclusion $F \parr G$.
\end{itemize}
A balanced formula $F$ identifies a unique cut-free causal proof-structure $P_F$ with conclusion $F$ by replacing each node in its syntax tree with the corresponding link and joining pairs of matching atoms with axiom links. Conversely, for any causal proof-structure $P$ with conclusions $\left\{ C_i \right\}_i$ we define its corresponding (possibly unbalanced) formula $F_P = C_1 \parr \cdots \parr C_n$.
\end{definition}

\begin{figure}
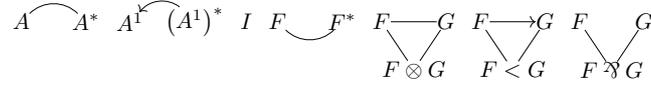

\centering
\tikzfig{link_axiom}
\tikzfig{link_fo_axiom}
\tikzfig{link_unit}
\tikzfig{link_cut}
\tikzfig{link_tensor}
\tikzfig{link_seq}
\tikzfig{link_par}
\caption{A summary of the kinds of links in a causal proof-structure. The use of undirected edges is just to indicate that the switchings can induce connectivity across the link in either direction, whereas directed edges indicate that connectivity will be induced in only one direction.}\label{fig:links}
\end{figure}

\begin{definition}\label{def:switching}
An \textit{up-down switching} $s \in \mathcal{S}_P$ over a causal proof-structure $P$ is a choice of one option for each link dependent on its type from the options below, which together define the \textit{switching graph} $G_s$ over the vertices of $P$ built using the subgraphs of Figure \ref{fig:switching}:
\begin{itemize}
\item For axiom links, one of $\{la, ra\}$.
\item For FO-axiom links, the choice is fixed $\{fo\}$.
\item For unit links, the choice is fixed $\{i\}$.
\item For cut links, one of $\{lc, rc\}$.
\item For tensor links, one of $\{ul, ur, dl, dr\}$.
\item For seq links, one of $\{us, ds\}$.
\item For par links, one of $\{up, dp\}$.
\end{itemize}
An up-down switching may similarly be defined over a syntax tree of a formula, since this matches a cut-free causal proof-structure with atomic axioms in place of axiom links.
\end{definition}

\begin{figure}
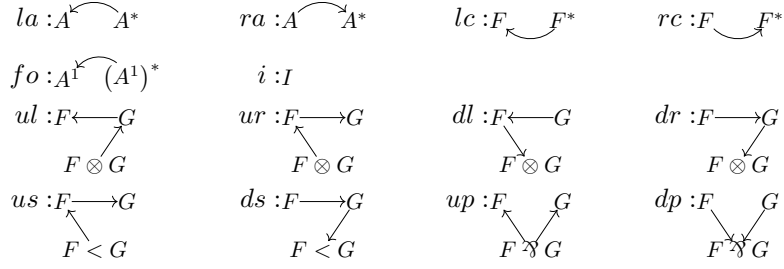

\begin{align*}
la:& \tikzfig{switch_la} &
ra:& \tikzfig{switch_ra} &
lc:& \tikzfig{switch_lc} &
rc:& \tikzfig{switch_rc} \\
fo:& \tikzfig{link_fo_axiom} &
i:& \tikzfig{link_unit} \\
ul:& \tikzfig{switch_ul} &
ur:& \tikzfig{switch_ur} &
dl:& \tikzfig{switch_dl} &
dr:& \tikzfig{switch_dr} \\
us:& \tikzfig{switch_us} &
ds:& \tikzfig{switch_ds} &
up:& \tikzfig{switch_up} &
dp:& \tikzfig{switch_dp}
\end{align*}
\caption{A summary of the subgraphs of a switching graph induced by the options chosen at each link.}\label{fig:switching}
\end{figure}

\begin{definition}\label{def:proof_net}
A causal proof-structure $P$ is a \textit{causal proof-net} when, for every up-down switching $s \in \mathcal{S}_P$, the corresponding switching graph $G_s$ is acyclic.
\end{definition}

There are notable similarities amongst the switchings and correctness criteria between causal proof-nets and other acyclicity conditions for $\mllmix$ (and extensions) such as long-trip, Danos-Regnier, or $\mathrm{R\&B}$-graph. Taking a $\parr$ link as an example, each of these switching methods (or the alternating colour restriction for $\mathrm{R\&B}$-graphs) allow a path between either premise and the conclusion in both directions but never a path between the premises. Rather than implementing this by connecting the conclusion to only one premise at a time, up-down switchings restrict the directions of paths to either up from the conclusion to the premises ($up$) or down from the premises to the conclusions ($dp$). Similarly, the switchings over $<$ only disallow the path between the premises in one direction, and $\otimes$ allows all paths within the link (though we still split this into multiple choices of switchings to leave directed switching graphs, similar to long-trip switchings). Using this observation, it is immediately obvious that causal logic precisely coincides with $\mllmix$ and pomset logic over the corresponding fragments because any falsifying cycle for one kind of proof-net induces a cycle for the others.

\begin{proposition}\label{prop:pomset_extension}
The logic of causal proof-nets is a conservative extension of pomset logic. Specifically, given a formula in the fragment $F, G ::= A | A^* | F \otimes G | F < G | F \parr G$, there exists a causal proof-net for $F$ iff there exists a pomset proof-net for $F$.
\end{proposition}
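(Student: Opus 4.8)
The plan is to reduce the statement to a comparison of correctness criteria on one fixed proof-structure. Take $F$ balanced, as is the convention throughout (the general case follows by running the argument for each choice of axiom linking); then $F$ determines a unique cut-free proof-structure which, since $F$ lies in the common fragment, simultaneously \emph{is} the causal proof-structure $P_F$ of Definition~\ref{def:proof_structure} and the pomset proof-structure for $F$ --- the $\otimes$-, $<$- and $\parr$-links are literally the same links, and balancedness pins down the axiom linking. So it suffices to show that $P_F$ satisfies the causal correctness criterion of Definition~\ref{def:proof_net} (every up-down switching graph is acyclic) if and only if it satisfies the pomset correctness criterion of Retor\'e~\cite{Retore2020}.

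Both criteria have the shape ``for every switching in a prescribed set, the associated directed graph, obtained by a local choice at each link, is acyclic''. The key step is to establish that the directed cycles occurring in \emph{some} up-down switching graph of $P_F$ are exactly the (orientation-respecting) cycles occurring in \emph{some} pomset switching graph; the biconditional then follows by contraposition in both directions, using that the choice made at a link \emph{not} visited by a given cycle can never delete an edge of that cycle, so switchings may be fixed along a cycle and chosen arbitrarily elsewhere. I would carry this out link by link: at a $\parr$-link the options $up$ and $dp$ permit a path between the conclusion and one premise but never between the two premises, matching the pomset $\parr$-switching that attaches the conclusion to a single premise; at a $<$-link the options permit a path from the left premise to the right but never the reverse, matching the single directed edge contributed by pomset's before-connective ($\triangleleft$); at a $\otimes$-link the four options jointly realise every oriented passage among the link's three ports, matching the rigid pomset $\otimes$-link; and the two axiom options reproduce the two directions allowed through an axiom edge. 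From a directed cycle in an up-down switching graph one then reads off, link by link, a pomset switching whose graph contains that cycle, and symmetrically in the other direction; hence a causal proof-net for $F$ exists iff a pomset proof-net for $F$ exists. (Cut links are handled the same way, so the equivalence in fact extends to proof-structures with cuts in this fragment.)

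I expect the main obstacle to be the orientation bookkeeping around the $\otimes$- and $<$/$\triangleleft$-links --- precisely the point the informal discussion preceding the statement makes precise. One must check that each way an (elementary) cycle can thread a $\otimes$-link is realised by one of the four directed options, and, conversely, that forgetting orientations sends a cycle of a causal switching graph to a cycle that the pomset criterion genuinely rejects (an alternating cycle compatible with all $\triangleleft$-orientations) rather than a benign one. For the purely multiplicative part this is just the classical equivalence of the long-trip, Danos--Regnier and R\&B-graph criteria for $\mllmix$; the only genuinely new ingredient is that the up-down refinement keeps every switching graph directed, so the one extra directed axiom of the $<$/$\triangleleft$-link is incorporated without disturbing that equivalence.
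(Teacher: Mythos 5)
Your proposal is correct and follows essentially the same route as the paper: both arguments identify the causal and pomset proof-structures for a formula in the common fragment and then check, link by link, that the connectivity permitted through each link by the up-down switchings coincides with that permitted by the pomset criterion, so that falsifying cycles transfer in both directions. The only cosmetic difference is that you phrase the pomset side in terms of ``pomset switchings'' whereas the paper (and Retor\'e's formulation) uses alternating elementary cycles in R\&B-graphs, but the induced per-link connectivity constraints are the same, so the substance of the argument is unchanged.
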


We can give an interpretation of the switching conditions in Figure~\ref{fig:switching} in terms of the flows of information between their premises. As we saw in Definition \ref{def:causal_consistency}, causal consistency $\Vdash_{\catc}^{\Phi} F$ means the canonical ``wiring map'' can be treated as map $\epsilon_F^\Phi : \Phi(F)^* \to \mathbf{I}$ which assigns every morphism in $c_{\Phi(F)^*}$ to the scalar 1. Crucial to the proof of the characterisation, this fails to be the case precisely when plugging $\epsilon_F^\Phi$ into $\Phi(F)^*$ could result in a directed cycle of signalling relations. The signalling relations introduced by $\epsilon_F^\Phi$ are relatively simple to characterise: a wire between $X^*$ and $X$ for a generic type $X$ can introduce signalling in either direction, whereas information can only flow in one direction for first order types (from $(A^1)^*$ to $A^1$). This follows from the fact that only thing we can ``plug in'' to $A^1$ is the unique causal effect $\discard_{A^1}$, hence there is no way to use the choice of effect to send any non-trivial information. On the other hand, we can plug any causal state of $\Phi(A^1)$ into $(A^1)^*$, of which there will typically be many.

We can then combine this with the signalling relations allowed by $\Phi(F)^*$. Since $\Phi(F)$ appears under a $(-)^*$, the roles of $\otimes$ and $\parr$, with respect to (non-)signalling relations, are reversed. Namely, an occurance of $X \otimes Y$ in $\Phi(F)^*$ means that signalling can occur in either direction between the premises $X$ and $Y$, generalising the case of process matrices highlighted in Table~\ref{table:causal_concepts}. Similarly, an occurance of $X < Y$ in $\Phi(F)^*$ means that signalling can only occur from $X$ to $Y$, whereas $X \parr Y$ in $\Phi(F)^*$ doesn't allow any signalling between premises.

In addition to fixing a flow of information between premises, we also have to consider the flow of information between a subexpression and its environment (i.e. the rest of the syntax tree): either ``up'' toward the leaves of the tree or ``down'' toward the root. Fixing this direction, plus a signalling direction between premises yields all of the possible choices for connectives (4 choices for $\otimes$, 2 choices each for $<$ and $\parr$). The ``up'' vs. ``down'' signalling direction is slightly harder to think about intuitively, but we can make this precise using (unions of) graph types, which we will do in the next section.

\subsection{Characterising Causal Objects}\label{sec:characterisation}


Recall that we can view Equation \ref{eq:parr_union_decomposition} as a way to replace any $\mathbf{A \parr B}$ with a union of sequence types. Applying this recursively along with distributivity (Equation \ref{eq:iu_monoidal_distribution}) we can therefore reduce any formula involving $\parr$ to a union of graph types. The up-down switchings are defined to respect this decomposition so the switching graphs over the syntax tree of a formula $F$ inductively yield the graph types generating $\Phi(F)^*$.

\begin{lemma}\label{lemma:switching_decomposition_of_causal_type}
Given a balanced formula $F$ and an interpretation $\Phi$ which FO-respects $F$ and an additional object $\mathbf{E} \in \ob{\caus{\catc}}$, let $V$ be the vertices of the syntax tree of $F$ (in negation normal form) and let $\Gamma_{\Phi,\mathbf{E}}^* : V \to \ob{\caus{\catc}}$ be the function:
\begin{equation}
\Gamma_{\Phi,\mathbf{E}}^* (v) = \begin{cases}
\Phi(F)^* \mathbf{\parr E} & v \text{ is the only vertex } F ::= A | A^* | A^1 | \left( A^1 \right)^* | I \\
\Phi(a)^* & v \text{ is a leaf } a ::= A | A^* | A^1 | \left(A^1\right)^* | I \\
\mathbf{E} & v \text{ is the root } $F$ \\
I & \text{otherwise}
\end{cases}
\end{equation}
Then $\Phi(F)^* \mathbf{\parr E} \cong \bigcup_{s \in \mathcal{S}_F} \mathbf{Gr}_{G_s}^{\mathbf{\Gamma}_\Phi^*}$. In words with $\mathbf{E} = \mathbf{I}$, the state space of $\Phi(F)^*$ coincides with the space generated by graph states over the switchings of the syntax tree of $F$.
\end{lemma}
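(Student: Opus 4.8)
The plan is to prove the isomorphism $\Phi(F)^* \parr \mathbf{E} \cong \bigcup_{s \in \mathcal{S}_F} \mathbf{Gr}_{G_s}^{\Gamma_{\Phi,\mathbf{E}}^*}$ by structural induction on the balanced formula $F$. The base cases are the single-vertex formulae $F ::= A \mid A^* \mid A^1 \mid (A^1)^* \mid I$: here $\mathcal{S}_F$ has a single switching whose switching graph $G_s$ is a single vertex with the whole conclusion attached, and by definition $\Gamma_{\Phi,\mathbf{E}}^*$ sends that vertex to $\Phi(F)^* \parr \mathbf{E}$, so both sides are literally $\Phi(F)^* \parr \mathbf{E}$ using Equation \ref{eq:singleton_graph_type}. (For $F = I$ we additionally invoke $\mathbf{I}^* = \mathbf{I}$ and the unit laws.)

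For the inductive step I would treat the three binary connectives. Take $F = F_1 \parr F_2$, so $\Phi(F)^* = \Phi(F_1)^* \otimes \Phi(F_2)^*$. I would first apply the inductive hypotheses to $F_1$ and $F_2$, each time choosing the ``fresh'' auxiliary object $\mathbf{E}$ to be $\Phi(F_2)^* \parr \mathbf{E}$ (respectively $\Phi(F_1)^* \parr \mathbf{E}$) so that after one application I have $\Phi(F_1)^* \parr (\Phi(F_2)^* \parr \mathbf{E})$ expressed as a union of graph types over the switchings $\mathcal{S}_{F_1}$, with the ``missing'' content $\Phi(F_2)^* \parr \mathbf{E}$ parked at the root vertex; then I apply the hypothesis for $F_2$ to unfold that root type in turn, and finally use Equation \ref{eq:graph_types_preserve_union} (graph types preserve unions in the local interpretation) together with Proposition \ref{prop:graph_substitution_in_graph_types} (graph substitution) to flatten the nested graph types into a single union over pairs of switchings. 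The key bookkeeping observation is that the $\otimes$ in $\Phi(F)^*$ means the two subgraphs sit side by side with no edge between their roots, while the root vertex of $F$ carries only $\mathbf{E}$ and contributes a choice $\{up, dp\}$ of whether information flows up into the two subgraphs or down out of them; matching this to how $\mathbf{A}\parr\mathbf{B} = (\mathbf{A}<\mathbf{B}) \cup (\mathbf{A}>\mathbf{B})$ (Equation \ref{eq:parr_union_decomposition}) decomposes, via distributivity Equation \ref{eq:iu_monoidal_distribution}, is exactly what the switching graphs $up$ and $dp$ of Figure \ref{fig:switching} encode. The case $F = F_1 < F_2$ is the same but uses $\mathbf{A}<\mathbf{B}$ directly — only the one-directional switchings $\{us, ds\}$ survive — and $F = F_1 \otimes F_2$ uses $\Phi(F)^* = \Phi(F_1)^* \parr \Phi(F_2)^*$, which decomposes as a union of both one-way sequence types in each direction, yielding the four switchings $\{ul, ur, dl, dr\}$; for the axiom-link bases I would note the $fo$ switching forces the single directed edge from $(A^1)^*$ to $A^1$, matching the degeneracy of first-order effects and the pruning behaviour of Lemma \ref{lemma:transitive_closure_graph_type}.

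The main obstacle I anticipate is getting the ``up vs. down'' direction of information flow at each internal node to line up precisely with the $\cup$ decomposition, rather than merely morally. Concretely: when I unfold $F_1 \parr F_2$ I must verify that the union over switchings of the \emph{whole} tree is obtained from the unions over the two subtrees not just by taking a product of switching sets, but in a way compatible with the graph-substitution isomorphism of Proposition \ref{prop:graph_substitution_in_graph_types} — i.e. that substituting a union of graph types at a vertex commutes with the union (which needs Equation \ref{eq:graph_types_preserve_union} applied inside a graph substitution, and one should check the required set-compatibility $\pitchfork$ hypotheses hold so that all the unions and intersections involved are well-defined). I would isolate this as a short lemma: ``graph substitution distributes over unions of graph types at the substituted vertex'', proved by combining Propositions \ref{prop:graph_substitution_in_graph_types} and \ref{prop:graph_types_preserve_union_intersection}. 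Everything else — the arithmetic of which switching options a connective has, and that the resulting graph $G_s$ agrees with Figure \ref{fig:switching} — is a routine check once that lemma and the base cases are in place, with the final sentence of the statement ($\mathbf{E} = \mathbf{I}$) following immediately since $\Phi(F)^* \parr \mathbf{I} = \Phi(F)^*$.
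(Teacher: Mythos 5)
Your overall skeleton (induction on the syntax tree, base cases via Equation~\ref{eq:singleton_graph_type}, matching the $\{up,dp\}$/$\{us,ds\}$/$\{ul,ur,dl,dr\}$ counts to the union decompositions) is the right shape and agrees with the paper's, but the mechanism you describe for the inductive step contains a genuine error. For $F = F_1 \parr F_2$ you have $\Phi(F)^* \parr \mathbf{E} = \left( \Phi(F_1)^* \otimes \Phi(F_2)^* \right) \parr \mathbf{E}$, and this is \emph{not} the object $\Phi(F_1)^* \parr \left( \Phi(F_2)^* \parr \mathbf{E} \right)$ that your environment-nesting step produces: the inner connective is $\otimes$, not $\parr$, so re-parking $\Phi(F_2)^* \parr \mathbf{E}$ as the auxiliary object of the inductive hypothesis for $F_1$ silently replaces a non-signalling constraint between the two subtrees by no constraint at all, and you end up decomposing a strictly larger state space. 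The nesting trick is only a re-association of $\parr$'s, so at best it applies when the dualised connective is $\parr$ (i.e.\ $F = F_1 \otimes F_2$) -- and even there, Proposition~\ref{prop:graph_substitution_in_graph_types} connects \emph{every} vertex of the substituted graph to every neighbour of the substituted vertex, which is not the shape of the switching subgraphs in Figure~\ref{fig:switching}, so an extra argument (via the standard forms of Lemma~\ref{lemma:transitive_closure_graph_type}) would be needed to see the graphs agree. Your closing remark about FO-axiom links is also out of place here: $\mathcal{S}_F$ in this lemma ranges over switchings of the syntax tree only; axiom links enter in Theorem~\ref{thrm:causal_logic_characterisation}, not in this decomposition.

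The paper's proof avoids all of this by reversing the order of operations: it first decomposes the \emph{top} connective of $\Phi(F)^* \parr \mathbf{E}$ into a union of two or four small sequence graph types over three vertices (the two premise duals $\Phi(G)^*, \Phi(H)^*$ and the conclusion $\mathbf{E}$), using Equations~\ref{eq:parr_union_decomposition} and~\ref{eq:iu_monoidal_distribution} exactly as you anticipate, and only \emph{then} invokes the inductive hypothesis -- not with ``the rest of the formula'' as environment, but applied to the component typings of the resulting graph states, which have the form $\Phi(G)^* \parr \mathbf{2}$, $\Phi(H)^* \parr \left( \mathbf{2^*} \parr \mathbf{2} \right)$, etc. That is the actual reason the lemma carries the extra parameter $\mathbf{E}$: it is instantiated to the first-order edge systems of the graph states, not to the remaining subformulae. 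If you restructure your induction this way, the distribution of unions through graph types that you correctly flag as the crux is exactly Proposition~\ref{prop:graph_types_preserve_union_intersection} applied at the premise vertices, and the rest goes through as you sketch.
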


\begin{example}
Consider $F = A < (I \parr A^*)$, interpreting $\Phi(A) = \mathbf{A}$ and fix some environment object $\mathbf{E}$. Then this lemma gives the following decomposition:
\begin{equation}
\begin{split}
\Phi(F)^* \parr \mathbf{E} =& \mathbf{\left( A^* < \left( I \otimes A \right) \right) \parr E} \\
\cong& \mathbf{Gr}\left(\tikzfig{switch_ex1}\right) \cup \mathbf{Gr}\left( \tikzfig{switch_ex2}\right) \\ & \cup \mathbf{Gr}\left(\tikzfig{switch_ex3}\right) \cup \mathbf{Gr}\left(\tikzfig{switch_ex4}\right)
\end{split}
\end{equation}
\end{example}


This presents a great simplification of characterising causal consistency from checking for valid contractions of arbitrary types to contractions of graph types. Physically, this is significant because some types may contain processes with indefinite causal structure \cite{Oreshkov2012,Chiribella2013}, where the patterns of influence may be inconsistent with any graph type no matter how we divide up our subsystems, yet they appear to add no new behaviour that is relevant for determining causal consistency.

As for the axiom links, we look at the contractions themselves and apply the same decomposition
\begin{equation}
\epsilon_{\Phi(A)} : \Phi(A^*) \parr \Phi(A) = \left( \Phi(A^*) < \Phi(A) \right) \cup \left( \Phi(A^*) > \Phi(A) \right)
\end{equation}
giving us directed switchings over the axiom links. The FO-axiom links only have a single switching option since 
\begin{equation}
\epsilon_{\Phi\left(A^1\right)} : \Phi\left(\left(A^1\right)^*\right) \parr \Phi(A^1) = \Phi\left(\left(A^1\right)^*\right) < \Phi(A^1)
\end{equation}
is already one-way signalling.

Combining these into a single picture, we obtain the switching graphs over the whole proof-structure. If there is a cycle, we can devise an example implementation for the black box which, when contracted, encodes a paradoxical situation (i.e. an information cycle which breaks normalisation) to disprove causal consistency; otherwise, there is some linear ordering of the vertices that all information flow respects, from which we can prove that normalisation is preserved. This covers the intuition for the proof of the characterisation theorem.

\begin{theorem}\label{thrm:causal_logic_characterisation}
Given a balanced formula $F$ and an interpretation $\Phi$ which FO-respects $F$, $\Vdash_{\catc}^\Phi F$ iff $P_F$ is a causal proof-net.
\end{theorem}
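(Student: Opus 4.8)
The plan is to carry out the decomposition strategy sketched after Definition~\ref{def:causal_consistency}: rewrite the condition $\epsilon_F^\Phi \in c_{\Phi(F)}$ as a family of acyclicity conditions on graphs built from the syntax tree of $F$ together with the axiom links, and then recognise that family as precisely the family of switching graphs of $P_F$. First I would apply Lemma~\ref{lemma:switching_decomposition_of_causal_type} with $\mathbf{E} = \mathbf{I}$, giving $\Phi(F)^* \cong \bigcup_{s \in \mathcal{S}_F} \mathbf{Gr}_{G_s}^{\Gamma_\Phi^*}$ where $\mathcal{S}_F$ ranges over up-down switchings of the syntax tree. Taking duals and using the de Morgan law for closed unions (Equation~\ref{eq:union_intersection_de_morgan}) yields $\Phi(F) \cong \bigcap_{s \in \mathcal{S}_F}\left(\mathbf{Gr}_{G_s}^{\Gamma_\Phi^*}\right)^*$, so, up to the canonical isomorphism, $\Vdash_\catc^\Phi F$ holds iff $\epsilon_F^\Phi$ is a causal effect of $\mathbf{Gr}_{G_s}^{\Gamma_\Phi^*}$ for every syntax-tree switching $s$.

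Next I would decompose the contraction morphism itself by expanding each cap. For a generic atom, $\epsilon_{\Phi(A)}$ is a causal state of $\Phi(A)^* \parr \Phi(A)$, which by Equation~\ref{eq:parr_union_decomposition}, Corollary~\ref{corollary:seq_as_2local} and affine closure (Proposition~\ref{prop:affine}) is an affine combination of graph states over the two oriented two-vertex graphs on the occurrences of $A$ with $\mathbf{2}$-valued edge; for a first-order atom, Proposition~\ref{prop:first_order_equations} (Equation~\ref{eq:fo_parr_is_one_way}) forces $\Phi((A^1)^*) \parr \Phi(A^1) = \Phi((A^1)^*) < \Phi(A^1)$, so the cap is a graph state over the single oriented graph $(A^1)^* \to A^1$. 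Distributing $\bigotimes_A \epsilon_{\Phi(A)}$ over these combinations writes $\epsilon_F^\Phi = \sum_a \lambda_a \tau_a$, where $a$ runs over the choices of orientation for the non-first-order axiom links — that is, over the axiom-link options of the up-down switchings of $P_F$ — each $\tau_a$ is a graph state over the associated oriented graph $G_a$ on the atom occurrences with the local interpretation dual to $\Gamma_\Phi^*$, and, after eliminating the unit-typed internal vertices (Lemma~\ref{lemma:unit_elimination_from_graph_types}, which preserves acyclicity), the switching graph $G_{(s,a)}$ of $P_F$ is acyclic iff $G_s \cup G_a$ is.

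The heart of the argument is then the equivalence: $\epsilon_F^\Phi$ is a causal effect of $\mathbf{Gr}_{G_s}^{\Gamma_\Phi^*}$ iff $G_s \cup G_a$ is acyclic for every $a$. For the soundness direction, if $G_s \cup G_a$ is acyclic then so is its subgraph $\overline{G_s} \cup \overline{G_a}$, so Lemma~\ref{lemma:transitive_closure_dual_graph_type} (applied with the dual interpretations) shows every graph state over $G_a$ is a causal effect of $\mathbf{Gr}_{G_s}^{\Gamma_\Phi^*}$; in particular each $\tau_a$ lies in this affine-closed set, hence so does $\epsilon_F^\Phi = \sum_a \lambda_a \tau_a$. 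For completeness I would argue by contraposition: given $a$ and a directed cycle in $G_s \cup G_a$, build a graph state $\sigma$ over $G_s$ (hence a state of $\Phi(F)^*$) whose local components route a distinguishing signal along the edges of the cycle lying in $G_s$ and act trivially (uniform states / discards) on all other ports; contracting $\sigma$ with $\epsilon_F^\Phi$ closes that signal into a feedback loop through the caps on the cycle atoms, while the remaining acyclic part of the diagram is a composite of causal states with causal effects and closes to $\id_I$. The loop contracts to a scalar of the same shape as the paradoxical scalars of Equation~\ref{eq:paradoxes} (such as $d_{\mathbf 2}$, or $0$ if a $\mathrm{NOT}$ is inserted), which is not $\id_I$ by cancellativity and the total pre-order of scalars (\ref{apc:scalars}); hence $\epsilon_F^\Phi \notin c_{\Phi(F)}$. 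Chaining the three steps gives $\Vdash_\catc^\Phi F$ iff $G_s \cup G_a$ is acyclic for all $s$ and $a$, iff every switching graph of $P_F$ is acyclic, iff $P_F$ is a causal proof-net.

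I expect the completeness direction to be the main obstacle: one must turn an abstract cycle into an explicit falsifying state, which requires choosing causal local components so that (i) a signal really propagates around the cycle — at a generic atom $A$ this uses that $\Phi(A)$ has at least two distinguishable causal states (\ref{apc:basis}) and that $\mathbf 2$, with its invertible dimension, exists in every additive precausal category (\ref{apc:dimension}, \ref{apc:scalars}), letting us encode and decode a bit through $\Phi(A)$'s cap — and (ii) every port off the cycle is filled causally so the non-loop part of the contraction evaluates to $\id_I$. The bookkeeping of which occurrences and tree edges the cycle visits, of matching the orientation data $a$ to the chosen branch of each cap, and of verifying the residual diagram splits as required, is where the real work lies; by contrast, soundness and the remaining steps are essentially the assembly of Lemmas~\ref{lemma:switching_decomposition_of_causal_type}, \ref{lemma:unit_elimination_from_graph_types} and \ref{lemma:transitive_closure_dual_graph_type} with the de Morgan and affine-closure facts. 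It is also worth noting in the write-up that the proof-net criterion on the right-hand side is independent of $\catc$ and of $\Phi$ (given FO-respecting), which the uniformity of the counterexample construction makes manifest.
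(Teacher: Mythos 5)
Your overall architecture matches the paper's: decompose $\Phi(F)^*$ into a union of graph types over syntax-tree switchings (Lemma~\ref{lemma:switching_decomposition_of_causal_type}), decompose the axiom links into their one-way orientations via Equation~\ref{eq:parr_union_decomposition} and Proposition~\ref{prop:first_order_equations}, and reduce everything to acyclicity of the combined switching graphs via Lemma~\ref{lemma:transitive_closure_dual_graph_type}. Where you genuinely diverge is in how the single morphism $\epsilon_F^\Phi$ is related to those lemmas. The paper's first move is to show, using $*$-autonomy and functoriality, that $\epsilon_F^\Phi \in c_{\Phi(F)}$ is equivalent to the \emph{set} inclusion $c_{\bigotimes_{A} \Phi(A^*) \parr \Phi(A)} \subseteq \left( c_{\Phi(F)^*} \right)^*$: any element of $c_{\Phi(A^*)\parr\Phi(A)}$ is $\epsilon_{\Phi(A)}^* \fatsemi (\id \otimes f)$, and $f$ can be absorbed into the state of $\Phi(F)^*$ by functoriality of the connectives. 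This turns the theorem into an inclusion of a union of graph types in an intersection of dual graph types, so \emph{both} directions of Lemma~\ref{lemma:transitive_closure_dual_graph_type} apply term by term and the counterexample construction is entirely outsourced to that lemma. You instead keep working with the specific affine decomposition $\epsilon_F^\Phi = \sum_a \lambda_a \tau_a$; you correctly observe that failure of one term does not imply failure of the affine combination, which forces you to build an explicit falsifying state for the completeness direction. That construction is workable --- it is essentially the cyclic-feedback argument of Lemmas~\ref{lemma:binary_channel_encoding_morphism} and~\ref{lemma:acyclic_graph_morphisms_flat} with one of the two graph states specialised to the caps --- but it duplicates machinery the paper has already packaged, and it is exactly the step you flag as ``the real work.'' The functoriality reduction is the one idea that would let you skip it. If you do pursue your direct route, one point needs explicit care: Lemma~\ref{lemma:transitive_closure_dual_graph_type} is stated in terms of the pruned standard forms $\overline{G}$, and a cycle in $G_s \cup G_a$ passing through a first-order leaf could in principle use an edge that pruning removes; you must check (as the paper notes parenthetically) that the FO-axiom switching is already directed so that no switching-graph cycle ever traverses a first-order atom in the non-signalling direction, otherwise your signal cannot propagate around the loop.
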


The fact that the proof-net criterion captures causal consistency for \textit{any} FO-respecting interpretation is significant: the only information about atomic objects which is relevant for determining causal relations is whether or not they are first-order or first-order dual. Whilst we already knew from Equations \ref{eq:fo_parr_is_one_way} and \ref{eq:fo_tensor_is_one_way} that they satisfied additional equations between causal types, we can now conclude that they are the \textit{only} objects that do so. This observation comes as a result of our definition of FO-respecting interpretations requiring that regular atoms are specifically mapped to objects that are neither first-order or first-order dual. If, instead, we only restricted first-order atoms and allowed regular atoms to be mapped to any object, then we would need to quantify over all interpretations ($P_F$ is a causal proof-net iff $\forall \Phi . \Vdash_\catc^\Phi F$).

In addition to interpretation-independence, the proof-net criterion is independent of our chosen base category $\catc$. Not only are the equations between causal types independent of the atomic objects, but they are also completely theory-independent: the inclusions between causal structures are the same for higher-order quantum theory ($\caus{\CPs}$), classical probability theory ($\caus{\MatRp}$), and pseudo-probability theory ($\caus{\MatRaff}$).

Recall from Proposition \ref{prop:pomset_extension} that causal logic conservatively extends pomset logic. A major consequence of this comes in terms of the computational complexity of checking causal consistency, since it must be at least as hard as verifying a pomset proof-net which is $\mathrm{coNP}$-complete \cite{Nguyen2022a}. $\mathrm{coNP}$ fits nicely with our picture of causality, since failure of consistency amounts to giving a cycle of information flow through the system which can be efficiently verified.

\begin{corollary}\label{corollary:complexity_of_causal_consistency}
Causal consistency of a balanced formula is $\mathrm{coNP}$-complete.
\end{corollary}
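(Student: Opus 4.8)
The plan is to establish the two halves of $\mathrm{coNP}$-completeness separately: membership directly from the proof-net criterion of Definition~\ref{def:proof_net} via the characterisation in Theorem~\ref{thrm:causal_logic_characterisation}, and hardness by transporting the known $\mathrm{coNP}$-hardness of pomset proof-net correctness along the conservative extension of Proposition~\ref{prop:pomset_extension}. First, a remark on why the decision problem is well-posed: the input is a balanced formula $F$ in the multiplicative fragment of Definition~\ref{def:causal_formula}, and by Theorem~\ref{thrm:causal_logic_characterisation} the question ``is $F$ causally consistent'' reduces, for \emph{every} FO-respecting interpretation $\Phi$ over \emph{any} additive precausal $\catc$ (and at least one such $\Phi$ always exists, e.g.\ in $\caus{\MatRp}$ sending first-order atoms to $\mathbf{2}$ and ordinary atoms to an object that is neither first-order nor first-order dual, such as the bit channel $\mathbf{2 \multimap 2}$), to the purely combinatorial question of whether $P_F$ is a causal proof-net.

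For membership I would show that \emph{non}-consistency lies in $\mathrm{NP}$. By Theorem~\ref{thrm:causal_logic_characterisation} and Definition~\ref{def:proof_net}, $F$ fails to be causally consistent exactly when some up-down switching $s \in \mathcal{S}_{P_F}$ yields a switching graph $G_s$ containing a directed cycle. A witness is the pair consisting of $s$ together with such a cycle: $P_F$ has a number of links linear in $|F|$, each link pins down one choice from a bounded menu of options and contributes a constant-size subgraph to $G_s$ (Definition~\ref{def:switching} and Figure~\ref{fig:switching}), so $s$ and $G_s$ have size linear in $|F|$, $G_s$ is computable in polynomial time, and the presence of the claimed cycle is checkable in linear time. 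Hence inconsistency is in $\mathrm{NP}$ and causal consistency is in $\mathrm{coNP}$.

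For hardness I would use the identity reduction on the pomset fragment $F,G ::= A \mid A^* \mid F \otimes G \mid F < G \mid F \parr G$. By Proposition~\ref{prop:pomset_extension}, $P_F$ is a causal proof-net iff $F$ is a pomset proof-net; moreover, as explained in the discussion preceding that proposition, on this fragment $P_F$ is literally the standard pomset proof-structure of $F$ and the up-down switchings detect precisely the same falsifying cycles as the long-trip / Danos--Regnier switchings. Thus mapping a balanced pomset formula to itself is a polynomial-time many-one reduction from the pomset proof-net correctness problem to causal consistency. Since the former is $\mathrm{coNP}$-complete~\cite{Nguyen2022a}, causal consistency is $\mathrm{coNP}$-hard, and combined with the previous paragraph it is $\mathrm{coNP}$-complete.

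The one place to tread carefully --- and hence the main (modest) obstacle --- is lining up the exact formulations of the decision problems on the pomset side: I must invoke the variant of the result of~\cite{Nguyen2022a} concerning whether a \emph{given} cut-free proof-structure (equivalently, the canonical proof-structure of a balanced formula) is a pomset proof-net, rather than the question of pomset provability, and confirm that the ``pomset proof-net'' appearing in Proposition~\ref{prop:pomset_extension} refers to exactly this canonical structure. Beyond that bookkeeping, the corollary follows immediately from Theorem~\ref{thrm:causal_logic_characterisation} and Proposition~\ref{prop:pomset_extension} and needs no further substantive argument.
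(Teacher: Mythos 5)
Your proposal is correct and follows essentially the same route as the paper's proof: membership in $\mathrm{coNP}$ via the switching-plus-cycle refutation witness, and hardness via the identity reduction on the pomset fragment using Proposition~\ref{prop:pomset_extension} and the $\mathrm{coNP}$-completeness of pomset proof-net verification. Your explicit care in distinguishing proof-net correctness of the canonical (balanced, hence uniquely axiom-linked) proof-structure from pomset provability (which is $\Sigma_2^p$-complete) is a worthwhile clarification that the paper's own one-line proof glosses over, but it does not change the argument.
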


\subsection{Sufficient Fragments}\label{sec:fragments}

Causal logic extends pomset with first-order atoms. Proposition \ref{prop:first_order_equations} characterises first-order objects as precisely those objects over which the identity process (and hence the contraction morphism $\epsilon_{\Phi(A^1)}$ corresponding to an axiom link) is one-way signalling. It turns out that, instead of using the directed FO-axiom link, we can use this characterisation to just describe that the two atoms are linked by a one-way channel. Consider the following component of a proof-structure:

\begin{equation}
\tikzfig{rewrite_fo_to_seq2}
\end{equation}

Enumerating the switching graphs over this component, there are no internal cycles possible and the only induced connectivity is from $Y$ to $X$, matching exactly the connectivity of an FO-axiom link. Because the connectivity matches, we can convert any cycle in a proof-structure containing an FO-axiom link to a cycle in the same proof-structure with the FO-axiom link replaced by this component and vice versa, so this replacement must preserve the correctness of the proof-net criterion.

The following definition introduces this formally as a rewrite procedure between causal proof-structures (and subsequently their corresponding formulae), which always reduces down to the fragment $F, G ::= A | A^* | F \otimes G | F < G | F \parr G$ used by pomset logic.

\begin{definition}\label{def:pom_encoding}
Any proof-structure $P$ induces a proof-structure $\mathrm{pom}\left(P\right)$ with no unit or FO-axiom links by applying the following rewrites exhaustively:
\begin{align}
\tikzfig{rewrite_unit_theorem} &\mapsto \tikzfig{rewrite_unit_theorem2} \\
\tikzfig{rewrite_fo_to_seq} &\mapsto \tikzfig{rewrite_fo_to_seq2}
\end{align}
where $X$ and $Y$ are fresh variable names for each instance of the rewrite and $C$ denotes the labels at any other link in the proof-structure as a function of the given subterms.
\end{definition}

One can similarly construct a faithful reduction down to the fragment $F, G ::= A^1 | \left( A^1 \right)^* | F \otimes G | F \parr G$ inductively defined by first-order systems. We can eliminate unit axioms in the same way, so it remains to give encodings for the regular (bidirectional) axiom and seq links.

We have already shown that causal consistency of a formula is independent of the specific objects chosen by an interpretation function (Theorem \ref{thrm:causal_logic_characterisation}). Therefore, it should be sufficient to consider interpretations that map regular variables to channels over first-order objects $\mathbf{Y^1 \multimap X^1} \cong \mathbf{\left( Y^1 \right)^* \parr X^1}$ since the channel types are neither first-order nor first-order dual. In terms of proof-structures, we can replace any regular axiom link with a pair of FO-axiom links in opposite directions and recover connectivity in both directions.

One-way signalling processes are spanned by those which factorise by the first-order system $\mathbf{2}$ (Corollary \ref{corollary:seq_as_2local}); that is, any state of $\mathbf{A < B}$ can be represented as a state of $\mathbf{\left( A \parr 2 \right) \otimes \left( 2^* \parr B \right)}$ contracted along $\mathbf{2}$. Making this contraction explicit in our proof-structure introduces a new FO-axiom link to control the direction of information flow between $\mathbf{A}$ and $\mathbf{B}$. In working with right-sided sequents and black box effects, we dualise this to replace $\mathbf{A < B}$ with $\mathbf{\left( A \otimes 2^* \right) \parr \left( 2 \otimes B \right)}$.

Again, we formalise these ideas as a rewrite procedure.

\begin{definition}\label{def:fo_encoding}
Any proof-structure $P$ induces a proof-structure $\mathrm{fo}\left(P\right)$ with no unit, regular axiom, or seq links by applying the following rewrites exhaustively:
\begin{align}
\tikzfig{rewrite_unit_theorem} &\mapsto \tikzfig{rewrite_unit_theorem3} \\
\tikzfig{rewrite_ax_to_channel} &\mapsto \tikzfig{rewrite_ax_to_channel2} \\
\tikzfig{rewrite_seq_to_fo} &\mapsto \tikzfig{rewrite_seq_to_fo2}
\end{align}
where $X^1$ and $Y^1$ are fresh first-order variable names for each instance of the rewrite.
\end{definition}

We are now able to formally state the faithfulness property of these encodings.

\begin{proposition}\label{prop:equivalent_fragments}
For any causal proof-structure $P$, the following are equivalent:
\begin{itemize}
\item $P$ is a causal proof-net;
\item $\mathrm{pom}\left(P\right)$ is a causal proof-net;
\item $\mathrm{fo}\left(P\right)$ is a causal proof-net.
\end{itemize}
\end{proposition}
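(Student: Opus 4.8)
The plan is to prove the three-way equivalence by reducing it to a purely \emph{local} analysis of the individual rewrite rules of Definitions~\ref{def:pom_encoding} and~\ref{def:fo_encoding}. First I would check that the exhaustive rewriting procedures are well-defined: each rule removes exactly one link of a kind it is meant to eliminate (unit and FO-axiom links for $\mathrm{pom}$; unit, regular-axiom and seq links for $\mathrm{fo}$) and introduces no links of the eliminated kinds, so every rewrite sequence terminates; and since, by the argument below, the causal-proof-net property is invariant under each single step, it does not matter which terminating sequence we use. It therefore suffices to show that a single rewrite step $P \rightsquigarrow P'$ satisfies: $P$ is a causal proof-net iff $P'$ is.

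The heart of the argument is a connectivity-matching lemma for the gadgets involved. For each rule $L \mapsto R$, let $\partial$ be the set of boundary ports along which the gadget is glued into the surrounding context $C$, and define the \emph{connectivity profile} of a gadget to be the family, indexed by all choices of internal up-down switchings, of the directed reachability relations induced on $\partial$ by the corresponding switching subgraphs of Figure~\ref{fig:switching}. I would verify, by direct finite case analysis over the handful of switching options at each link, that (i) under every internal switching neither $L$ nor $R$ contains an internal directed cycle, and (ii) $L$ and $R$ induce the same \emph{set} of reachability relations on $\partial$. For the unit rules this amounts to the observation that the replacement gadget realises exactly the two ``pure source'' and ``pure sink'' behaviours of a unit link at its boundary port. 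For the FO-axiom-to-seq rule of $\mathrm{pom}$ this is precisely the computation sketched before Definition~\ref{def:pom_encoding}: the gadget admits no internal cycle and its only induced boundary connectivity is $Y \to X$, matching the one-way connectivity $(A^1)^* \to A^1$ of an FO-axiom link (the fact that makes $\epsilon_{\Phi(A^1)}$ one-way signalling, Proposition~\ref{prop:first_order_equations}). For $\mathrm{fo}$, the regular-axiom-to-channel gadget realises both $X \to X^*$ and $X^* \to X$ via its pair of oppositely directed FO-axiom links, exactly matching a bidirectional axiom link, and the seq-to-$\parr$ gadget, which makes the contraction along $\mathbf{2}$ of Corollary~\ref{corollary:seq_as_2local} explicit, realises precisely the one-way profile of a seq link.

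With the lemma in hand I would finish as follows. Since the rewrite is local, the vertices and edges of $C$ are literally identical in $P$ and $P'$, and a switching $s$ of either whole structure restricts to a switching $s_C$ of $C$ together with an internal switching of the gadget. A directed cycle of $G_s$ either lies entirely inside $C$ — in which case the same $s_C$ together with any internal switching of the other gadget again closes a cycle — or it passes through the gadget, entering and leaving along ports of $\partial$; by (i) it cannot be confined to the gadget, and by (ii) the reachability pairs it uses inside the gadget are realisable by some single internal switching of the other gadget, so the cycle transports across the rewrite. Hence $G_s$ has a cycle for some $s$ precisely when the analogous statement holds for $P'$, i.e.\ $P$ is a causal proof-net iff $P'$ is (Definition~\ref{def:proof_net}). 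Iterating over the rewrite steps of $\mathrm{pom}$ and of $\mathrm{fo}$ yields all three equivalences; freshness of the introduced variables keeps the underlying formulae balanced, so, composed with Theorem~\ref{thrm:causal_logic_characterisation}, this also transports causal consistency faithfully between the three fragments.

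\textbf{Main obstacle.} I expect essentially all the real work, and the only place care is needed, to be part (ii) of the connectivity-matching lemma for the two non-trivial gadgets: establishing \emph{exact} equality (not mere inclusion) of the realisable boundary-connectivity families, and in particular ruling out that the gadgets ever realise the ``wrong-direction'' connectivity, while correctly bookkeeping how the external ``up''/``down'' switching of the link immediately above the gadget composes with the gadget's own internal switchings. A secondary subtlety is that after several rewrites a boundary port may be shared between two freshly introduced gadgets; treating the rewrites one step at a time, with a short induction on the number of steps, sidesteps this.
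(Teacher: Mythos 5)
Your proposal is correct and follows essentially the same route as the paper, whose own proof is just a two-sentence sketch asserting exactly your two conditions: that each replacement gadget contains no internal cycles under any switching, and that the boundary connectivity profiles induced by the up-down switchings match those of the replaced link. Your write-up merely fills in the termination and cycle-transport bookkeeping that the paper leaves implicit as "straightforward."
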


This draws an even closer link between the logic of causal proof-nets and pomset. Not only do we have a conservative extension, but the theorems of pomset logic completely determine the truth of all statements in causal logic. Whilst there is no new behaviour that cannot be derived from pomset logic, there are still practical benefits to the model from keeping first-order systems as part of the logic due to their important physical semantics.

The other sufficient fragment corresponds to the subcategory of $\caus{\catc}$ inductively defined by first-order systems and $\multimap$ (obtaining duals and hence $\otimes$ and $\parr$ using the obvious encodings). Such inductively-defined constructions have received attention in the study of higher-order quantum processes \cite{Bisio2019,Hoffreumon2022}. Sufficiency shows that no more logical or structural behaviour exists in $\caus{\catc}$ than in these other frameworks. Despite this, $\caus{\catc}$ retains its flexibility over these alternatives because some of the additional objects still have important physical or computational interpretations - for example, in Measurement-Based Quantum Computing, each qubit is annotated with a particular axis or plane of the Block sphere for which measurement errors can be corrected, essentially allowing for postselection within that axis/plane, and in $\caus{\CPs}$ we can build qubit objects where the causal effects coincide with postselection within the chosen axis/plane (and convex combinations of these) which cannot be built inductively from any of the operators of Section \ref{sec:caus_construction}.

Proof-nets over the first-order inductive fragment resembles a variant of polarised $\mllmix$ where only the atoms are polarised. Despite linear time verification of $\mllmix$ proof-nets \cite{Nguyen2020}, by faithfully encoding all causal formulae (and hence pomset) we find that the seemingly small change of directing the axiom links jumps the complexity all the way to $\mathrm{coNP}$-complete again.

\subsection{Extensions}\label{sec:logic_extensions}

In the other direction from looking at taking fragments of the grammar of causal formulae, we could look at expanding the grammar to include arbitrary graph types, additives, unions, and intersections, and ask how causal logic should be extended to remain sound and complete.

For unions and intersections, we can always lift these to the top level of a formula (Lemma \ref{lemma:union_intersection_rules} and Proposition \ref{prop:graph_types_preserve_union_intersection}) at which point they are just existential and universal quantification over other formulae/proof-structures. Semantically sensible formulae containing unions and intersections may involve duplicates of atoms of the same name, so we would need to carefully adjust the notion of balanced formulae to guarantee that each branch of this quantification yields balanced causal formulae.

For graph types, we can either add them directly as primitives (prime graphs would be sufficient as in $\mathrm{GV}$ \cite{Acclavio2022}) or derive them as intersections of existing types. To add graph types as primitive $n$-ary operators, we can generalise seq links to $n$ premises and switch up or down to the conclusion along with switching over every linear ordering of the graph (in line with Proposition \ref{prop:dual_graph_type_as_union} and generalising the switchings of tensor links). For their dual types, we again switch up or down to the conclusion, but have a static version of the graph connecting the premises (generalising the switchings of seq links). The proof of Lemma \ref{lemma:switching_decomposition_of_causal_type} generalises using Proposition \ref{prop:graph_substitution_in_graph_types}.

We leave additives as unsolved for now. Whilst Retor\'{e}'s original paper on pomset logic \cite{Retore1997} claimed it would be easy to integrate them using proof-nets with boxes, Horne \cite{Horne2015} showed that a similar claim for unifying $\bv$ and $\mathrm{MALL}$ required an additional \textit{medial} rule $\left( (F \times G) < (K \times L) \right) \multimap \left( (F < K) \times (G < L) \right)$ (the lax interchange in the duoidal structure between $<$ and $\times$ \cite{Shapiro2022}). We therefore advocate for a similar study here to verify that pomset with additives does indeed follow straightforwardly. Whilst $\times$ and $\oplus$ are categorical products and coproducts in $\caus{\catc}$, the fact that $\mathbf{A \oplus B}$ permits affine combinations of $\mathbf{A}$ and $\mathbf{B}$ states means it is possible they may share some behaviour with probabilistic sub-additives \cite{Horne2019} instead of acting like the normal additives of linear logic. Future investigation is clearly required in order to understand the role of these additives in causal logic better.

\section{Discussion}\label{sec:discussion}

The remaining sections address some smaller follow-on remarks, building on our understanding of causal logic.

\subsection{Canonical Morphisms of Proofs}\label{sec:pomset}

We can see notable similarity between Theorem \ref{thrm:causal_logic_characterisation} and Retor\'{e}'s result that pomset logic proof-nets have a sound and faithful interpretation in coherence spaces \cite{Retore1994}. In that result, it is shown that correctness of a proof-net is not only implied by the existence of a clique for every interpretation of the atoms, but it is sufficient to only consider interpretations that map each atom to some four-token coherence spaces $\mathrm{N}$ and $\mathrm{Z} = \mathrm{N^\bot}$. By exploiting the interpretation-independence of Theorem \ref{thrm:causal_logic_characterisation}, we can do a similar thing here by picking an interpretation $\Phi$ and taking the contraction morphism $\epsilon_F^\Phi$ as a canonical representation of the proof.

Because of our distinction of different classes of atoms, we need more elementary objects for canonical interpretations of proofs. $\mathbf{2}$ is a good candidate for first-order variables, it being the simplest object which is first-order but not first-order dual.

For regular variables, we should associate them with an object that is neither first-order or first-order dual, giving us a couple of options that stand out. $\mathbf{A_1} = \left( 3, \left\{ \iota_1, \tfrac{1}{3} \maxmix_3 \right\}^{**} \right)$ from Remark \ref{remark:no_distribution_of_union_intersection} is similar to $\mathrm{N}$ in that it is the object with lowest dimensions (with respect to both the carrier object and state set) which is not isomorphic to any object definable from $\mathbf{I}$ with $\mathbf{\oplus}, \mathbf{\times}, \mathbf{(-)^*}, \mathbf{\otimes}, \mathbf{<}$.

Alternatively, we could apply the $\mathrm{fo}(-)$ procedure of Definition \ref{def:fo_encoding} to relate the variable to a simple channel over first-order systems, naturally leading to us picking $\mathbf{2 \multimap 2}$ for our canonical interpretation. This is appealing since we can then see the interpretation as built entirely with $\mathbf{I \multimap I} (\cong \mathbf{I})$, $\mathbf{I \multimap 2} (\cong \mathbf{2})$, $\mathbf{2 \multimap I} (\cong \mathbf{2^*})$, and $\mathbf{2 \multimap 2}$, exhibiting the sufficiency of stochastic maps on classical bits for studying causal structures.

\subsection{Consistency as Extranatural Transformations}\label{sec:extranatural}


If we actually want to draw conclusions about the coherent structure of causal categories, we may wish for an alternative definition of causal consistency that determines a property of the category as a whole, rather than being dependent on a particular interpretation function. We can achieve this by lifting formulae to functors and interpreting a sequent as the existence of an extranatural transformation between them.

\begin{definition}\label{def:causal_functor}
The \textit{causal functor $\mathcal{F}_F$} of a formula $F$ is defined inductively:
\begin{equation}\label{eq:causal_functor_def}
\begin{split}
\mathcal{F}_A &= 1_{\caus{\catc}} : \caus{\catc} \to \caus{\catc} \\
\mathcal{F}_{A^*} &= (-)^* : \caus{\catc}^{op} \to \caus{\catc} \\
\mathcal{F}_{A^1} &= \iota : \focat{\caus{\catc}} \hookrightarrow \caus{\catc} \\
\mathcal{F}_{\left(A^1\right)^*} &= \iota^{op} \fatsemi (-)^* : \left(\focat{\caus{\catc}}\right)^{op} \to \caus{\catc} \\
\mathcal{F}_{I} &= \mathcal{I} : 1 \to \caus{\catc} \\
\mathcal{F}_{F \otimes G} &= \left( \mathcal{F}_F \times \mathcal{F}_G \right) \fatsemi \mathbf{\otimes} : \mathrm{dom}(\mathcal{F}_F) \times \mathrm{dom}(\mathcal{F}_G) \to \caus{\catc} \\
\mathcal{F}_{F < G} &= \left( \mathcal{F}_F \times \mathcal{F}_G \right) \fatsemi \mathbf{<} : \mathrm{dom}(\mathcal{F}_F) \times \mathrm{dom}(\mathcal{F}_G) \to \caus{\catc} \\
\mathcal{F}_{F \parr G} &= \left( \mathcal{F}_F \times \mathcal{F}_G \right) \fatsemi \mathbf{\parr} : \mathrm{dom}(\mathcal{F}_F) \times \mathrm{dom}(\mathcal{F}_G) \to \caus{\catc}
\end{split}
\end{equation}
where $\mathcal{I}$ picks out $\mathcal{I}(\star) = \mathbf{I}$, $\mathcal{I}(\id_\star) = \id_i$. Note that the objects of the domain of $\mathcal{F}_F$ coincide with interpretation functions over the free variables of $F$.
\end{definition}

Because each morphism in $\caus{\catc}$ is a morphism in $\catc$, we can use the existence of a basis of states/effects \ref{apc:basis} to identify each term of an extranatural transformation between causal functors and determine that it is necessarily just a wiring diagram.

\begin{lemma}\label{lemma:unique_causal_extranatural}
Given a balanced formula $F$, there is at most one extranatural transformation $\eta_F : \mathcal{I} \to \mathcal{F}_F$ which is extranatural across the pairs of indices given by matching atoms in $F$. When it exists, the term attributed to an interpretation $\Phi : \Var \to \ob{\caus{\catc}}$ is precisely $\left( \epsilon_F^\Phi \right)^*$, i.e. a selection of compact cups connecting matching atoms.
\end{lemma}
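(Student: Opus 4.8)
The plan is to produce the explicit candidate $\Phi \mapsto \left(\epsilon_F^\Phi\right)^*$, show it is extranatural in the matched atoms, and then show that \emph{any} extranatural $\eta_F$ must agree with it componentwise; since the candidate makes no reference to $\eta_F$, uniqueness follows at once. That the candidate family is extranatural is essentially formal: its component at $\Phi$ is a disjoint union of compact cups $\eta_{\Phi(A)}$ joining the positive and negative occurrence of each atom $A$, and the extranaturality square in the index $A$ is precisely the statement that a morphism may be slid along $\eta_{\Phi(A)}$ from the positive leg to the negative leg, i.e. the yanking/snake equations combined with functoriality of $(-)^*$. (Its components all lie in $\caus{\catc}$ exactly when $\Vdash_\catc^\Phi F$ for every interpretation, which is the ``when it exists'' hypothesis of the statement and is not needed for what follows.)

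For the converse direction, fix an extranatural $\eta_F$ and an interpretation $\Phi$. The key move is a reduction to the case in which $\Phi$ sends every atom to a first-order object. For each regular atom $A$, flatness supplies a morphism $\theta_{\Phi(A)}\cdot\id$ of $\caus{\catc}$ from $\Phi(A)$ to the first-order object on the same carrier object (causality is immediate from $\ndiscard_{\Phi(A)}\in c_{\Phi(A)}^*$), and its underlying $\catc$-morphism is an invertible scalar times an identity. Substituting this morphism into the extranaturality square for $A$ scales both sides by the same invertible scalar, which then cancels, so $\mathcal U(\eta_F^\Phi)$ is unchanged in $\catc$ if we replace $\Phi(A)$ by that first-order object; applying the identical substitution to the candidate family and iterating over all atoms reduces the claim to interpretations valued entirely in first-order objects.

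For such a $\Phi$ I would induct on the number of atom-variables of $F$. With no atoms $\Phi(F)=\mathbf I$ and $\caus{\catc}(\mathbf I,\mathbf I)=\{\id_I\}=\{(\epsilon_F^\Phi)^*\}$. For the step, pick an atom $A$ and note that the restriction of $\eta_F$ to interpretations sending $A\mapsto\mathbf I$ is an extranatural transformation for the smaller balanced formula $F'$ obtained by deleting the two occurrences of $A$, so by induction $\eta_F^{\Phi[A\mapsto\mathbf I]}=(\epsilon_{F'}^\Phi)^*$. For any causal state $\rho:\mathbf I\to\Phi(A)$ --- a morphism of $\focat{\caus{\catc}}$ --- the extranaturality equation in $A$ says that inserting $\rho$ into the $A$-slot of $\eta_F^{\Phi[A\mapsto\mathbf I]}$ equals inserting the transposed effect $\rho^*$ into the negative-$A$ leg of $\eta_F^\Phi$; the left side is $(\epsilon_{F'}^\Phi)^*$ with $\rho$ inserted, which by yanking is exactly $(\epsilon_F^\Phi)^*$ with $\rho^*$ inserted into its negative-$A$ leg. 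Thus $\eta_F^\Phi$ and $(\epsilon_F^\Phi)^*$, read through compact closure as morphisms out of $\mathcal U(\Phi(A))$, agree after precomposition with every $\rho\in c_{\Phi(A)}=\{\discard\}^*$. Because $\Phi(A)$ is first-order, $c_{\Phi(A)}$ is the set of \emph{all} causal states of $\mathcal U(\Phi(A))$ and in particular contains a causal basis, so \ref{apc:basis} forces $\mathcal U(\eta_F^\Phi)=\mathcal U\!\left((\epsilon_F^\Phi)^*\right)$, hence equality in $\caus{\catc}$.

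I expect the reduction step to be the main obstacle: one cannot run the induction for an arbitrary $\Phi$, because $c_{\Phi(A)}$ need not linearly span the states of a generic $\Phi(A)$ and so the probes $\rho^*$ above fail to separate morphisms; the real content is recognising that first-order objects \emph{are} ``tomographically rich enough'' and that the scalar-cancellation trick legitimately transports the general case to them. The remaining pieces --- extranaturality of the cup family, the yanking manipulations, and the bookkeeping of which leg each insertion acts on (and of the unitors incurred when deleting an atom) --- are routine.
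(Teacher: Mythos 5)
Your proof is correct and follows essentially the same route as the paper's: reduce an arbitrary component to the first-order case by sliding the (scalar multiple of the) identity $\Phi(A)\to\mathbf{A^1}$, whose causality comes from flatness, across the extranaturality square, then probe with causal basis states (causal for first-order objects) and invoke \ref{apc:basis} to identify the component with the cups. The only difference is organisational — you peel off one atom at a time by induction where the paper collapses all indices to the scalar component at once — and your handling of the normalisation scalar $\theta_{\Phi(A)}$ is, if anything, slightly more careful than the paper's.
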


\begin{definition}\label{def:causal_consistency_by_extranaturality}
A balanced formula $F$ is \textit{causally consistent for $\catc$ by extranatural transformation} ($\Vdash_\catc F$) when there exists an extranatural transformation $\eta_F : \mathcal{I} \to \mathcal{F}_F$ which is extranatural across the pairs of indices given by matching atoms if $F$.
\end{definition}

It is always possible to construct $\left( \epsilon_F^\Phi \right)^*$ for any $\Phi$, and they will always collectively satisfy extranaturality by sliding morphisms around the cups in $\catc$. This means the only thing left that determines $\Vdash_\catc F$ is whether each $\epsilon_F^\Phi$ is causal, i.e. $\Vdash_\catc^\Phi F$.

\begin{proposition}\label{prop:equivalent_defs_of_causal_consistency}
For any balanced formula $F$, $\Vdash_\catc F$ iff for all interpretations $\Phi$ which FO-respect $F$ we have $\Vdash_\catc^\Phi F$, i.e. iff $P_F$ is a causal proof-net.
\end{proposition}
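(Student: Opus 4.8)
The plan is to chain Lemma~\ref{lemma:unique_causal_extranatural}, which pins down the only possible candidate for the extranatural transformation, with Theorem~\ref{thrm:causal_logic_characterisation}, which identifies causal consistency under an interpretation with the proof-net criterion. Concretely, I would establish the chain of equivalences $\Vdash_\catc F \Leftrightarrow \bigl(\forall\,\Phi \in \mathrm{dom}(\mathcal{F}_F).\ \Vdash_\catc^\Phi F\bigr) \Leftrightarrow \bigl(\forall\text{ FO-respecting }\Phi.\ \Vdash_\catc^\Phi F\bigr) \Leftrightarrow P_F\text{ is a causal proof-net}$, where the last equivalence (the ``i.e.'' of the statement) is immediate from Theorem~\ref{thrm:causal_logic_characterisation}.

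For the first equivalence, Lemma~\ref{lemma:unique_causal_extranatural} tells us that any extranatural transformation $\eta_F : \mathcal{I} \to \mathcal{F}_F$ of the required kind must have component $(\epsilon_F^\Phi)^*$ at each object $\Phi$ of $\mathrm{dom}(\mathcal{F}_F)$, a fixed wiring of compact cups joining matched atoms. Conversely, these morphisms $(\epsilon_F^\Phi)^*$ can always be written down in $\catc$, and I would argue the whole family automatically satisfies the extranaturality squares: since $\mathcal{U}$ is faithful and identity-on-morphisms, each square is an equation in $\catc$, and there it reduces --- by bifunctoriality of $\otimes$, $<$, $\parr$ in $\catc$, coherence of the associators, and the snake equations --- to the extranaturality of a single cup $\eta_A$. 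Hence $\{(\epsilon_F^\Phi)^*\}_\Phi$ is always an extranatural transformation into $\catc$, and it lifts to one into $\caus{\catc}$ exactly when every component is a causal morphism, i.e.\ exactly when $\Vdash_\catc^\Phi F$ holds throughout $\mathrm{dom}(\mathcal{F}_F)$.

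For the remaining equivalences, I would unwind (from Definition~\ref{def:causal_functor}) that the objects of $\mathrm{dom}(\mathcal{F}_F)$ are exactly the interpretations sending every atom occurring as $A^1$ to a first-order object and every regular atom to an arbitrary object; every FO-respecting interpretation is among them, so validity throughout $\mathrm{dom}(\mathcal{F}_F)$ trivially implies validity for all FO-respecting $\Phi$. For the converse, fix an FO-respecting $\Phi_0$ --- one exists, e.g.\ sending regular atoms to $\mathbf{2 \multimap 2}$ and first-order atoms to $\mathbf{2}$ --- so that $\Vdash_\catc^{\Phi_0} F$ and Theorem~\ref{thrm:causal_logic_characterisation} give that $P_F$ is a causal proof-net. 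Now for an arbitrary $\Phi \in \mathrm{dom}(\mathcal{F}_F)$, let $F'$ be obtained from $F$ by re-typing as first-order exactly those regular atoms that $\Phi$ maps to first-order objects, so that $\Phi$ FO-respects $F'$, $\Phi(F') = \Phi(F)$, and $\epsilon_{F'}^\Phi = \epsilon_F^\Phi$; since turning a regular axiom link into an FO-axiom link only deletes edges and switchings from switching graphs, $P_{F'}$ is again a causal proof-net, and Theorem~\ref{thrm:causal_logic_characterisation} applied to $F'$ yields $\Vdash_\catc^\Phi F'$, i.e.\ $\Vdash_\catc^\Phi F$. This last step is precisely the observation recorded in the text immediately after Theorem~\ref{thrm:causal_logic_characterisation}, so it can be cited rather than re-proved.

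The only point needing genuine care --- and the expected main obstacle --- is the claim that the candidate family $\{(\epsilon_F^\Phi)^*\}$ satisfies the extranaturality equations independently of any causality consideration, so that causality of the components is truly the sole obstruction to $\Vdash_\catc F$. I expect this to go through by a structural induction on $F$ mirroring the proof of Lemma~\ref{lemma:unique_causal_extranatural}: the base cases are the extranaturality of $\eta_A$ (and its triviality at units and first-order atoms), and the inductive step only has to bookkeep which leg of which cup a morphism $g : \mathbf{B} \to \mathbf{B'}$ at the varied atom slides along under $\otimes$, $<$, or $\parr$. Everything else in the argument is assembling results that are already in place.
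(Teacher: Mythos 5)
Your proof is correct and follows essentially the same route as the paper: Lemma~\ref{lemma:unique_causal_extranatural} pins any candidate down to the cup family $\left\{ \left( \epsilon_F^\Phi \right)^* \right\}_\Phi$, whose extranaturality is automatic by sliding morphisms around cups in $\catc$, so existence of the transformation reduces to causality of each component, and Theorem~\ref{thrm:causal_logic_characterisation} finishes. The two points you single out as needing care --- automatic extranaturality of the cup family, and the mismatch between quantifying over all of $\mathrm{dom}(\mathcal{F}_F)$ (which includes non-FO-respecting assignments of first-order objects to regular atoms) versus only FO-respecting interpretations --- are precisely what the paper disposes of in the prose surrounding Definition~\ref{def:causal_consistency_by_extranaturality} and in the remark following Theorem~\ref{thrm:causal_logic_characterisation}, and your retyping argument for the latter is sound (provided you also retype atoms sent to first-order \emph{dual} objects as $\left(A^1\right)^*$, which likewise only removes switching options from the axiom link).
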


This gives us a more category-theoretic way to characterise the structure of causal categories. At the time of this paper, there is no known category-theoretic characterisation of pomset logic, but the existence of such extranatural transformations for all provable formulae makes a convincing case that $\caus{\catc}$ should be an instance of a ``pomset category'' for any suitable definition.

We could similarly extend this idea to two-sided sequents, having $G \Vdash_\catc F$ hold when there exists an extranatural transformation $\mathcal{F}_G \to \mathcal{F}_F$. We conjecture that adding unions, intersections, or additives as per Section \ref{sec:logic_extensions} will require looking at dinatural transformations instead to account for duplication of atom names.

\subsection{Separating $\bv$ and Pomset}\label{sec:separating_statement}


Our results here show that pomset logic has a tight correspondence with consistent formulae, improving on the previous lower bound of $\bv$ \cite{Simmons2022}. This allows us, in particluar, to look at the separating statements that are theorems of pomset logic but not provable in $\bv$ and attribute them with a physical intuition. For example, Nguy\textipa{\~{\^{e}}}n and Straßburger \cite{Nguyen2022} proved that the following is a theorem of pomset logic but not of $\bv$.
\begin{equation}\label{eq:pomset_bv_separating_theorem}
\begin{split}
\vdash ((A < B) \otimes (C < D)) \parr ((E < F) \otimes (G < H)) \parr (A^* < H^*) \\ \parr (E^* < B^*) \parr (G^* < D^*) \parr (C^* < F^*)
\end{split}
\end{equation}

One can take this sequent and interpret it directly via causal consistency as a valid wiring diagram of black boxes (dualising each formula to describe states).

\begin{equation}
\tikzfig{separating_statement}
\end{equation}

There are still a lot of boxes here, making it quite cumbersome to attempt read this intuitively. However, we recognise that this sequent is the outcome of applying the $\mathrm{pom}(-)$ operation to the following sequent (using new labels to distinguish the new first-order atoms from the previous non-degenerate atoms):
\begin{equation}\label{eq:pomset_bv_separating_theorem_fo}
\vdash \left( \left( \left( P^1 \right)^* < Q^1 \right) \otimes \left( \left( R^1 \right)^* < S^1 \right) \right) \parr \left( \left( \left( Q^1 \right)^* < R^1 \right) \otimes \left( \left( S^1 \right)^* < P^1 \right) \right)
\end{equation}

This has reduced all of the atomic systems down to first-order. By Proposition \ref{prop:first_order_equations}, $<$ is equivalent to $\parr$ (and $\multimap$ up to the appropriate duals) in these instances, giving our final sequent:
\begin{equation}\label{eq:pomset_bv_separating_theorem_simple}
\left( \left( P^1 \multimap Q^1 \right) \otimes \left( R^1 \multimap S^1 \right) \right) ^* \vdash \left( Q^1 \multimap R^1 \right) \otimes \left( S^1 \multimap P^1 \right)
\end{equation}

Looking back at Table \ref{table:causal_concepts} for our common concepts, we recognise that this is precisely the statement that any process matrix can be represented as a non-signalling first-order channel by swapping the outputs.

\begin{equation}\label{eq:process_matrix_non_signalling}
\tikzfig{process_matrix_non_signalling_fo} \sim \sum_i \alpha_i \tikzfig{process_matrix_non_signalling_fo1}
\end{equation}

Since the $\mathrm{pom}(-)$ operation preserves correctness (Proposition \ref{prop:equivalent_fragments}), this really is equivalent to the original separating statement of Equation \ref{eq:pomset_bv_separating_theorem} but far easier to interpret as something physically meaningful.

\section{Conclusion}\label{sec:conclusion}

Working with the $\caus{-}$ construction as a very general model of deterministic higher-order processes from an arbitrary base theory with maximal freedom of composition, we have reintroduced definite (DAG) causal structures from traditional causality frameworks into this model via graph types and shown them to be a flexible and powerful tool in studying the structure of these categories. Much of this comes from the unification of definitions of compatibility with a graph based on non-signalling constraints and factorisations up to affine combinations (Theorem \ref{thrm:all_graph_types_equivalent}), bolstered by preserving any local structure through graph substitution (Proposition \ref{prop:graph_substitution_in_graph_types}) and standard forms that characterise inclusions and compatibility for composition (Lemmas \ref{lemma:transitive_closure_graph_type} and \ref{lemma:transitive_closure_dual_graph_type}), which respectively amount to checking possible signalling paths between individual parties or the existence of any signalling loops that can break normalisation by encoding a paradoxical scenario.

This maximal freedom of composition for black boxes can be formally characterised by a sound and complete logic for wiring diagrams in $\caus{\catc}$ (Theorem \ref{thrm:causal_logic_characterisation}), built intentionally from the model in such a way that switchings over proof-structures enumerate graph types that span the state space of the black boxes under union (Lemma \ref{lemma:switching_decomposition_of_causal_type}). Note that even though the processes in the category may exhibit indefinite causal structure, this shows that such processes can always be decomposed into an affine combination of deterministic processes with definite causal structure. Furthermore, independence of the logic from local systems and base theory is very powerful, showing that no more behaviour (with respect to deterministic composition of typed black boxes) exists in higher-order quantum processes than simple channels over single classical bits.

The logic can be faithfully reduced down to two relevant fragments: one showing that all structure is determined by the sub-theory inductively defined from first-order systems, and the other, pomset \cite{Retore2020}, giving a $\mathrm{coNP}$-complete complexity for checking causal consistency of a wiring diagram. Connecting pomset with a physical model allowed us to show that Nguy\textipa{\~{\^{e}}}n and Straßburger's separating statement between $\bv$ and pomset \cite{Nguyen2022} exactly describes that bipartite process matrices can be represented as non-signalling channels up to permutation of the outputs.

We can identify a few areas of future work that receive motivation from the results in this paper:

\begin{itemize}
\item Whilst Theorem \ref{thrm:all_graph_types_equivalent} unifies a number of different ways to express compatibility with a causal structure, it by no means covers every such way. One notable missing case is quantum causal models \cite{Barrett2019} and their classical predecessors \cite{Pearl}. Categorifications of classical causal models \cite{Fritz2020,Lorenz2023} tend to assume the existence of a copy-delete comonoid but there is no completely-positive trace-preserving comonoid to extend this to the quantum case. We expect that one could use the existence of such comonoids in $\sub{\catc}$ to present a similar construction and show it is equivalent to the graph types presented here.
\item In Section \ref{sec:logic_extensions} we discussed how to extend the logic to richer grammars, incorporating additional operators present in the $\caus{-}$ construction. In particular, the additives may require a more thorough treatment to reveal how they work in the semantics of causal categories and to ensure that all desirable behaviour is captured when combined with existing pomset techniques.
\item Since the failure of causal consistency for some scenario amounts to the existence of a causal loop, one can analogise this as checking for deadlock-freedom in a concurrent programming model. This raises some curiosity as to whether there is any correspondence between deadlock-free $\pi$-calculus programs and pomset or a related logic that can be proved using similar techniques.
\end{itemize}

\subsection*{Acknowledgements}

The authors would like to thank the members of the quantum group at University of Oxford and attendees of the Mathematical Foundations of Computation seminar at University of Bath and the CHoCoLa seminar at ENS Lyon for discussions and feedback, especially to Alessio Guglielmi, Chris Barrett, Tito Nguy\textipa{\~{\^{e}}}n, and Jonathan Barrett. Additional thanks to Robin Lorenz and Sean Tull for feedback on drafts of this paper. WS was supported by Quantinuum Ltd. AK acknowledges support from Grant No. 61466 from the John Templeton Foundation as part of the QISS project. The opinions expressed in this publication are those of the authors and do not necessarily reflect the views of the John Templeton Foundation.

\bibliography{library}
\bibliographystyle{plainurl}

\appendix

\section{Essentials of $\mll$, $\bv$, and Pomset Logic}\label{sec:logic_primer}

To account for readers from a range of backgrounds, this appendix will serve as a minimal primer for these logics to see the significance towards the content of this paper.

Formal logics aim to provide a canonical syntax for mathematical reasoning, where a goal statement is shown to be provable in the logic by giving a derivation from simple elementary rules or some other proof object which can be verified by some simple consistency condition. The choices of what grammar to build terms from, how one builds statements from terms (e.g. one- or two-sided, collections of terms as sets/multisets/ordered lists, etc.), and which rules to permit or consistency condition to enforce gives the logic its flavour, determining the structure of proofs and what semantic models exist for it.

Linear logics expose the hidden assumption of propositional logic that everything is copyable: a proof of the implication $F \Rightarrow G$ may use its premise $F$ multiple (or even zero) times to derive the conclusion $G$. The rules are ``resource-aware'', ensuring that each premise is used precisely once, in line with the reading of linear implication $F \multimap G$ as ``one use of $F$ can generate a use of $G$''. Sequents use multisets $\Gamma, \Delta$ of terms to explicitly track the quantities of any duplicate terms. The operators of linear logic are typically grouped into the following classes:
\begin{itemize}
\item Duality $\left( - \right)^\bot$ as the correspondent of negation, which is commonly interpreted in models as the duality between producing or consuming a use of a term. Syntactic De Morgan equations can always push duals inside formulae to yield a negation normal form (where duals are only applied on atoms).
\item Multiplicatives $\left\{ \otimes, \parr, \multimap \right\}$ introduce multiple uses of terms in parallel, possibly with some connection or relationship between them.
\item Additives $\left\{ \times, \oplus \right\}$ capture a single-use choice between two terms, distinguished as external choice (both are provided and the choice selected at consumption) versus internal choice (only one is provided, so a consumer must be able to handle either).
\item Exponentials $\left\{ !, ? \right\}$ annotate terms to reintroduce resources that can be used multiple times.
\item Units $\left\{ 1, \bot, 0, \top \right\}$ for the multiplicative and additive operators represent trivial or degenerate cases. Many extensions of linear logic will unify some of these units.
\end{itemize}
We retain the usual distinction between classical and intuitionistic variants of the logic (whether negation/duality is involutive $F^{\bot\bot} \multimap F$). For this presentation, we will work with the classical logics that are reflected by the models in the main text.

\subsection{Sequent Calculus for $\mll$}

Multiplicative Linear Logic ($\mll$) is the fragment just concerning the multiplicative operators and their units. In the classical variant, linear implication $F \multimap G$ is syntactically identified with $F^\bot \parr G$. The standard presentation is as a sequent calculus, inductively defined by the following rules:

\noindent\begin{minipage}{\textwidth}
\begin{minipage}[b]{.5\textwidth}
\begin{prooftree}
\AxiomC{}
\LeftLabel{$\mathrm{Ax}$}
\UnaryInfC{$\vdash A^\bot, A$}
\end{prooftree}
\end{minipage}\hfill
\begin{minipage}[b]{.5\textwidth}
\begin{prooftree}
\AxiomC{$\vdash \Gamma, F$}
\AxiomC{$\vdash F^\bot, \Delta$}
\LeftLabel{$\mathrm{Cut}$}
\BinaryInfC{$\vdash \Gamma, \Delta$}
\end{prooftree}
\end{minipage}
\end{minipage}

\vspace{1em}

\noindent\begin{minipage}{\textwidth}
\begin{minipage}[b]{.5\textwidth}
\begin{prooftree}
\AxiomC{$\vdash \Gamma, F$}
\AxiomC{$\vdash \Delta, G$}
\LeftLabel{$\otimes$}
\BinaryInfC{$\vdash \Gamma, \Delta, F \otimes G$}
\end{prooftree}
\end{minipage}\hfill
\begin{minipage}[b]{.5\textwidth}
\begin{prooftree}
\AxiomC{$\vdash \Gamma, F, G$}
\LeftLabel{$\parr$}
\UnaryInfC{$\vdash \Gamma, F \parr G$}
\end{prooftree}
\end{minipage}
\end{minipage}

\vspace{1em}

\noindent\begin{minipage}{\textwidth}
\begin{minipage}[b]{.5\textwidth}
\begin{prooftree}
\AxiomC{}
\LeftLabel{$1$}
\UnaryInfC{$\vdash 1$}
\end{prooftree}
\end{minipage}\hfill
\begin{minipage}[b]{.5\textwidth}
\begin{prooftree}
\AxiomC{$\vdash \Gamma$}
\LeftLabel{$\bot$}
\UnaryInfC{$\vdash \Gamma, \bot$}
\end{prooftree}
\end{minipage}
\end{minipage}
\vspace{1em}

There exists a cut-elimination procedure which, given a derivation of a sequent using the $\mathrm{Cut}$ rule, yields a derivation with the same conclusion that does not use $\mathrm{Cut}$ (a \textit{cut-free} derivation).

\begin{example}
Observe that the following derivation tree yields one of the properties needed for associativity of $\otimes$:

\begin{prooftree}
\AxiomC{}
\LeftLabel{$\mathrm{Ax}$}
\UnaryInfC{$\vdash A^\bot, A$}
\AxiomC{}
\LeftLabel{$\mathrm{Ax}$}
\UnaryInfC{$\vdash B^\bot, B$}
\LeftLabel{$\otimes$}
\BinaryInfC{$\vdash A^\bot, B^\bot, A \otimes B$}
\AxiomC{}
\LeftLabel{$\mathrm{Ax}$}
\UnaryInfC{$\vdash C^\bot, C$}
\LeftLabel{$\otimes$}
\BinaryInfC{$\vdash A^\bot, B^\bot, C^\bot, (A \otimes B) \otimes C$}
\LeftLabel{$\parr$}
\UnaryInfC{$\vdash A^\bot, B^\bot \parr C^\bot, (A \otimes B) \otimes C$}
\LeftLabel{$\parr$}
\UnaryInfC{$\vdash A^\bot \parr (B^\bot \parr C^\bot), (A \otimes B) \otimes C$}
\LeftLabel{$\parr$}
\UnaryInfC{$\vdash A \otimes (B \otimes C) \multimap (A \otimes B) \otimes C$}
\end{prooftree}

However, if we search for a cut-free derivation of $A \multimap A \otimes \bot$ (i.e. using the unit of $\parr$ with $\otimes$ instead) the syntax-directed nature of the sequents forces any solution to finish with the following rules:
\begin{prooftree}
\AxiomC{}
\LeftLabel{$\mathrm{Ax}$}
\UnaryInfC{$\vdash A^\bot, A$}
\AxiomC{$\vdots$}
\UnaryInfC{$\vdash$}
\LeftLabel{$\bot$}
\UnaryInfC{$\vdash \bot$}
\LeftLabel{$\otimes$}
\BinaryInfC{$\vdash A^\bot, A \otimes \bot$}
\LeftLabel{$\parr$}
\UnaryInfC{$\vdash A \multimap A \otimes \bot$}
\end{prooftree}
But it is obvious from the rules that no cut-free proof exists for the empty sequent, so we conclude $\not\vdash A \multimap A \otimes \bot$.
\end{example}

Some settings will consider adding additional ``mixing'' rules to the logic, giving rise to $\mllmix$:

\vspace{1em}
\noindent\begin{minipage}{\textwidth}
\begin{minipage}[b]{.5\textwidth}
\begin{prooftree}
\AxiomC{}
\LeftLabel{$\mathrm{Mix1}$}
\UnaryInfC{$\vdash$}
\end{prooftree}
\end{minipage}\hfill
\begin{minipage}[b]{.5\textwidth}
\begin{prooftree}
\AxiomC{$\vdash \Gamma$}
\AxiomC{$\vdash \Delta$}
\LeftLabel{$\mathrm{Mix2}$}
\BinaryInfC{$\vdash \Gamma, \Delta$}
\end{prooftree}
\end{minipage}
\end{minipage}
\vspace{1em}

These are equivalent to adding axioms for $\vdash \bot, \bot$ (which gives $\vdash 1 \multimap \bot$) and $\vdash 1, 1$ (giving $\vdash \bot \multimap 1$) respectively, making the two units interchangeable \cite{Cockett1997a}. $\mathrm{Mix2}$ is also equivalent to the binary case $\vdash A^\bot, B^\bot, A, B$, which can be summarised as $\otimes$ ``embedding into'' $\parr$ as $\vdash A \otimes B \multimap A \parr B$.

\subsection{Proof-nets for $\mll$}

The primary alternative to sequent derivations as a presentation of linear logic proofs is \textit{proof-nets}. These consist of graphical objects annotated with terms where correctness of the proof is determined by some global property of the graph, in contrast with the local correctness of each rule in a derivation tree. One can translate proofs back and forth, with each derivation tree yielding a unique proof-net but each proof-net matching multiple derivation trees.

To specify a proof-net framework, we first define \textit{proof-structures} as a particular grammar of graphical objects under consideration which are composites of elementary graphs called \textit{links}. For simplicity of this introduction of proof-nets, we will not be including units.

\begin{definition}
\textit{Links} for $\mll$ are graphs of the following forms, relating some premises to some conclusions:
\begin{align*}
\tikzfig{link_axiom_mll} &&
\tikzfig{link_cut_mll} &&
\tikzfig{link_tensor_mll} &&
\tikzfig{link_par}
\end{align*}
\begin{itemize}
\item Axiom links with no premises and a pair of conclusions $A$, $A^\bot$.
\item Cut links with a pair of premises $F$, $F^\bot$ and no conclusions.
\item Tensor links with a pair of premises $F, G$ and conclusion $F \otimes G$.
\item Par links with a pair of premises $F, G$ and conclusion $F \parr G$.
\end{itemize}
\end{definition}

\begin{definition}
A \textit{proof-structure} is a graph composed of links such that every occurrence of a formula (i.e. vertex) is a conclusion of exactly one link and a premise of at most one link. The \textit{conclusions} of the proof-structure are the formulae that are not the premises of any link.
\end{definition}

Proof-nets are those proof-structures that correspond to ``correct'' proofs in the sense that they satisfy a particular correctness criterion. The Danos-Regnier criterion derives a collection of variants of the proof-structure's graph, called \textit{switching graphs}, and asks each is acyclic.

\begin{definition}
A \textit{switching} for a proof-structure $P$ is a choice of one edge from each par link. The corresponding \textit{switching graph} is a copy of $P$ in which we omit the edges from par links that were not selected.
\end{definition}

\begin{definition}
A proof-structure is a $\mll$ proof-net when every switching graph is connected and acyclic. It is a $\mllmix$ proof-net when every switching graph is acyclic but need not be connected.
\end{definition}

Proof-nets also admit cut-elimination. The syntax-directed nature of the links means that cut-free proof-structures will resemble the syntax tree of the conclusion with some choice of matching for the axiom links.

\begin{example}
Consider the following proof-structure:
\begin{equation}
\tikzfig{proof_structure_fail}
\end{equation}
By looking at the switching graph where we take the paths to $C^\bot$ and $A$, we can construct a cycle and conclude this is not a valid proof-net.

On the other hand, the following proof-structure representing the mixing rule is a proof-net for $\mllmix$ but not $\mll$ as every switching graph is disconnected.
\begin{equation}
\tikzfig{proof_structure_mix}
\end{equation}
\end{example}

There are multiple, equivalent notions of proof-nets for $\mll$ or $\mllmix$, each of which can provide a unique intuition for the logics. For Girard's long-trip condition, switchings define a route for traversing over the proof-structure and requires that the entire proof is traversed. Topological criteria \cite{Mellies2010} can be very visually intuitive. Handsome proof-nets \cite{Retore2003} even forgo switchings, instead building proof-structures as $\mathrm{R\&B}$-graphs and obtaining a more canonical representation of proofs in which associativity of operators is a simple equality.

\subsection{$\bv$ and Pomset Logic}

Many settings will interpret $\otimes$ and $\parr$ as a form of parallel composition, but may also carry a notion of time or sequential composition. $\bv$ and pomset are two candidate solutions to this, where both extend $\mllmix$ with a self-dual non-commutative operator $<$, reading $F < G$ as ``$F$ before $G$''. They were originally speculated to express the same logic, but pomset contains strictly more theorems than $\bv$ \cite{Nguyen2022a}. They differ in presentation, with $\bv$ admitting a sequent calculus but no known proof-net formulation, and conversely pomset is defined by proof-nets with no sequent calculus (the best approximation is via decorated sequents \cite{Slavnov2019}, which annotate derivations with a substantial amount of additional information).

The deductive system of $\bv$ employs \textit{deep inference}, where inference rules can be applied at any level of nesting within expressions, in contrast to the rules of $\mll$ which all apply at the top level of syntax trees. This is crucial to the logic, and placing any bound on the nesting level of applying rules strictly reduces the number of theorems \cite{Tiu2006}. This can either be expressed by describing the rules with respect to term contexts, or by visually nesting derivation trees as we will see in Example \ref{ex:bv}.

\noindent\begin{minipage}{\textwidth}
\begin{minipage}[b]{.5\textwidth}
\begin{prooftree}
\AxiomC{}
\LeftLabel{$I\downarrow$}
\UnaryInfC{$I$}
\end{prooftree}
\end{minipage}\hfill
\begin{minipage}[b]{.5\textwidth}
\begin{prooftree}
\AxiomC{$I$}
\LeftLabel{$\mathrm{ai}\downarrow$}
\UnaryInfC{$A^\bot \parr A$}
\end{prooftree}
\end{minipage}
\end{minipage}

\vspace{1em}

\noindent\begin{minipage}{\textwidth}
\begin{minipage}[b]{.5\textwidth}
\begin{prooftree}
\AxiomC{$(F \parr G) < (H \parr K)$}
\LeftLabel{$\mathrm{q}\downarrow$}
\UnaryInfC{$(F < H) \parr (G < K)$}
\end{prooftree}
\end{minipage}\hfill
\begin{minipage}[b]{.5\textwidth}
\begin{prooftree}
\AxiomC{$(F \parr G) \otimes H$}
\LeftLabel{$\mathrm{s}$}
\UnaryInfC{$F \parr (G \otimes H)$}
\end{prooftree}
\end{minipage}
\end{minipage}
\vspace{1em}

Instead of commutativity, associativity, and unitality ($I$ is a unit for all of $\left\{ \otimes, < \parr \right\}$) being properties which can be proved by the rules, they are permitted via a syntactic equivalence relation $\equiv$ which is a congruence so can similarly be applied at any level within terms.

\begin{example}\label{ex:bv}
The following example $\bv$ proof of $(A \otimes B) < C \multimap A < (B \parr C)$ demonstrates the nested application of rules within contexts. Dashed lines are applications of the syntactic equivalence.

\newsavebox{\bva}
\sbox{\bva}{
\AxiomC{$A^\bot < B^\bot$}
\dashedLine
\UnaryInfC{$\left( A^\bot \parr I \right) < \left( I \parr B^\bot \right)$}
\LeftLabel{$\mathrm{q}\downarrow$}
\UnaryInfC{$\left( A^\bot < I \right) \parr \left( I < B^\bot \right)$}
\DisplayProof
}
\newsavebox{\bvb}
\sbox{\bvb}{
\AxiomC{$I$}
\LeftLabel{$\mathrm{ai}\downarrow$}
\UnaryInfC{$A^\bot \parr A$}
\DisplayProof
}
\newsavebox{\bvc}
\sbox{\bvc}{
\AxiomC{$I$}
\LeftLabel{$\mathrm{ai}\downarrow$}
\UnaryInfC{$B^\bot \parr B$}
\dashedLine
\UnaryInfC{$B^\bot \parr B \parr I$}
\DisplayProof
}
\newsavebox{\bvd}
\sbox{\bvd}{
\AxiomC{$I$}
\LeftLabel{$\mathrm{ai}\downarrow$}
\UnaryInfC{$C^\bot \parr C$}
\DisplayProof
}
\newsavebox{\bvf}
\sbox{\bvf}{
\AxiomC{$\left( B^\bot \parr B \right) < \left( C^\bot \parr I \right)$}
\LeftLabel{$\mathrm{q}\downarrow$}
\UnaryInfC{$\left( B^\bot < C^\bot \right) \parr \left( B < I \right)$}
\DisplayProof
}
\newsavebox{\bve}
\sbox{\bve}{
\AxiomC{$\left( \usebox{\bvc} \right) < \left( \usebox{\bvd} \right)$}
\LeftLabel{$\mathrm{q}\downarrow$}
\UnaryInfC{$\left( \left( B^\bot \parr B \right) < C^\bot \right) \parr \left( I < C \right)$}
\dashedLine
\UnaryInfC{$\left( \usebox{\bvf} \right) \parr C$}
\dashedLine
\UnaryInfC{$\left( B^\bot < C^\bot \right) \parr B \parr C$}
\DisplayProof
}
\begin{scprooftree}{0.9}
\AxiomC{}
\LeftLabel{$I\downarrow$}
\UnaryInfC{$I$}
\dashedLine
\UnaryInfC{$\left( \usebox{\bvb} \right) < \left( \usebox{\bve} \right)$}
\LeftLabel{$\mathrm{q}\downarrow$}
\UnaryInfC{$\left( \left( \usebox{\bva} \right) < C^\bot \right) \parr \left( A < \left( B \parr C \right) \right)$}
\dashedLine
\UnaryInfC{$\left( \left( A^\bot \parr B^\bot \right) < C^\bot \right) \parr \left( A < \left( B \parr C \right) \right)$}
\end{scprooftree}
\end{example}

The combination of sequential and parallel composition is not enough to recover all partial overs or graphs over the atoms. A recent extension called $\mathrm{GV}$ adds an infinite number of new operators based on prime directed and undirected graphs, from which we can obtain every mixed graph from its modular decomposition. We refer the reader to Acclavio et al.~\cite{Acclavio2022} for further details.

We finally turn to pomset logic, which are built on $\mathrm{R\&B}$-graphs, i.e. graphs with red and blue edges. We still build proof-structures from links, with the correctness of a proof-net becoming the absence of an \textit{alternating elementary cycle}. The alternating colour requirement permits restricting the paths through each link in the same way as switchings.

\begin{definition}
A \textit{$\mathrm{R\&B}$-graph} $G = (V, R, B)$ is an edge-bi-coloured graph where:
\begin{itemize}
\item $V$ is the set of vertices;
\item $R \subseteq V \times V$ is the set of (directed and undirected) red edges, expressed as an irreflexive relation;
\item $B \subseteq V \times V$ is the set of (undirected) blue edges, expressed as an irreflexive, symmetric relation forming a perfect matching of the vertices $\forall x \in V . \exists ! y \in V . (x, y) \in B$.
\end{itemize}
\end{definition}

\begin{definition}
A \textit{$\mathrm{R\&B}$ proof-structure} is a $\mathrm{R\&B}$-graph composed of the following kinds of links:
\begin{align*}
\tikzfig{pom_link_axiom} && \tikzfig{pom_link_cut} && \tikzfig{pom_link_tensor} && \tikzfig{pom_link_seq} && \tikzfig{pom_link_par}
\end{align*}
The conclusions of the proof-structure may also be connected by directed red edges according to a series-parallel partial order.
\end{definition}

\begin{definition}
A \textit{pomset proof-net} is a $\mathrm{R\&B}$ proof-structure containing no alternating elementary cycle (a cycle with no repeated vertices except for the first and last, where successive edges alternate colours).
\end{definition}

The additional edges beyond the conclusions elevate the multiset structure of $\mll$ sequents to partially-ordered multisets, hence the name of the logic. The series-parallel restriction means we can generally combine the conclusions using the $\parr$ and $<$ links to form an equivalent proof-structure with a single conclusion, reminiscent of how $\parr$ naturally combines conclusions in $\mll$.

From the construction of the links, we can see that any path from one link to another will change colour at the transition: on each vertex labelled with a formula, the unique incident blue edge belongs to the link where this vertex is a conclusion and any red edges (if they exist) belong to the link where it is a premise. Therefore the alternating colour requirement only constrains the possible paths within each individual link; most notably there is no alternating path between the premises of a $\parr$ link, and only in one direction between the premises of a $<$ link. On the $<$-free fragment, these precisely match the connectivities induced by switchings on $\mll$ proof-structures, so pomset is a conservative extension of $\mllmix$ since we can faithfully translate between cyclic switchings in an $\mll$ proof-structure and alternating cycles in a $\mathrm{R\&B}$ proof-structure.

Deciding whether a $\mathrm{R\&B}$ proof-structure is a pomset proof-net is $\mathrm{coNP}$-complete \cite{Nguyen2022a} - it is easy to verify that a given alternating cycle witnesses incorrectness, but searching for such cycles is hard. Determining whether a proof-net exists for a given conclusion is even harder at $\Sigma_2^p$-complete ($\Sigma_2^p = \mathrm{NP}^{\mathrm{NP}}$ is $\mathrm{NP}$ extended with an oracle for $\mathrm{NP}$). On the other hand, verifying $\bv$ proofs takes time linear in the size of the proof, and provability of a formula is $\mathrm{NP}$-complete \cite{Kahramanogullari2008}. Pomset is even hard compared to other proof-net formulations, where proof-nets can be verified for $\mll$ or $\mllmix$ in linear time \cite{Murawski2006,Nguyen2020}.

\section{Proofs from Section \ref{sec:caus_construction}}\label{sec:caus_construction_proofs}

\begin{proposition}[Restatement of Proposition \ref{prop:complement_binary_test}]
For any morphism $f \in \catc(A, B)$ of the base category and causal objects $\mathbf{A} = \left( A, c_{\mathbf{A}} \right), \mathbf{B} = \left( B, c_{\mathbf{B}} \right) \in \ob{\caus{\catc}}$, there exists some non-zero scalar $\kappa \in \catc(I, I)$, morphism $f' \in \catc(A, B)$ and causal morphism $t_f \in \caus{\catc}\left( \mathbf{A}, \mathbf{B \parr 2} \right)$ such that $t_f = (\kappa \cdot f \otimes \tbool) + (f' \otimes \fbool)$.
\end{proposition}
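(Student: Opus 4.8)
The plan is to unfold the definition of $\caus{\catc}(\mathbf{A}, \mathbf{B \parr 2})$, reduce causality of $t_f$ to causality of a single morphism of $\catc$, and then invoke Proposition~\ref{prop:complement}. The first ingredient is the characterisation of $\mathbf{B \parr 2}$ coming from~\eqref{eq:parr_object_right_control_def}: a morphism $h : I \to B \otimes 2$ lies in $c_{\mathbf{B \parr 2}}$ iff $h \fatsemi (\pi_B \otimes \id_2) \in c_{\mathbf{2}}$ for every $\pi_B \in c_{\mathbf{B}}^*$. The second is an explicit description of $c_{\mathbf{2}}$: since $\mathbf{2} = \mathbf{I} \oplus \mathbf{I}$, $c_{\mathbf{I}} = \{\id_I\}$, and $\{\tbool,\fbool\} = \{\iota_1,\iota_2\}$ is linearly independent in $\sub{\catc}$ with the biproduct projections $p_1,p_2$ as a dual basis, Proposition~\ref{prop:affine} gives $c_{\mathbf{2}} = \{\, a\cdot\tbool + b\cdot\fbool \mid a,b \in \catc(I,I),\ a+b = \id_I \,\}$; here the coefficients are automatically scalars of $\catc$ (recovered as $a = (a\cdot\tbool+b\cdot\fbool)\fatsemi p_1$, and $p_1 \in \catc$), so the only genuine constraint is $a + b = \id_I$.

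First I would compute, for $\rho \in c_{\mathbf{A}}$ and $\pi_B \in c_{\mathbf{B}}^*$, that $\rho \fatsemi t_f \fatsemi (\pi_B \otimes \id_2) = \kappa\cdot(\rho \fatsemi f \fatsemi \pi_B)\cdot\tbool + (\rho \fatsemi f' \fatsemi \pi_B)\cdot\fbool$, by distributing composition over $+$ and using $\tbool,\fbool : I \to 2$. By the description of $c_{\mathbf{2}}$ above, this lies in $c_{\mathbf{2}}$ for all such $\rho,\pi_B$ precisely when $\rho \fatsemi (\kappa\cdot f + f') \fatsemi \pi_B = \id_I$ for all $\rho \in c_{\mathbf{A}}$, $\pi_B \in c_{\mathbf{B}}^*$; since $c_{\mathbf{B}}$ is closed, this is exactly the statement that $\kappa\cdot f + f' \in \catc(A,B)$ is a causal morphism $\mathbf{A} \to \mathbf{B}$. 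So $t_f$ is causal iff $\kappa\cdot f + f'$ is causal $\mathbf{A}\to\mathbf{B}$, and the problem becomes: produce a non-zero scalar $\kappa$ and some $f' \in \catc(A,B)$ with $\kappa\cdot f + f'$ causal $\mathbf{A}\to\mathbf{B}$.

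For this I would apply Proposition~\ref{prop:complement} to $f$, obtaining $f'' \in \catc(A,B)$ and a scalar $\lambda$ with $f + f'' = \lambda\cdot(\discard_A \fatsemi \maxmix_B)$. Set $N_0 := \ndiscard_{\mathbf{A}} \fatsemi \nmaxmix_{\mathbf{B}} \in \catc(A,B)$; this is causal $\mathbf{A}\to\mathbf{B}$ (any $\rho \in c_{\mathbf{A}}$ satisfies $\rho \fatsemi \ndiscard_{\mathbf{A}} = \id_I$, so $\rho \fatsemi N_0 = \nmaxmix_{\mathbf{B}} \in c_{\mathbf{B}}$), and $N_0 = \theta_{\mathbf{A}}\mu_{\mathbf{B}}\cdot(\discard_A \fatsemi \maxmix_B)$ with $\theta_{\mathbf{A}}\mu_{\mathbf{B}}$ invertible, so $f + f'' = \lambda'\cdot N_0$ where $\lambda' = \lambda(\theta_{\mathbf{A}}\mu_{\mathbf{B}})^{-1}$. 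If $\lambda \neq 0$, then $\lambda'$ is invertible (by \ref{apc:scalars} and invertibility of the normalisation constants); take $\kappa := (\lambda')^{-1}$ and $f' := \kappa\cdot f''$, so $\kappa\cdot f + f' = \kappa\cdot(f+f'') = N_0$ is causal. If $\lambda = 0$, then $f + f'' = 0$; take $\kappa := \id_I$ and $f' := N_0 + f''$ (a morphism of $\catc$), so that $\kappa\cdot f + f' = N_0 + (f + f'') = N_0$ is again causal. In every case $t_f := (\kappa\cdot f \otimes \tbool) + (f' \otimes \fbool)$ has the required form and is causal $\mathbf{A}\to\mathbf{B \parr 2}$.

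The two points I expect to require the most care in a full write-up are: (i) nailing down the description of $c_{\mathbf{2}}$ so that it is clear there is no hidden positivity-type restriction beyond ``the coefficients sum to $\id_I$'' — this is exactly where linear independence of $\tbool,\fbool$ and Proposition~\ref{prop:affine} are used — and (ii) the bookkeeping for the degenerate case $\lambda = 0$ of Proposition~\ref{prop:complement}, where one cannot rescale $f$ and must instead absorb $f''$ into the noise term. The remainder is a routine unfolding of the definitions of $\parr$ and of causal morphisms.
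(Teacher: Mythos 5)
Your proposal is correct and takes essentially the same route as the paper: both hinge on Proposition~\ref{prop:complement} to arrange that $\kappa\cdot f + f'$ equals the normalised noisy channel $\ndiscard_{\mathbf{A}}\fatsemi\nmaxmix_{\mathbf{B}}$, which is exactly what makes $t_f$ causal. The only differences are cosmetic — the paper verifies causality by pushing the effects $\pi\otimes\ndiscard_{\mathbf{2}}$ of $\mathbf{B \parr 2}$ backwards rather than pushing states forwards through the left-control characterisation of $\parr$ and an explicit description of $c_{\mathbf{2}}$ as you do — and your separate treatment of the degenerate case $\lambda=0$ is a detail the paper silently elides by writing $\lambda^{-1}$ without comment.
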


\begin{proof}
Apply Proposition \ref{prop:complement} to $f$ to get $f' \in \catc(A, B)$ and $\lambda \in \catc(I, I)$ where $f + f' = \lambda \cdot \discard_A \fatsemi \maxmix_B$ which is causal up to a scalar. We then pick $t_f = \left( \theta_{\mathbf{A}} \mu_{\mathbf{B}} \lambda^{-1} \cdot f \otimes \tbool \right) + \left( \theta_{\mathbf{A}} \mu_{\mathbf{B}} \lambda^{-1} \cdot f' \otimes \fbool \right)$.

To check $t_f \in \caus{\catc}\left( \mathbf{A}, \mathbf{B \parr 2} \right)$, we consider an arbitrary effect $\pi \otimes \ndiscard_{\mathbf{2}} \in c_{\mathbf{B \parr 2}}^*$.
\begin{equation}
\begin{split}
t_f \fatsemi \left( \pi \otimes \ndiscard_{\mathbf{2}} \right) &= \theta_{\mathbf{A}} \mu_{\mathbf{B}} \lambda^{-1} \cdot \left( f \otimes \tbool \fatsemi \pi \otimes \ndiscard_{\mathbf{2}} + f' \otimes \fbool \fatsemi \pi \otimes \ndiscard_{\mathbf{2}} \right) \\
&= \theta_{\mathbf{A}} \mu_{\mathbf{B}} \lambda^{-1} \cdot \left( f \fatsemi \pi + f' \fatsemi \pi \right) \\
&= \theta_{\mathbf{A}} \mu_{\mathbf{B}} \cdot \discard_A \fatsemi \maxmix_B \fatsemi \pi \\
&= \ndiscard_{\mathbf{A}} \fatsemi \nmaxmix_{\mathbf{B}} \fatsemi \pi \\
&= \ndiscard_{\mathbf{A}} \in c_{\mathbf{A}}^* \\
\end{split}
\end{equation}
\end{proof}

\begin{proposition}\label{prop:set_compatible_by_intersection}
Given objects $\mathbf{A}, \mathbf{A'} \in \ob{\caus{\catc}}$ with the same carrier object $A \in \ob{\catc}$, then the normalisation scalars agree $\mu_{\mathbf{A}} = \mu_{\mathbf{A'}}$, $\theta_{\mathbf{A}} = \theta_{\mathbf{A'}}$ iff $c_{\mathbf{A}} \cap c_{\mathbf{A'}} \neq \emptyset$.
\end{proposition}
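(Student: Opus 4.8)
The plan is to read off both implications directly from flatness (Definition \ref{def:flat}) and from the deterministic-normalisation property built into the dual set (Definition \ref{def:dual_set}); no passage to $\sub{\catc}$ is needed, since every morphism involved is a scalar of $\catc$.

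The ``only if'' direction is immediate: if $\mu_{\mathbf{A}} = \mu_{\mathbf{A'}}$ then $\nmaxmix_{\mathbf{A}} = \mu_{\mathbf{A}}\cdot\maxmix_A = \mu_{\mathbf{A'}}\cdot\maxmix_A = \nmaxmix_{\mathbf{A'}}$, and this single morphism belongs to both $c_{\mathbf{A}}$ and $c_{\mathbf{A'}}$ by flatness, so $c_{\mathbf{A}}\cap c_{\mathbf{A'}}\neq\emptyset$.

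For the ``if'' direction, fix any $\rho\in c_{\mathbf{A}}\cap c_{\mathbf{A'}}$. First I would recover $\theta$: since $\ndiscard_{\mathbf{A}} = \theta_{\mathbf{A}}\cdot\discard_A\in c_{\mathbf{A}}^{*}$ and $\rho\in c_{\mathbf{A}}$, the defining property of $c_{\mathbf{A}}^{*}$ gives $\rho\fatsemi\ndiscard_{\mathbf{A}} = \id_I$, i.e.\ $\theta_{\mathbf{A}}\cdot(\rho\fatsemi\discard_A) = \id_I$; the same argument for $\mathbf{A'}$ gives $\theta_{\mathbf{A'}}\cdot(\rho\fatsemi\discard_A) = \id_I$. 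Since $\theta_{\mathbf{A}}$ and $\theta_{\mathbf{A'}}$ are invertible, both equal the inverse of the scalar $\rho\fatsemi\discard_A$, hence $\theta_{\mathbf{A}} = \theta_{\mathbf{A'}}$. Next I recover $\mu$ by pairing the uniform state with the uniform effect: $\nmaxmix_{\mathbf{A}}\in c_{\mathbf{A}}$ and $\ndiscard_{\mathbf{A}}\in c_{\mathbf{A}}^{*}$ force $\nmaxmix_{\mathbf{A}}\fatsemi\ndiscard_{\mathbf{A}} = \id_I$, that is $\mu_{\mathbf{A}}\,\theta_{\mathbf{A}}\cdot d_A = \id_I$, where $d_A := \maxmix_A\fatsemi\discard_A$ is the dimension scalar of \ref{apc:dimension} and, crucially, depends only on the shared carrier $A$. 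The analogous identity for $\mathbf{A'}$ is $\mu_{\mathbf{A'}}\,\theta_{\mathbf{A'}}\cdot d_A = \id_I$, and combining it with $\theta_{\mathbf{A}} = \theta_{\mathbf{A'}}$ and the invertibility of $\mu_{\mathbf{A}},\mu_{\mathbf{A'}},\theta_{\mathbf{A}},d_A$ yields $\mu_{\mathbf{A}} = \mu_{\mathbf{A'}}$ after cancellation.

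There is no real obstacle here; the only things to be careful about are that $\catc(I,I)$ is a commutative monoid under composition in which the $\mu$'s, $\theta$'s and $d_A$ are invertible (so the cancellations are legitimate), and that $\maxmix_A\fatsemi\discard_A$ is the \emph{same} scalar whether read off from $\mathbf{A}$ or from $\mathbf{A'}$, as $\maxmix_A$ and $\discard_A$ are the canonical maps of the common carrier $A$ in $\catc$.
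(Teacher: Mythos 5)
Your proof is correct and follows essentially the same route as the paper's: the forward direction exhibits the shared uniform state $\nmaxmix$, and the reverse direction pairs a common state $\rho$ against both discarding effects to force $\theta_{\mathbf{A}} = \theta_{\mathbf{A'}}$, then uses the identity $\mu\,\theta\, d_A = \id_I$ (with $d_A$ depending only on the shared carrier) to transfer equality to the $\mu$'s. The paper phrases the $\theta$ step as computing $\rho \fatsemi \ndiscard_{\mathbf{A'}} = \theta_{\mathbf{A}}^{-1}\theta_{\mathbf{A'}}$, which is the same cancellation you perform.
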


\begin{proof}
By flatness, $\nmaxmix_{\mathbf{A}} \in c_{\mathbf{A}}$ and $\ndiscard_{\mathbf{A}} \in c_{\mathbf{A}}^*$.
\begin{equation}
\mu_{\mathbf{A}} \theta_{\mathbf{A}} d_A = \mu_{\mathbf{A}} \theta_{\mathbf{A}} \cdot \maxmix_A \fatsemi \discard_A = \nmaxmix_{\mathbf{A}} \fatsemi \ndiscard_{\mathbf{A}} = \id_I
\end{equation}
Similarly $\mu_{\mathbf{A'}} \theta_{\mathbf{A'}} d_A = \id_I$, so $\mu_{\mathbf{A}} = \mu_{\mathbf{A'}}$ iff $\theta_{\mathbf{A}} = \theta_{\mathbf{A'}}$.

$\Longrightarrow$: $c_{\mathbf{A}} \ni \nmaxmix_{\mathbf{A}} = \mu_{\mathbf{A}} \cdot \maxmix_A = \mu_{\mathbf{A'}} \cdot \maxmix_A = \nmaxmix_{\mathbf{A'}} \in c_{\mathbf{A'}}$.

$\Longleftarrow$: For any $\rho \in c_{\mathbf{A}}$, we have $\rho \fatsemi \ndiscard_{\mathbf{A}} = \id_I$. Then
\begin{equation}
\begin{split}
\rho \fatsemi \ndiscard_{\mathbf{A'}} &= \theta_{\mathbf{A'}} \cdot \rho \fatsemi \discard_A \\
&= \theta_{\mathbf{A}}^{-1} \theta_{\mathbf{A'}} \cdot \rho \fatsemi \ndiscard_{\mathbf{A}} \\
&= \theta_{\mathbf{A}}^{-1} \theta_{\mathbf{A'}}
\end{split}
\end{equation}
So for any $\rho \in c_{\mathbf{A}} \cap c_{\mathbf{A'}}$ (if it exists), this would also be $\id_I$, i.e. $\theta_{\mathbf{A}} = \theta_{\mathbf{A'}}$.
\end{proof}

\begin{lemma}[Restatement of Lemma \ref{lemma:union_intersection_rules}]
Given objects $\mathbf{A}, \mathbf{A'}, \mathbf{B} \in \ob{\caus{\catc}}$ with $\mathbf{A} \pitchfork \mathbf{A'}$, we have the following for each $\Box \in \left\{ \otimes, \parr, <, >, \times, \oplus \right\}$ and $\lozenge \in \left\{ \cup, \cap \right\}$:
\begin{align*}
\mathbf{(A \cup A')^*} &= \mathbf{A^* \cap A'^*} \\
\mathbf{A \otimes B} &= \mathbf{(A < B) \cap (A > B)} \\
\mathbf{A \parr B} &= \mathbf{(A < B) \cup (A > B)} \\
\mathbf{(A \lozenge A') \Box B} &= \mathbf{(A \Box B) \lozenge (A' \Box B)}
\end{align*}
\end{lemma}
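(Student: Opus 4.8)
The plan is to prove the four displayed identities in the order given, building each on its predecessors, and then to dispatch the twelve instances of the distributivity law by reducing all but three of them to representative "direct" calculations via de Morgan duality. Equation \eqref{eq:union_intersection_de_morgan} is pure orthogonality bookkeeping: since $(-)^{***}=(-)^*$ one has $c_{\mathbf{(A\cup A')^*}}=\big((c_{\mathbf{A}}\cup c_{\mathbf{A'}})^{**}\big)^{*}=(c_{\mathbf{A}}\cup c_{\mathbf{A'}})^{*}$, and an effect annihilates every state of $c_{\mathbf{A}}\cup c_{\mathbf{A'}}$ iff it annihilates each of $c_{\mathbf{A}}$ and $c_{\mathbf{A'}}$, so $(c_{\mathbf{A}}\cup c_{\mathbf{A'}})^{*}=c_{\mathbf{A}}^{*}\cap c_{\mathbf{A'}}^{*}=c_{\mathbf{A^*\cap A'^*}}$; one checks $\mathbf{A^*}\pitchfork\mathbf{A'^*}$ follows from $\mathbf{A}\pitchfork\mathbf{A'}$ (the normalisation scalars of a dual are determined by those of the original). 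Applying this with $\mathbf{A^*},\mathbf{A'^*}$ in place of $\mathbf{A},\mathbf{A'}$ and dualising once more yields the companion identity $\mathbf{(A\cap A')^*}=\mathbf{A^*\cup A'^*}$, which I would then use freely.

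For Equation \eqref{eq:tensor_intersection_decomposition} I would read off the non-signalling presentations directly: comparing \eqref{eq:tensor_object_non_signalling_def} with \eqref{eq:seq_object_one_way_def} and its mirror image, the state set of $\mathbf{A\otimes B}$ is \emph{literally} the conjunction of the defining condition of $\mathbf{A<B}$ and that of $\mathbf{A>B}$, and $\mathbf{A<B}\pitchfork\mathbf{A>B}$ since all three share the carrier $\mathcal U(\mathbf A)\otimes\mathcal U(\mathbf B)$ and the same normalisation scalars. Equation \eqref{eq:parr_union_decomposition} is then obtained by dualising: $\mathbf{A\parr B}=\mathbf{(A^*\otimes B^*)^*}=\mathbf{\big((A^*<B^*)\cap(A^*>B^*)\big)^*}=\mathbf{(A^*<B^*)^*\cup(A^*>B^*)^*}=\mathbf{(A<B)\cup(A>B)}$, using \eqref{eq:tensor_parr_de_morgan}, the companion of \eqref{eq:union_intersection_de_morgan}, and \eqref{eq:seq_de_morgan}.

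For the distributivity law \eqref{eq:iu_monoidal_distribution} the first step is to note that $\cap$ is literally set intersection (hence associative and commutative on objects), that $(-)^*$ interchanges $\cup\leftrightarrow\cap$ alongside $\otimes\leftrightarrow\parr$, $<\leftrightarrow<$, $>\leftrightarrow>$, $\times\leftrightarrow\oplus$ via the respective de Morgan laws, and that $\mathbf{A\Box B}\pitchfork\mathbf{A'\Box B}$ whenever $\mathbf{A}\pitchfork\mathbf{A'}$ (so every object named is defined). Dualising an instance turns $(\Box,\lozenge)$ into $(\Box^*,\lozenge^*)$, so it suffices to treat: $(<,\cap)$, immediate because the defining condition of $\mathbf{(A\cap A')<B}$ splits as "marginal lands in $c_{\mathbf{A}}$ and in $c_{\mathbf{A'}}$"; $(\times,\cap)$, immediate from the pairing presentation; and the three substantial cases $(>,\cap)$, $(\parr,\cap)$, and $(\times,\cup)$. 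Then $(<,\cup),(>,\cup),(\oplus,\cup),(\parr,\cup),(\otimes,\cup),(\oplus,\cap)$ are de Morgan duals of these, and $(\otimes,\cap)$ (hence $(\parr,\cup)$) follows by combining $(<,\cap)$, $(>,\cap)$ with \eqref{eq:tensor_intersection_decomposition} and associativity of $\cap$.

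Each substantial case follows one template, and this is where the real work — and the main obstacle — lies. One inclusion is always trivial: the generating states of $\mathbf{(A\Box B)\lozenge(A'\Box B)}$ (or, dually, the generating effects) satisfy the defining condition of $\mathbf{(A\lozenge A')\Box B}$, and that condition cuts out a closed set, so the generated set sits inside it. The reverse inclusion is the crux: one must show that satisfying the non-signalling/membership condition against every \emph{generator} of a closed set already forces it against every member. For this I would invoke Proposition \ref{prop:affine} to rewrite a state (effect) of $\mathbf{(A\lozenge A')\Box B}$ in $\sub{\catc}$ as an affine combination of states (effects) drawn from the two pure pieces; since plugging a fixed effect into one tensor factor is affine-linear (and likewise $\langle-,\sigma\rangle=-\fatsemi\iota_A+\sigma\fatsemi\iota_B$, resp. $[-,\pi]$, for the additive case), and since the relevant target set $c_{\mathbf{A}}$, $c_{\mathbf{B}}$, or a generating set of the dual is flat and closed hence affine-closed, the defining condition passes to the affine combination — and faithfulness of $\fsub{-}$ returns the conclusion to $\catc$. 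The same device also covers the subtlety that the defining condition of $\mathbf{(A\cap A')>B}$ or $\mathbf{(A\cap A')\parr B}$ quantifies over the \emph{closed} effect set $(c_{\mathbf{A}}^{*}\cup c_{\mathbf{A'}}^{*})^{**}$ rather than over $c_{\mathbf{A}}^{*}\cup c_{\mathbf{A'}}^{*}$: a condition holding for every $\pi$ in the union extends to its affine closure by the identical argument. Without affine closure of these target sets the reverse inclusions genuinely fail, which is consistent with the failure of the unrestricted distributive law recorded in Remark \ref{remark:no_distribution_of_union_intersection}; keeping careful track of which quantifier ranges over a raw union versus its bidual is the part most likely to cause slips.
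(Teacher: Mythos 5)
Your proposal is correct and follows essentially the same route as the paper's proof: de~Morgan for $\cup/\cap$ by direct orthogonality, the $\otimes$ decomposition read off from the (already-established) equivalence of the non-signalling presentations with $\parr$ obtained by dualising, and the twelve distributivity instances handled by a handful of direct computations plus de~Morgan duality, with Proposition~\ref{prop:affine}/linearity used exactly where the paper uses it to pass between a union of state (or effect) sets and its bidual closure. The only difference is the mirror-image choice of which member of each dual pair you compute directly (e.g.\ you take $(\parr,\cap)$ and dualise to $(\otimes,\cup)$ where the paper does the reverse via nested affine combinations of separable states), which changes nothing of substance.
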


\begin{proof}
\begin{itemize}
\item $\mathbf{\left( A \cup A' \right)^*} = \mathbf{A^* \cap A'^*}$:
\begin{equation}
\begin{split}
\pi \in c_{\mathbf{(A \cup A')^*}} &\Leftrightarrow \pi \in (c_{\mathbf{A}} \cup c_{\mathbf{A'}})^* \\
&\Leftrightarrow (\forall \rho \in c_{\mathbf{A}} . \rho \fatsemi \pi = \id_I) \wedge (\forall \rho' \in c_{\mathbf{A'}} . \rho' \fatsemi \pi = \id_I) \\
&\Leftrightarrow \pi \in c_{\mathbf{A}}^* \cap c_{\mathbf{A'}}^* \\
&\Leftrightarrow \pi \in c_{\mathbf{A^* \cap A'^*}}
\end{split}
\end{equation}
\item $\mathbf{A \otimes B} = \mathbf{\left( A < B \right) \cap \left( A > B \right)}$: See \cite[Theorem 35]{Simmons2022}.
\item $\mathbf{A \parr B} = \mathbf{\left( A < B \right) \cup \left( A > B \right)}$: Dual to above.
\item $\mathbf{\left( A \cap A' \right) < B} = \mathbf{\left( A < B \right) \cap \left( A' < B \right)}$:
\begin{equation}
\begin{split}
\rho \in c_{\mathbf{(A \cap A') < B}} \Leftrightarrow& \forall \pi \in c_{\mathbf{B}}^* . \rho \fatsemi \left( \id_A \otimes \pi \right) = \rho \fatsemi \left( \id_A \otimes \ndiscard_{\mathbf{B}} \right) \in c_{\mathbf{A \cap A'}} \\
\Leftrightarrow& \forall \pi \in c_{\mathbf{B}}^* . \rho \fatsemi \left( \id_A \otimes \pi \right) = \rho \fatsemi \left( \id_A \otimes \ndiscard_{\mathbf{B}} \right) \in c_{\mathbf{A}} \\ &\wedge \forall \pi \in c_{\mathbf{B}}^* . \rho \fatsemi \left( \id_A \otimes \pi \right) = \rho \fatsemi \left( \id_A \otimes \ndiscard_{\mathbf{B}} \right) \in c_{\mathbf{A'}} \\
\Leftrightarrow& \rho \in c_{\mathbf{A < B}} \wedge \rho \in c_{\mathbf{A' < B}} \\
\Leftrightarrow& \rho \in c_{\mathbf{(A < B) \cap (A' < B)}}
\end{split}
\end{equation}
\item $\mathbf{\left( A \cup A' \right) < B} = \mathbf{\left( A < B \right) \cup \left( A' < B \right)}$: Dual to above.
\item $\mathbf{\left( A \cap A' \right) > B} = \mathbf{\left( A > B \right) \cap \left( A' > B \right)}$: Note that equality of carrier objects and normalisation scalars give $\ndiscard_{\mathbf{A \cap A'}} = \ndiscard_{\mathbf{A}} = \ndiscard_{\mathbf{A'}}$.
\begin{equation}
\begin{split}
\rho \in c_{\mathbf{(A \cap A') > B}} \Leftrightarrow& \forall \pi \in c_{\mathbf(A \cap A')}^* . \\ & \rho \fatsemi \left( \pi \otimes \id_B \right) = \rho \fatsemi \left( \ndiscard_{\mathbf{A \cap A'}} \otimes \id_B \right) \in c_{\mathbf{B}} \\
\Leftrightarrow& \forall \pi \in \left( c_{\mathbf{A}}^* \cup c_{\mathbf{A}}^* \right)^{**} . \\ & \rho \fatsemi \left( \pi \otimes \id_B \right) = \rho \fatsemi \left( \ndiscard_{\mathbf{A \cap A'}} \otimes \id_B \right) \in c_{\mathbf{B}} \\
\Leftrightarrow& \forall \pi \in c_{\mathbf{A}}^* \cup c_{\mathbf{A}}^* . \\ & \rho \fatsemi \left( \pi \otimes \id_B \right) = \rho \fatsemi \left( \ndiscard_{\mathbf{A \cap A'}} \otimes \id_B \right) \in c_{\mathbf{B}} \\
\Leftrightarrow& \forall \pi \in c_{\mathbf{A}}^* . \rho \fatsemi \left( \pi \otimes \id_B \right) = \rho \fatsemi \left( \ndiscard_{\mathbf{A}} \otimes \id_B \right) \in c_{\mathbf{B}} \\ & \wedge \forall \pi \in c_{\mathbf{A'}}^* . \rho \fatsemi \left( \pi \otimes \id_B \right) = \rho \fatsemi \left( \ndiscard_{\mathbf{A'}} \otimes \id_B \right) \in c_{\mathbf{B}} \\
\Leftrightarrow& \rho \in c_{\mathbf{A > B}} \wedge \rho \in c_{\mathbf{A' > B}} \\
\Leftrightarrow& \rho \in c_{\mathbf{(A > B) \cap (A' > B)}}
\end{split}
\end{equation}
\item $\mathbf{\left( A \cup A' \right) > B} = \mathbf{\left( A > B \right) \cup \left( A' > B \right)}$: Dual to above.
\item $\mathbf{\left( A \cap A' \right) \otimes B} = \mathbf{\left( A \otimes B \right) \cap \left( A' \otimes B \right)}$:
\begin{equation}
\begin{split}
\mathbf{(A \cap A') \otimes B} &= \mathbf{((A \cap A') < B) \cap ((A \cap A') > B)} \\
&= \mathbf{(A < B) \cap (A' < B) \cap (A > B) \cap (A' > B)} \\
&= \mathbf{(A \otimes B) \cap (A' \otimes B)}
\end{split}
\end{equation}
\item $\mathbf{\left( A \cup A' \right) \otimes B} = \mathbf{\left( A \otimes B \right) \cup \left( A' \otimes B \right)}$: This variant relies on the fact that an affine combination over affine combinations is itself an affine combination of the elementary terms.
\begin{equation}
\begin{split}
c_{\mathbf{(A \cup A') \otimes B}} =& \left\{ \rho_A \otimes \rho_B \middle| \rho_A \in \left( c_{\mathbf{A}} \cup c_{\mathbf{A'}} \right)^{**}, \rho_B \in c_{\mathbf{B}} \right\}^{**} \\
=& \left\{ \rho_A \otimes \rho_B \middle| \rho_A \in \left( c_{\mathbf{A}} \cup c_{\mathbf{A'}} \right), \rho_B \in c_{\mathbf{B}} \right\}^{**} \\
=& \left( \left\{ \rho_A \otimes \rho_B \middle| \rho_A \in c_{\mathbf{A}}, \rho_B \in c_{\mathbf{B}} \right\} \right. \\ & \cup \left. \left\{ \rho_A \otimes \rho_B \middle| \rho_A \in c_{\mathbf{A'}}, \rho_B \in c_{\mathbf{B}} \right\} \right)^{**} \\
=& \left( \left\{ \rho_A \otimes \rho_B \middle| \rho_A \in c_{\mathbf{A}}, \rho_B \in c_{\mathbf{B}} \right\}^{**} \right. \\ & \cup \left. \left\{ \rho_A \otimes \rho_B \middle| \rho_A \in c_{\mathbf{A'}}, \rho_B \in c_{\mathbf{B}} \right\}^{**} \right)^{**} \\
=& c_{\mathbf{(A \otimes B) \cup (A' \otimes B)}}
\end{split}
\end{equation}
\item $\mathbf{\left( A \cap A' \right) \parr B} = \mathbf{\left( A \parr B \right) \cap \left( A' \parr B \right)}$: Dual to $\otimes, \cup$ case.
\item $\mathbf{\left( A \cup A' \right) \parr B} = \mathbf{\left( A \parr B \right) \cup \left( A' \parr B \right)}$: Dual to $\otimes, \cap$ case.
\item $\mathbf{\left( A \cap A' \right) \times B} = \mathbf{\left( A \times B \right) \cap \left( A' \times B \right)}$:
\begin{equation}
\begin{split}
c_{\mathbf{\left( A \cap A' \right) \times B}} =& \left\{ \langle \rho_A, \rho_B \rangle \middle| \rho_A \in c_{\mathbf{A \cap A'}}, \rho_B \in c_{\mathbf{B}} \right\} \\
=& \left\{ \langle \rho_A, \rho_B \rangle \middle| \rho_A \in c_{\mathbf{A}} \cap c_{\mathbf{A'}}, \rho_B \in c_{\mathbf{B}} \right\} \\
=& \left\{ \langle \rho_A, \rho_B \rangle \middle| \rho_A \in c_{\mathbf{A}}, \rho_B \in c_{\mathbf{B}} \right\} \\ & \cap \left\{ \langle \rho_A, \rho_B \rangle \middle| \rho_A \in c_{\mathbf{A'}}, \rho_B \in c_{\mathbf{B}} \right\} \\
=& c_{\mathbf{A \times B}} \cap c_{\mathbf{A' \times B}} \\
=& c_{\mathbf{\left( A \times B \right) \cap \left( A' \times B \right)}}
\end{split}
\end{equation}
\item $\mathbf{\left( A \cup A' \right) \times B} = \mathbf{\left( A \times B \right) \cup \left( A' \times B \right)}$: we first need to be able to pull affine combinations in and out of pairings. Fixing a single $\rho_B \in c_{\mathbf{B}}$, and supposing there is some $\rho_A \in \catc(I, A)$ for which $\sum_i \alpha_i \cdot \fsub{\rho_{A,i}} \sim \fsub{\rho_A}$,
\begin{equation}
\begin{split}
\left\langle \sum_i \alpha_i \cdot \fsub{\rho_{A,i}}, \fsub{\rho_B} \right\rangle &\sim \left\langle \sum_i \alpha_i \cdot \fsub{\rho_{A,i}}, \sum_i \alpha_i \cdot \fsub{\rho_B} \right\rangle \\
&\sim \sum_i \alpha_i \cdot \fsub{\langle \rho_{A,i}, \rho_B \rangle}
\end{split}
\end{equation}
Then with a collection of $\rho_{B,i} \in c_{\mathbf{B}} = c_{\mathbf{B}}^{**}$ and some $\rho_B \in \catc(I, B)$ for which $\sum_i \alpha_i \cdot \fsub{\rho_{B,i}} = \fsub{\rho_B}$,
\begin{equation}
\begin{split}
\sum_i \alpha_i \cdot \fsub{\langle \rho_{A,i}, \rho_{B,i} \rangle} &\sim \left\langle \sum_i \alpha_i \cdot \fsub{\rho_{A,i}}, \sum_i \alpha_i \cdot \fsub{\rho_{B,i}} \right\rangle \\
&= \left\langle \sum_i \alpha_i \cdot \fsub{\rho_{A,i}}, \fsub{\rho_B} \right\rangle
\end{split}
\end{equation}
Finally, we use these to change the point at which we take affine-closure.
\begin{equation}
\begin{split}
c_{\mathbf{\left( A \cup A' \right) \times B}} =& \left\{ \langle \rho_A, \rho_B \rangle \middle| \rho_A \in \left( c_{\mathbf{A}} \cup c_{\mathbf{A'}} \right)^{**}, \rho_B \in c_{\mathbf{B}} \right\} \\
=& \left\{ \langle \rho_A, \rho_B \rangle \middle| \rho_A \in c_{\mathbf{A}} \cup c_{\mathbf{A'}}, \rho_B \in c_{\mathbf{B}} \right\}^{**} \\
=& \left( \left\{ \langle \rho_A, \rho_B \rangle \middle| \rho_A \in c_{\mathbf{A}}, \rho_B \in c_{\mathbf{B}} \right\} \right. \\ & \left. \cup \left\{ \langle \rho_A, \rho_B \rangle \middle| \rho_A \in c_{\mathbf{A'}}, \rho_B \in c_{\mathbf{B}} \right\} \right)^{**} \\
=& c_{\mathbf{\left( A \times B \right) \cup \left( A' \times B \right)}}
\end{split}
\end{equation}
\item $\mathbf{\left( A \cap A' \right) \oplus B} = \mathbf{\left( A \oplus B \right) \cap \left( A' \oplus B \right)}$: Dual to $\times, \cup$ case.
\item $\mathbf{\left( A \cup A' \right) \oplus B} = \mathbf{\left( A \oplus B \right) \cup \left( A' \oplus B \right)}$: Dual to $\times, \cap$ case.
\end{itemize}
\end{proof}

\section{Proofs from Section \ref{sec:graph_types}}

We will start these proofs by presenting a standard way to reliably construct morphisms that can signal information via a noisy encoding of a single bit channel that we can then use throughout for building counterexamples.

\begin{lemma}\label{lemma:binary_channel_encoding_morphism}
Given objects $\mathbf{A}, \mathbf{B} \in \ob{\caus{\catc}}$ where neither are first-order dual ($\left| c_{\mathbf{A}} \right|, \left| c_{\mathbf{B}} \right| > 1$), and some chosen distinct states $\rho_A \neq \nmaxmix_{\mathbf{A}} \in c_{\mathbf{A}}$, $\rho_B \neq \nmaxmix_{\mathbf{B}} \in c_{\mathbf{B}}$, there exists a morphism in $c_{\mathbf{A^* < \mathbf{B}}} \setminus c_{\mathbf{A^* > B}}$, or equivalently some $f \in \caus{\catc}\left(\mathbf{A}, \mathbf{B}\right)$ which actively signals:
\begin{equation}
\begin{split}
\nmaxmix_{\mathbf{A}} \fatsemi f &= \nmaxmix_{\mathbf{B}} \\
\fsub{\rho_A \fatsemi f} &\sim \alpha \cdot \fsub{\rho_B} + \left( \id_I - \alpha \right) \cdot \fsub{\nmaxmix_{\mathbf{B}}} \not\sim \fsub{\nmaxmix_{\mathbf{B}}}
\end{split}
\end{equation}
\end{lemma}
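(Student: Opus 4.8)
The plan is to exhibit the required morphism as a concrete composite $f = g \fatsemi h$ of causal maps $g : \mathbf{A} \to \mathbf{2}$ and $h : \mathbf{2} \to \mathbf{B}$ that factors through the first-order object $\mathbf{2}$. Since $\mathbf{2}$ is first-order, the name of $g$ will automatically lie in $c_{\mathbf{A^* \parr 2}}$ and the name of $h$ in $c_{\mathbf{2^* \parr B}}$, so by the $\mathbf{2}$-semi-localisable form of $<$ (Corollary~\ref{corollary:seq_as_2local}, applied with $\mathbf{A}$ replaced by $\mathbf{A^*}$) the name of $f$ will be a generating element of $c_{\mathbf{A^* < B}}$. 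Dually, $c_{\mathbf{A^* > B}} = c_{\mathbf{B < A^*}}$ consists precisely of the names of channels whose output is insensitive to the input, i.e.\ $\rho \fatsemi f = \nmaxmix_{\mathbf{A}} \fatsemi f$ for all $\rho \in c_{\mathbf{A}}$, so it will be enough to arrange the two displayed equations with $\alpha$ a non-zero scalar: then $\rho_A \fatsemi f = \alpha \cdot \rho_B + (\id_I - \alpha)\cdot \nmaxmix_{\mathbf{B}} \not\sim \nmaxmix_{\mathbf{B}} = \nmaxmix_{\mathbf{A}} \fatsemi f$ (using $\rho_B \neq \nmaxmix_{\mathbf{B}}$ and invertibility of $\alpha$ from~\ref{apc:scalars}), which rules out $f \in c_{\mathbf{A^* > B}}$. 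This also accounts for the ``equivalently'' in the statement: a morphism of $c_{\mathbf{A^* < B}} \setminus c_{\mathbf{A^* > B}}$ is exactly the name of a semi-localisable channel $\mathbf{A} \to \mathbf{B}$ that genuinely signals input-to-output.

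To build $g$, I will first extract a genuine $\catc$-effect $\pi \in \catc(A, I)$ separating $\rho_A$ from $\nmaxmix_{\mathbf{A}}$: since $\rho_A \neq \nmaxmix_{\mathbf{A}}$, in $\sub{\catc}$ (where $\mathbf{A}$ has a dual basis by Proposition~\ref{prop:dual_basis}) some $\sub{\catc}$-functional separates them, and writing it as a difference of $\catc$-morphisms shows one of the two summands already does. Feeding $\pi$ into Proposition~\ref{prop:complement_binary_test} with target $\mathbf{I}$ (so that $\mathbf{B \parr 2} \cong \mathbf{2}$) produces a causal $g_0 \in \caus{\catc}(\mathbf{A}, \mathbf{2})$ of the form $\langle \kappa \cdot \pi, \pi' \rangle$ with $\kappa$ a non-zero scalar; causality forces $\rho \fatsemi g_0$ to be the bit state with weight $b(\rho) := \kappa(\rho \fatsemi \pi)$ on $\tbool$, and $b(\rho_A) \neq b(\nmaxmix_{\mathbf{A}})$. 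After replacing $g_0$ by an affine combination with the constant channel ``discard then re-prepare $\nmaxmix_{\mathbf{2}}$'' (still causal, and still sending $\rho_A$ and $\nmaxmix_{\mathbf{A}}$ to distinct bit weights), I may assume the weight $p := b(\nmaxmix_{\mathbf{A}})$ satisfies $p \neq 0$ and $p \neq \id_I$, so that $p$ and $\id_I - p$ are invertible; write $q := b(\rho_A) \neq p$ for this final $g$.

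For $h$ I will take a copairing $h = [\sigma_{\tbool}, \sigma_{\fbool}]$ of two causal states lying on the line through $\rho_B$ and $\nmaxmix_{\mathbf{B}}$, chosen so that $p\, \sigma_{\tbool} + (\id_I - p)\, \sigma_{\fbool} = \nmaxmix_{\mathbf{B}}$; any such copairing is automatically causal. Proposition~\ref{prop:complement} applied to $\rho_B$ supplies a causal state $\rho_B'$ and a non-zero scalar $t$ with $\tfrac{1}{\id_I + t}\rho_B + \tfrac{t}{\id_I + t}\rho_B' = \nmaxmix_{\mathbf{B}}$, exhibiting $\nmaxmix_{\mathbf{B}}$ as an interior point of the segment $[\rho_B, \rho_B']$; setting $\sigma_{\tbool}$ to be $\nmaxmix_{\mathbf{B}}$ perturbed towards $\rho_B$ by a small amount $x$, then solving the linear constraint for $\sigma_{\fbool}$, one checks (rewriting in the $\{\rho_B, \rho_B'\}$ frame) that for small enough non-zero $x$ both are convex combinations of $\rho_B, \rho_B'$ and hence in $c_{\mathbf{B}}$. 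The degenerate possibilities $p \in \{0, \id_I\}$ excluded above would in any case be handled even more directly by $h = [\rho_B, \nmaxmix_{\mathbf{B}}]$ or $[\nmaxmix_{\mathbf{B}}, \rho_B]$.

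Finally, setting $f := g \fatsemi h$ gives $\nmaxmix_{\mathbf{A}} \fatsemi f = p\,\sigma_{\tbool} + (\id_I - p)\,\sigma_{\fbool} = \nmaxmix_{\mathbf{B}}$ and $\rho_A \fatsemi f = q\,\sigma_{\tbool} + (\id_I - q)\,\sigma_{\fbool} = \alpha \rho_B + (\id_I - \alpha) \nmaxmix_{\mathbf{B}}$ with $\alpha = (\id_I - p)^{-1}(q - p)x \neq 0$, exactly as required, and the first paragraph then places the name of $f$ in $c_{\mathbf{A^* < B}} \setminus c_{\mathbf{A^* > B}}$. I expect the main obstacle to be purely the causality bookkeeping in the last two paragraphs — keeping the perturbed states genuine morphisms of $\catc$ (not just of $\sub{\catc}$) and checking that all the affine coefficients are legitimate scalars — which is precisely what the complement lemmas (\ref{apc:complement}, Proposition~\ref{prop:complement}, Proposition~\ref{prop:complement_binary_test}) are designed to control: trivially in $\MatRaff$, and by the small-parameter estimate above in $\MatRp$ and $\CPs$.
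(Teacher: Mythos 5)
Your construction is correct, but it takes a genuinely different route from the paper's. The paper builds $f$ in one shot: it extends $\left\{\rho_A, \nmaxmix_{\mathbf{A}}\right\}$ to a basis with a dual basis $\left\{e_{A,i}\right\}_i$ in $\sub{\catc}$ (Proposition \ref{prop:dual_basis}), forms the ideal classifier $t \sim e_{A,1}\fatsemi\fsub{\rho_B} + \left(\fsub{\ndiscard_{\mathbf{A}}} - e_{A,1}\right)\fatsemi\fsub{\nmaxmix_{\mathbf{B}}}$ in $\sub{\catc}$, and applies Proposition \ref{prop:complement} once to the negative part of $t$ to mix in enough uniform noise that the result lands back in $\catc$; causality, the one-way-signalling condition (checked directly as $f\fatsemi\pi \sim \ndiscard_{\mathbf{A}}$ for every $\pi \in c_{\mathbf{B}}^*$ rather than via semi-localisability) and the two displayed equations are then verified by computing against the dual basis. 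You instead factor $f$ explicitly through $\mathbf{2}$ as measure-then-prepare, getting the measurement half from Proposition \ref{prop:complement_binary_test} and the preparation half as a copairing of two causal states straddling $\nmaxmix_{\mathbf{B}}$ on the line through $\rho_B$. Morally both are noisy measure-and-prepare channels, but your factorisation makes membership in $c_{\mathbf{A^* < B}}$ immediate from the semi-localisable form of $<$ and splits the positivity bookkeeping into two smaller independent pieces, whereas the paper's single formula yields as a by-product the fact (used when these morphisms are chained around a cycle in Lemma \ref{lemma:acyclic_graph_morphisms_flat}) that $f$ transposes the distinguishing functional $e_{B,1}$ back onto $e_{A,1}$ up to an invertible scalar. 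Your $f$ carries the analogous information (the image of $\rho_A$ sits at invertible parameter $\alpha$ along the segment from $\nmaxmix_{\mathbf{B}}$ to $\rho_B$), and the lemma statement asks for no more, so nothing is lost.

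Two small points to tighten. For the inclusion in $c_{\mathbf{A^* < B}}$ you can cite Equation \ref{eq:seq_object_semi_localisable_def} (imported from prior work) rather than Corollary \ref{corollary:seq_as_2local}: the corollary sits downstream of Theorem \ref{thrm:all_graph_types_equivalent}, and while no actual circularity arises (that theorem does not use the present lemma), the weaker statement suffices and keeps the dependency order clean. Second, Proposition \ref{prop:complement} hands you a morphism $\rho_B''$ with $\rho_B + \rho_B'' = \lambda\cdot\maxmix_B$, not a causal state; you need the one-line check that $\id_I + \rho_B''\fatsemi\pi = \lambda\mu_{\mathbf{B}}^{-1}$ for every $\pi \in c_{\mathbf{B}}^*$, so that by cancellativity $\rho_B''\fatsemi\pi$ is a fixed scalar and (after enlarging $\lambda$ if necessary so that this scalar is non-zero) the rescaled $\rho_B'$ lies in $c_{\mathbf{B}}^{**} = c_{\mathbf{B}}$. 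With that, and an explicit choice of $x$ in the $\left\{\rho_B,\rho_B'\right\}$ frame in place of ``small enough'' (so the argument survives in a general scalar semiring rather than only in $\mathbb{R}^+$), the remaining steps go through as you describe.
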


\begin{proof}
Extend $\left\{ \rho_{A,1} = \rho_A, \rho_{A,2} = \nmaxmix_{\mathbf{A}} \right\}$ to a basis (via Proposition \ref{prop:dual_basis}) $\left\{ \rho_{A,i} \right\}_i \subseteq \catc \left( I, A \right)$, with dual basis $\left\{ e_{A,i} \right\}_i \subseteq \sub{\catc} \left( A, I \right)$. Picking the additional states from the default causal basis (\ref{apc:basis}) and renormalising, we can have $\forall i . \rho_{A,i} \fatsemi \ndiscard_{\mathbf{A}} = \id_I$ and hence $\sum_i e_{A,i} \sim \fsub{\ndiscard_{\mathbf{A}}}$. Similarly define $\left\{ \rho_{B,j} \right\}_j$ and $\left\{ e_{B,j} \right\}_j$ for $\mathbf{B}$.

We start by defining the following morphism:
\begin{equation}
\begin{split}
t :=& e_{A,1} \fatsemi \fsub{\rho_{A,1}} + \sum_{i \neq 1} e_{A,i} \fatsemi \fsub{\rho_{A,2}} \\
\sim& e_{A,1} \fatsemi \fsub{\rho_{A,1}} + \left( \fsub{\ndiscard_{\mathbf{A}}} - e_{A,1} \right) \fatsemi \fsub{\rho_{A,2}} \\
& \in \sub{\catc} \left( A, B \right)
\end{split}
\end{equation}
We can view this in the style of Proposition \ref{prop:complement_binary_test} as a binary test for $e_{A,1}$, with outcomes encoded into $\mathbf{B}$ instead of $\mathbf{2}$. $t$ may not exist in $\catc$ (i.e. if $\catc$ is $\MatRp$ or $\CPs$, $t$ may not be positive), but we can represent it as an affine combination of morphisms in $\catc$: pick some representatives $t^+, t^- \in \catc \left( A, B \right)$ s.t. $t \sim \fsub{t^+} - \fsub{t^-}$. Applying Proposition \ref{prop:complement} to $t^-$ gives some invertible $\lambda$ and $t'$ s.t. $\fsub{t'} \sim \fsub{\lambda \cdot \discard_A \fatsemi \maxmix_B} - \fsub{t^-}$ (wlog. suppose $\lambda \theta_{\mathbf{A}}^{-1} \mu_{\mathbf{B}}^{-1} + \id_I$ is invertible, otherwise we could freely pick a larger $\lambda$ and $t'$). We then define
\begin{equation}
f := \left( \lambda \theta_{\mathbf{A}}^{-1} \mu_{\mathbf{B}}^{-1} + \id_I \right)^{-1} \cdot \left( t^+ + t' \right) \in \catc \left( A, B \right)
\end{equation}
so that $t$ is an affine combination of $f$ and $\ndiscard_{\mathbf{A}} \fatsemi \rho_{B,2}$, and we claim that it is causal $f \in \caus{\catc}\left(\mathbf{A}, \mathbf{B}\right)$. Consider an arbitrary effect $\pi \in c_{\mathbf{B}}^*$.
\begin{equation}
\begin{split}
&\fsub{f \fatsemi \pi} \\
\sim& \fsub{\left( \lambda \theta_{\mathbf{A}}^{-1} \mu_{\mathbf{B}}^{-1} + \id_I \right)^{-1}} \cdot \left(\fsub{\lambda \cdot \discard_A \fatsemi \maxmix_B} + t \right) \fatsemi \fsub{\pi} \\
\sim& \fsub{\left( \lambda \theta_{\mathbf{A}}^{-1} \mu_{\mathbf{B}}^{-1} + \id_I \right)^{-1}} \cdot \left( \begin{array}{r@{}l}
& \fsub{\lambda \theta_{\mathbf{A}}^{-1} \mu_{\mathbf{B}}^{-1} \cdot \ndiscard_{\mathbf{A}}} \fatsemi \fsub{\rho_{B,2}} \\
+ & e_{A,1} \fatsemi \fsub{\rho_{B,1}} \\
+ & \fsub{\ndiscard_{\mathbf{A}}} \fatsemi \fsub{\rho_{B,2}} \\
- & e_{A,1} \fatsemi \fsub{\rho_{B,2}}
\end{array} \right) \fatsemi \fsub{\pi} \\
\sim& \fsub{\left( \lambda \theta_{\mathbf{A}}^{-1} \mu_{\mathbf{B}}^{-1} + \id_I \right)^{-1}} \cdot \left( \begin{array}{r@{}l}
& \fsub{\lambda \theta_{\mathbf{A}}^{-1} \mu_{\mathbf{B}}^{-1} \cdot \ndiscard_{A}} \\
+ & e_{A,1} \\
+ & \fsub{\ndiscard_{\mathbf{A}}} \\
- & e_{A,1}
\end{array} \right) \\
\sim& \fsub{\left( \lambda \theta_{\mathbf{A}}^{-1} \mu_{\mathbf{B}}^{-1} + \id_I \right)^{-1}} \cdot \fsub{\lambda \theta_{\mathbf{A}}^{-1} \mu_{\mathbf{B}}^{-1} + \id_I} \cdot \fsub{\ndiscard_{\mathbf{A}}} \\
\sim& \fsub{\ndiscard_{\mathbf{A}}}
\end{split}
\end{equation}
So $f \fatsemi \pi \in c_{\mathbf{A}}^*$, i.e. $f$ is causal, and there is no signalling from the output to the input (its partial transpose is in $c_{\mathbf{A^* < B}}$).

We can verify that it can signal from the input to output.
\begin{equation}
\begin{split}
& \fsub{\nmaxmix_{\mathbf{A}} \fatsemi f} \\
\sim& \fsub{\left( \lambda \theta_{\mathbf{A}}^{-1} \mu_{\mathbf{B}}^{-1} + \id_I \right)^{-1}} \cdot \fsub{\nmaxmix_{\mathbf{A}}} \fatsemi \left( \fsub{\lambda \cdot \discard_A \fatsemi \maxmix_B} + t \right) \\
\sim& \fsub{\left( \lambda \theta_{\mathbf{A}}^{-1} \mu_{\mathbf{B}}^{-1} + \id_I \right)^{-1}} \cdot \fsub{\rho_{A,2}} \fatsemi \left( \begin{array}{r@{}l}
& \fsub{\lambda \theta_{\mathbf{A}} \mu_{\mathbf{B}}^{-1} \cdot \ndiscard_{\mathbf{A}}} \fatsemi \fsub{\rho_{B,2}} \\
+ & e_{A,1} \fatsemi \fsub{\rho_{B,1}} \\
+ & \fsub{\ndiscard_{\mathbf{A}}} \fatsemi \fsub{\rho_{B,2}} \\
- & e_{A,1} \fatsemi \fsub{\rho_{B,2}}
\end{array} \right) \\
\sim& \fsub{\left( \lambda \theta_{\mathbf{A}}^{-1} \mu_{\mathbf{B}}^{-1} + \id_I \right)^{-1}} \cdot \fsub{\lambda \theta_{\mathbf{A}}^{-1} \mu_{\mathbf{B}}^{-1} + \id_I} \cdot \fsub{\rho_{B,2}} \\
\sim& \fsub{\nmaxmix_{\mathbf{B}}}
\end{split}
\end{equation}
\begin{equation}
\begin{split}
& \fsub{\rho_A \fatsemi f} \\
\sim& \fsub{\left( \lambda \theta_{\mathbf{A}}^{-1} \mu_{\mathbf{B}}^{-1} + \id_I \right)^{-1}} \cdot \fsub{\rho_A} \fatsemi \left( \fsub{\lambda \cdot \discard_A \fatsemi \maxmix_B} + t \right) \\
\sim& \fsub{\left( \lambda \theta_{\mathbf{A}}^{-1} \mu_{\mathbf{B}}^{-1} + \id_I \right)^{-1}} \cdot \fsub{\rho_{A,1}} \fatsemi \left( \begin{array}{r@{}l}
& \fsub{\lambda \theta_{\mathbf{A}}^{-1} \mu_{\mathbf{B}}^{-1} \cdot \ndiscard_{\mathbf{A}}} \fatsemi \fsub{\rho_{B,2}} \\
+ & e_{A,1} \fatsemi \fsub{\rho_{B,1}} \\
+ & \fsub{\ndiscard_{\mathbf{A}}} \fatsemi \fsub{\rho_{B,2}} \\
- & e_{A,1} \fatsemi \fsub{\rho_{B,2}}
\end{array} \right) \\
\sim& \fsub{\left( \lambda \theta_{\mathbf{A}}^{-1} \mu_{\mathbf{B}}^{-1} + \id_I \right)^{-1} \cdot \left( \rho_B + \lambda \theta_{\mathbf{A}}^{-1} \mu_{\mathbf{B}}^{-1} \cdot \nmaxmix_{\mathbf{B}} \right)}
\end{split}
\end{equation}

We can instead see this as $f$ mapping $e_{B,1}$ to $e_{A,1}$ (up to a non-zero scalar).
\begin{equation}
\begin{split}
& \fsub{f} \fatsemi e_{B,1} \\
\sim& \fsub{\left( \lambda \theta_{\mathbf{A}}^{-1} \mu_{\mathbf{B}}^{-1} + \id_I \right)^{-1}} \cdot \left( \begin{array}{r@{}l}
& \fsub{\lambda \theta_{\mathbf{A}}^{-1} \mu_{\mathbf{B}}^{-1} \cdot \ndiscard_{\mathbf{A}}} \fatsemi \fsub{\rho_{B,2}} \\
+ & e_{A,1} \fatsemi \fsub{\rho_{B,1}} \\
+ & \fsub{\ndiscard_{\mathbf{A}}} \fatsemi \fsub{\rho_{B,2}} \\
- & e_{A,1} \fatsemi \fsub{\rho_{B,2}}
\end{array} \right) \fatsemi e_{B,1} \\
\sim& \fsub{\left( \lambda \theta_{\mathbf{A}}^{-1} \mu_{\mathbf{B}}^{-1} + \id_I \right)^{-1}} \cdot e_{A,1}
\end{split}
\end{equation}
\end{proof}

\begin{lemma}[Restatement of Lemma \ref{lemma:acyclic_graph_morphisms_flat}]
The set of graph states over $G$ with a fixed edge interpretation $\Delta$ is flat iff in every cycle of $G$ there is some edge $(u \to v)$ for which $\Delta(u \to v) \cong \mathbf{I}$.
\end{lemma}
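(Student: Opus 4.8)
The plan is to unwind Definition~\ref{def:flat} into its two halves and observe that only one of them carries content. Write $c$ for the set of graph states over $G$ with edge interpretation $\Delta$, with carrier $\bigotimes_{v\in V}\mathcal U(\Gamma(v))$. The ``uniform state'' half --- some invertible $\mu\maxmix\in c$ --- holds unconditionally: taking every component $g_v$ to be $\nmaxmix_{\mathrm{Comp}_G^{\Gamma;\Delta}(v)}$, the caps $\epsilon_{\Delta(e)}$ contract pairs of uniform states to invertible scalars (using $\maxmix_{X\otimes Y}=\maxmix_X\otimes\maxmix_Y$, $\discard_{X\otimes Y}=\discard_X\otimes\discard_Y$ and invertibility of dimensions, \ref{apc:dimension}), so the resulting graph state is an invertible multiple of $\maxmix$ on the carrier. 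Hence the lemma reduces to: there is an invertible $\theta$ with $\theta\discard\in c^*$ --- equivalently, $g\fatsemi\discard$ is one fixed invertible scalar for every graph state $g$ --- if and only if every cycle of $G$ carries an $\mathbf I$-edge.

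For the ``if'' direction I would first delete the $\mathbf I$-edges, which is harmless since $\mathbf X\parr\mathbf I=\mathbf X$ and $\epsilon_{\mathbf I}=\id_I$, so the component types, the contraction and hence $c$ are unchanged, while the hypothesis makes the resulting graph a DAG. For a graph state $g$ set $\tilde g_v:=g_v\fatsemi(\discard_{\Gamma(v)}\otimes\id)$; applying the uniform effect to the $\Gamma(v)$ factor of $\mathrm{Comp}_G^{\Gamma;\Delta}(v)=\Gamma(v)\parr\Delta^{\mathrm{in}}_v\parr\Delta^{\mathrm{out}}_v$, where $\Delta^{\mathrm{in}}_v=\bigparr_{u\to v}\Delta(u\to v)^*$ and $\Delta^{\mathrm{out}}_v=\bigparr_{v\to w}\Delta(v\to w)$, shows $\theta_{\Gamma(v)}\tilde g_v$ is a causal state of $\Delta^{\mathrm{in}}_v\parr\Delta^{\mathrm{out}}_v$. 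A par of first-order objects is first-order and a par of first-order duals is first-order dual (Proposition~\ref{prop:first_order_equations} and de Morgan), so by Equation~\ref{eq:fod_parr_is_one_way} this state is one-way signalling, of type $\Delta^{\mathrm{in}}_v<\Delta^{\mathrm{out}}_v$. I would then peel the DAG from its sinks: a current sink $v$ has $\Delta^{\mathrm{out}}_v$ empty, so $\theta_{\Gamma(v)}\tilde g_v$ is \emph{the} unique causal state of the first-order dual $\Delta^{\mathrm{in}}_v$ --- a product of the unique causal states of the $\Delta(u\to v)^*$ --- and contracting it against the predecessors' output legs amounts to applying the unique causal effect $\ndiscard_{\Delta(u\to v)}$ (up to an invertible scalar) to the corresponding factor of each $\tilde g_u$; since $\tilde g_u$ is causal for $\Delta^{\mathrm{in}}_u<\Delta^{\mathrm{out}}_u$, discarding a factor of its first-order future again leaves (up to an invertible scalar) a causal state of the smaller type. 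Iterating peels every vertex; the accumulated scalars ($\theta_{\Gamma(v)}$, the $\theta_{\Delta(e)}$, the dimension factors from the $\mathbf I$-edge caps) depend only on $G,\Gamma,\Delta$, whereas each $g_v$ only ever contributes ``$=\id_I$'' at a causality step, so $g\fatsemi\discard$ is a fixed invertible scalar and $c$ is flat.

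For the ``only if'' direction I would argue contrapositively: given a cycle $v_1\to\cdots\to v_k\to v_1$ with $\mathbf D_i:=\Delta(v_i\to v_{i+1})\not\cong\mathbf I$ (indices mod $k$), each $\mathbf D_i$ is first-order and non-trivial, hence not first-order dual, so Lemma~\ref{lemma:binary_channel_encoding_morphism} supplies causal channels $\phi_i\in\caus{\catc}(\mathbf D_{i-1},\mathbf D_i)$ fixing the uniform state, $\nmaxmix_{\mathbf D_{i-1}}\fatsemi\phi_i=\nmaxmix_{\mathbf D_i}$, and transporting one dual-basis effect, $\phi_i\fatsemi e_{D_i,1}=\beta_i\cdot e_{D_{i-1},1}$ with $\beta_i$ invertible, where $e_{D,1}$ is dual to the chosen state $\rho_D\neq\nmaxmix_{\mathbf D}$. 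Let $g_{\mathrm{signal}}$ be the graph state whose component at each $v_i$ of the cycle is $\phi_i$ on its cycle legs tensored with the uniform state on all other legs, and uniform at every other vertex; let $g^{(0)}$ be the all-uniform graph state, with $g^{(0)}\fatsemi\discard=\lambda_0$ invertible by the first paragraph. Since each cycle component factors as a tensor product, the caps on the cycle edges decouple from the rest, so $g\fatsemi\discard$ factors as $(\text{loop of the cycle cores})\times\kappa_0$ where $\kappa_0$ --- the contraction of everything off the cycle together with the uniform legs of the cycle boxes --- is literally the same for $g_{\mathrm{signal}}$ and $g^{(0)}$. The loop value is the compact-closed trace of the composite around the cycle: for the uniform cores this composite telescopes (via $\nmaxmix\fatsemi\ndiscard=\id_I$, Proposition~\ref{prop:set_compatible_by_intersection}) to the forget channel $\ndiscard_{\mathbf D_k}\fatsemi\nmaxmix_{\mathbf D_k}$, of trace $\id_I$, so $\kappa_0=\lambda_0$; for the $\phi_i$ cores a direct computation against a causal basis of $\mathbf D_k$ (using the two displayed identities and $\rho_{D_k,i}\fatsemi e_{D_k,j}=\delta_{ij}\id_I$) shows the composite $\Phi=\phi_1\fatsemi\cdots\fatsemi\phi_k$ sends $\rho_{D_k,1}\mapsto(\textstyle\prod_i\beta_i)\rho_{D_k}+(\cdots)\nmaxmix_{\mathbf D_k}$, fixes $\nmaxmix_{\mathbf D_k}$ and sends every other basis state to $\nmaxmix_{\mathbf D_k}$, hence has trace $(\prod_i\beta_i)+\id_I$. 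Therefore $g_{\mathrm{signal}}\fatsemi\discard=\lambda_0\bigl((\prod_i\beta_i)+\id_I\bigr)\neq\lambda_0=g^{(0)}\fatsemi\discard$, the difference $\lambda_0\prod_i\beta_i$ being a product of invertible scalars, hence nonzero (\ref{apc:scalars}); so no scalar equals $g\fatsemi\discard$ on all of $c$ and $c$ is not flat.

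I expect the main obstacle to be the scalar bookkeeping, and in particular --- in the ``only if'' direction --- confirming that $g_{\mathrm{signal}}$ and $g^{(0)}$ genuinely yield different discard scalars rather than merely different-looking ones; this is what forces the use of the exact behaviour of the channel from Lemma~\ref{lemma:binary_channel_encoding_morphism} (its fixing of the uniform state and its transport of a single dual-basis effect), the identification of the uniform channel with the forget channel $\ndiscard\fatsemi\nmaxmix$, and the decoupling of the contraction into a trace around the cycle times an interpretation-independent remainder. The ``if'' direction is conceptually routine, but its bookkeeping --- checking that each peeling step only ever discards a factor of a first-order future and that all emitted scalars are invertible and independent of the chosen components --- must be carried out with care.
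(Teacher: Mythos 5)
Your proposal is correct and follows essentially the same route as the paper: for the ``if'' direction you strip the $\mathbf{I}$-edges and peel the resulting DAG from its sinks using the uniqueness of causal states of first-order dual par types (the paper's induction on maximal vertices), and for the ``only if'' direction you thread the signalling channels of Lemma~\ref{lemma:binary_channel_encoding_morphism} around the offending cycle and show the resulting graph state normalises to $\lambda_0(\prod_i\beta_i+\id_I)$ rather than $\lambda_0$, which is exactly the paper's computation of $g'=\bigl(\id_I+\prod_v(\lambda_v\theta_{uv}^{-1}\mu_{vw}^{-1}+\id_I)^{-1}\bigr)\cdot g$. The only cosmetic differences are that you isolate the uniform-state half of flatness upfront and organise the cycle computation as a trace against the all-uniform graph state instead of exhibiting the two states as scalar multiples of one another.
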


\begin{proof}
$\Longleftarrow$: Suppose that every cycle of $G$ contains an $\mathbf{I}$ edge. Let $G'$ be the graph obtained by removing each such edge from $G$, leaving a DAG. Since each removed edge was interpreted as $\mathbf{I}$, graph states over $G$ are always graph states over $G'$. We can then show that graph states over DAGs are flat by induction on the number of vertices of the graph.

If $G$ is acyclic, there must be some maximal vertex $v$ with no outgoing edges. Pick the effect $\ndiscard_{\Gamma(v)} : \Gamma(v) \to \mathbf{I}$ and apply it to a graph state $g$ at $g_v$.
\begin{equation}
g_v \fatsemi \left( \ndiscard_{\Gamma(v)} \otimes \id \right) : \mathbf{I} \to \bigparr_{(u \to v) \in E} \Delta(u \to v)^*
\end{equation}
Since $\Delta$ can only pick first-order objects, the only causal morphism of this type is $\bigotimes_{(u \to v) \in E} \nmaxmix_{\Delta(u \to v)^*}$. Each $\nmaxmix_{\Delta(u \to v)^*}$ can be absorbed into $g_u$ to show that applying $\ndiscard_{\Gamma(v)}$ at $v$ to our graph state over $G$ gives a graph state over $G \setminus \left\{ v \right\}$. By induction, the set of graph states over $G \setminus \left\{ v \right\}$ is flat, giving the normalised effect $\bigotimes_{u \in V \setminus \left\{ v \right\}} \ndiscard_{\Gamma(u)}$, and so $g \fatsemi \left( \bigotimes_{u \in V} \ndiscard_{\Gamma(u)} \right) = \id_I$. This holds for every graph state $g$, so $\bigotimes_{u \in V} \ndiscard_{\Gamma(u)} : \mathbf{Gr}_{G}^{\Gamma} \to \mathbf{I}$.

$\bigotimes_{v \in V} \nmaxmix_{\Gamma(v)}$ is trivially a graph state by supplementing it with a pair of local causal state and effect for each edge of the graph.

$\Longrightarrow$: Suppose there exists a (wlog. minimal) cycle $\sigma$ of $G$ for which $\Delta$ assigns a non-trivial object to each edge (i.e. there are at least two distinct causal states).

For each edge $(u \to v) \in \sigma$, pick two distinct states $\rho_{uv} \neq \nmaxmix_{uv} \in c_{\Delta(u \to v)}$. For each segment $(u \to v \to w) \in \sigma$, let $f_v : \caus{\catc}\left( \Delta(u \to v), \Delta(v \to w)\right)$ be the morphism generated by Lemma \ref{lemma:binary_channel_encoding_morphism} which implements a noisy bit channel encoded into the chosen states for $(u \to v)$ and $(v \to w)$.

We will use these $f_v$ morphisms to construct two graph states that are equal up to a non-unit scalar, and so by linearity there is no effect that can send them both to $\id_I$.

Firstly, we pick a graph state $g$ where every component $g_v : \mathbf{I} \to \mathrm{Comp}_G^{\Gamma;\Delta}(v)$ is separable:
\begin{equation}
\begin{split}
g_v =& \rho_v \otimes \left( \bigotimes_{(u \to v) \in E} \pi_{uv} \right) \otimes \left( \bigotimes_{(v \to w) \in E} \rho_{vw} \right) \\
& \rho_v \in c_{\Gamma(v)} \\
& \forall (u \to v) \in E . \pi_{uv} \in c_{\Delta(u \to v)}^* \\
& \forall (v \to w) \in E . \rho_{vw} \in c_{\Delta(v \to w)}
\end{split}
\end{equation}
so $g = \bigotimes_{v \in V} \rho_v$ (the state-effect pairs for each edge merge to unit scalars).

We build another graph state $g'$ by taking $g$ and for every vertex $v$ in the cycle $\sigma$ ($(u \to v \to w) \in \sigma$), we obtain $g'_v$ from $g_v$ by replacing $\pi_{uv} \otimes \rho_{vw}$ with $\eta \fatsemi \left( \id \otimes f_v \right)$ (where $\eta$ is the appropriate cup state). This still gives $g'_v : \mathbf{I} \to \mathrm{Comp}_G^{\Gamma;\Delta}(v)$ so we have a valid graph state. Removing the state-effect pairs for edges not in $\sigma$, we find that $g'$ is $g$ with an additional circular morphism following $\sigma$.
\begin{equation}
g' = \left( \eta \fatsemi \left( \id \otimes \prod_{v \in \sigma} f_v \right) \fatsemi \epsilon \right) \cdot g
\end{equation}
We expand one $f_{\hat{v}}$ in this scalar (with $\left( \hat{u} \to \hat{v} \to \hat{w} \right) \in \sigma$), and use the fact that each $f_v$ is causal and noisily preserves the dual effect to simplify. Here, we use similar notations as in Lemma \ref{lemma:binary_channel_encoding_morphism} for the chosen states, dual effects, and relevant scalars.

\begin{equation}
\begin{split}
& \fsub{\eta \fatsemi \left( \id \otimes \prod_{v \in \sigma} f_v \right) \fatsemi \epsilon} \\
\sim& \fsub{\left( \lambda_{\hat{v}} \theta_{uv}^{-1} \mu_{vw}^{-1} + \id_I \right)^{-1}} \cdot \fsub{\eta} \fatsemi \\ & \left( \id \otimes \left( \left( \begin{array}{r@{}l} & \fsub{\lambda_{\hat{v}} \theta_{\hat{u}\hat{v}}^{-1} \mu_{\hat{v}\hat{w}}^{-1} \cdot \ndiscard_{\hat{u}\hat{v}}} \fatsemi \fsub{\rho_2^{\hat{v}\hat{w}}} \\ +& e_1^{\hat{u}\hat{v}} \fatsemi \fsub{\rho_1^{\hat{v}\hat{w}}} \\ +& \fsub{\ndiscard_{\hat{u}\hat{v}}} \fatsemi \fsub{\rho_2^{\hat{v}\hat{w}}} \\ -& e_1^{\hat{u}\hat{v}} \fatsemi \fsub{\rho_2^{\hat{v}\hat{w}}} \end{array} \right) \fatsemi \fsub{\prod_{\hat{v} \neq v \in \sigma} f_v} \right) \right) \fatsemi \fsub{\epsilon} \\
\sim& \fsub{\rho_2^{\hat{v}\hat{w}}} \fatsemi \fsub{\prod_{\hat{v} \neq v \in \sigma} f_v} \fatsemi \fsub{\ndiscard_{\hat{u}\hat{v}}} \\ &+ \fsub{\left( \lambda_{\hat{v}} \theta_{uv}^{-1} \mu_{vw}^{-1} + \id_I \right)^{-1}} \cdot \left( \fsub{\rho_1^{\hat{v}\hat{w}}} - \fsub{\rho_2^{\hat{v}{\hat{w}}}} \right) \fatsemi \fsub{\prod_{\hat{v} \neq v \in \sigma} f_v} \fatsemi e_1^{\hat{u}\hat{v}} \\
\sim& \fsub{\rho_2^{\hat{v}\hat{w}}} \fsub{\ndiscard_{\hat{v}\hat{w}}} \\ & + \fsub{\prod_{v \in \sigma} \left( \lambda_v \theta_{uv}^{-1} \mu_{vw}^{-1} + \id_I \right)^{-1}} \cdot \left( \fsub{\rho_1^{\hat{v}\hat{w}}} - \fsub{\rho_2^{\hat{v}{\hat{w}}}} \right) \fatsemi e_1^{\hat{v}\hat{w}} \\
\sim& \fsub{\id_I + \prod_{v \in \sigma} \left( \lambda_v \theta_{uv}^{-1} \mu_{vw}^{-1} + \id_I \right)^{-1}}
\end{split}
\end{equation}
The big product must be non-zero since it is invertible, so we can conclude that $g$ and $g'$ are equal up to a non-unit scalar.
\end{proof}

\begin{theorem}[Restatement of Theorem \ref{thrm:all_graph_types_equivalent}]
For any DAG $G$ with ordered vertices and local interpretation $\Gamma$, $\mathbf{LoGr}_G^\Gamma = \mathbf{LoGr2}_G^\Gamma = \mathbf{RSiGr}_G^\Gamma = \mathbf{SiGr}_G^\Gamma = \mathbf{OrGr}_G^\Gamma$.
\end{theorem}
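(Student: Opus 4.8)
The plan is to prove the five objects equal by establishing a cycle of state-set inclusions. They all live on the common carrier $\bigotimes_{v\in V}\mathcal U(\Gamma(v))$ and all have closed state sets (each is either an intersection of closed sets or a set of the form $c^{**}$), so it suffices to verify the chain
\[
c_{\mathbf{LoGr2}_G^\Gamma}\subseteq c_{\mathbf{LoGr}_G^\Gamma}\subseteq c_{\mathbf{SiGr}_G^\Gamma}\subseteq c_{\mathbf{RSiGr}_G^\Gamma}\subseteq c_{\mathbf{OrGr}_G^\Gamma}\subseteq c_{\mathbf{LoGr2}_G^\Gamma}.
\]
Everything is done by induction on $|V|$, the base cases $|V|\le 1$ being immediate from the definitions; the induction is what lets us, on an induced subgraph $G[U]$ with $U\subsetneq V$, freely swap any of the five types for any other. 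Two combinatorial observations are used throughout: a down-closed subset of $G[U]$ is down-closed in $G$, and for a down-closed $U$ the topological sorts of $G$ in which $U$ forms an initial segment are exactly the concatenations of a topological sort of $G[U]$ with one of $G[V\setminus U]$ (in particular, the last vertex of any topological sort is a sink and its complement is down-closed).

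The inclusion $\mathbf{LoGr2}_G^\Gamma\subseteq\mathbf{LoGr}_G^\Gamma$ is immediate, $e\mapsto\mathbf 2$ being one admissible edge interpretation. For $\mathbf{LoGr}_G^\Gamma\subseteq\mathbf{SiGr}_G^\Gamma$ one checks that a single graph state lies in each intersectand $\mathrm{perm}_V\!\big(\bigparr_{u\in U}\Gamma(u)<\bigparr_{v\in V\setminus U}\Gamma(v)\big)$: down-closedness of $U$ forces every crossing edge to run from $U$ to $V\setminus U$, so the factorisation exhibits the state as semi-localisable across the cut in the sense of~\eqref{eq:seq_object_semi_localisable_def} (all connecting wires being first-order), and $(-)^{**}$ then propagates the inclusion to the affine span. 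The ``poset block'' $\mathbf{SiGr}_G^\Gamma\subseteq\mathbf{RSiGr}_G^\Gamma\subseteq\mathbf{OrGr}_G^\Gamma$ — which is the abstract lift of Kissinger and Uijlen's \cite[Theorem 6.12]{Kissinger2019a} from first-order channels to arbitrary local systems — uses only structural ingredients: the one-way signalling description of $<$ in~\eqref{eq:seq_object_one_way_def}, associativity of $<$, distributivity of $\cap$ over $<$ from~\eqref{eq:iu_monoidal_distribution}, the mixing maps $\otimes\Rightarrow\,<\,\Rightarrow\,\parr$ of~\eqref{eq:mixing_nat_trans}, and the induction hypothesis. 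Concretely, for $\mathbf{SiGr}\subseteq\mathbf{RSiGr}$ one fixes a down-closed $U$, uses the cut constraint for $U$ to extract a well-defined $U$-marginal, observes (via the first observation above) that this marginal inherits the cut constraints of $G[U]$ and hence lies in $\mathbf{SiGr}_{G[U]}^{\Gamma[U]}=\mathbf{RSiGr}_{G[U]}^{\Gamma[U]}$ by the IH, then feeds this back through~\eqref{eq:seq_object_one_way_def}; for $\mathbf{RSiGr}\subseteq\mathbf{OrGr}$ one fixes a topological sort, takes its last vertex $t$ and the down-closed prefix $U=V\setminus\{t\}$, invokes the recursive constraint for $U$ together with the IH $\mathbf{RSiGr}_{G[U]}^{\Gamma[U]}=\mathbf{OrGr}_{G[U]}^{\Gamma[U]}$, and reassembles the chain using associativity of $<$.

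The remaining inclusion $\mathbf{OrGr}_G^\Gamma\subseteq\mathbf{LoGr2}_G^\Gamma$ closes the cycle and is the main obstacle: it is the claim that a process constrained only by the graph's signalling/ordering relations can be written as an affine combination of explicit factorisations connected exactly along the edges of $G$, with every mediating wire reduced to a single bit. Its bipartite instances are already the nontrivial known equivalences~\eqref{eq:tensor_intersection_decomposition} (empty graph: $\mathbf{A\otimes B}=\mathbf{(A<B)\cap(A>B)}$) and~\eqref{eq:seq_object_semi_localisable_def} (one edge), and the general case has to glue these together. The plan is to induct on $|V|$: peel a sink $t$, intersect the chain constraints over the topological sorts ending in $t$ (distributivity of $\cap$ over $<$) to land in $\mathbf{OrGr}_{G[V\setminus\{t\}]}^{\Gamma[V\setminus\{t\}]}<\Gamma(t)$, rewrite the first factor as $\mathbf{LoGr2}_{G[V\setminus\{t\}]}^{\Gamma[V\setminus\{t\}]}$ by the IH, use the further chain constraints (sorts in which the ancestors of $t$ form an initial segment) to localise the dependence of $\Gamma(t)$ to $t$'s predecessors, expand the relevant $<$ via~\eqref{eq:seq_object_semi_localisable_def}, and then realise the required side-channels explicitly as affine combinations of bit channels using the encoding morphism of \cref{lemma:binary_channel_encoding_morphism}, converting $\parr$-ed first-order wires into the $<$-ed ones that match the edge structure via \cref{prop:first_order_equations} and recognising the results as graph types via the series-composition rule of \cref{corollary:series_parallel_graph_types}. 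The delicate points — and the reason this is the hard step — are the bookkeeping of iterated affine closures (an affine combination of affine combinations being again one, exactly as in the proof of \cref{lemma:union_intersection_rules}) and reconciling the ``all ancestors'' cuts supplied by the ordering constraints with the actual edge set of $G$, which forces one to argue that transitively implied edges can be added without enlarging the $\mathbf 2$-local type; essentially all of the real work of the theorem lives in making this last inclusion precise.
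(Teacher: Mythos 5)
Your overall architecture --- a cycle of state-set inclusions between the five types, proved by induction on $|V|$ --- matches the paper's, and four of your five links are essentially sound: the paper routes the easy block as $\mathbf{LoGr}\subseteq\mathbf{OrGr}\subseteq\mathbf{SiGr}\subseteq\mathbf{RSiGr}$ where you go $\mathbf{LoGr}\subseteq\mathbf{SiGr}\subseteq\mathbf{RSiGr}\subseteq\mathbf{OrGr}$, but both orderings close up, and your direct argument for $\mathbf{RSiGr}\subseteq\mathbf{OrGr}$ (peel the last vertex of a sort, apply the induction hypothesis to the down-closed prefix, reassemble with associativity of $<$) works. The genuine gap is in the closing inclusion into $\mathbf{LoGr2}$, at exactly the point you label ``localise the dependence of $\Gamma(t)$ to $t$'s predecessors''. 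After peeling a sink $t$ you have two separate facts: (i) $h$ is one-way signalling from all of $V\setminus\{t\}$ to $t$, so by \eqref{eq:seq_object_semi_localisable_def} it is an affine combination of factorisations whose first-order side-channel emanates from the \emph{whole} of $V\setminus\{t\}$; and (ii) the remaining ordering constraints forbid signalling to $t$ from its non-ancestors. Neither fact, nor their conjunction, produces a single affine decomposition in which the side-channel is attached only at $\mathrm{in}(t)$ \emph{while simultaneously} the $V\setminus\{t\}$ marginal factorises as a graph state over $G[V\setminus\{t\}]$. Reconciling a factorisation across one cut with non-signalling across another is precisely the content of the theorem --- already for the two-vertex edgeless graph it is the hard direction of \eqref{eq:tensor_intersection_decomposition} --- and the augmented system ``$G[V\setminus\{t\}]$ with an extra first-order output at $\mathrm{in}(t)$'' is not an instance of the theorem on fewer vertices, so your induction hypothesis does not cover it. Example~\ref{example:one_time_pad} illustrates how badly separate signalling constraints can fail to combine into a joint factorisation, so this step cannot be waved through.

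The paper avoids the peel-a-sink strategy for this step entirely and instead proves $\mathbf{RSiGr}\subseteq\mathbf{LoGr2}$ by a global, tomographic construction: for each vertex choose a basis $\{\rho^v_{i_v}\}$ extending a maximal independent set of causal states, with dual basis $\{e^v_{i_v}\}$ in $\sub{\catc}$; build explicit $\mathbf{2}$-local graph states $g_{\overrightarrow{i}}$ from conditional ``test-and-broadcast'' components; and expand the affine combination $\sum_{\overrightarrow{i}}\left(\fsub{h}\fatsemi\bigotimes_v e^v_{i_v}\right)\cdot\fsub{g_{\overrightarrow{i}}}$ to show it equals $\sum_{U}\alpha_U\cdot(h$ with the completely noisy channel applied on $V\setminus U)$, summed over down-closed $U$ with $\alpha_V$ invertible. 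The $U\subsetneq V$ terms are disposed of using the recursive signalling property together with the induction hypothesis, and one solves for $h$. This is also why the paper places $\mathbf{RSiGr}$, not $\mathbf{OrGr}$, immediately before the closing inclusion: the recursive constraint is what makes the proper marginals inductively tractable. In your ordering you would additionally need $\mathbf{OrGr}\subseteq\mathbf{SiGr}$ (easy, but not in your chain) to have access to that property at size $n$ --- and, more importantly, an argument of this global kind to replace the localisation step.
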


\begin{proof}
We will show a cycle of inclusions between the state sets of these types.

\textbf{Inclusion 1} $c_{\mathbf{LoGr2}_G^\Gamma} \subseteq c_{\mathbf{LoGr}_G^\Gamma}$: graph states using the constant edge interpretation are included in the set of all graph states.

\textbf{Inclusion 2} $c_{\mathbf{LoGr}_G^\Gamma} \subseteq c_{\mathbf{OrGr}_G^\Gamma}$: for any topological order $O$, we proceed in the manner of the $\Longleftarrow$ direction of Lemma \ref{lemma:acyclic_graph_morphisms_flat}: applying any causal effect at the maximal vertex $v$ gives a constant marginal over $V \setminus \left\{ v \right\}$ as the edges into it are first-order. We continue down the order of $O$ inductively, showing one-way signalling at each step.

\textbf{Inclusion 3} $c_{\mathbf{OrGr}_G^\Gamma} \subseteq c_{\mathbf{SiGr}_G^\Gamma}$: pick some state $h \in c_{\mathbf{OrGr}_G^\Gamma}$ and some down-closed set $U \subset V$. Since $U$ is down-closed, there must exist some topological sort $O$ which lists all vertices in $U$ and then all vertices in $V \setminus U$. Picking any choice of separable causal effects over $V \setminus U$, $h : \mathrm{perm}_V \left( \left( \bigseq_{v \in O \cap U} \Gamma(v) \right) < \left( \bigseq_{v \in O \setminus U} \Gamma(v) \right) \right)$ which is a subtype of $\mathrm{perm}_V \left( \left( \bigparr_{v \in U} \Gamma(v) \right) < \left( \bigparr_{v \in V \setminus U} \Gamma(v) \right) \right)$.

\textbf{Inclusion 4} $c_{\mathbf{SiGr}_G^\Gamma} \subseteq c_{\mathbf{RSiGr}_G^\Gamma}$: by induction on the number of vertices in $G$. The cases for zero and one vertex are trivial, so consider some $h \in c_{\mathbf{SiGr}_G^\Gamma}$ ($|V| > 1$) and an arbitrary down-closed $U \subset V$. It is sufficient to show that the constant marginal $h_U : \bigparr_{v \in U} \Gamma(v)$ is actually in $c_{\mathbf{SiGr}_{G[U]}^{\Gamma[U]}}$ (since $|U| < |V|$, this is a subset of $c_{\mathbf{RSiGr}_{G[U]}^{\Gamma[U]}}$ by the induction hypothesis).

Pick an arbitrary $U' \subset U$ which is down-closed wrt $G[U]$ (and hence down-closed wrt $G$). Using the signalling graph type property of $h$, it has a constant marginal $h_{U'} : \bigparr_{v \in U'} \Gamma(v)$ for any choice of local effects over $V \setminus U'$. Applying any choice of effects $\left\{ \pi_v : \Gamma(v) \to \mathbf{I} \right\}_{v \in U \setminus U'}$ on $h_U$ has a constant marginal, showing that $h_U : \left( \bigparr_{v \in U'} \Gamma(v) \right) <  \left( \bigparr_{v \in U \setminus U'} \Gamma(v) \right)$:
\begin{equation}
\begin{split}
& h_U \fatsemi \bigotimes_{v \in U} \left\{ \begin{array}{ll} \id_{\Gamma(v)} & v \in U' \\ \pi_v & v \in U \setminus U' \end{array} \right. \\
=& h \fatsemi \bigotimes_{v \in V} \left\{ \begin{array}{ll} \id_{\Gamma(v)} & v \in U' \\ \pi_v & v \in U \setminus U' \\ \ndiscard_{\Gamma(v)} & v \in V \setminus U \end{array} \right. \\
=& h_{U'}
\end{split}
\end{equation}
Taking the intersections of these types for each down-closed $U' \subset U$ gives $h_U : \mathbf{SiGr}_{G[U]}^{\Gamma[U]}$.

\textbf{Inclusion 5} $c_{\mathbf{RSiGr}_G^\Gamma} \subseteq c_{\mathbf{LoGr2}_G^\Gamma}$: This is by far the most involved direction. It is, again, by induction on the number of vertices where the cases for zero and one are trivial so we focus on the inductive case. Firstly, we will fix a bunch of notation. For each $v \in V$:
\begin{itemize}
\item Let $\left\{ \rho_{i_v}^v \right\}_{i_v} \subseteq \catc \left( I, \mathcal{U}\left( \Gamma(v)\right) \right)$ be the extension of some maximal linearly independent subset of $c_{\Gamma(v)}$ to a basis for $\mathcal{U}\left( \Gamma(v) \right)$. We normalise the added states s.t. $\forall i_v . \rho_{i_v}^v \fatsemi \ndiscard_{\Gamma(v)} = \id_I$.
\item Let $\left\{ e_{i_v}^v \right\}_{i_v} \subseteq \sub{\catc} \left( \mathcal{U}\left(\Gamma(v)\right), I \right)$ be the dual basis to $\left\{ \rho_{i_v}^v \right\}_{i_v}$. The normalisation of states guarantees that $\sum_i e_{i_v}^v \sim \fsub{\ndiscard_{\Gamma(v)}}$.
\item Let $\lambda_v$ be an invertible scalar and $\left\{ \sigma_{i_v}^v \right\}_{i_v} \subseteq \catc\left( I, \mathcal{U}\left(\Gamma(v)\right) \right)$ s.t. $\forall i_v . \rho_{i_v}^v + \sigma_{i_v}^v = \lambda_v \cdot \maxmix$ (e.g. apply Proposition \ref{prop:complement} for each $\rho_{i_v}^v$ to get $\lambda_{v,i_v}$ and take the maximum with respect to the partial order of \ref{apc:scalars}).
\item Let $\left\{t_i^v : \mathbf{2}^{\otimes |\mathrm{in}(v)|} \to \Gamma(v) \parr \mathbf{2}^{\parr |\mathrm{out}(v)|}\right\}_{i}$ be causal tests for $\left\{\rho_i^v\right\}_i$ (in the vain of Proposition \ref{prop:complement_binary_test}), conditioned on all predecessors in the graph and broadcast to all successors.
\begin{equation}
\begin{split}
& \left( \bigotimes_{u \in \mathrm{in}(v)} \iota_{x_u} \right) \fatsemi t_i^v := \\
& \begin{cases}
\mu_v \lambda_v^{-1} \cdot \left( \begin{array}{r@{}l} & \rho_i^v \otimes \bigotimes_{w \in \mathrm{out}(v)} \tbool \\ + & \sigma_i^v \otimes \bigotimes_{w \in \mathrm{out}(v)} \fbool \end{array} \right) & \forall u \in \mathrm{in}(v) . \iota_{x_u} = \tbool \\
\nmaxmix_{\Gamma(v)} \otimes \bigotimes_{w \in \mathrm{out}(v)} \fbool & \text{otherwise}
\end{cases}
\end{split}
\end{equation}
\item Let $\left\{ T_i^v\right\}_i$ be conditional preparations.
\begin{equation}
\left( \bigotimes_{u \in \mathrm{in}(v)} \iota_{x_u} \right) \fatsemi T_i^v := \begin{cases}
\rho_i^v & \forall u \in \mathrm{in}(v) . \iota_{x_u} = \tbool \\
\nmaxmix_{\Gamma(v)} & \text{otherwise}
\end{cases}
\end{equation}
For those $i$ where $\rho_i^v \in c_{\Gamma(v)}$, $T_i^v : \mathbf{2}^{\otimes |\mathrm{in}(v)|} \to \Gamma(v)$ is causal.
\end{itemize}

Let $\max G = \left\{ v \in V | \not\exists (v \to w) \in E \right\}$ be the set of maximal vertices of the graph (since $G$ is acyclic, this must be non-empty). Let $g_{\overrightarrow{i}} \in c_{\mathbf{LoGr2}_G^\Gamma}$ be the graph state over $G$ wrt the constant-$\mathbf{2}$ edge interpretation $\Delta_{\mathbf{2}}$ whose components are $\left\{ t_{i_v}^v | v \notin \max G\right\} \cup \left\{ T_{i_v}^v | v \in \max G \right\}$; note that this is only a valid graph state when $\rho_{i_v}^v \in c_{\Gamma(v)}$, otherwise $T_{i_v}^v$ is not causal for the appropriate component type.

Now, we take an arbitrary state $h : \mathbf{RSiGr}_G^\Gamma$.

For any $v \in \max V$, we claim the following:
\begin{equation}
\forall i_v . \rho_{i_v}^v \notin c_{\Gamma(v)} \Rightarrow \fsub{h} \fatsemi \left( \bigotimes_{u \in V} \begin{cases}
e_{i_v}^v & u = v \\
\id & u \neq v
\end{cases} \right) \sim 0
\end{equation}
$V \setminus \left\{v\right\}$ is down-closed, so the recursive signalling graph type tells us that $h : \mathrm{perm}_V \left( \mathbf{RSiGr}_{G[V \setminus \{v\}]}^{\Gamma[V \setminus \{v\}]} < \Gamma(v) \right)$. The definition of the sequence operator from Equation \ref{eq:seq_object_local_sum_def} breaks $\fsub{h}$ (up to permutation) into a sum $\sum_j f_j \otimes \fsub{f'_j}$ where $\forall j . f'_j \in c_{\Gamma(v)}$. Since the basis $\left\{\rho_{i_v}^v\right\}_{i_v}$ contains a maximal linearly-independent subset of $c_{\Gamma(v)}$, expanding each $f'_j$ into this basis only uses the causal states, i.e. if $\rho_{i_v}^v \notin c_{\Gamma(v)}$ then $\fsub{f'_j} \fatsemi e_{i_v}^v \sim 0$. This extends to $h$ by linearity.

Now, consider the following expression:
\begin{equation}\label{eq:canonical_affine_graph_sum}
\sum_{\overrightarrow{i}} \left( \fsub{h} \fatsemi \bigotimes_{v \in V} e_{i_v}^v \right) \cdot \fsub{g_{\overrightarrow{i}}}
\end{equation}
This is a linear combination of graph states wrt $\Delta_{\mathbf{2}}$, since any choice of $\overrightarrow{i}$ for which $g_{\overrightarrow{i}}$ is not valid ($\exists v \in \max V . \rho_{i_v}^v \notin c_{\Gamma(v)}$) would give a zero coefficient. The following shows it is affine (it is straightforward to show that $c_{\mathbf{RSiGr}_G^\Gamma} \subseteq c_{\bigparr_{v \in V} \Gamma(v)}$ so $h$ is normalised under local causal effects):
\begin{equation}
\begin{split}
\sum_{\overrightarrow{i}} \fsub{h} \fatsemi \bigotimes_{v \in V} e_{i_v}^v &\sim \fsub{h} \fatsemi \bigotimes_{v \in V} \sum_{i_v} e_{i_v}^v \\
& \sim \fsub{h} \fatsemi \bigotimes_{v \in V} \fsub{\ndiscard_{\Gamma(v)}} \\
& \sim \id_I
\end{split}
\end{equation}

The rest of the proof will show that expanding the definitions of each $t_{i_v}^v$ and $T_{i_v}^v$ in Equation \ref{eq:canonical_affine_graph_sum} equates $g_{\overrightarrow{i}}$ to another affine combination, where precisely one term in the sum is $h$ itself (with a non-zero coefficient) and the rest are other graph states wrt $\Delta_{\mathbf{2}}$. From this, we could solve the equation for $h$ to represent it as an affine combination of graph states wrt $\Delta_{\mathbf{2}}$, showing that it lives in $c_{\mathbf{LoGr2}_G^\Gamma}$.

For each index $i_v$ in the sum of \ref{eq:canonical_affine_graph_sum}, only $e_{i_v}^v$ and either $t_{i_v}^v$ or $T_{i_v}^v$ depend on it. Summing over that index locally gives the following decompositions:

\begin{equation}\label{eq:non_maximal_vertex_channel}
\begin{split}
& \sum_{i_v} e_{i_v}^v \fatsemi \fsub{\left( \bigotimes_{u \in \mathrm{in}(v)} \iota_{x_u} \right) \fatsemi t_{i_v}^v} \\
\sim & \left\{ \begin{array}{l@{\hskip-30pt}l}
\fsub{\mu_v \lambda_v^{-1}} \cdot \left( \begin{array}{rl} & \sum_{i_v} e_{i_v}^v \fatsemi \fsub{\rho_{i_v}^v \otimes \bigotimes_{w \in \mathrm{out}(v)} \tbool} \\ + & \sum_{i_v} e_{i_v}^v \fatsemi \fsub{\sigma_{i_v}^v \otimes \bigotimes_{w \in \mathrm{out}(v)} \fbool} \end{array} \right) & \\ & \forall u \in \mathrm{in}(v) . \iota_{x_u} = \tbool \\
\sum_{i_v} e_{i_v}^v \fatsemi \fsub{\nmaxmix_{\Gamma(v)} \otimes \bigotimes_{w \in \mathrm{out}(v)} \fbool} & \text{otherwise}
\end{array} \right. \\
\sim & \left\{ \begin{array}{l@{\hskip-35pt}l}
\left( \begin{array}{r@{}l} & \fsub{\mu_v \lambda_v^{-1}} \cdot \left( \sum_{i_v} e_{i_v}^v \fatsemi \fsub{\rho_{i_v}^v} \right) \otimes \fsub{\bigotimes_{w \in \mathrm{out}(v)} \tbool} \\ - & \fsub{\mu_v \lambda_v^{-1}} \cdot \left( \sum_{i_v} e_{i_v}^v \fatsemi \fsub{\rho_{i_v}^v} \right) \otimes \fsub{\bigotimes_{w \in \mathrm{out}(v)} \fbool} \\ + & \left( \sum_{i_v} e_{i_v}^v \fatsemi \fsub{\nmaxmix_{\Gamma(v)}} \right) \otimes \fsub{\bigotimes_{w \in \mathrm{out}(v)} \fbool} \end{array} \right) & \\ & \forall u \in \mathrm{in}(v) . \iota_{x_u} = \tbool \\
\left( \sum_{i_v} e_{i_v}^v \fatsemi \fsub{\nmaxmix_{\Gamma(v)}} \right) \otimes \fsub{\bigotimes_{w \in \mathrm{out}(v)} \fbool} & \text{otherwise}
\end{array} \right. \\
\sim & \begin{cases}
\left( \begin{array}{r@{}l} & \fsub{\mu_v \lambda_v^{-1}} \cdot \fsub{\id} \otimes \fsub{\bigotimes_{w \in \mathrm{out}(v)} \tbool} \\ - & \fsub{\mu_v \lambda_v^{-1}} \cdot \fsub{\id} \otimes \fsub{\bigotimes_{w \in \mathrm{out}(v)} \fbool} \\ + & \fsub{\ndiscard_{\Gamma(v)} \fatsemi \nmaxmix_{\Gamma(v)}} \otimes \fsub{\bigotimes_{w \in \mathrm{out}(v)} \fbool} \end{array} \right) & \forall u \in \mathrm{in}(v) . \iota_{x_u} = \tbool \\
\fsub{\ndiscard_{\Gamma(v)} \fatsemi \nmaxmix_{\Gamma(v)}} \otimes \fsub{\bigotimes_{w \in \mathrm{out}(v)} \fbool} & \text{otherwise}
\end{cases}
\end{split}
\end{equation}

\begin{equation}\label{eq:maximal_vertex_channel}
\begin{split}
& \sum_{i_v} e_{i_v}^v \fatsemi \fsub{\left( \bigotimes_{u \in \mathrm{in}(v)} \iota_{x_u} \right) \fatsemi T_{i_v}^i} \\
\sim & \begin{cases}
\sum_{i_v} e_{i_v}^v \fatsemi \fsub{\rho_{i_v}^v} & \forall u \in \mathrm{in}(v) . \iota_{x_u} = \tbool \\
\sum_{i_v} e_{i_v}^v \fatsemi \fsub{\nmaxmix_{\Gamma(v)}} & \text{otherwise}
\end{cases} \\
\sim & \begin{cases}
\id & \forall u \in \mathrm{in}(v) . \iota_{x_u} = \tbool \\
\fsub{\ndiscard_{\Gamma(v)} \fatsemi \nmaxmix_{\Gamma(v)}} & \text{otherwise}
\end{cases}
\end{split}
\end{equation}

In each case from these, we get an affine combination of a local channel applied at the vertex (either the identity or completely noisy channel) and some classical outcome passed to the next vertices. So in the complete expansion of \ref{eq:canonical_affine_graph_sum}, we get an affine combination of terms, each of which is $h$ with the completely noisy channel applied to some subset of vertices.
\begin{equation}
\sum_{\overrightarrow{i}} \left( \fsub{h} \fatsemi \bigotimes_{v \in V} e_{i_v}^v \right) \cdot \fsub{g_{\overrightarrow{i}}} \sim \sum_{U \subseteq V} \alpha_U \cdot \fsub{h \fatsemi \bigotimes_{v \in V} \left\{ \begin{array}{l@{\hskip-10pt}l}
\id & v \in U \\
\ndiscard_{\Gamma(v)} \fatsemi \nmaxmix_{\Gamma(v)} & \\ & v \notin U
\end{array} \right.}
\end{equation}
Not every such subset is generated, so some of these coefficients $\alpha_U$ may be zero.

The term with $U = V$ reduces completely to $h$ and has coefficient $\alpha_V = \prod_{v \notin \max V} \mu_v \lambda_v^{-1}$ which is invertible and non-zero, so we really can solve this equation for $h$.

All that remains is to show that the other terms in this expansion are in $c_{\mathbf{LoGr2}_G^\Gamma}$. Consider the term for some $U \subset V$ with a non-zero coefficient.

$U$ must be down-closed; for any $(u \to v) \in E$, if $v \in U$ then Equations \ref{eq:non_maximal_vertex_channel} and \ref{eq:maximal_vertex_channel} tell us $v$ must have received $\tbool$ from each of its predecessors, and the only term from the expansion at $u$ which yields $\tbool$ applies $\id$ at $u$, i.e. $u \in U$.

The $U$ case of the recursive signalling graph type of $h$ gives a sequence typing $h : \mathrm{perm}_v \left( \mathbf{RSiGr}_{G[U]}^{\Gamma[U]} < \bigparr_{v \in V \setminus U} \Gamma(v) \right)$, so after applying $\ndiscard_{\Gamma(v)}$ on each $v \in V \setminus U$, we are left with the constant marginal $h_U : \mathbf{RSiGr}_{G[U]}^{\Gamma[U]}$. Applying the induction hypothesis gives $h_U : \mathbf{LoGr2}_{G[U]}^{\Gamma[U]}$ and hence can be represented as an affine combination of graph states over $G[U]$ wrt $\Delta_{\mathbf{2}}$. Adding on the local states $\nmaxmix_{\Gamma(v)}$ at each $v \in V \setminus U$ gives an affine combination of graph states over $G$ with $\Delta_{\mathbf{2}}$.
\end{proof}

\begin{corollary}[Restatement of Corollary \ref{corollary:seq_as_2local}]
\begin{equation*}
\mathbf{A < B} = \left( A \otimes B, \left\{ \tikzfig{semi_local_HO_named_2} \middle| \begin{array}{l} h_{A2} \in c_{\mathbf{A \parr 2}}, \\ h_{2B} \in c_{\mathbf{2^* \parr B}} \end{array} \right\}^{**} \right)
\end{equation*}
\end{corollary}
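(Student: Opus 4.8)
The plan is to obtain this directly from Theorem~\ref{thrm:all_graph_types_equivalent} by specialising to the two-vertex graph with a single edge. First I would recall Equation~\ref{eq:seq_as_graph_type}: writing $G = (\{a,b\}, \{a \to b\})$ with vertex order $[a,b]$ and the local interpretation $\Gamma$ sending $a \mapsto \mathbf{A}$ and $b \mapsto \mathbf{B}$, we have $\mathbf{A < B} = \mathbf{Gr}_G^\Gamma$. By Theorem~\ref{thrm:all_graph_types_equivalent} this common graph type also equals the $\mathbf{2}$-local graph type $\mathbf{LoGr2}_G^\Gamma$, so it suffices to unfold the definition of $\mathbf{LoGr2}_G^\Gamma$ for this particular $G$ and check that it reproduces the claimed presentation.

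Next I would compute the data of Definition~\ref{def:graph_morphism} for $G$ with the constant edge interpretation $\Delta :: (a \to b) \mapsto \mathbf{2}$. Vertex $a$ has no incoming edges and one outgoing edge carrying $\mathbf{2}$, so its component typing is $\mathrm{Comp}_G^{\Gamma;\Delta}(a) = \mathbf{A \parr 2}$; vertex $b$ has one incoming edge contributing $\mathbf{2}^*$ and no outgoing edges, so $\mathrm{Comp}_G^{\Gamma;\Delta}(b) = \mathbf{2^* \parr B}$. The contraction morphism $\epsilon_G^{\Gamma;\Delta}$ is then a single cap $\epsilon_{\mathbf{2}} : \mathbf{2} \otimes \mathbf{2}^* \to \mathbf{I}$ joining the $\mathbf{2}$ leg of the $a$-component to the $\mathbf{2}^*$ leg of the $b$-component, together with identities on $\mathbf{A}$ and $\mathbf{B}$. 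Hence a graph state over $G$ for this $\Delta$ is exactly a morphism $(h_{A2} \otimes h_{2B}) \fatsemi \epsilon_G^{\Gamma;\Delta}$ with $h_{A2} : \mathbf{A \parr 2}$ and $h_{2B} : \mathbf{2^* \parr B}$ --- precisely the diagram in the statement.

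Finally I would note that the carrier object of $\mathbf{LoGr2}_G^\Gamma$ is $\bigotimes_{v} \mathcal{U}(\Gamma(v)) = A \otimes B$, matching that of $\mathbf{A < B}$, and that by Equation~\ref{eq:2local_graph_type_def} the state set of $\mathbf{LoGr2}_G^\Gamma$ is the double-dual closure of the set of those graph states --- precisely the $\{-\}^{**}$ in the claim. Combining these yields the stated equality of objects. I do not expect a genuine obstacle: all the real work lives in Theorem~\ref{thrm:all_graph_types_equivalent}, and the only points needing care are bookkeeping ones --- fixing the vertex order so that $\mathrm{perm}_V$ is trivial, checking the orientation of the cap so that the dual $\mathbf{2}^*$ sits on the $b$-side, and observing that flatness of the $\Delta$-indexed family is automatic by Lemma~\ref{lemma:acyclic_graph_morphisms_flat} since $G$ is a DAG.
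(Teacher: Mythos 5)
Your proposal is correct and follows exactly the paper's own route: the paper proves this corollary by citing Equation~\ref{eq:seq_as_graph_type} and Theorem~\ref{thrm:all_graph_types_equivalent} as ``immediate''. You have simply made explicit the unfolding of Definition~\ref{def:graph_morphism} and the $\mathbf{2}$-local graph type for the two-vertex, one-edge graph that the paper leaves implicit, and that bookkeeping is sound.
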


\begin{proof}
Immediate from Equation \ref{eq:seq_as_graph_type} and Theorem \ref{thrm:all_graph_types_equivalent}.
\end{proof}

\begin{corollary}[Restatement of Corollary \ref{corollary:series_parallel_graph_types}]
Given graph types over non-empty disjoint vertex sets $V$ and $V'$,
\begin{align*}
\mathbf{Gr}_{\left( V, E \right)}^{\Gamma} \otimes \mathbf{Gr}_{\left( V', E' \right)}^{\Gamma'} &= \mathbf{Gr}_{\left( V \cup V', E \cup E' \right)}^{\Gamma, \Gamma'} \\
\mathbf{Gr}_{\left( V, E \right)}^{\Gamma} < \mathbf{Gr}_{\left( V', E' \right)}^{\Gamma'} &= \mathbf{Gr}_{\left( V \cup V', E \cup E' \cup \left\{ v \to v' \middle| v \in V, v' \in V' \right\} \right)}
\end{align*}
\end{corollary}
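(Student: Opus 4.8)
The plan is to prove each of the two equations by invoking Theorem~\ref{thrm:all_graph_types_equivalent} to switch to whichever of the equivalent presentations of a graph type is most convenient for that equation: the local graph type $\mathbf{LoGr}$ for the tensor equation and the ordered graph type $\mathbf{OrGr}$ for the sequence equation.

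For the $\otimes$ equation, the key observation is that, since $V$ and $V'$ are disjoint and $E \cup E'$ contains no arc between them, an edge interpretation $\Delta$ for $(V \cup V', E \cup E')$ is nothing but a pair of edge interpretations $\Delta|_E$, $\Delta|_{E'}$; moreover the component typing of Definition~\ref{def:graph_morphism} for a vertex of $V$ only mentions edges of $E$ (and symmetrically for $V'$), and the contraction morphism factors, up to the evident permutation, as $\epsilon_{(V,E)}^{\Gamma;\Delta|_E} \otimes \epsilon_{(V',E')}^{\Gamma';\Delta|_{E'}}$. Consequently a morphism is a graph state over $(V \cup V', E \cup E')$ precisely when it is $g_1 \otimes g_2$ for a graph state $g_1$ over $(V, E)$ and a graph state $g_2$ over $(V', E')$. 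Writing $c_{\mathbf{Gr}_{(V,E)}^{\Gamma}}$ and $c_{\mathbf{Gr}_{(V',E')}^{\Gamma'}}$ as the closures of their respective sets of graph states, the separable-state definition of $\otimes$ (Equation~\ref{eq:tensor_object_separable_def}) together with the ``an affine combination of affine combinations is an affine combination'' argument already used for the $(\otimes,\cup)$ case of Lemma~\ref{lemma:union_intersection_rules} in Appendix~\ref{sec:caus_construction_proofs} shows that $c_{\mathbf{Gr}_{(V,E)}^{\Gamma} \otimes \mathbf{Gr}_{(V',E')}^{\Gamma'}}$ is the closure of $\{\, g_1 \otimes g_2 \mid g_1 \text{ a graph state over } (V,E),\ g_2 \text{ a graph state over } (V',E') \,\}$, which by the previous observation is exactly $c_{\mathbf{Gr}_{(V \cup V', E \cup E')}^{\Gamma,\Gamma'}}$.

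For the $<$ equation I would use $\mathbf{OrGr}$. Because the combined edge set contains the arc $v \to v'$ for every $v \in V$ and $v' \in V'$, a list of $V \cup V'$ is a topological sort of $(V \cup V', E \cup E' \cup \{v \to v' : v \in V, v' \in V'\})$ if and only if it is a topological sort of $(V, E)$ followed by a topological sort of $(V', E')$: the only constraint between the two blocks is that all of $V$ precedes all of $V'$, and within each block the constraints are exactly $E$ and $E'$. Using associativity of $<$, the term in the intersection defining $\mathbf{OrGr}$ that is indexed by such a concatenation $O \cdot O'$ equals $\mathrm{perm}_V\big( (\Gamma(O_1) < \cdots) < (\Gamma'(O'_1) < \cdots) \big)$. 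Distributing intersection over $<$ in each of its two arguments — the first slot by the $<$ case of Equation~\ref{eq:iu_monoidal_distribution} in Lemma~\ref{lemma:union_intersection_rules}, the second slot by the $>$ case of the same lemma together with $\mathbf{A < B} = \mathbf{B > A}$ — lets me pull both intersections outside, giving $\big(\bigcap_{O} \mathrm{perm}(\Gamma(O_1) < \cdots)\big) < \big(\bigcap_{O'} \mathrm{perm}(\Gamma'(O'_1) < \cdots)\big)$, i.e. $\mathbf{OrGr}_{(V,E)}^{\Gamma} < \mathbf{OrGr}_{(V',E')}^{\Gamma'}$, which equals $\mathbf{Gr}_{(V,E)}^{\Gamma} < \mathbf{Gr}_{(V',E')}^{\Gamma'}$ by Theorem~\ref{thrm:all_graph_types_equivalent}.

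The conceptual content — that graph states and topological sorts both decompose across a (near-)disjoint union — is immediate, so I expect the only real obstacle to be bookkeeping: verifying that all the terms being intersected in the $<$ argument are set-compatible in the sense of Definition~\ref{def:union_intersection} (so that the iterated appeals to Lemma~\ref{lemma:union_intersection_rules} are legitimate), and that the various $\mathrm{perm}_V$ reorderings and unitors are chosen compatibly on the two sides. Set-compatibility is routine here because every graph type built on a fixed family $\Gamma$ has the same carrier $\bigotimes_v \mathcal{U}(\Gamma(v))$ and the same normalisation scalars, its $\nmaxmix$ and $\ndiscard$ being the evident tensors of the local ones.
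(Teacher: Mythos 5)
Your proposal is correct and follows essentially the same route as the paper's proof: the $\otimes$ case via the local graph type definition (graph states over the disjoint union are exactly parallel compositions of graph states, then close under affine combinations), and the $<$ case via the ordered graph type definition (topological sorts of the combined graph are exactly concatenations, then distribute $\cap$ over $<$ using Equation~\ref{eq:iu_monoidal_distribution}). Your added bookkeeping — deriving right-slot distribution from the $>$ case and checking set-compatibility — is a correct filling-in of details the paper leaves implicit.
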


\begin{proof}
For $\otimes$, consider the local graph type definition. Graph states over the combined graph $\left( V \cup V', E \cup E' \right)$ are precisely the parallel composition of a graph state over $\left( V, E \right)$ and one over $\left( V', E' \right)$. Each type just takes affine combinations of these.

For $<$, use the ordered signalling type definition. Each topological ordering of $\left( V \cup V', E \cup E' \cup \left\{ v \to v' \middle| v \in V, v' \in V' \right\} \right)$ is a topological ordering of $\left( V, E \right)$ followed by a topological ordering of $\left( V', E' \right)$. Distribution between $<$ and $\cap$ (Equation \ref{eq:iu_monoidal_distribution}) allows us to combine the quantification over the separate graphs into quantification over the combined graph.
\end{proof}

\begin{lemma}[Restatement of Lemma \ref{lemma:transitive_closure_graph_type_without_pruning}]
$\mathbf{Gr}_G^\Gamma = \mathbf{Gr}_{G^+}^\Gamma$
\end{lemma}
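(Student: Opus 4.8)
The plan is to exploit the flexibility granted by Theorem~\ref{thrm:all_graph_types_equivalent} and work with whichever of the five equivalent definitions of a graph type makes the statement most transparent. The natural choice is the ordered graph type $\mathbf{OrGr}$ (equivalently the signalling graph type $\mathbf{SiGr}$), since each is defined as an intersection indexed by a purely combinatorial feature of the graph -- its topological sorts, respectively its down-closed subsets -- and both of these features are manifestly invariant under passing to the transitive closure.

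First I would observe that $G^+$ is again a DAG, so that $\mathbf{Gr}_{G^+}^\Gamma$ is even well-defined: a cycle $v_1 \to \cdots \to v_k \to v_1$ with each step in $E^+$ would, by definition of transitive closure, unfold into nonempty directed paths in $G$ between consecutive vertices, and concatenating them yields a directed cycle in $G$, contradicting acyclicity. In particular $\mathrm{sort}(G^+) \neq \emptyset$.

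Next comes the one combinatorial observation the whole proof rests on: $\mathrm{sort}(G) = \mathrm{sort}(G^+)$. A total order $[v_1,\ldots,v_n]$ lies in $\mathrm{sort}(G)$ iff $u$ precedes $v$ for every $(u \to v) \in E$; since a total order is transitive, this holds iff $u$ precedes $v$ whenever there is a directed path from $u$ to $v$ in $G$, i.e. whenever $(u \to v) \in E^+$, which is exactly membership in $\mathrm{sort}(G^+)$. (One inclusion uses $E \subseteq E^+$ directly; the other uses that each edge of $E^+$ is witnessed by a path in $G$ along which one argues step by step.) Consequently the intersections in Equation~\ref{eq:ordered_graph_type_def} defining $\mathbf{OrGr}_G^\Gamma$ and $\mathbf{OrGr}_{G^+}^\Gamma$ are indexed by the same set, each term uses the same $\mathrm{perm}_V$ (which depends only on the fixed vertex ordering, not on the edges), and hence the two objects are literally equal. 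Applying Theorem~\ref{thrm:all_graph_types_equivalent} once more, $\mathbf{Gr}_G^\Gamma = \mathbf{OrGr}_G^\Gamma = \mathbf{OrGr}_{G^+}^\Gamma = \mathbf{Gr}_{G^+}^\Gamma$. (Alternatively one may run the identical argument with $\mathbf{SiGr}$, noting that $U \subseteq V$ is down-closed with respect to $G$ iff it is down-closed with respect to $G^+$, again by a short induction along paths.)

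There is essentially no serious obstacle here; the only point requiring a little care is to pick a definition of the graph type for which the controlling graph datum depends only on the reachability relation of $G$. Using $\mathbf{LoGr}$ directly would instead force one to show that graph states over $G$ and over $G^+$ span the same affine-closed set, which is true but requires more work (one must see that the extra transitive edges can be simulated, up to affine combination, by the existing factorisations), so the route through $\mathbf{OrGr}$ or $\mathbf{SiGr}$ is preferable.
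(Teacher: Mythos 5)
Your proposal is correct and takes essentially the same approach as the paper: the paper's proof is exactly your parenthetical alternative, invoking the signalling graph type definition and the fact that $U \subseteq V$ is down-closed with respect to $G$ iff it is down-closed with respect to $G^+$. Your primary route via $\mathbf{OrGr}_G^\Gamma$ and the invariance of $\mathrm{sort}(G)$ under transitive closure is an equally valid instance of the same idea, with the well-definedness check and the explicit verification of the combinatorial fact being welcome additions.
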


\begin{proof}
This is immediate from the signalling graph type definition, since $U \subseteq V$ is down-closed wrt $G$ iff it is down-closed wrt $G^+$.
\end{proof}

\begin{corollary}[Restatement of Corollary \ref{corollary:scalar_graph_types}]
$\mathbf{Gr}_{\left( V, E \right)}^{\left\{ v \mapsto \mathbf{I} \right\}_{v \in V}} \cong \mathbf{I}$
\end{corollary}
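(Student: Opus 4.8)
The plan is to reduce the graph to a total order and collapse all the connectives. By Theorem \ref{thrm:all_graph_types_equivalent} we may compute $\mathbf{Gr}_{(V,E)}^{\{v \mapsto \mathbf{I}\}}$ using the ordered graph type definition, so that it is the intersection $\bigcap_{[v_1,\ldots,v_n] \in \mathrm{sort}(G)} \mathrm{perm}_V\!\left(\Gamma(v_1) < \cdots < \Gamma(v_n)\right)$ with $\Gamma(v_i) = \mathbf{I}$ for every $i$. Each term in this intersection is therefore $\mathrm{perm}_V$ applied to an $n$-fold sequence $\mathbf{I} < \cdots < \mathbf{I}$.

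The key observation is that $\mathbf{I}$ is both a first-order object and a first-order dual object: $c_{\mathbf{I}} = \{\id_I\}$ gives $c_{\mathbf{I}}^* = \{\id_I\}$, so $\left|c_{\mathbf{I}}^*\right| = 1$ (and dually). Hence Proposition \ref{prop:first_order_equations} (in particular Equation \ref{eq:fo_fo_parr_is_tensor}, or just Equation \ref{eq:fo_tensor_is_one_way} with $\mathbf{A^1} = \mathbf{B} = \mathbf{I}$) gives $\mathbf{I} < \mathbf{I} = \mathbf{I} \otimes \mathbf{I}$, which is $\mathbf{I}$ by the monoidal unit property of Section \ref{sec:monoidal_unit}. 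An easy induction on $n$, together with associativity/unitality of $<$ from the $\bv$-structure, then yields $\mathbf{I} < \cdots < \mathbf{I} = \mathbf{I}^{\otimes n}$ (the carrier being $I^{\otimes n}$ in standard order, with the singleton state set $\{\id\}$, since the closure of a single causal state is itself by Proposition \ref{prop:affine}). Applying $\mathrm{perm}_V$ merely reorders tensor factors that are all equal to $I$, so every term of the intersection is literally the same object $\bigl(I^{\otimes |V|}, \{\id\}\bigr)$.

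Consequently the intersection collapses to that single object, which is isomorphic to $\mathbf{I} = (I, \{\id_I\})$ via the canonical structural isomorphism $I^{\otimes |V|} \cong I$ of $\catc$ built from the right unitor: it is causal in both directions because it carries the unique state to the unique state. (The degenerate cases $|V| \in \{0,1\}$ are immediate from Equations \ref{eq:empty_graph_type}--\ref{eq:singleton_graph_type}.) I do not expect any genuine obstacle here: all the real content is already packaged into Theorem \ref{thrm:all_graph_types_equivalent} and the first-order collapse of Proposition \ref{prop:first_order_equations}, so the only care needed is the routine bookkeeping of coherence isomorphisms and the fact that the statement asks only for $\cong \mathbf{I}$ rather than strict equality. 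An alternative, if one prefers to avoid the ordered form, is a squeeze argument: monotonicity of the signalling graph type in $E$ (immediate from the $\mathbf{SiGr}$ definition, since adding edges removes down-closed subsets and hence enlarges the intersection) sandwiches $\mathbf{Gr}_{(V,E)}^{\{v \mapsto \mathbf{I}\}}$ between $\bigotimes_{v \in V}\mathbf{I}$ and $\mathbf{I} < \cdots < \mathbf{I}$, both of which equal $\mathbf{I}^{\otimes |V|} \cong \mathbf{I}$.
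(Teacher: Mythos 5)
Your proposal is correct and follows essentially the same route as the paper: the paper's proof is precisely the one-line computation via the ordered graph type definition, $\mathbf{Gr}_{(V,E)}^{\{v \mapsto \mathbf{I}\}} = \bigcap_{O} \mathrm{perm}_V\left(\bigseq_{v \in O} \mathbf{I}\right) = \bigparr_{v} \mathbf{I} \cong \mathbf{I}$, collapsing the sequence of units exactly as you do. Your extra detail (invoking Proposition \ref{prop:first_order_equations} for $\mathbf{I} < \mathbf{I} = \mathbf{I} \otimes \mathbf{I}$) and the alternative squeeze argument are both sound but not needed beyond what the paper records.
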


\begin{proof}
$\mathbf{Gr}_{\left( V, E \right)}^{\left\{ v \mapsto \mathbf{I} \right\}_{v \in V}} = \bigcap_{O \text{ topological sort of } \left( V, E \right)} \mathrm{perm}_V \left( \bigseq_{v \in O} \mathbf{I} \right) = \bigparr_{v \in O} \mathbf{I} \cong \mathbf{I}$
\end{proof}

\begin{lemma}[Restatement of Lemma \ref{lemma:unit_elimination_from_graph_types}]
$\mathbf{Gr}_{G}^{\Gamma, v \mapsto I} \cong \mathbf{Gr}_{G^+ \setminus \left\{ v \right\}}^{\Gamma}$
\end{lemma}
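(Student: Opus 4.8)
The plan is to reduce to the transitively closed case and then match the two graph types term by term using the signalling graph type definition (Definition~\ref{def:signalling_graph_type}). First I would apply Lemma~\ref{lemma:transitive_closure_graph_type_without_pruning} to replace $G$ by $G^+$ on the left, reducing to the case $G = G^+$; one checks immediately that $G^+ \setminus \{v\}$ is then transitively closed too (a surviving composite $u \to w \to x$ with $u, w, x \neq v$ has its shortcut $u \to x$ already in $G^+$, and it avoids $v$), so the right-hand side is unambiguous. Writing the left side as $\mathbf{SiGr}_G^{\Gamma, v \mapsto I}$, it is the intersection over the down-closed subsets $U$ of the vertex set $V \cup \{v\}$ of the cut objects $\mathrm{perm}_{V \cup \{v\}}\!\left( \bigparr_{u \in U} \Gamma(u) < \bigparr_{w \notin U} \Gamma(w) \right)$, where $\Gamma(v) = \mathbf{I}$; it is harmless to include the full vertex set here, since that term only re-adds the supertype $\bigparr_{u} \Gamma(u)$ which is already the term for $U = \emptyset$.

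The key point is that $\mathbf{I}$ is the common unit of $\otimes$, $\parr$ and $<$, so each such cut object is isomorphic --- via a unitor absorbed into $\mathrm{perm}$ --- to the cut object $\mathrm{perm}_{V}\!\left( \bigparr_{u \in U \setminus \{v\}} \Gamma(u) < \bigparr_{w \in V \setminus U} \Gamma(w) \right)$ of $G \setminus \{v\}$ for the down-closed set $U \setminus \{v\}$, and every one of these isomorphisms is a restriction of the single canonical carrier isomorphism $\bigotimes_{u \in V \cup \{v\}} \mathcal{U}(\Gamma(u)) \cong \bigotimes_{u \in V} \mathcal{U}(\Gamma(u))$ that deletes the trivial tensor factor $\mathcal{U}(\mathbf{I}) = I$ at $v$. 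Granting this, it remains to show the two families of cuts coincide as sets: $\{\, U \setminus \{v\} \mid U \text{ down-closed in } G \,\}$ equals the set of all down-closed subsets of $G \setminus \{v\}$. The inclusion $\subseteq$ is immediate, since deleting the edges incident to $v$ cannot break down-closure. For $\supseteq$ I would take a down-closed $U'$ of $G \setminus \{v\}$ and distinguish two cases: if $U'$ contains no out-neighbour of $v$, then $U'$ is already down-closed in $G$ and equals $U' \setminus \{v\}$; otherwise $U'$ contains some $w$ with $v \to w$, and then for every in-neighbour $u$ of $v$ transitivity gives $u \to w$ in $G \setminus \{v\}$, forcing $u \in U'$, so $U' \cup \{v\}$ is down-closed in $G$ and restricts to $U'$.

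This $\supseteq$ step is the main obstacle and the only part that is not routine bookkeeping; it is also exactly where the transitive closure is essential, since without it a down-closed set of $G \setminus \{v\}$ could contain a successor of $v$ while omitting a predecessor of $v$ and so fail to come from any cut of $G$. Once the two families of cuts agree, $\mathbf{SiGr}_G^{\Gamma, v \mapsto I}$ and $\mathbf{SiGr}_{G \setminus \{v\}}^{\Gamma}$ are intersections of the same collection of objects transported along one carrier isomorphism, hence isomorphic; combining this with Lemma~\ref{lemma:transitive_closure_graph_type_without_pruning} (and Theorem~\ref{thrm:all_graph_types_equivalent}, which licenses passing freely between the definitions and in particular using $\mathbf{SiGr}$) yields $\mathbf{Gr}_G^{\Gamma, v \mapsto I} \cong \mathbf{Gr}_{G^+ \setminus \{v\}}^{\Gamma}$ as required.
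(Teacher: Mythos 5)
Your proof is correct, but it takes a different (parallel) route from the paper's. The paper disposes of this lemma in one line using the \emph{ordered} graph type definition: since $\Gamma(v) = \mathbf{I}$ is the unit of $<$, each chain $\Gamma(v_1) < \cdots < \Gamma(v_n)$ indexed by a topological sort of $G$ collapses via a unitor to the chain indexed by its restriction to $V \setminus \{v\}$, and the restrictions of topological sorts of $G$ are exactly the topological sorts of $G^+ \setminus \{v\}$. You instead work with the \emph{signalling} definition and match down-closed cuts rather than linear orders; the combinatorial heart of your argument (a down-closed $U'$ of $G^+ \setminus \{v\}$ lifts to a down-closed set of $G^+$ by adding $v$ precisely when $U'$ meets the out-neighbourhood of $v$, using transitivity to pull in the in-neighbours) is the cut-level analogue of the paper's sort-level observation, and your case analysis is sound. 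Both approaches are licensed by Theorem~\ref{thrm:all_graph_types_equivalent}; yours is longer but more explicit about the points the paper leaves implicit, in particular the $U = V$ edge case in the intersection and the fact that all the unitors are restrictions of a single carrier isomorphism, which is what lets you conclude that the two intersections are transported onto one another. No gaps.
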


\begin{proof}
This is straightforward from the ordered graph type definition, since an ordering of $V \setminus \left\{ v \right\}$ is topological wrt $G$ iff it is topological wrt $G^+ \setminus \left\{ v \right\}$.
\end{proof}

\begin{lemma}\label{lemma:transitive_graph_edge_pruning}
Given two transitive DAGs $G = \left( V, E \right)$ and $G' = \left( V, E \cup (u \to v) \right)$ differing by a single edge, $\mathbf{Gr}_G^\Gamma = \mathbf{Gr}_{G'}^\Gamma$ iff $\Gamma(u)$ or $\Gamma(v)^*$ is first-order.
\end{lemma}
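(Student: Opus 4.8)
The plan is to split the biconditional into two inclusions of state spaces and observe that one of them is free. Working with the signalling presentation $\mathbf{SiGr}$ (legitimate by Theorem~\ref{thrm:all_graph_types_equivalent}), since $E\subseteq E'$ every $U\subseteq V$ that is down-closed for $G'$ is down-closed for $G$, so $\mathbf{SiGr}_{G'}^\Gamma$ is an intersection over a subfamily of the terms defining $\mathbf{SiGr}_G^\Gamma$; hence $c_{\mathbf{Gr}_G^\Gamma}\subseteq c_{\mathbf{Gr}_{G'}^\Gamma}$ unconditionally. The content is to show $c_{\mathbf{Gr}_{G'}^\Gamma}\subseteq c_{\mathbf{Gr}_G^\Gamma}$ precisely when $\Gamma(u)$ is first-order or $\Gamma(v)^*$ is first-order. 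Two structural facts will be used throughout: because $G'$ is transitive, every $E$-predecessor of $u$ is an $E$-predecessor of $v$ and every $E$-successor of $v$ is an $E$-successor of $u$; and because $G$ is transitive while $G'$ is acyclic, $u$ is not a $G$-ancestor of $v$ (else $u\to v\in E$ already) and $v$ is not a $G$-ancestor of $u$. A consequence used below is that for any down-closed $U$ with $v\in U$, $u\notin U$, the vertex $v$ has no $E$-successor inside $U$ (such a successor $w$ would force $u\to w\in E$ by transitivity of $G'$, hence $u\in U$, a contradiction).

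For the ``if'' direction, take $h\in c_{\mathbf{Gr}_{G'}^\Gamma}$ and a down-closed $U\subset V$ for $G$; if $U$ is also down-closed for $G'$ then $h$ already meets its constraint, so assume otherwise, which forces $v\in U$, $u\notin U$. If $\Gamma(u)=\mathbf{A^1}$ is first-order, set $U^+:=U\cup\{u\}$, which is down-closed for $G$ (its $E$-predecessors of $u$ lie among those of $v$, hence in $U$) and for $G'$; writing $P=\bigparr_{w\in U}\Gamma(w)$, $Q=\bigparr_{w\in V\setminus U^+}\Gamma(w)$ and suppressing $\mathrm{perm}_V$, the $U^+$-constraint gives $h\in c_{(P\parr\mathbf{A^1})<Q}$. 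By Proposition~\ref{prop:first_order_equations} we have $P\parr\mathbf{A^1}=P<\mathbf{A^1}$ and $\mathbf{A^1}<Q=\mathbf{A^1}\otimes Q$, so by associativity of $<$ this reads $h\in c_{P<(\mathbf{A^1}\otimes Q)}$; since $\mathbf{A^1}\otimes Q\Rightarrow\mathbf{A^1}\parr Q$ by the mixing transformation of Equation~\ref{eq:mixing_nat_trans} and $<$ is functorial, $h\in c_{P<(\mathbf{A^1}\parr Q)}$, which is exactly the $U$-constraint. (Alternatively one argues at the level of marginals: $c_{(\mathbf{A^1}\parr Q)^*}$ is affine-spanned by effects $\ndiscard_{\mathbf{A^1}}\otimes\pi_Q$ since $c_{(\mathbf{A^1})^*}$ is a singleton and $c_{Q^*}$ is affine-closed, Proposition~\ref{prop:affine}; applying such an effect to $h$ is ``apply $\pi_Q$, then discard $\mathbf{A^1}$'', which by $h\in c_{(P\parr\mathbf{A^1})<Q}$ yields a state of $P$ independent of $\pi_Q$.) If instead $\Gamma(v)^*$ is first-order, use $U^-:=U\setminus\{v\}$, which is down-closed for $G$ (its elements have no $E$-predecessor equal to $v$, by the consequence above) and for $G'$; the $U^-$-constraint gives $h\in c_{\,(\bigparr_{w\in U\setminus\{v\}}\Gamma(w))<(\Gamma(v)\parr\bigparr_{w\in V\setminus U}\Gamma(w))}$, and Proposition~\ref{prop:first_order_equations} rewrites $\Gamma(v)\parr Q'=\Gamma(v)<Q'$ and $P'<\Gamma(v)=P'\otimes\Gamma(v)$, so after associativity and one more use of mixing we reach $h\in c_{(P'\parr\Gamma(v))<Q'}$, the $U$-constraint. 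Ranging over all down-closed $U$ gives $h\in c_{\mathbf{Gr}_G^\Gamma}$.

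For the ``only if'' direction, suppose $\Gamma(u)$ is not first-order and $\Gamma(v)^*$ is not first-order, i.e. $|c_{\Gamma(u)}^*|>1$ and $|c_{\Gamma(v)}|>1$; I exhibit $h\in c_{\mathbf{Gr}_{G'}^\Gamma}\setminus c_{\mathbf{Gr}_G^\Gamma}$. Let $U$ be the $G$-down-closure of $\{v\}$, so $v\in U$ and $u\notin U$. Build a graph state $h$ over $G'$ with the edge interpretation sending $u\to v$ to $\mathbf{2}$ and every other edge to $\mathbf{I}$: at each vertex other than $u,v$ take the component $\nmaxmix$ (padded with trivial states on the $\mathbf{I}$-edges); at $u$ take the component supplied by Lemma~\ref{lemma:binary_channel_encoding_morphism} applied to the pair $(\Gamma(u)^*,\mathbf{2})$ — both not first-order dual, as $|c_{\Gamma(u)^*}|=|c_{\Gamma(u)}^*|>1$ — viewed as a state of $\Gamma(u)\parr\mathbf{2}$ actively signalling from the $\Gamma(u)$-interface to the outgoing bit; at $v$ take the component from Lemma~\ref{lemma:binary_channel_encoding_morphism} applied to $(\mathbf{2},\Gamma(v))$, viewed as a state of $\Gamma(v)\parr\mathbf{2}^*$ forwarding the incoming bit into the $\Gamma(v)$-interface. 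Then $h$ is a graph state over $G'$, so $h\in c_{\mathbf{LoGr}_{G'}^\Gamma}=c_{\mathbf{Gr}_{G'}^\Gamma}$. Now apply effects over $V\setminus U$: at $u$ apply either $\ndiscard_{\Gamma(u)}$ or the distinguished effect from the construction, and $\ndiscard$ everywhere else in $V\setminus U$. By the active-signalling clause of Lemma~\ref{lemma:binary_channel_encoding_morphism} the outgoing bit at $u$ is $\nmaxmix_{\mathbf{2}}$ in the first case and a nontrivial affine mixture of $\nmaxmix_{\mathbf{2}}$ with the distinguished bit-state in the second; pushing this through the contraction and the linear forwarding component at $v$, the resulting $\Gamma(v)$-marginal is $\nmaxmix_{\Gamma(v)}$ in the first case and, by linearity and invertibility of the mixing scalars, a state distinct from $\nmaxmix_{\Gamma(v)}$ in the second. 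Hence the marginal of $h$ over $U$ depends on the chosen effects over $V\setminus U$, so $h$ fails the $U$-term of $\mathbf{SiGr}_G^\Gamma$ and $h\notin c_{\mathbf{Gr}_G^\Gamma}$; thus $\mathbf{Gr}_G^\Gamma\neq\mathbf{Gr}_{G'}^\Gamma$.

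I expect the main obstacle to be bookkeeping rather than any conceptual hurdle: in the ``if'' direction one must confirm that the neighbouring sets $U\cup\{u\}$ and $U\setminus\{v\}$ genuinely stay down-closed in both $G$ and $G'$ (this is where the transitivity facts, especially ``$v$ has no $E$-successor in $U$'', are essential), check the normalisation scalars, and handle the $\mathrm{perm}_V$ relabelling; in the ``only if'' direction one must track the scalars through Lemma~\ref{lemma:binary_channel_encoding_morphism} twice to be sure the composed signal is genuinely nontrivial. An alternative route to ``if'' collapses $U^+$ and its complement into two super-vertices via graph substitution (Proposition~\ref{prop:graph_substitution_in_graph_types}) and invokes the two-vertex instances $\mathbf{A^1}\otimes\mathbf{B}=\mathbf{A^1}<\mathbf{B}$ and $\mathbf{B}\otimes(\mathbf{A^1})^*=\mathbf{B}<(\mathbf{A^1})^*$ of Proposition~\ref{prop:first_order_equations}, but I would keep the direct signalling argument above as the main line.
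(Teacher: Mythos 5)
Your proof is correct and follows the same overall strategy as the paper's: the inclusion $c_{\mathbf{Gr}_G^\Gamma}\subseteq c_{\mathbf{Gr}_{G'}^\Gamma}$ is free from the signalling presentation; sufficiency is obtained by sandwiching a $G$-down-closed $U$ (with $v\in U$, $u\notin U$) between the $G'$-down-closed sets $U\cup\{u\}$ and $U\setminus\{v\}$, whose down-closure rests on exactly the transitivity facts you isolate; and necessity is witnessed by an actively signalling state built from Lemma~\ref{lemma:binary_channel_encoding_morphism} together with the down-closure of $\{v\}$. Two steps are executed differently. For sufficiency the paper argues directly on marginals (the unique effect $\ndiscard_{\Gamma(u)}$ in the first case; a basis of local causal effects at $v$ plus tomography in the second), whereas your main line is type-algebraic, via Proposition~\ref{prop:first_order_equations}, associativity of $<$, and the mixing map; both are valid, and your parenthetical marginal argument is essentially the paper's. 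For necessity the paper takes a single bipartite state in $c_{\Gamma(u)<\Gamma(v)}\setminus c_{\Gamma(u)>\Gamma(v)}$ in parallel with separable states elsewhere and checks the $G'$-signalling constraints directly, whereas you build a genuine graph state over $G'$ factoring the signal through a $\mathbf{2}$-labelled edge; this buys membership in $c_{\mathbf{Gr}_{G'}^\Gamma}$ for free via the local graph type definition, at the cost of the (routine but necessary) verification that the composite of the two noisy encodings from Lemma~\ref{lemma:binary_channel_encoding_morphism} still actively signals --- the relevant coefficient is a product of invertible scalars, so this goes through, exactly as in the cyclic construction in the proof of Lemma~\ref{lemma:acyclic_graph_morphisms_flat}.
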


\begin{proof}
Again, we will focus on the signalling graph type definition. $c_{\mathbf{Gr}_G^\Gamma} \subseteq c_{\mathbf{Gr}_{G'}^\Gamma}$ is immediate since any $U \subset V$ that is down-closed wrt $G'$ is also down-closed wrt $G$. We will just show that the opposite direction of inclusion holds iff $\Gamma(u)$ or $\Gamma(v)^*$ is first-order.

$\Longrightarrow$: aiming for the contrapositive, suppose neither of $\Gamma(u)$ or $\Gamma(v)^*$ is first-order, i.e. $\left| c_{\Gamma(u)}^* \right|, \left| c_{\Gamma(v)} \right| > 1$. Applying Lemma \ref{lemma:binary_channel_encoding_morphism} gives us some $f \in c_{\Gamma(u) < \Gamma(v)} \setminus c_{\Gamma(u) > \Gamma(v)}$ which actively signals from $u$ to $v$. Let $h$ be a parallel composition of $f$ with some separable states $h_{v'} \in c_{\Gamma(v')}$ for each $v' \neq u, v$. $h \in c_{\mathbf{SiGr}_{G'}^\Gamma}$ because any down-closed set wrt $G'$ containing $v$ also contains $u$. However, $\mathrm{in}_{G}(v)$ is down-closed wrt $G$ and does not contain $u$, so the active signalling of $f$ means $h \notin c_{\mathbf{SiGr}_G^\Gamma}$.

$\Longleftarrow$: Consider an arbitrary $h \in c_{\mathbf{SiGr}_{G'}^\Gamma}$, so it is non-signalling into any subset down-closed wrt $G'$. Then consider some $U \subset V$ that is down-closed wrt $G$ but not wrt $G'$, i.e. $v \in U$ but $u \notin U$. However, transitivity of $G$ and $G'$ implies that both $U \cup \left\{ u \right\}$ and $U \setminus \left\{ v \right\}$ are down-closed wrt $G'$:

\begin{itemize}
\item $U \cup \left\{ u \right\}$: for any $(w \to u) \in G'$, $(w \to v) \in G'$ by transitivity and hence $w \in U$ by down-closure of $U$.
\item $U \setminus \left\{ v \right\}$: for any $(v \to w) \in G'$, $(u \to w) \in G'$ by transitivity and $(u \to w) \in G$ since the only edge that differs is $(u \to v)$. By down-closure of $U$ wrt $G$ and $u \notin U$, we have $w \notin U$.
\end{itemize}

If $\Gamma(u)$ is first-order, there is a unique effect $\ndiscard_{\Gamma(u)}$. We already have $h$ is non-signalling into $U \cup \left\{ u \right\}$, so applying $\ndiscard_{\Gamma(u)}$ must give a unique marginal over $U$.

Suppose instead $\Gamma(v)$ is first-order dual with unique state $\nmaxmix_{\Gamma(v)}$. Picking any choice of effects $\left\{ \pi_w \right\}_{w \in V \setminus U}$, we consider the marginal $h_U$ over $U$ when we apply these. Since $\Gamma(v)$ is first-order dual, its local causal effects (the states of $\Gamma(v)^*$) include a basis. Whichever basis element we apply to $h_U$, we must get the constant marginal $h_{U \setminus \left\{ v \right\}}$ since $h$ is non-signalling into $U \setminus \left\{ v \right\}$, so by equality on every element of a basis we have $h_U = h_{U \setminus \left\{ v \right\}} \otimes \nmaxmix_{\Gamma(v)}$. This is independent of our choices $\left\{ \pi_w \right\}_{w \in V \setminus U}$, so $h$ is non-signalling into $U$.
\end{proof}

\begin{lemma}[Restatement of Lemma \ref{lemma:transitive_closure_graph_type}]
Given a DAG $G = (V, E)$ and a local interpretation $\Gamma$, let $\overline{G} = (V, \overline{E})$ be the DAG with $(u \to v) \in \overline{E} \Leftrightarrow (u \to v) \in E^+ \wedge \left| c_{\Gamma(u)}^* \right| > 1 \wedge \left| c_{\Gamma(v)} \right| > 1$. Then $\mathbf{Gr}_G^\Gamma = \mathbf{Gr}_{\overline{G}}^\Gamma$. Furthermore, for any $G' = (V, E')$ over the same vertices, $c_{\mathbf{Gr}_G^\Gamma} \subseteq c_{\mathbf{Gr}_{G'}^\Gamma}$ iff $\overline{E} \subseteq \overline{E'}$.
\end{lemma}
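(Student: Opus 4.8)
The plan is to prove the two assertions using four earlier ingredients: Lemma~\ref{lemma:transitive_closure_graph_type_without_pruning} (graph types are insensitive to transitive closure), Lemma~\ref{lemma:transitive_graph_edge_pruning} (a single edge whose source is first-order or whose target is first-order dual may be pruned), Lemma~\ref{lemma:binary_channel_encoding_morphism} (an actively-signalling morphism between systems that are not first-order dual), and the $\mathbf{SiGr}$ presentation of graph types from Theorem~\ref{thrm:all_graph_types_equivalent}.

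\textbf{Reduction to $\overline{G}$.} First observe that $\overline{E}$ is already transitive: if $(u\to v),(v\to w)\in\overline{E}$ then $(u\to w)\in E^+$ by transitivity of $E^+$, and the endpoint conditions $\left|c_{\Gamma(u)}^*\right|>1$, $\left|c_{\Gamma(w)}\right|>1$ are precisely those carried over from the two given edges, so $(u\to w)\in\overline{E}$. By Lemma~\ref{lemma:transitive_closure_graph_type_without_pruning} it is enough to pass from $G^+$ to $\overline{G}$ by removing the edges of $E^+\setminus\overline{E}$ one at a time, each removal justified by Lemma~\ref{lemma:transitive_graph_edge_pruning}. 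I would maintain the invariant that the current graph $H$ is transitive with $\overline{E}\subseteq E(H)\subseteq E^+$, and while $E(H)\supsetneq\overline{E}$ pick an edge $(u\to v)\in E(H)\setminus\overline{E}$ minimising the length of the longest $u$-to-$v$ path in $H$. Removing such an edge cannot break transitivity: if it did, some path $u\to w\to v$ would survive in $H$, and because $(u\to v)\notin\overline{E}$ forces $\Gamma(u)$ to be first-order or $\Gamma(v)^*$ to be first-order, one of the edges $(u\to w)$, $(w\to v)$ is again in $E(H)\setminus\overline{E}$ but has strictly smaller longest-path length, contradicting minimality. Lemma~\ref{lemma:transitive_graph_edge_pruning} then applies (its first-order side condition is exactly $(u\to v)\notin\overline{E}$), leaving the graph type unchanged; the process terminates with $\mathbf{Gr}_G^\Gamma=\mathbf{Gr}_{G^+}^\Gamma=\mathbf{Gr}_{\overline{G}}^\Gamma$.

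\textbf{The inclusion criterion.} The underlying monotonicity is: for fixed $\Gamma$, if $E_1\subseteq E_2$ then every set down-closed w.r.t.\ $(V,E_2)$ is down-closed w.r.t.\ $(V,E_1)$, so $\mathbf{SiGr}_{(V,E_2)}^\Gamma$ is an intersection over fewer cuts, whence $c_{\mathbf{Gr}_{(V,E_1)}^\Gamma}\subseteq c_{\mathbf{Gr}_{(V,E_2)}^\Gamma}$. For the ``if'' direction, $\overline{E}\subseteq\overline{E'}$ gives $c_{\mathbf{Gr}_G^\Gamma}=c_{\mathbf{Gr}_{\overline{G}}^\Gamma}\subseteq c_{\mathbf{Gr}_{\overline{G'}}^\Gamma}=c_{\mathbf{Gr}_{G'}^\Gamma}$ by the first part and monotonicity. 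For the ``only if'' direction I argue contrapositively: take $(u\to v)\in\overline{E}\setminus\overline{E'}$; membership in $\overline{E}$ forces $\left|c_{\Gamma(u)}^*\right|>1$ and $\left|c_{\Gamma(v)}\right|>1$, so $(u\to v)\notin\overline{E'}$ can only mean that $v$ is unreachable from $u$ in $G'$. Apply Lemma~\ref{lemma:binary_channel_encoding_morphism} with $\mathbf{A}=\Gamma(u)^*$, $\mathbf{B}=\Gamma(v)$ to get $f\in c_{\Gamma(u)<\Gamma(v)}\setminus c_{\Gamma(u)>\Gamma(v)}$, and set $h:=f\otimes\bigotimes_{w\neq u,v}\nmaxmix_{\Gamma(w)}$ (reordered by $\mathrm{perm}_V$). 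Since $(u\to v)\in E^+$, $h$ is a separable state of the single-edge graph type $\mathbf{Gr}_{(V,\{u\to v\})}^\Gamma$ (Corollary~\ref{corollary:series_parallel_graph_types}), which by monotonicity sits inside $c_{\mathbf{Gr}_{G^+}^\Gamma}=c_{\mathbf{Gr}_G^\Gamma}$. Conversely, the down-closure of $\{v\}$ in $G'$ is a down-closed cut of $\mathbf{SiGr}_{G'}^\Gamma$ with $v$ in the past and $u$ in the future, and the active signalling of $f$ from $u$ to $v$ violates its non-signalling requirement, so $h\notin c_{\mathbf{Gr}_{G'}^\Gamma}$ and $c_{\mathbf{Gr}_G^\Gamma}\not\subseteq c_{\mathbf{Gr}_{G'}^\Gamma}$.

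\textbf{Expected difficulty.} The one place requiring care is the first part: an arbitrary single-edge deletion from a transitive DAG need not stay transitive, so Lemma~\ref{lemma:transitive_graph_edge_pruning} cannot be invoked without choosing the deletion order, and the minimal-longest-path rule is the device that preserves transitivity throughout. The remaining ingredients — transitivity of $\overline{E}$, edge-monotonicity of $\mathbf{SiGr}$, and the counterexample assembled from Lemma~\ref{lemma:binary_channel_encoding_morphism} — are routine given the lemmas already proved.
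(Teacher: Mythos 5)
Your proof is correct and follows essentially the same route as the paper: Lemmas~\ref{lemma:transitive_closure_graph_type_without_pruning} and~\ref{lemma:transitive_graph_edge_pruning} for the reduction to $\overline{G}$, edge-monotonicity of the $\mathbf{SiGr}$ intersection for the ``if'' direction, and a signalling counterexample built from Lemma~\ref{lemma:binary_channel_encoding_morphism} for the ``only if'' direction (your detour through the single-edge graph type $\mathbf{Gr}_{(V,\{u\to v\})}^\Gamma$ and monotonicity is an equally valid way to see that $h$ lies in $c_{\mathbf{Gr}_G^\Gamma}$; the paper instead checks the down-closed cuts of $G$ directly). The one place where you go beyond the paper is worth highlighting: the paper dismisses the first claim with ``pruning edges one by one from the transitive closure,'' but Lemma~\ref{lemma:transitive_graph_edge_pruning} is stated only for a pair of \emph{transitive} DAGs, and an arbitrary single-edge deletion from a transitive DAG need not stay transitive. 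Your observation that $\overline{E}$ is itself transitive, together with the minimal-longest-path deletion order and the argument that any surviving two-step path $u\to w\to v$ would yield a prunable edge of strictly smaller longest-path length, is exactly the missing bookkeeping needed to make that step rigorous. Everything else — the verification that the hypotheses of Lemma~\ref{lemma:binary_channel_encoding_morphism} are supplied by $(u\to v)\in\overline{E}$, and that non-membership in $\overline{E'}$ must come from unreachability so that the down-closure of $\{v\}$ in $G'$ separates $v$ from $u$ — matches the paper's argument.
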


\begin{proof}
$\mathbf{Gr}_{G}^\Gamma = \mathbf{Gr}_{\overline{G}}^\Gamma$ follows from Lemmas \ref{lemma:transitive_closure_graph_type_without_pruning} and \ref{lemma:transitive_graph_edge_pruning} by pruning edges one by one from the transitive closure. The remainder now requires us to show that $c_{\mathbf{Gr}_{\overline{G}}^\Gamma} \subseteq c_{\mathbf{Gr}_{\overline{G'}}^\Gamma}$ iff $\overline{E} \subseteq \overline{E'}$.

$\Longleftarrow$: If $\overline{E} \subseteq \overline{E'}$, then the inclusion of graph types follows straightforwardly from any of the graph type definitions.

$\Longrightarrow$: Suppose, aiming for the contrapositive, that there is some $(u \to v) \in \overline{E} \setminus \overline{E'}$. By definition of $\overline{E}$, $\left| c_{\Gamma(u)}^* \right| > 1$ and $\left| c_{\Gamma(v)} \right| > 1$, so we can use Lemma \ref{lemma:binary_channel_encoding_morphism} to generate a counterexample in $c_{\mathbf{SiGr}_{\overline{E}}^\Gamma} \setminus c_{\mathbf{SiGr}_{\overline{E'}}^\Gamma}$, following the same proof as the $\Longrightarrow$ direction of Lemma \ref{lemma:transitive_graph_edge_pruning}.
\end{proof}

\begin{lemma}[Restatement of Lemma \ref{lemma:transitive_closure_dual_graph_type}]
Let $G = (V, E)$ and $G' = (V, E')$ be two DAGs over the same set of vertices, and $\Gamma, \Gamma^* : V \to \ob{\caus{\catc}}$ be dual local interpretation functions, $\forall v \in V . \Gamma^*(v) = \Gamma(v)^*$. Then $c_{\mathbf{Gr}_G^\Gamma} \subseteq c_{\left( \mathbf{Gr}_{G'}^{\Gamma^*} \right)^*}$ iff $(V, \overline{E} \cup \overline{E'})$ is acyclic.
\end{lemma}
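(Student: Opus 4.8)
The plan is to reduce the inclusion to a statement about linear orderings, using the standard-form description of graph types together with the ordered and dual-union presentations. Write $\overline{E}$ for the standard-form edge set of $(G,\Gamma)$ and $\overline{E'}$ for that of $(G',\Gamma^*)$, so that $\mathbf{Gr}_G^\Gamma = \mathbf{Gr}_{\overline{G}}^\Gamma$ and $\mathbf{Gr}_{G'}^{\Gamma^*} = \mathbf{Gr}_{\overline{G'}}^{\Gamma^*}$ by Lemma \ref{lemma:transitive_closure_graph_type}, with $\overline{E},\overline{E'}$ both transitively closed. Writing $O = [v_1,\ldots,v_n]$ and using $\Gamma^*(v)^* = \Gamma(v)$ throughout, Proposition \ref{prop:dual_graph_type_as_union} applied to $\mathbf{Gr}_{\overline{G'}}^{\Gamma^*}$ gives $\left(\mathbf{Gr}_{G'}^{\Gamma^*}\right)^* = \bigcup_{O \in \mathrm{sort}(\overline{G'})} \mathrm{perm}_V\!\left(\Gamma(v_1) < \cdots < \Gamma(v_n)\right)$, while Theorem \ref{thrm:all_graph_types_equivalent} gives $\mathbf{Gr}_G^\Gamma = \mathbf{OrGr}_{\overline{G}}^\Gamma = \bigcap_{O \in \mathrm{sort}(\overline{G})} \mathrm{perm}_V\!\left(\Gamma(v_1) < \cdots < \Gamma(v_n)\right)$. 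So the claimed inclusion is exactly the assertion that an intersection of linear orderings of the $\Gamma(v)$'s is contained in a union of such orderings.

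For ``if'', suppose $(V,\overline{E}\cup\overline{E'})$ is acyclic. Since $\overline{E},\overline{E'}$ are transitive, the transitive closure of their union is a strict partial order; any linear extension $O$ of it is simultaneously a topological sort of $\overline{G}$ and of $\overline{G'}$, so $\mathrm{perm}_V\!\left(\Gamma(v_1) < \cdots < \Gamma(v_n)\right)$ occurs both as a conjunct of the intersection presenting $\mathbf{Gr}_G^\Gamma$ and as a disjunct of the union presenting $\left(\mathbf{Gr}_{G'}^{\Gamma^*}\right)^*$. Chaining the two resulting inclusions of state sets yields $c_{\mathbf{Gr}_G^\Gamma} \subseteq c_{(\mathbf{Gr}_{G'}^{\Gamma^*})^*}$.

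For ``only if'' I prove the contrapositive by building an explicit witness. If $(V,\overline{E}\cup\overline{E'})$ has a cycle, take a shortest one; since $\overline{E}$ and $\overline{E'}$ are each transitive and acyclic, minimality forces it to alternate between $\overline{E}$- and $\overline{E'}$-edges and to have even length $2k \geq 2$ with distinct vertices $w_0,\ldots,w_{2k-1}$ (two consecutive same-type edges could be short-circuited by transitivity, giving a shorter cycle). Say $w_{2i}\to w_{2i+1}\in\overline{E}$ and $w_{2i+1}\to w_{2i+2}\in\overline{E'}$, indices mod $2k$. Let $\rho$ be the graph state over $\overline{G}$ (for $\Gamma$) whose edge interpretation is $\mathbf{2}$ on each cycle edge $w_{2i}\to w_{2i+1}$ and $\mathbf{I}$ on all other edges, with component at $w_{2i}$ an actively-signalling state of $\Gamma(w_{2i}) \parr \mathbf{2}$, component at $w_{2i+1}$ an actively-signalling state of $\mathbf{2}^* \parr \Gamma(w_{2i+1})$, and component $\nmaxmix_{\Gamma(v)}$ at every other vertex $v$. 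Such signalling components exist because $(w_{2i}\to w_{2i+1})\in\overline{E}$ forces $\Gamma(w_{2i})$ non-first-order and $\Gamma(w_{2i+1})$ non-first-order-dual, which — together with $\mathbf{2}$ and $\mathbf{2}^*$ being neither — is precisely the non-degeneracy making Lemma \ref{lemma:binary_channel_encoding_morphism}, equivalently the non-prunability criterion of Lemma \ref{lemma:transitive_graph_edge_pruning}, applicable. Dually, $(w_{2i+1}\to w_{2i+2})\in\overline{E'}$ is by definition the assertion that $\Gamma^*(w_{2i+1})$ is non-first-order and $\Gamma^*(w_{2i+2})$ is non-first-order-dual, so I build $\sigma$ as the analogous graph state over $\overline{G'}$ for $\Gamma^*$, actively signalling along the cycle's $\overline{E'}$-edges and $\nmaxmix$ elsewhere. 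Contracting $\rho$ against $\sigma$ through the caps $\epsilon_{\Gamma(v)}$, the non-cycle vertices contribute trivial scalars (mutual normalisation of $\nmaxmix$ and $\ndiscard$), while the cycle vertices splice the signalling components — alternately from $\rho$ and $\sigma$ — into one closed loop of noisy bit-channels around $w_0 \to \cdots \to w_{2k} = w_0$; by the computation in the ``only if'' direction of Lemma \ref{lemma:acyclic_graph_morphisms_flat} the resulting scalar equals $\id_I$ plus an invertible scalar, hence differs from $\id_I$ by cancellativity of scalar addition (\ref{apc:scalars}). Thus $\rho \in c_{\mathbf{Gr}_G^\Gamma} \setminus c_{(\mathbf{Gr}_{G'}^{\Gamma^*})^*}$.

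The hard part is this ``only if'' construction: keeping $\rho$ and $\sigma$ simultaneously well-typed graph states — which works because each cycle edge lives in the internal structure of exactly one of the two, so there is no clash even at a vertex carrying both an incident $\overline{E}$-edge and an incident $\overline{E'}$-edge — and carrying out the loop-scalar computation, a mild elaboration of the one already done for Lemma \ref{lemma:acyclic_graph_morphisms_flat}. The reduction of an arbitrary cycle in $\overline{E}\cup\overline{E'}$ to an alternating elementary one, and the verification in the ``if'' direction that a common linear extension is a topological sort of both standard forms, are routine.
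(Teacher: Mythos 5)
Your proposal is correct, and the two directions sit differently relative to the paper's own proof. Your ``only if'' direction is essentially the paper's argument: reduce to standard forms, extract a minimal cycle in $\overline{E} \cup \overline{E'}$, populate its edges with the actively-signalling components of Lemma \ref{lemma:binary_channel_encoding_morphism} (the non-degeneracy conditions you check at each cycle vertex are exactly the ones the paper verifies at the transition points), and reuse the feedback-loop scalar computation from Lemma \ref{lemma:acyclic_graph_morphisms_flat} together with cancellativity of scalar addition; your observation that a shortest cycle must strictly alternate single edges is a mild sharpening of the paper's decomposition into alternating \emph{segments}, and it slightly cleans up the bookkeeping. Your ``if'' direction, however, takes a genuinely different route: the paper contracts an arbitrary pair of graph states over $\overline{G}$ and $\overline{G'}$ into a single graph state over the acyclic union graph with all local systems $\mathbf{I}$ and invokes Corollary \ref{corollary:scalar_graph_types} to conclude the resulting scalar is $\id_I$, whereas you pass to the $\mathbf{OrGr}$ presentation of $\mathbf{Gr}_{\overline{G}}^\Gamma$ as an intersection of linear orderings and the Proposition \ref{prop:dual_graph_type_as_union} presentation of $\left(\mathbf{Gr}_{\overline{G'}}^{\Gamma^*}\right)^*$ as a union of them, and simply exhibit a common linear extension of $\overline{E} \cup \overline{E'}$ appearing on both sides. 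Your version buys a shorter, purely order-theoretic argument that makes the slogan ``compatible iff there is a total ordering respecting both graph types'' (stated informally after the lemma in the paper) literally the proof; the paper's version buys a self-contained semantic argument that does not lean on the full strength of Theorem \ref{thrm:all_graph_types_equivalent} and Proposition \ref{prop:dual_graph_type_as_union}. Both are sound given the results already established.
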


\begin{proof}
By Lemma \ref{lemma:transitive_closure_graph_type}, $c_{\mathbf{Gr}_G^\Gamma} \subseteq c_{\left( \mathbf{Gr}_{G'}^{\Gamma^*} \right)^*}$ iff $c_{\mathbf{Gr}_{\overline{G}}^\Gamma} \subseteq c_{\left( \mathbf{Gr}_{\overline{G'}}^{\Gamma^*} \right)^*}$. This inclusion holds exactly when, if we compose any pair of graph states over $\overline{G}$ with $\Gamma$ and over $\overline{G'}$ with $\Gamma^*$ by contracting the components at each vertex, then the resulting scalar is $\id_I$.

$\Longrightarrow$: Aiming for the contrapositive, assume $\left( V, \overline{E} \cup \overline{E'} \right)$ is not acyclic. Pick a minimal cycle $\sigma$ (so it visits each vertex at most once) and break it down into segments of edges in $\overline{E}$ and edges in $\overline{E'}$ (if the edge is in both, we may pick either arbitrarily). Let $G_\sigma$ be the graph over $V \uplus V$ consisting of $\overline{G}$ and $\overline{G'}$ in parallel with additional edges between matching vertices marking the points where $\sigma$ transitions from one graph to the other:
\begin{equation}
\left( u_a \to v_b \right) \in E_\sigma \Leftrightarrow \begin{cases}
\left( u \to v \right) \in \overline{E} & a = b = 1 \\
\left( u \to v \right) \in \overline{E'} & a = b = 2 \\
\left( \begin{array}{l} \exists \left( w \to u \right) \in \sigma \cap \overline{E}, \\ \left( v \to w' \right) \in \sigma \cap \overline{E'} . u = v \end{array} \right) & a = 1, b = 2 \\
\left( \begin{array}{l} \exists \left( w' \to u \right) \in \sigma \cap \overline{E'}, \\ \left( v \to w \right) \in \sigma \cap \overline{E} . u = v \end{array} \right) & a = 2, b = 1
\end{cases}
\end{equation}

graph states over $G_\sigma$ resemble a pair of graph states over $\overline{G}$ and $\overline{G'}$, contracted only over the vertices where $\sigma$ transitions from one graph to the other, with local effects on all other vertices. They aren't precisely the same thing, since edge interpretations must be first-order objects and the contractions may occur over some $\Gamma(v)$ which is neither first-order or first-order dual. However, by construction of $\overline{E}$ and $\overline{E'}$, we know that any transition from $\overline{E}$ to $\overline{E'}$ at vertex $v$ has $\left| c_{\Gamma(v)} \right| > 1$ since $v$ must have a predecessor in $\overline{E}$ and a successor in $\overline{E'}$. This is enough to be able to apply Lemma \ref{lemma:binary_channel_encoding_morphism} to obtain causal morphisms which actively signal to and from it, e.g. if $(u \to v) \in \sigma \cap \overline{E}$ and $(v \to w) \in \sigma \cap \overline{E'}$, it gives morphisms in $c_{\Delta(u \to v)^* < \Gamma(v)} \setminus c_{\Delta(u \to v)^* > \Gamma(v)}$ and $c_{\Gamma(v)^* < \Delta(v \to w)} \setminus c_{\Gamma(v)^* > \Delta(v \to w)}$. We can therefore apply the same cyclic feedback construction from the proof of Lemma \ref{lemma:acyclic_graph_morphisms_flat} to generate two pairs of graph states in $\mathbf{Gr}_{\overline{G}}^\Gamma$ and $\mathbf{Gr}_{\overline{G'}}^{\Gamma^*}$ whose inner products give distinct scalars - at least one of these inner products must be not $\id_I$.

$\Longleftarrow$: Now we suppose that $\left( V, \overline{E} \cup \overline{E'} \right)$ is acyclic. If we take any pair of graph states from $\mathbf{Gr}_{\overline{G}}^\Gamma$ and $\mathbf{Gr}_{\overline{G'}}^{\Gamma^*}$ and contract the components at each vertex into a single component, we obtain a valid graph state of $\mathbf{Gr}_{\left( V, \overline{E} \cup \overline{E'} \right)}^{\left\{ v \mapsto \mathbf{I} \right\}_{v \in V}}$ (this is a valid object by Lemma \ref{lemma:acyclic_graph_morphisms_flat} because of acyclicity). Corollary \ref{corollary:scalar_graph_types} tells us there is a unique scalar formed by this composition, which is $\id_I$.
\end{proof}

\begin{proposition}[Restatement of Proposition \ref{prop:dual_graph_type_as_union}]
\begin{equation*}
\left( \mathbf{Gr}_G^\Gamma \right)^* = \bigcup_{O \text{ topological sort of } G} \mathrm{perm}_V \left( \bigseq_{v \in O} \Gamma(v)^* \right)
\end{equation*}
\end{proposition}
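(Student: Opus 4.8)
The plan is to dualise the ordered-graph-type presentation. By Theorem~\ref{thrm:all_graph_types_equivalent} we have $\mathbf{Gr}_G^\Gamma = \mathbf{OrGr}_G^\Gamma = \bigcap_{O \in \mathrm{sort}(G)} \mathrm{perm}_V\bigl(\bigseq_{v\in O}\Gamma(v)\bigr)$, a \emph{finite} intersection since a DAG has finitely many topological sorts. The right-hand side of the proposition is manifestly the de Morgan dual of this expression, so the whole proof reduces to pushing $(-)^*$ through the intersection and through each sequence term.

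First I would check that the intersection (and the dual union) is legitimate, i.e. that the terms $\mathrm{perm}_V(\bigseq_{v\in O}\Gamma(v))$ are pairwise set-compatible. By Proposition~\ref{prop:set_compatible_by_intersection} this amounts to their sharing a carrier object and normalisation scalars. The carrier is $\bigotimes_{v\in V}\mathcal{U}(\Gamma(v))$ in the standard order for every $O$ by construction of $\mathrm{perm}_V$. For the scalars, one shows by induction, using $\maxmix_{A\otimes B} = \maxmix_A\otimes\maxmix_B$ (a consequence of \ref{apc:discard}) together with uniqueness of the flatness scalars, that $\mu_{\mathbf{A<B}} = \mu_{\mathbf{A}}\mu_{\mathbf{B}}$ and $\theta_{\mathbf{A<B}} = \theta_{\mathbf{A}}\theta_{\mathbf{B}}$; hence $\bigseq_{v\in O}\Gamma(v)$ has scalars $\prod_v\mu_{\Gamma(v)}$ and $\prod_v\theta_{\Gamma(v)}$ independent of $O$, and $\mathrm{perm}_V$ leaves these unchanged. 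The same holds for the duals, so the union on the right is also well-defined.

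Next I would iterate de Morgan. Equation~\ref{eq:union_intersection_de_morgan} together with $\mathbf{A}^{**}=\mathbf{A}$ yields the companion law $\mathbf{(A\cap A')^*} = \mathbf{A^*\cup A'^*}$ for set-compatible objects; iterating over the finitely many sorts gives $(\mathbf{Gr}_G^\Gamma)^* = \bigcup_{O\in\mathrm{sort}(G)}\bigl(\mathrm{perm}_V(\bigseq_{v\in O}\Gamma(v))\bigr)^*$. Each summand is then computed: since $(-)^* : \caus{\catc}^{op}\to\caus{\catc}$ is strong monoidal it commutes with the permutation $\mathrm{perm}_V$ (using the symmetry of $\catc$ to identify the carrier of $(\bigotimes_v A_v)^*$ with $\bigotimes_v A_v^*$), and Equation~\ref{eq:seq_de_morgan} applied inductively gives $(\bigseq_{v\in O}\Gamma(v))^* = \bigseq_{v\in O}\Gamma(v)^*$. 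Substituting this in finishes the proof; the degenerate cases $|V|\le 1$ are immediate.

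The argument is essentially routine given Theorem~\ref{thrm:all_graph_types_equivalent}. The only place that needs genuine care is the set-compatibility bookkeeping required to make the $n$-ary de Morgan law applicable and to ensure the $\mathrm{perm}_V$ reindexing is handled consistently between an object and its dual; I expect this to be the main (and only mild) obstacle, a matter of stating the compatibility facts cleanly rather than any real mathematical difficulty.
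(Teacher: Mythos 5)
Your proof is correct and follows essentially the same route as the paper, which simply cites the ordered graph type definition together with Equations \ref{eq:union_intersection_de_morgan} and \ref{eq:seq_de_morgan}; your additional set-compatibility bookkeeping fills in details the paper leaves implicit but does not change the argument.
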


\begin{proof}
This is immediate from the ordered graph type definition and Equations \ref{eq:union_intersection_de_morgan} and \ref{eq:seq_de_morgan}.
\end{proof}

\begin{proposition}[Restatement of Proposition \ref{prop:graph_types_preserve_union_intersection}]
\begin{align*}
\mathbf{Gr}_{G}^{\Gamma, v \mapsto \mathbf{A \cup B}} &= \mathbf{Gr}_{G}^{\Gamma, v \mapsto \mathbf{A}} \cup \mathbf{Gr}_{G}^{\Gamma, v \mapsto \mathbf{B}} \\
\mathbf{Gr}_{G}^{\Gamma, v \mapsto \mathbf{A \cap B}} &= \mathbf{Gr}_{G}^{\Gamma, v \mapsto \mathbf{A}} \cap \mathbf{Gr}_{G}^{\Gamma, v \mapsto \mathbf{B}}
\end{align*}
\end{proposition}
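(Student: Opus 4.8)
The plan is to treat the two equations by different routes, because — as Remark~\ref{remark:no_distribution_of_union_intersection} warns — intersection does not distribute over union, so no single uniform manipulation covers both. For the intersection equation I would use the ordered graph type presentation of Definition~\ref{def:ordered_graph_type}, which writes $\mathbf{Gr}_G^\Gamma$ as $\bigcap_{[v_1,\ldots,v_n]\in\mathrm{sort}(G)}\mathrm{perm}_V\!\left(\Gamma(v_1) < \cdots < \Gamma(v_n)\right)$; there the $\cap$ to be extracted already matches the outer shape. For the union equation I would instead use the local graph type presentation (Definitions~\ref{def:graph_morphism} and~\ref{def:local_graph_types}), where the $\cup$ sits inside a single component and can be pulled out by linearity rather than being trapped behind a large intersection. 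Both arguments rely on Theorem~\ref{thrm:all_graph_types_equivalent} to switch presentations freely, on the distributivity law of Equation~\ref{eq:iu_monoidal_distribution}, and on the observation that $\mathbf{A}$, $\mathbf{B}$, $\mathbf{A\cup B}$, $\mathbf{A\cap B}$ all share a carrier object and normalisation scalars, so every graph type built from them is set-compatible and the unions and intersections in the statement are well-defined.

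For the intersection, fix a topological sort $[v_1,\ldots,v_n]$ with $v=v_k$, and write $\mathbf{L}=\Gamma(v_1)<\cdots<\Gamma(v_{k-1})$ and $\mathbf{R}=\Gamma(v_{k+1})<\cdots<\Gamma(v_n)$, absorbing any trivial side by a unitor as in Definition~\ref{def:signalling_graph_type}. Repeated application of Equation~\ref{eq:iu_monoidal_distribution} — using the $>$-instances together with the symmetry $\mathbf{X < Y}\cong\mathbf{Y > X}$ (absorbed by $\mathrm{perm}_V$) and associativity of $<$ — gives $\mathbf{L<(A\cap B)<R}=\mathbf{(L<A<R)\cap(L<B<R)}$. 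Intersecting over all topological sorts and using only that intersection of state sets in a fixed homset is commutative and associative, I obtain $\mathbf{OrGr}_G^{\Gamma,v\mapsto \mathbf{A\cap B}}=\mathbf{OrGr}_G^{\Gamma,v\mapsto \mathbf{A}}\cap\mathbf{OrGr}_G^{\Gamma,v\mapsto \mathbf{B}}$, which is the claim by Theorem~\ref{thrm:all_graph_types_equivalent}. (The same computation works verbatim with the signalling graph type presentation.)

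For the union, the key observation is that the component type of $v$ in any graph state is $\Gamma(v)\parr\mathbf{C}$, where $\mathbf{C}$ is the $\parr$ of the types of the edges incident to $v$ (Equation~\ref{eq:component_type_def}), so Equation~\ref{eq:iu_monoidal_distribution} gives $\mathbf{(A\cup B)\parr C}=\mathbf{(A\parr C)\cup(B\parr C)}$; hence by Proposition~\ref{prop:affine} and Definition~\ref{def:affine-closure} the component at $v$ of a graph state for $\Gamma,v\mapsto\mathbf{A\cup B}$ is an affine combination (in $\sub{\catc}$) of components valid for $\Gamma,v\mapsto\mathbf{A}$ and for $\Gamma,v\mapsto\mathbf{B}$. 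Pushing this affine combination through the (bilinear) contraction morphism — keeping all other components and the edge interpretation $\Delta$ fixed — exhibits any such graph state as an affine combination of genuine graph states for $\Gamma,v\mapsto\mathbf{A}$ and for $\Gamma,v\mapsto\mathbf{B}$, hence (again by Proposition~\ref{prop:affine}) as an element of the closed set $(c_{\mathbf{Gr}_G^{\Gamma,v\mapsto\mathbf{A}}}\cup c_{\mathbf{Gr}_G^{\Gamma,v\mapsto\mathbf{B}}})^{**}=c_{\mathbf{Gr}_G^{\Gamma,v\mapsto\mathbf{A}}\cup\mathbf{Gr}_G^{\Gamma,v\mapsto\mathbf{B}}}$; taking the closure of the generating set gives one inclusion. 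The reverse inclusion is routine: since $c_{\mathbf{A\parr C}},c_{\mathbf{B\parr C}}\subseteq c_{\mathbf{(A\cup B)\parr C}}$, every graph state for $\Gamma,v\mapsto\mathbf{A}$ or for $\Gamma,v\mapsto\mathbf{B}$ is a graph state for $\Gamma,v\mapsto\mathbf{A\cup B}$, so $c_{\mathbf{Gr}_G^{\Gamma,v\mapsto\mathbf{A}}}\cup c_{\mathbf{Gr}_G^{\Gamma,v\mapsto\mathbf{B}}}$ is contained in the closed set $c_{\mathbf{Gr}_G^{\Gamma,v\mapsto\mathbf{A\cup B}}}$, hence so is its closure.

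I expect the main obstacle to be the bookkeeping in the union case: carefully tracking the passage between $\catc$ and $\sub{\catc}$ when expanding the $v$-component as an affine combination (Definition~\ref{def:affine-closure}), checking that the resulting terms really are graph states over the same graph with the same edge interpretation (so the contraction morphism is genuinely unchanged), and invoking Proposition~\ref{prop:affine} to land inside the closed union rather than merely its affine span. A secondary point worth making explicit is \emph{why} the slick ordered-graph-type argument cannot be reused for the union: it would only deliver $\bigcap_{O}\!\left(\mathrm{perm}_V(\cdots<\mathbf{A}<\cdots)\cup\mathrm{perm}_V(\cdots<\mathbf{B}<\cdots)\right)$, and pulling the $\cup$ out past the $\bigcap$ is precisely the forbidden distribution of Remark~\ref{remark:no_distribution_of_union_intersection}, which is what forces the change of presentation.
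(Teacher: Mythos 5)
Your proposal is correct and follows essentially the same route as the paper: the intersection case is handled exactly as the paper does, via the ordered graph type presentation and Equation~\ref{eq:iu_monoidal_distribution}, and the union case matches the paper's argument of distributing $\cup$ over $\parr$ in the $v$-component typing and decomposing graph states into affine combinations. Your added remarks on set-compatibility and on why the ordered-graph-type route fails for unions are sound elaborations of details the paper leaves implicit.
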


\begin{proof}
$\cup$: Using Equation \ref{eq:iu_monoidal_distribution}, the component type at $v$ from $\mathbf{Gr}_{G}^{\Gamma, v \mapsto \mathbf{A \cup B}}$ satisfies:
\begin{equation}
\mathrm{Comp}_G^{\Gamma, v \mapsto \mathbf{A \cup B};\Delta} = \mathbf{\left( A \cup B \right) \parr \left( \cdots \right)} = \mathbf{A \parr \left( \cdots \right) \cup B \parr \left( \cdots \right)}
\end{equation}
The $v$ component of any graph state of $\mathbf{Gr}_{G}^{\Gamma, v \mapsto \mathbf{A}}$ or $\mathbf{Gr}_{G}^{\Gamma, v \mapsto \mathbf{B}}$ is trivially valid for the component typing above, and conversely the $v$ component for any graph state of $\mathbf{Gr}_{G}^{\Gamma, v \mapsto \mathbf{A \cup B}}$ is an affine combination of valid components for $\mathbf{Gr}_{G}^{\Gamma, v \mapsto \mathbf{A}}$ and $\mathbf{Gr}_{G}^{\Gamma, v \mapsto \mathbf{B}}$. Hence any graph state of one type can always be broken down as an affine combination of graph states of the other.

$\cap$: Immediate from the ordered graph type definition and Equation \ref{eq:iu_monoidal_distribution}.
\end{proof}

\begin{proposition}[Restatement of Proposition \ref{prop:graph_substitution_in_graph_types}]
Let $G = (V, E)$ and $G' = (V', E')$ be two DAGs with disjoint vertices $V \cap V' = \emptyset$ and local interpretations $\Gamma$ and $\Gamma'$. Suppose that for some vertex $v$, $\Gamma(v) = \mathbf{Gr}_{G'}^{\Gamma'}$. Then $\mathbf{Gr}_{G}^\Gamma = \mathbf{Gr}_{G''}^{\Gamma \setminus \left\{ v \right\}, \Gamma'}$ where $G'' = \left(\left( V \setminus \left\{ v \right\} \right) \cup V', E''\right)$ with 
\begin{equation*}
(u \to w) \in E'' \Leftrightarrow \begin{cases}
u \to w \in E & u, w \in V \setminus \left\{ v \right\} \\
u \to v \in E & u \in V \setminus \left\{ v \right\}, w \in V' \\
v \to w \in E & u \in V', w \in V \setminus \left\{ v \right\} \\
u \to w \in E' & u, w \in V'
\end{cases}
\end{equation*}
\end{proposition}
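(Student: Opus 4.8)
The plan is to prove the identity by reducing it, via Theorem \ref{thrm:all_graph_types_equivalent}, to the ordered graph type presentation, and from there to a purely combinatorial statement about topological sorts. Throughout I assume $V' \neq \emptyset$ (if $V' = \emptyset$ then $\Gamma(v) = \mathbf{I}$ and the claim collapses to the unit elimination of Lemma \ref{lemma:unit_elimination_from_graph_types}).

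\emph{Step 1: unfold both sides as intersections of chains.} By Definition \ref{def:ordered_graph_type}, $\mathbf{Gr}_G^\Gamma = \bigcap_{O \in \mathrm{sort}(G)} \mathrm{perm}_V\!\left(\bigseq_{u \in O}\Gamma(u)\right)$ and likewise $\mathbf{Gr}_{G'}^{\Gamma'} = \bigcap_{O' \in \mathrm{sort}(G')} \mathrm{perm}_{V'}\!\left(\bigseq_{w \in O'}\Gamma'(w)\right)$, both finite intersections. Fix $O \in \mathrm{sort}(G)$ and substitute the second expression for the single occurrence of $\Gamma(v) = \mathbf{Gr}_{G'}^{\Gamma'}$ inside $\bigseq_{u \in O}\Gamma(u)$. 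Since $<$ distributes over finite $\cap$ in each argument (Equation \ref{eq:iu_monoidal_distribution}) and is associative, this term becomes $\bigcap_{O' \in \mathrm{sort}(G')} \mathrm{perm}_W\!\left(\bigseq_{u \in O[v \mapsto O']}\Gamma''(u)\right)$, where $O[v\mapsto O']$ is the linear order on $W := (V \setminus \{v\}) \cup V'$ obtained by replacing $v$ in $O$ by the block $O'$, and $\Gamma'' = \Gamma\setminus\{v\}, \Gamma'$. Intersecting over $O$ as well, $\mathbf{Gr}_G^\Gamma = \bigcap_{L \in S} \mathrm{perm}_W\!\left(\bigseq_{u \in L}\Gamma''(u)\right)$ where $S := \{O[v\mapsto O'] : O \in \mathrm{sort}(G),\ O' \in \mathrm{sort}(G')\}$. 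On the other hand $\mathbf{Gr}_{G''}^{\Gamma''} = \bigcap_{L \in \mathrm{sort}(G'')} \mathrm{perm}_W\!\left(\bigseq_{u \in L}\Gamma''(u)\right)$, so it suffices to show these two intersections coincide.

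\emph{Step 2: the two inclusions.} One checks directly that $S \subseteq \mathrm{sort}(G'')$: inserting a topological sort of $G'$ for $v$ respects each of the four families of edges of $G''$ (and this incidentally shows $G''$ is acyclic). Hence $\mathbf{Gr}_{G''}^{\Gamma''} \subseteq \bigcap_{L\in S}(\cdots) = \mathbf{Gr}_G^\Gamma$. For the reverse, fix $L \in \mathrm{sort}(G'')$. Applying Theorem \ref{thrm:all_graph_types_equivalent} to the chain $L$ (a total order has a single topological sort whose down-sets are exactly its prefixes), $\mathrm{perm}_W\!\left(\bigseq_{u \in L}\Gamma''(u)\right)$ equals the intersection, over prefixes $U$ of $L$, of $\mathrm{perm}_W\!\left((\bigparr_{u\in U}\Gamma''(u)) < (\bigparr_{u\in W\setminus U}\Gamma''(u))\right)$. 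So it is enough to prove: every prefix $U$ of $L$ --- equivalently, every down-set of $G''$ --- is a prefix of some $L^\ast \in S$. Given this, the $U$-cut type contains the chain type of $L^\ast$ (a chain type lies inside each of its cut types), which contains $\bigcap_{L'\in S}(\cdots) = \mathbf{Gr}_G^\Gamma$; intersecting over the prefixes of $L$, then over all $L \in \mathrm{sort}(G'')$, yields $\mathbf{Gr}_G^\Gamma \subseteq \mathbf{Gr}_{G''}^{\Gamma''}$.

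\emph{Step 3 and the main obstacle.} It remains to show every down-set $U$ of $G''$ extends to an $S$-ordering with $U$ as prefix, by a case split on how $U$ meets $V'$: if $U \cap V' = \emptyset$ then $U$ is a down-set of $G$ avoiding $\mathrm{out}_G(v)$; if $V' \subseteq U$ then $(U\setminus V')\cup\{v\}$ is a down-set of $G$; and if $\emptyset \neq U\cap V' \subsetneq V'$ then $U\setminus V'$ is a down-set of $G$, all in-neighbours of $v$ in $G$ lie in $U\setminus V'$ (using $U\cap V'\neq\emptyset$ and the definition of $E''$), and $U\cap V'$ is a down-set of $G'$ --- so $v$ can be slotted in immediately after $U\setminus V'$ in a topological sort of $G$, with $O'$ chosen to have prefix $U\cap V'$. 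This last case, and in particular verifying that all in-neighbours of $v$ have already appeared so that $v$ may be inserted there, is the real content of the proof; Steps 1 and 2 are essentially bookkeeping. A superficially tempting alternative --- decomposing the $v$-component of a $\mathbf{LoGr}$-graph state using the first-order algebra of Proposition \ref{prop:first_order_equations} together with Corollary \ref{corollary:series_parallel_graph_types} --- tends to re-require precisely the substitution principle being proved, so the $\mathbf{OrGr}$ route above is preferable.
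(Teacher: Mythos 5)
Your proof is correct in its main line, but it takes a genuinely different route from the paper's. You work entirely with the $\mathbf{OrGr}$ presentation: distribute $<$ over $\cap$ to rewrite $\mathbf{Gr}_G^\Gamma$ as an intersection of chain types indexed by the ``block insertions'' $O[v \mapsto O']$, and then reduce the equality with $\bigcap_{L \in \mathrm{sort}(G'')}$ to the combinatorial fact that every down-set of $G''$ is a prefix of some block-inserted ordering (your three-way case split on $U \cap V'$ is the real content, and it checks out). The paper instead argues via the $\mathbf{LoGr}$ presentation: it first uses Lemmas \ref{lemma:transitive_closure_graph_type_without_pruning} and \ref{lemma:unit_elimination_from_graph_types} to insert unit vertices so that $v$ acquires a single predecessor and a single successor, and then identifies the $v$-component type $\Gamma(v) \parr \mathbf{2}^* \parr \mathbf{2} \cong \mathbf{2}^* < \mathbf{Gr}_{G'}^{\Gamma'} < \mathbf{2}$ with a graph type over $V' \cup \{i,o\}$ via Proposition \ref{prop:first_order_equations} and Corollary \ref{corollary:series_parallel_graph_types}, so that graph states of the two types are matched componentwise. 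Note that the alternative you dismiss as circular is essentially the paper's proof, and it is not circular: after localising $v$'s interface to one first-order in-edge and one out-edge, the only substitution instance needed is the series case, which Corollary \ref{corollary:series_parallel_graph_types} establishes independently. Your approach buys a more elementary, purely order-theoretic core at the cost of invoking the full strength of Theorem \ref{thrm:all_graph_types_equivalent} twice (once to pass to $\mathbf{OrGr}$, once to decompose a chain into its prefix cuts); the paper's is shorter but leans on the structural-isomorphism machinery.

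One caveat on your treatment of $V' = \emptyset$: the claim does not collapse to Lemma \ref{lemma:unit_elimination_from_graph_types}, because that lemma deletes $v$ from the \emph{transitive closure} $G^+$, whereas $G''$ with $V' = \emptyset$ deletes $v$ from $G$ without adding the through-paths. For $G = a \to v \to b$ with $\Gamma(v) = \mathbf{I}$ these give $\mathbf{A} < \mathbf{B}$ versus $\mathbf{A} \otimes \mathbf{B}$, which differ for non-degenerate $\mathbf{A}, \mathbf{B}$; so the proposition as literally stated fails in this degenerate case (a defect of the statement that the paper's own proof also glosses over, since its construction likewise severs the $p$-to-$s$ path when $V'$ is empty). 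Your main argument, which assumes $V' \neq \emptyset$, is unaffected.
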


\begin{proof}
The strategy to this proof is to exhibit a sequence of isomorphisms between graph types, where the isomorphisms themselves are purely structural (i.e. consisting of unitors, associators, permutations, etc.), so the equality of the underlying carrier objects implies the composite isomorphism is the identity and hence the types are equal.

We start with $\mathbf{Gr}_G^\Gamma = \mathbf{Gr}_{G^+}^\Gamma$ from Lemma \ref{lemma:transitive_closure_graph_type_without_pruning}. We can then use Lemma \ref{lemma:unit_elimination_from_graph_types} to insert additional vertices interpreted as $\mathbf{I}$ to collect predecessors and successors of $v$, meaning $v$ has precisely one predecessor $p$ and one successor $s$. Let's refer to this new graph type as $\mathbf{Gr}_{G_A}^{\Gamma_A}$.

We then do the same thing twice starting from $\mathbf{Gr}_{G''}^{\Gamma \setminus \left\{ v \right\}, \Gamma'}$ to give the $\mathbf{Gr}_{G_B}^{\Gamma_B}$ over $\left( V \setminus \left\{ v \right\} \right) \cup V' \cup \left\{ i, p, o, s \right\}$ with $p \to i$ and $o \to s$ where $p$ and $o$ are fan-in from predecessors of $v$ and $V'$, and $i$ and $s$ are fan-out into $V'$ and successors of $v$. $\mathbf{Gr}_{G_A}^{\Gamma_A}$ and $\mathbf{Gr}_{G_B}^{\Gamma_B}$ are still related by the same graph substitution of $\mathbf{I <} \mathbf{Gr}_{G'}^{\Gamma'} \mathbf{< I} \cong \Gamma(v)$.

Examining the component type at $v$ (assuming wlog. the constant edge interpretation $\Delta_{\mathbf{2}}$):
\begin{equation}
\begin{split}
\Gamma(v) \mathbf{\parr 2^* \parr 2} &\cong \mathbf{2^* <} \Gamma(v) \mathbf{< 2} \\
&= \mathbf{Gr}_{\left( \left\{ i \right\}, \emptyset \right)}^{i \mapsto \mathbf{2^*}} < \mathbf{Gr}_{G'}^{\Gamma'} < \mathbf{Gr}_{\left( \left\{ o \right\}, \emptyset \right)}^{o \mapsto \mathbf{2}} \\
&\cong \mathbf{Gr}_{\left( V' \cup \left\{ i, o \right\}, E' \cup \left\{ i \to v , v \to o \middle| v \in V \right\} \right)}^{\Gamma', i \mapsto \mathbf{2^*}, o \mapsto \mathbf{2}}
\end{split}
\end{equation}
This equates possible $v$ components of graph states of $\mathbf{Gr}_{G_A}^{\Gamma_A}$ with affine sums over the possible sub-graph states of $\mathbf{Gr}_{G_B}^{\Gamma_B}$ over $V' \cup \left\{ i, o \right\}$. The remaining components of the graph states of these two types are identical, so they generate the same graph type.
\end{proof}

\section{Proofs from Section \ref{sec:logic}}\label{sec:logic_proofs}

\begin{proposition}[Restatement of Proposition \ref{prop:pomset_extension}]
The logic of causal proof-nets is a conservative extension of pomset logic. Specifically, given a formula in the fragment $F, G ::= A | A^* | F \otimes G | F < G | F \parr G$, there exists a causal proof-net for $F$ iff there exists a pomset proof-net for $F$.
\end{proposition}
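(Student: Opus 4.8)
The plan is to exploit the fact that, on the common fragment $F, G ::= A \mid A^* \mid F \otimes G \mid F < G \mid F \parr G$, a balanced formula $F$ determines essentially the same graphical object in both frameworks (if $F$ is not balanced then neither a causal nor a pomset proof-net for $F$ exists, so the claim is vacuous there, and we assume $F$ balanced henceforth). The causal proof-structure $P_F$ and the $\mathrm{R\&B}$ proof-structure for $F$ have the same links (axiom, tensor, seq, par — no unit or FO-axiom links occur in this fragment), the same underlying vertex set, and the same canonical axiom linking, since in a balanced formula each atom name occurs once positively and once negatively, so the matching of dual atoms is forced. The only difference is the correctness criterion — acyclicity of every up-down switching graph versus absence of an alternating elementary cycle — so it suffices to show that these two conditions coincide on this fixed structure.

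First I would record, link by link, that the reachability relation across a link obtained by ranging over up-down switchings coincides with the reachability relation obtained from alternating-colour paths across the corresponding $\mathrm{R\&B}$ link. Concretely: for an axiom link the two switchings $\{la, ra\}$ enumerate the two directions in which the unique blue edge can be traversed; for a tensor link the four switchings $\{ul, ur, dl, dr\}$ between them realise every directed traversal among the three ports, matching the fully red-connected $\mathrm{R\&B}$ tensor link; for a seq link the two switchings $\{us, ds\}$ realise exactly the traversals that avoid the premise-to-premise path in one direction, matching the $<$ link; and for a par link $\{up, dp\}$ realise exactly the traversals that never connect the two premises, matching the $\parr$ link. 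Moreover, in both settings any path crossing from one link into another necessarily changes regime at the shared formula vertex: in $\mathrm{R\&B}$ the colour alternates, and in an up-down switching graph the edge incident to a formula vertex on its conclusion side belongs to the link introducing it while those on its premise side belong to the link consuming it. Hence global paths decompose uniquely into these local segments on both sides, and the precise subgraphs of Figure~\ref{fig:switching} must simply be matched one by one against the $\mathrm{R\&B}$ link shapes.

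Next I would translate a global cycle witness on one side into a global cycle witness on the other. Given an alternating elementary cycle $C$ in the $\mathrm{R\&B}$ proof-structure, $C$ passes through each link it meets with a well-defined local traversal; choosing, at each such link, an up-down switching that realises that traversal (and an arbitrary switching elsewhere) yields a switching $s$ whose switching graph $G_s$ contains a directed cycle on the same vertices, so $P_F$ is not a causal proof-net. Conversely, if some $G_s$ contains a directed cycle, take a shortest one; minimality makes it elementary, and reading off at each traversed link which local option $s$ selected exhibits the corresponding legal alternating path segments, which glue (via the colour-alternation-at-vertices observation) into an alternating elementary cycle. Because $F$ is a single formula, $P_F$ has the single conclusion $F$ and the $\mathrm{R\&B}$ proof-structure for $F$ carries no nontrivial series-parallel ordering on conclusions, so there are no extra conclusion edges to reconcile. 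Combining the two directions, $P_F$ is a causal proof-net iff the $\mathrm{R\&B}$ proof-structure for $F$ has no alternating elementary cycle iff $F$ has a pomset proof-net; and since $P_F$ is the unique cut-free causal proof-structure with conclusion $F$ with forced atom linking, this is exactly the stated equivalence. The same link dictionary also covers cut links verbatim, so the argument in fact establishes the full conservative-extension claim at the level of arbitrary (possibly cut-ful) proof-structures over the fragment.

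The step deserving the most care — the main obstacle — is the cycle-translation bookkeeping, because up-down switchings produce \emph{directed} graphs whose failure mode is a directed cycle, whereas an $\mathrm{R\&B}$ witness is an undirected-looking cycle constrained only by colour alternation. One must verify that the directedness built into the switching options is exactly calibrated so that every alternating elementary cycle can be oriented consistently to match some switching (in particular that no genuine cycle forces incompatible local directions at a link it revisits — which cannot happen precisely because an elementary cycle meets each link at most once), and that a minimal directed cycle in $G_s$ has local segments that are all legal alternating paths. Everything else is routine once this dictionary is in place; the appendix already notes the $<$-free instance of it, namely that pomset conservatively extends $\mllmix$.
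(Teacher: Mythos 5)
Your proposal is correct and takes essentially the same route as the paper's proof: a link-by-link dictionary showing that the reachability induced by up-down switchings coincides with that induced by the alternating-colour condition on the corresponding $\mathrm{R\&B}$ link, followed by translating cycle witnesses in both directions. The paper's version is terser, while you additionally spell out the one genuinely delicate point (orienting an alternating elementary cycle consistently against the directed switching graphs, resolved by elementarity), which is a welcome elaboration rather than a departure.
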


\begin{proof}
We will compare causal proof-nets to pomset proof-nets, i.e. bi-coloured graphs with no alternating elementary circuits \cite{Retore1997}. There is an obvious one-to-one correspondence between proof-structures for unit- and first-order-free formulae, operating link-wise over the proof-structure. The proof-net conditions both reduce down to finding cycles through the links with some connectivity constraints within the links themselves. We can show inductively that each link kind has the same connectivity between its premises and conclusions induced by the alternating colour condition for pomset or the up-down switchings for causal proof-nets, and therefore we can transform between any cycle in an up-down switching and an alternating elementary circuit.

For example, the up-down switchings of a $\parr$ link permit a path in either direction between one premise and the conclusion (the choice of switching just determines the direction), but never a path between the two premises. The corresponding bi-coloured link has a central vertex with undirected red edges to the premises and an undirected blue edge to the conclusion. The alternating-colour condition therefore prevents the path between the premises, but permits the undirected path between each premise and the conclusion.

Similarly, $<$ links only forbid connections between the premises in one direction, and $\otimes$ and axiom links permit any connectivity between their components.
\end{proof}

\begin{lemma}[Restatement of Lemma \ref{lemma:switching_decomposition_of_causal_type}]
Given a balanced formula $F$ and an interpretation $\Phi$ which FO-respects $F$, let $(V, E)$ be the syntax tree of $F$ (in negation normal form) and let $\Gamma_{\Phi,\mathbf{E}}^* : V \to \ob{\caus{\catc}}$ be the function:
\begin{equation*}
\Gamma_{\Phi,\mathbf{E}}^* (v) = \begin{cases}
\Phi(F)^* \mathbf{\parr E} & v \text{ is the only vertex } F ::= A | A^* | A^1 | \left( A^1 \right)^* | I \\
\Phi(a)^* & v \text{ is a leaf } a ::= A | A^* | A^1 | \left(A^1\right)^* | I \\
\mathbf{E} & v \text{ is the root } $F$ \\
I & \text{otherwise}
\end{cases}
\end{equation*}
Then $\Phi(F)^* \mathbf{\parr E} \cong \bigcup_{s \in \mathcal{S}_F} \mathbf{Gr}_{G_s}^{\mathbf{\Gamma}_{\Phi,\mathbf{E}}^*}$. In words with $\mathbf{E} = \mathbf{I}$, the state space of $\Phi(F)^*$ coincides with the space generated by graph states over the switchings of the syntax tree of $F$.
\end{lemma}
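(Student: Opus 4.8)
The plan is to prove, by structural induction on the formula, the slightly stronger parametrised statement: for \emph{any} formula $G$ (playing the role of an arbitrary subformula of $F$) and any object $\mathbf{E}$, $\Phi(G)^* \parr \mathbf{E} \cong \bigcup_{s \in \mathcal{S}_G} \mathbf{Gr}_{G_s}^{\Gamma^*_{\Phi,\mathbf{E}}}$, where the union ranges over up-down switchings of the syntax tree of $G$ and $\Gamma^*_{\Phi,\mathbf{E}}$ is as in the statement (root of $G$ carrying $\mathbf{E}$, leaves carrying $\Phi(a)^*$, all other nodes carrying $\mathbf{I}$). The lemma is the case $G = F$, $\mathbf{E} = \mathbf{I}$. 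All isomorphisms produced are structural (unitors, associators, symmetries), so, exactly as in the proof of Proposition~\ref{prop:graph_substitution_in_graph_types}, $\cong$ can be upgraded to equality wherever the carriers already coincide. Before the induction I would dispatch the side conditions: the graph types being unioned are set-compatible since they share the carrier $\bigotimes_v \mathcal{U}(\Gamma^*_{\Phi,\mathbf{E}}(v))$ and the very same local interpretation (hence the same normalisation scalars), and each $G_s$ is acyclic because, as the inductive step makes explicit, it is assembled by graph substitution from acyclic pieces; so every $\mathbf{Gr}_{G_s}^{\Gamma^*_{\Phi,\mathbf{E}}}$ is a genuine object and the union is well-defined.

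The base case is when $G$ is an atom ($A$, $A^*$, $A^1$, $(A^1)^*$ or $I$): the syntax tree is a single vertex, there is exactly one switching, $\Gamma^*_{\Phi,\mathbf{E}}$ sends that vertex to $\Phi(G)^* \parr \mathbf{E}$, and Equation~\ref{eq:singleton_graph_type} makes the right-hand side equal to $\Phi(G)^* \parr \mathbf{E}$ (with $\Phi(I)^* = \mathbf{I}$ and $\mathbf{I}\parr\mathbf{E}\cong\mathbf{E}$ in the unit case). For the inductive step, write $G = G_1 \Box G_2$ with $\Box \in \{\otimes, <, \parr\}$, put $\mathbf{X} = \Phi(G_1)^*$, $\mathbf{Y} = \Phi(G_2)^*$, and observe $\mathcal{S}_G \cong \{\text{options at the root link}\} \times \mathcal{S}_{G_1}\times\mathcal{S}_{G_2}$. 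I would then: (i) push $(-)^*$ through $\Box$ via Definition~\ref{def:causal_formula}, turning $\Phi(G)^*\parr\mathbf{E}$ into a nesting of $\otimes$, $<$ and one outer $\parr$ on $\mathbf{X},\mathbf{Y},\mathbf{E}$; (ii) expand each $\parr$ into $< \cup >$ by Equation~\ref{eq:parr_union_decomposition} and distribute $\cup$ to the top by Equation~\ref{eq:iu_monoidal_distribution}, producing a union of exactly $|\{\text{root options}\}|$ graph types on the three ``super-vertices'' $\mathbf{X},\mathbf{Y},\mathbf{E}$ (for $\Box = \parr$: $\Phi(G)^*\parr\mathbf{E} = (\mathbf{X}\otimes\mathbf{Y})\parr\mathbf{E} = \big((\mathbf{X}\otimes\mathbf{Y})<\mathbf{E}\big)\cup\big(\mathbf{E}<(\mathbf{X}\otimes\mathbf{Y})\big)$, matching $dp,up$; for $\Box=\otimes$ four terms, matching $ul,ur,dl,dr$; for $\Box=<$ two terms, matching $us,ds$); (iii) substitute the inductive hypothesis for $\mathbf{X}$ and $\mathbf{Y}$ with environment $\mathbf{I}$ — legitimate because the roots of $G_1,G_2$ are internal nodes of $G$ and so carry $\mathbf{I}$ in $\Gamma^*_{\Phi,\mathbf{E}}$, making $\Gamma^*_{\Phi,\mathbf{I}}$ on the subtree agree with $\Gamma^*_{\Phi,\mathbf{E}}$ on that subtree — and pull the resulting unions past the outer operators using Equation~\ref{eq:iu_monoidal_distribution} and Proposition~\ref{prop:graph_types_preserve_union_intersection}; (iv) for each of the $|\{\text{root options}\}|\cdot|\mathcal{S}_{G_1}|\cdot|\mathcal{S}_{G_2}|$ terms, apply graph substitution (Proposition~\ref{prop:graph_substitution_in_graph_types}) to flatten the two subtree graph types into the three-vertex skeleton and identify the outcome with $\mathbf{Gr}_{G_s}^{\Gamma^*_{\Phi,\mathbf{E}}}$ for the matching switching $s$.

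The main obstacle is step (iv): checking that the flattened graph — the skeleton fixed by the root option, with $\mathbf{Gr}_{(G_i)_{s_i}}^{\Gamma^*_{\Phi,\mathbf{I}}}$ plugged into the $i$-th super-vertex — is exactly the switching graph $G_s$ modulo elimination of the $\mathbf{I}$-labelled internal nodes (Lemma~\ref{lemma:unit_elimination_from_graph_types}) and transitive closure (Lemma~\ref{lemma:transitive_closure_graph_type_without_pruning}). This reduces to three facts about the subgraphs of Figure~\ref{fig:switching}: (a) the eight root-link options in Definition~\ref{def:switching} realise precisely the eight skeleton graphs on $\{v_1,v_2,v_{\mathbf{E}}\}$ produced in step (ii) — read off the figure; (b) since a syntax tree is a tree, distinct subtrees share only their common ancestor, so $G_s$ has no edge between a $T_1$-vertex and a $T_2$-vertex, and transitive closure creates none either, because any directed path between the subtrees runs through $v_{\mathbf{E}}$, which for a fixed root option is a common source or a common sink of the two subtrees; (c) eliminating the internal $\mathbf{I}$ vertices of a subtree commutes with gluing it into the skeleton, which is Lemma~\ref{lemma:unit_elimination_from_graph_types} together with graph substitution respecting transitive closure. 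Granting (a)--(c), the bijection $s = (c,s_1,s_2)$ between switchings of $G$ and triples is exactly the bijection between the terms of the expanded union and the graph types $\mathbf{Gr}_{G_s}^{\Gamma^*_{\Phi,\mathbf{E}}}$, closing the induction; I expect steps (i)--(iii) and the base case to be routine, with essentially all the content sitting in the verification of (a)--(c) against the explicit switching subgraphs.
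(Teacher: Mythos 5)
Your overall strategy --- induction on the formula, turning each $\parr$ into a union of $<$-types via Equation~\ref{eq:parr_union_decomposition}, distributing unions to the top, applying the inductive hypothesis to the subtrees and reassembling --- is the same as the paper's, and your base case and steps (i)--(iii) are sound. The gap is in step (iv), specifically claim (b). Proposition~\ref{prop:graph_substitution_in_graph_types} connects \emph{every} vertex of the substituted subgraph to every predecessor and successor of the host vertex, so flattening $\mathbf{Gr}_{(T_1)_{s_1}}$ and $\mathbf{Gr}_{(T_2)_{s_2}}$ into a skeleton such as $v_1 \to v_2 \to v_{\mathbf{E}}$ produces the complete bipartite set of edges from $T_1$-vertices to $T_2$-vertices. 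The switching graph $G_s$, by contrast, joins the two subtrees only through the single link edge between their roots, so after transitive closure a leaf $u$ of $T_1$ points at a leaf $v$ of $T_2$ only when $u$ reaches the root of $T_1$ under $s_1$ and the root of $T_2$ reaches $v$ under $s_2$. Claim (b) fails on both counts: the $\otimes$ and $<$ root options do place an edge directly between the two premise vertices (no path ``runs through $v_{\mathbf{E}}$''), and the flattened graph has strictly more cross-edges than $G_s^+$ whenever some leaf fails to reach its subtree root (e.g.\ under the $up$ switching of an inner $\parr$). Concretely, for $F = (A \parr B) < C$ and the switching $(ds, up)$, the standard form $\overline{G_s}$ has no leaf-to-leaf edges, giving $\Phi(A)^* \otimes \Phi(B)^* \otimes \Phi(C)^*$, whereas your flattened graph retains $A \to C$ and $B \to C$ in standard form --- a strictly larger type. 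So your argument establishes $\Phi(F)^* \parr \mathbf{E} \cong \bigcup_s \mathbf{Gr}_{\mathrm{flat}_s}$ for a different family of graphs; the term-by-term identification with $\mathbf{Gr}_{G_s}$ does not hold, and it is the individual $G_s$ that matter downstream, since Theorem~\ref{thrm:causal_logic_characterisation} glues each $G_s$ to the axiom-link switchings and tests that composite for acyclicity. That the two unions happen to coincide is true but is not something your proof shows.

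The repair is exactly the reason the lemma carries the environment parameter $\mathbf{E}$: apply the inductive hypothesis to each premise with $\mathbf{E}$ instantiated to the adjacent edge object(s), as read off from the component typing of a graph state over the three-vertex skeleton (e.g.\ $\Phi(G)^* \parr \mathbf{2}$ for the left premise of the $dr$ option and $\Phi(H)^* \parr (\mathbf{2^*} \parr \mathbf{2})$ for the middle one). The environment then sits at the \emph{root} of the subtree's switching graph, which is precisely where the skeleton edge attaches, so gluing the components back together reproduces the tree-shaped $G_s$ rather than the flattened substitution. This is how the paper's proof proceeds.
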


\begin{proof}
We proceed inductively over the syntax tree of $F$ (in negation normal form).

If $F ::= A | A^* | A^1 | \left( A^1 \right)^* | I$, the syntax tree is a single node with a single trivial switching graph.
\begin{equation}
\Phi(F)^* \mathbf{\parr E} = \Gamma_{\Phi, \mathbf{E}}^*(v) = \mathbf{Gr}_{G_s}^{\Gamma_{\Phi, \mathbf{E}}^*}
\end{equation}

If $F = G \otimes H$:
\begin{equation}
\begin{split}
\Phi(F)^* \mathbf{\parr E} =& \left( \Phi(G) \otimes \Phi(H) \right)^* \mathbf{\parr E} \\
=& \left( \Phi(G)^* \mathbf{\parr} \Phi(H)^* \right) \mathbf{\parr E} \\
=& \left( \left( \Phi(G)^* \mathbf{<} \Phi(H)^* \right) \mathbf{\cup} \left( \Phi(H)^* \mathbf{<} \Phi(G)^* \right) \right) \mathbf{\parr E} \\
=& \left( \Phi(G)^* \mathbf{<} \Phi(H)^* \mathbf{< E} \right) \mathbf{\cup} \left( \Phi(H)^* \mathbf{<} \Phi(G)^* \mathbf{< E} \right) \\ & \mathbf{\cup} \left( \mathbf{E <} \Phi(G)^* \mathbf{<} \Phi(H)^* \right) \mathbf{\cup} \left( \mathbf{E <} \Phi(H)^* \mathbf{<} \Phi(G)^* \right) \\
=& \mathbf{Gr}_{dr}^{\Gamma_\otimes} \cup \mathbf{Gr}_{dl}^{\Gamma_\otimes} \cup \mathbf{Gr}_{ur}^{\Gamma_\otimes} \cup \mathbf{Gr}_{ul}^{\Gamma_\otimes}
\end{split}
\end{equation}
where the graphs $dr, dl, ur, ul$ are given by the corresponding switching options for the $\otimes$ link, and $\Gamma_\otimes$ maps the premise vertices to $\Phi(G)^*$ and $\Phi(H)^*$ and the conclusion vertex to $\mathbf{E}$.

Looking at $\mathbf{Gr}_{dr}^{\Gamma_\otimes}$ as an example, the component typings for the graph states (assuming wlog. the constant edge interpretation $\Delta_{\mathbf{2}}$) are $\Phi(G)^* \mathbf{\parr 2}$, $\Phi(H)^* \mathbf{\parr \left( 2^* \parr 2 \right)}$, and $\mathbf{E \parr 2^*}$. The inductive hypothesis identifies the premise components with affine combinations of graph states over the switching graphs of $G$ and $H$. Identifying these within complete graph states gives $\mathbf{Gr}_{dr}^{\Gamma_\otimes} = \bigcup_{s \in \mathcal{S}_F, s(F) = dr} \mathbf{Gr}_{G_s}^{\Gamma_{\Phi,\mathbf{E}}^*}$ (the leaf and otherwise cases of $\Gamma_{\Phi,\mathbf{E}}^*$ handle the fact that we have moved the $\mathbf{2}$ parts of the premises from the local interpretation of the inductive graph type to edge interpretations in the larger switching graph). Repeating this for each of $dl, ur, ul$ generates all switching graphs over $F$.

The cases for $<$ and $\parr$ are similar with the following decompositions:
\begin{align}
(\Phi(G) < \Phi(H))^* \mathbf{\parr E} =& (\Phi(G)^* < \Phi(H)^* \mathbf{< E}) \\ & \cup (\mathbf{E <} \Phi(G)^* < \Phi(H)^*) \nonumber \\
=& \mathbf{Gr}_{ds}^{\Gamma_<} \cup \mathbf{Gr}_{us}^{\Gamma_<} \\
(\Phi(G) \parr \Phi(H))^* \mathbf{\parr E} =& ((\Phi(G)^* \otimes \Phi(H)^*) \mathbf{< E}) \\ & \cup (\mathbf{E <} (\Phi(G)^* \otimes \Phi(H)^*)) \nonumber \\
=& \mathbf{Gr}_{dp}^{\Gamma_\parr} \cup \mathbf{Gr}_{up}^{\Gamma_\parr}
\end{align}
\end{proof}

\begin{theorem}[Restatement of Theorem \ref{thrm:causal_logic_characterisation}]
Given a balanced formula $F$ and an interpretation $\Phi$ which FO-respects $F$, $\Vdash_{\catc}^\Phi F$ iff $P_F$ is a causal proof-net.
\end{theorem}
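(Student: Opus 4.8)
The plan is to unfold the defining condition $\epsilon_F^\Phi\in c_{\Phi(F)}$ of causal consistency into a statement purely about graph types on both the ``test'' side (states tested against $\epsilon_F^\Phi$) and the ``contraction'' side ($\epsilon_F^\Phi$ itself), and then to read off the proof-net criterion via Lemma~\ref{lemma:transitive_closure_dual_graph_type}. For the test side, Lemma~\ref{lemma:switching_decomposition_of_causal_type} with $\mathbf E=\mathbf I$ exhibits $c_{\Phi(F)^*}$ as the affine closure of $\bigcup_{s\in\mathcal S_F}c_{\mathbf{Gr}_{G_s}^{\Gamma_\Phi^*}}$, the union over all up-down switchings $s$ of the syntax tree of $F$, with $\Gamma_\Phi^*$ sending each atom-leaf $a$ to $\Phi(a)^*$ and every internal vertex to $\mathbf I$. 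Since the condition $\rho\fatsemi\epsilon_F^\Phi=\id_I$ is preserved under affine combinations of $\rho$, de Morgan duality~\eqref{eq:union_intersection_de_morgan} turns $\Vdash_\catc^\Phi F$ into the requirement that $\epsilon_F^\Phi\in c_{(\mathbf{Gr}_{G_s}^{\Gamma_\Phi^*})^*}$ for every switching $s$ of the syntax tree.

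For the contraction side, note that $\epsilon_F^\Phi$, transposed to a state of $\Phi(F)$, is a reorganisation of the tensor of compact cups $\eta_{\Phi(A)}$, one for each atom name, hence a state of the same carrier under the dual leaf-interpretation $\Gamma_\Phi$ (leaf $a\mapsto\Phi(a)$, internal vertices $\mapsto\mathbf I$). For a first-order atom $A^1$, equation~\eqref{eq:fo_parr_is_one_way} puts $\eta_{\Phi(A^1)}$ in $c_{\Phi((A^1)^*)<\Phi(A^1)}$, a single orientation matching the FO-axiom link; for a regular atom $A$, equation~\eqref{eq:parr_union_decomposition} puts $\eta_{\Phi(A)}$ in $c_{(\Phi(A)<\Phi(A)^*)\cup(\Phi(A)>\Phi(A)^*)}$, and as $\Phi$ FO-respects $F$ the object $\Phi(A)$ is neither first-order nor first-order dual, so the maximally-correlated cup lies in neither of the two sequence types (by Lemma~\ref{lemma:binary_channel_encoding_morphism}) and genuinely mixes the two orientations of the matching edge. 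Multiplying out, $\epsilon_F^\Phi\sim\sum_t\lambda_t\,g_t$ in $\sub{\catc}$ with $\sum_t\lambda_t\sim\id_I$, where $t$ ranges over the orientation assignments to the regular axiom links and each $g_t$ is a graph state (local interpretation $\Gamma_\Phi$) over the graph $G_{\mathrm{ax},t}$ on the leaves carrying the axiom edges oriented by $t$ plus the fixed first-order edges. A switching of $P_F$ is exactly a pair $(s,t)$, and matching the link subgraphs of Figure~\ref{fig:switching} against these decompositions shows that its switching graph is acyclic precisely when $(V,\overline{E_s}\cup\overline{E_{\mathrm{ax},t}})$ is acyclic (the $\overline{(\cdot)}$ reduction of Lemma~\ref{lemma:transitive_closure_graph_type} never affects the presence of a cycle, by the same analysis as Lemmas~\ref{lemma:transitive_closure_graph_type_without_pruning} and~\ref{lemma:transitive_graph_edge_pruning}; unit leaves contribute only harmless $\mathbf I$-vertices).

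Soundness is then immediate: if $P_F$ is a causal proof-net, then $(V,\overline{E_s}\cup\overline{E_{\mathrm{ax},t}})$ is acyclic for every $s$ and $t$, so Lemma~\ref{lemma:transitive_closure_dual_graph_type} gives $g_t\in c_{(\mathbf{Gr}_{G_s}^{\Gamma_\Phi^*})^*}$ for all $t$; composing $\epsilon_F^\Phi\sim\sum_t\lambda_t g_t$ with an arbitrary $\rho\in c_{\Phi(F)^*}$, which Lemma~\ref{lemma:switching_decomposition_of_causal_type} presents as an affine combination of graph states over the various $G_s$, and using $\sum_t\lambda_t\sim\id_I$ gives $\epsilon_F^\Phi\fatsemi\rho=\id_I$, i.e.\ $\epsilon_F^\Phi\in c_{\Phi(F)}$. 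For completeness I would argue the contrapositive and compute $\epsilon_F^\Phi\fatsemi\rho$ directly in $\catc$, not via the decomposition: if $P_F$ is not a causal proof-net, fix a switching $(s,t)$ whose switching graph --- equivalently $(V,\overline{E_s}\cup\overline{E_{\mathrm{ax},t}})$ --- contains a cycle $\sigma$, and, following the $\Longrightarrow$ direction of Lemma~\ref{lemma:transitive_closure_dual_graph_type}, which reuses the cyclic-feedback construction of Lemma~\ref{lemma:acyclic_graph_morphisms_flat} with the active-signalling morphisms of Lemma~\ref{lemma:binary_channel_encoding_morphism}, build a graph state $\rho$ over $G_s$ that is a plain separable graph state away from $\sigma$ and carries a feedback loop along $\sigma$. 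Composing with $\epsilon_F^\Phi$, the off-$\sigma$ cups merely pair causal states with causal effects, contributing $\id_I$, while the cups traversed by $\sigma$ relay the loop faithfully (a compact cup transmits in either direction by yanking), so $\epsilon_F^\Phi\fatsemi\rho$ equals the non-unit feedback scalar of Lemma~\ref{lemma:acyclic_graph_morphisms_flat}, giving $\epsilon_F^\Phi\notin c_{(\mathbf{Gr}_{G_s}^{\Gamma_\Phi^*})^*}$ and hence $\not\Vdash_\catc^\Phi F$.

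The main obstacle is the completeness direction: the axiom cups are fixed morphisms rather than switched links, so a priori $\epsilon_F^\Phi$ is only an affine mixture $\sum_t\lambda_t g_t$ over orientations, and one must check that a single cyclic summand genuinely breaks normalisation and is not cancelled by others. The resolution is the one above --- evaluate $\epsilon_F^\Phi\fatsemi\rho$ directly and use that a compact cup relays information in both directions, so the feedback loop along $\sigma$ closes regardless of which orientation a switching nominally assigns to the axiom links it passes. A secondary, more routine point is the synthetic design check that the up-down switchings of Figure~\ref{fig:switching} are chosen so that acyclicity of the switching graph of $P_F$ coincides with acyclicity of $(V,\overline{E_s}\cup\overline{E_{\mathrm{ax},t}})$, which lets the two reformulations be glued.
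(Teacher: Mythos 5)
Your proposal is correct and shares the paper's overall architecture --- decompose $c_{\Phi(F)^*}$ into a union of graph types over the up--down switchings of the syntax tree via Lemma~\ref{lemma:switching_decomposition_of_causal_type}, decompose the axiom contractions via Proposition~\ref{prop:first_order_equations} and Equation~\ref{eq:parr_union_decomposition}, and reduce compatibility to acyclicity via Lemma~\ref{lemma:transitive_closure_dual_graph_type} --- but it diverges on exactly the point you flag as the main obstacle. The paper dissolves that obstacle with a short $*$-autonomy argument that you do not use: every element of $c_{\Phi(A^*)\parr\Phi(A)}$ factors as $\epsilon_{\Phi(A)}^*\fatsemi(\id\otimes f)$, and since $\rho\fatsemi(\id\otimes f^*\otimes\id)$ stays in $c_{\Phi(F)^*}$ by functoriality of the connectives, causality of the single morphism $\epsilon_F^\Phi$ is \emph{equivalent} to the type inclusion $c_{\bigotimes_A\Phi(A^*)\parr\Phi(A)}\subseteq(c_{\Phi(F)^*})^*$. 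The whole theorem then becomes an inclusion of a union of graph types in an intersection of dual graph types, so Lemma~\ref{lemma:transitive_closure_dual_graph_type} applies as a black box in \emph{both} directions and no separate completeness construction is needed. You instead keep $\epsilon_F^\Phi$ pinned as a fixed morphism and, for completeness, re-run the cyclic-feedback counterexample of Lemmas~\ref{lemma:acyclic_graph_morphisms_flat} and~\ref{lemma:transitive_closure_dual_graph_type} directly against it, relying on the (correct) observation that a compact cup relays a signal in either direction, so the loop closes even though the axiom side is not freely chosen. This works, and it neatly sidesteps the cancellation worry about the affine mixture $\sum_t\lambda_t g_t$, but it obliges you to re-verify the feedback construction in a constrained setting (active-signalling components must be available at every leaf the cycle visits, including the first-order ones, with only the $\rho$-side free to choose them) rather than inheriting it wholesale. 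The trade-off is essentially elegance and reuse (the paper) versus a more concrete, self-contained counterexample (yours); if you write yours up, the functoriality reduction is worth knowing as the one-line alternative.
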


\begin{proof}
Recall the definition of $\Vdash_{\catc}^\Phi$ as stating whether or not $\epsilon_F^\Phi$ is causal $\Phi(F)^* \to \mathbf{I}$, i.e. for all causal states $\rho \in c_{\Phi(F)^*}$, $\rho \fatsemi \epsilon_F^\Phi = \id_I$. We decompose $\epsilon_F^\Phi$ into a permutation followed by the individual caps in parallel $\epsilon_F^\Phi = \sigma \fatsemi \bigotimes_{A \in F} \epsilon_{\Phi(A)}$ with each $\epsilon_{\Phi(A)} : \Phi(A^*)^* \otimes \Phi(A)^* \to \mathbf{I}$.

We first claim that $\Vdash_{\catc}^\Phi$ iff $c_{\bigotimes_{A \in F} \Phi(A^*) \parr \Phi(A)} \subseteq \left( c_{\Phi(F)^*} \right)^*$ up to the appropriate permutation of atomic wires. $\Longleftarrow$ is immediate from $\epsilon_{\Phi(A)} \in c_{\Phi(A^*) \parr \Phi(A)}$. For $\Longrightarrow$, we note that any element of $c_{\Phi(A^*) \parr \Phi(A)}$ can be written in the form $\epsilon_{\Phi(A)}^* \fatsemi \left( \id_{\Phi(A^*)} \otimes f \right)$ for some $f : \Phi(A) \to \Phi(A)$ by $*$-autonomy. Given any state $\rho \in c_{\Phi(F)^*}$, we have $\rho \fatsemi \left( \id \otimes f^* \otimes \id \right) \in c_{\Phi(F)^*}$ since every operator in $\Phi(F)^*$ is functorial. We can apply this for each atom $A$ in $F$ to reduce $c_{\bigotimes_{A \in F} \Phi(A^*) \parr \Phi(A)} \subseteq \left( c_{\Phi(F)^*} \right)^*$ to just checking $\epsilon_F^\Phi \in c_{\Phi(F)}$, i.e. $\Vdash_\catc^\Phi$.

Next, Proposition \ref{prop:first_order_equations} gives $\Phi(A^*) \parr \Phi(A) = \Phi(A^*) < \Phi(A)$ iff $\Phi(A)$ is first-order. Since $\Phi$ FO-respects $F$, this happens iff $A$ appears in $F$ as a first-order atom $A^1$. By Equation \ref{eq:seq_as_graph_type}, $\Phi(A^*) < \Phi(A)$ is a graph type matching the only switching option available to first-order axiom links. Similarly, Equation \ref{eq:parr_union_decomposition} gives the decomposition for generic atoms as $\Phi(A^*) \parr \Phi(A) = \left( \Phi(A^*) < \Phi(A) \right) \cup \left( \Phi(A^*) > \Phi(A) \right)$ which match the two switching options for regular axiom links.

Applying Lemma \ref{lemma:switching_decomposition_of_causal_type}, $\left( \Phi(F)^* \right)^*$ is the intersection of the dual graph types for each switching graph over the syntax tree of $F$. Our goal $c_{\bigotimes_{A \in F} \Phi(A^*) \parr \Phi(A)} \subseteq \left( c_{\Phi(F)^*} \right)^*$ is now inclusion of a union within an intersection, meaning we need to quantify over both the switchings of axiom links and the switchings of the syntax tree, i.e. over all switchings of the canonical cut-free proof-structure. For each such switching $s$, we ask the graph type given by the axiom switching (extended to all vertices of the syntax tree with $\mathbf{I}$s via Lemma \ref{lemma:unit_elimination_from_graph_types}) is contained in the dual graph type over the syntax tree $\left( \mathbf{Gr}_{G_s}^{\Gamma_{\Phi, \mathbf{I}}^*} \right)^*$. Finally, Lemma \ref{lemma:transitive_closure_dual_graph_type} equates this with checking acyclicity of the combined graph, i.e. the full switching graph over the entire proof-structure (the fact that this takes the standard forms $\overline{G}$ of graphs does not matter here, since the first-order axiom links will already be directed in the appropriate direction, so any cycle can still be used to disprove causality).
\end{proof}

\begin{corollary}[Restatement of Corollary \ref{corollary:complexity_of_causal_consistency}]
Causal consistency of a balanced formula is $\mathrm{coNP}$-complete.
\end{corollary}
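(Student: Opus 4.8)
The plan is to read off both bounds from the proof-net characterisation of Theorem~\ref{thrm:causal_logic_characterisation}, which turns causal consistency of a balanced formula $F$ into the purely combinatorial question of whether $P_F$ is a causal proof-net, independently of $\catc$ and of the chosen FO-respecting interpretation. For membership in $\mathrm{coNP}$ I would argue that the complement problem is in $\mathrm{NP}$: a certificate that $P_F$ is \emph{not} a causal proof-net is an up-down switching $s \in \mathcal{S}_{P_F}$ together with a cycle in the switching graph $G_s$. Such a switching is just one choice from a constant-size option set at each link (Definition~\ref{def:switching}), so it has size linear in $F$; assembling $G_s$ from $s$ according to Figure~\ref{fig:switching} and testing it for a cycle are both polynomial-time. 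Hence failure of causal consistency has a short, efficiently checkable witness, so causal consistency itself lies in $\mathrm{coNP}$.

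For $\mathrm{coNP}$-hardness I would reduce from verification of a pomset proof-net, which is $\mathrm{coNP}$-complete~\cite{Nguyen2022a} and may be taken over cut-free proof-structures without loss of generality. Given such an instance in the fragment $F, G ::= A \mid A^* \mid F \otimes G \mid F < G \mid F \parr G$, the link-wise correspondence established in the proof of Proposition~\ref{prop:pomset_extension} identifies it (after renaming atoms to encode the axiom matching) with the cut-free causal proof-structure $P_F$ of a balanced formula $F$ in the same fragment, and that correspondence sends falsifying up-down switchings to alternating elementary cycles and back, so $P_F$ is a causal proof-net iff the original is a pomset proof-net. Fixing any additive precausal $\catc$ and any FO-respecting $\Phi$ — for instance $\Phi$ sending every atom of $F$ to $\mathbf{2 \multimap 2}$ in $\caus{\MatRp}$, which is neither first-order nor first-order dual — Theorem~\ref{thrm:causal_logic_characterisation} gives $\Vdash_{\catc}^{\Phi} F$ iff $P_F$ is a causal proof-net. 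The reduction is evidently polynomial, so causal consistency is $\mathrm{coNP}$-hard, and together with the previous paragraph it is $\mathrm{coNP}$-complete.

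I expect the main obstacle to be bookkeeping rather than mathematics: ensuring the pomset verification instance genuinely lands inside the class of \emph{balanced} causal formulae. Concretely, one must check that the atom labels forced by the axiom matching of a cut-free pomset proof-structure can be chosen so that each atom occurs exactly once positively and once negatively, and that cut links are eliminated (or simulated, since a cut link has the same induced connectivity as a tensor link in both settings) so that $P_F$ really is the cut-free proof-structure $P_F$ of a formula to which Theorem~\ref{thrm:causal_logic_characterisation} applies. Both points are routine, and once they are dispatched the result follows directly from Theorem~\ref{thrm:causal_logic_characterisation} and Proposition~\ref{prop:pomset_extension} as stated.
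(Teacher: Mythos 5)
Your proposal is correct and follows essentially the same route as the paper: membership in $\mathrm{coNP}$ via the up-down switching plus cycle as an efficiently verifiable refutation, and hardness by reduction from pomset proof-net verification using the link-wise correspondence of Proposition \ref{prop:pomset_extension} together with Theorem \ref{thrm:causal_logic_characterisation}. The bookkeeping points you flag (balancedness via atom renaming, cut-freeness) are indeed routine and are left implicit in the paper's own argument.
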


\begin{proof}
It is $\mathrm{coNP}$-complete to verify that a proof-structure is a proof-net for pomset, i.e. whether a balanced formula is satisfiable in pomset logic. Since unit- and first-order free causal proof-nets coincide with pomset proof-nets, causal consistency must be at least $\mathrm{coNP}$-hard. It lies in $\mathrm{coNP}$ itself since giving an up-down switching and a cycle gives an efficiently verifiable refutation.
\end{proof}

\begin{proposition}[Restatement of Proposition \ref{prop:equivalent_fragments}]
For any causal proof-structure $P$, the following are equivalent:
\begin{itemize}
\item $P$ is a causal proof-net;
\item $\mathrm{pom}\left(P\right)$ is a causal proof-net;
\item $\mathrm{fo}\left(P\right)$ is a causal proof-net.
\end{itemize}
\end{proposition}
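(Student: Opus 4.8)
The plan is to reduce Proposition \ref{prop:equivalent_fragments} to a purely local statement about each rewrite rule of Definitions \ref{def:pom_encoding} and \ref{def:fo_encoding}, and then induct on the number of rewrites performed. First I would dispatch the bookkeeping. Both rewrite systems terminate: assign to a proof-structure the measure $\#\{\text{unit links}\}+\#\{\text{FO-axiom links}\}$ for $\mathrm{pom}(-)$, and $\#\{\text{unit links}\}+\#\{\text{regular axiom links}\}+\#\{\text{seq links}\}$ for $\mathrm{fo}(-)$; each rule strictly decreases it, since no right-hand side introduces a link whose kind is counted by the relevant measure (the FO-axiom$\mapsto$seq gadget uses only regular axioms and a seq link, the axiom$\mapsto$FO-axiom-pair rule removes a regular axiom, the seq$\mapsto$gadget rule removes a seq link and introduces only $\otimes$, $\parr$ and an FO-axiom, and the two unit rules remove a unit link without creating one). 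Hence $\mathrm{pom}(P)$ and $\mathrm{fo}(P)$ are well-defined, a structure to which no $\mathrm{pom}$-rule (resp. $\mathrm{fo}$-rule) applies is exactly one in the claimed fragment, and a routine inspection of the pictures shows that each rewrite, performed with genuinely fresh atom names, again yields a well-formed causal proof-structure.

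The heart of the argument is a \emph{gadget substitution lemma}: suppose $P'$ is obtained from $P$ by replacing a sub-proof-structure $Q$ (a single link, or the left-hand side of one rewrite, together with the fixed way it plugs into the rest) by a sub-proof-structure $Q'$ with the same interface of boundary wires, and suppose (i) no up-down switching of $Q$ or of $Q'$ induces an internally cyclic switching subgraph, and (ii) for every directed reachability relation on the boundary wires realized by some switching of one of $Q,Q'$, some switching of the other realizes a relation containing it. Then $P$ is a causal proof-net iff $P'$ is. The proof uses that, by Definition \ref{def:proof_net}, a switching graph fails acyclicity iff it contains a directed cycle, hence (without loss of generality) a simple directed cycle; a simple cycle meets each boundary wire of the gadget at most once, so it crosses the gadget at most once, and on that crossing it follows, inside the fixed switching subgraph of $Q$ (resp. $Q'$), a directed path from one boundary wire to another. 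Given a bad switching of $P$ one keeps the switching outside $Q$, uses (ii) to pick a switching of $Q'$ reproducing the crossing path, and reroutes the cycle through $Q'$ (the resulting closed walk in a digraph still forces a directed cycle); (i) is what rules out the degenerate case of a cycle living entirely inside $Q$ that has no counterpart outside. The converse is symmetric.

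Finally I would verify hypotheses (i)--(ii) rule by rule, which is exactly the ``connectivity matches'' computation sketched just before Definition \ref{def:pom_encoding}, obtained by enumerating the finitely many switchings of Figure \ref{fig:switching}: a unit link and each of its two replacement gadgets are cycle-transparent (their unique boundary wire never lies on an internal directed path), so all induce the empty boundary reachability; the FO-axiom$\mapsto$seq gadget is internally acyclic with boundary reachability exactly $\{Y\to X\}$, matching the single $fo$ switching of an FO-axiom link; a pair of oppositely-directed FO-axiom links realizes bidirectional reachability on its two boundary wires, matching what the $la/ra$ switchings of a regular axiom link realize across switchings (and a simple cycle crossing the gadget only once uses just one direction, so no simultaneous-realizability gap appears); and the seq$\mapsto$fo gadget $(F\otimes 2^{*})\parr(2\otimes G)$ with its fresh FO-axiom link reproduces, over its up/down choices, precisely the boundary patterns of the $us/ds$ switchings of a seq link. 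The main obstacle is making the gadget substitution lemma airtight: one must check that the ``at most once through the gadget'' reasoning genuinely reduces matching of reachability to the single-switching level (so condition (ii) as stated suffices, with no need for one switching to realize several patterns at once), and that when a gadget has several boundary wires the transported path glues into the ambient cycle without obstruction; both should go through for these small gadgets but are the steps deserving the most care. As an independent check, the same equivalence can be derived semantically from Theorem \ref{thrm:causal_logic_characterisation}, Proposition \ref{prop:first_order_equations}, and Equations \ref{eq:seq_as_graph_type} and \ref{eq:parr_union_decomposition}, paralleling how Proposition \ref{prop:pomset_extension} is proved.
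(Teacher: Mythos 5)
Your proposal is correct and follows essentially the same route as the paper, whose proof is a two-sentence appeal to exactly the criterion you formalise: that each rewrite preserves the end-point connectivity induced by the up-down switchings and that the replacement gadgets contain no internal cycles. Your gadget-substitution lemma, the single-crossing observation (which justifies matching reachability at the level of individual boundary-to-boundary paths rather than whole relations, resolving the apparent mismatch for the regular-axiom rewrite), and the termination bookkeeping are all details the paper leaves implicit, and they go through as you describe.
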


\begin{proof}
Much like Proposition \ref{prop:pomset_extension}, this amounts to checking that connectivity between the end-points is preserved through each of the selected rewrites when locally quantifying over the up-down switchings, and that the replacements don't contain internal cycles. In each case, this is straightforward.
\end{proof}

\section{Proofs from Section \ref{sec:discussion}}

\begin{lemma}[Restatement of Lemma \ref{lemma:unique_causal_extranatural}]
Given a balanced formula $F$, there is at most one extranatural transformation $\eta_F : \mathcal{I} \to \mathcal{F}_F$ which is extranatural across the pairs of indices given by matching atoms in $F$. When it exists, the term attributed to an interpretation $\Phi : \Var \to \ob{\caus{\catc}}$ is precisely $\left( \epsilon_F^\Phi \right)^*$, i.e. a selection of compact cups connecting matching atoms.
\end{lemma}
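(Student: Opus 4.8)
The plan is to proceed by induction on the structure of the balanced formula $F$ in negation normal form, tracking simultaneously the domain category $\mathrm{dom}(\mathcal{F}_F)$, the extranaturality constraints imposed by matching atoms, and the candidate component $(\epsilon_F^\Phi)^*$. For the base cases, where $F$ is a single atom or $I$, the functor $\mathcal{F}_F$ is either an identity/inclusion functor or a dualisation functor, and $\mathcal{I}$ picks out $\mathbf{I}$; there are no matching-atom pairs to be extranatural across yet (the atoms themselves are ``half'' of a future axiom pair), so the relevant uniqueness is really only witnessed after we combine with the dual occurrence. The key observation to set up is that, since every morphism of $\caus{\catc}$ is underlain by a morphism of $\catc$, a component $\eta_F(\Phi) : \mathbf{I} \to \mathcal{F}_F(\Phi)$ must be some morphism $I \to \mathcal{U}(\Phi(F))$ in $\catc$, and the heart of the argument is to show that the extranaturality equations force this $\catc$-morphism to be exactly the transposed wiring $(\epsilon_F^\Phi)^*$ — the state obtained by bending all the caps $\epsilon_{\Phi(A)}$ into cups connecting each matching atom pair.

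First I would fix, for a given matching pair of atoms $\{A, A^*\}$ (or $\{A^1, (A^1)^*\}$) in $F$, the extranaturality square at that pair. Because $A$ appears positively at one leaf and negatively at another, the two occurrences are handled by $\mathcal{F}$-components of opposite variance; the extranaturality condition at this pair, instantiated at a morphism $f : \Phi(A) \to \mathbf{B}$ (or $f^{op}$), says precisely that $\eta_F$ commutes with ``sliding $f$'' from the positive to the negative leg. Using the additive structure and the basis/cobasis of states \ref{apc:basis}/Proposition \ref{prop:dual_basis} (as the surrounding discussion flags: ``we can use the existence of a basis of states/effects \ref{apc:basis} to identify each term''), I would argue that these sliding equations, taken over a spanning family of such $f$ at every matching pair simultaneously, pin down $\eta_F(\Phi)$ uniquely as a linear map, and that $(\epsilon_F^\Phi)^*$ manifestly satisfies all of them — sliding a morphism across a compact cup is exactly (co)naturality of the identity wire up to transposition, which is the content of Equation \ref{eq:transpose_def} and the ``extranatural_cup'' equations recalled in Section \ref{sec:caus_construction}. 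This simultaneously gives existence of the candidate (as a $\catc$-morphism) and uniqueness; the caveat ``at most one'' / ``when it exists'' in the statement is because $(\epsilon_F^\Phi)^*$ need not be a \emph{causal} state, i.e. need not live in $c_{\mathcal{F}_F(\Phi)}$ — that is exactly what $\Vdash_\catc^\Phi F$ decides, and is addressed in Proposition \ref{prop:equivalent_defs_of_causal_consistency}, not here.

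The inductive step for $F = G \Box H$ with $\Box \in \{\otimes, <, \parr\}$ then amounts to: a component of an extranatural transformation into $(\mathcal{F}_G \times \mathcal{F}_H) \fatsemi \mathbf{\Box}$ is, after forgetting to $\catc$, a morphism $I \to \mathcal{U}(\Phi(G)) \otimes \mathcal{U}(\Phi(H))$, and the matching-atom pairs split into those internal to $G$, those internal to $H$, and those straddling the two (which exist since $F$ is balanced and each atom occurs once positively and once negatively — a straddling pair has its two legs in the two different subformulae). The internal pairs, by the induction hypothesis applied ``under'' the $\Box$-component together with functoriality of $\mathbf{\Box}$, fix the behaviour on the $G$- and $H$-parts; the straddling pairs glue them, and again the only solution is the corresponding cup, yielding $(\epsilon_F^\Phi)^*$ globally by associativity/coherence of $\otimes$ in $\catc$. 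The main obstacle I anticipate is bookkeeping the extranaturality indices correctly through the product domain $\mathrm{dom}(\mathcal{F}_G) \times \mathrm{dom}(\mathcal{F}_H)$ — making precise that ``extranatural across the pairs of indices given by matching atoms'' is stable under the inductive decomposition, in particular that a straddling pair really does index a legitimate extranaturality constraint for the composite functor (one leg natural/covariant, one extranatural/contravariant, in the sense of the Definition of extranatural transformation in Section \ref{sec:caus_construction}) — rather than the linear-algebra core, which is a routine ``spanning family of test morphisms forces the map'' argument once the indices are set up. A secondary subtlety is the base case $F = I$: here $\mathcal{F}_I = \mathcal{I}$ and $\eta_F : \mathcal{I} \to \mathcal{I}$, so the unique component is $\id_{\mathbf{I}} = (\epsilon_I^\Phi)^*$ vacuously, which I would note explicitly so the induction is well-founded.
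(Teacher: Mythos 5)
Your overall instinct — that extranaturality plus a tomography argument forces every component to be the cup — is the right one, but the two places where you wave at the details are exactly where the content of the proof lives, and as written both steps fail. First, the claim that the sliding equations ``taken over a spanning family of such $f$ \ldots pin down $\eta_F(\Phi)$ uniquely as a linear map'' is not true on its own: the extranaturality equations are homogeneous, relating components at different indices to one another, so for instance the family $\lambda\cdot(\epsilon_F^\Phi)^*$ satisfies every sliding equation in $\catc$ for any scalar $\lambda$. What breaks the degeneracy is an anchor: the component at the all-$\mathbf{I}$ interpretation is a morphism $\mathbf{I}\to\mathbf{I}$ in $\caus{\catc}$ and hence must be $\id_I$. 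You notice this for the base case $F=I$ but never feed it into the argument for a general $\Phi$. Second, ``spanning family'' hides a real obstruction: to determine the component at $\Phi$ by testing its $A$-legs, you need a \emph{separating} family of causal morphisms at $\Phi(A)$, and for a generic object the causal states (or effects) do not span the carrier — think of $\Phi(A)=\mathbf{B^1\otimes C^1}$, whose non-signalling states span only a proper subspace, or of a first-order object, which has only one causal effect. The paper's proof supplies precisely the two missing moves: it first slides the causal identity $\id_A:\mathbf{A}\to\mathbf{A^1}$ across the matching pair to show every component equals (as a $\catc$-morphism) the component at a first-order interpretation, where \ref{apc:basis} guarantees a separating causal basis of states; it then slides those basis states to reduce everything to the scalar component $\id_I$ and recovers the cup by local tomography.

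Separately, the induction on the syntax tree is structurally unsound as set up: the lemma quantifies over \emph{balanced} formulae, but the subformulae $G$, $H$ of $F=G\Box H$ are generally not balanced (that is what your ``straddling pairs'' are), so the induction hypothesis cannot be applied to them, and there is no evident way to restrict an extranatural transformation $\mathcal{I}\to\mathcal{F}_F$ to one into $\mathcal{F}_G$. The paper avoids induction on $F$ entirely: the argument is a single global one, processing one atom index at a time via the two extranaturality moves above, and the formula structure plays no role beyond fixing which indices exist and their variances. I would recommend dropping the induction and rebuilding your argument around the chain $\Phi(A)\rightsquigarrow\mathbf{A^1}\rightsquigarrow\mathbf{I}$ with the unique causal scalar as the anchor.
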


\begin{proof}
Consider an arbitrary extranatural transformation $\eta_F$ of this kind and pick an arbitrary component of it. For each object $\mathbf{A}$ in the component index, it embeds into some first-order object $\mathbf{A^1}$ by flatness (the causal effects of $\mathbf{A}$ includes the unique causal effect of $\mathbf{A^1}$). This embedding is witnessed by the identity from $\catc$, $\id_A : \mathbf{A} \to \mathbf{A^1}$. Extranaturality applies for all causal morphisms:
\begin{equation}
\tikzfig{causal_extranatural0} = \tikzfig{causal_extranatural1}
\end{equation}
Doing this for each variable index, we find that all components of $\eta_F$ are fixed by just the first-order ones.

Let $\left\{ \rho_i^A \right\}_i$ be the causal basis for $A$ from \ref{apc:basis}. Each $\rho_i^A$ is a causal state of $\mathbf{A^1}$ up to a scalar, so we can apply extranaturality again.
\begin{equation}
\tikzfig{causal_extranatural_basis0} = \tikzfig{causal_extranatural_basis1}
\end{equation}
Again, repeating this for each variable index reduces it to the basis state and the scalar term of $\eta_F$ which (up to unitors) must be $\id_I$, the only causal morphism $\mathbf{I} \to \mathbf{I}$. Doing this for each choice of basis states allows us to completely identify our original arbitrary component of $\eta_F$ by local tomography, seeing that it acts identically to just a compact cup for each index.
\end{proof}

\begin{proposition}[Restatement of Proposition \ref{prop:equivalent_defs_of_causal_consistency}]
For any balanced formula $F$, $\Vdash_\catc F$ iff for all interpretations $\Phi$ which FO-respect $F$ we have $\Vdash_\catc^\Phi F$, i.e. iff $P_F$ is a causal proof-net.
\end{proposition}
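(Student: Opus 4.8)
\emph{Plan.} The statement packages three conditions whose pairwise equivalences split cleanly. The equivalence of ``$\Vdash_\catc^\Phi F$ for every FO-respecting $\Phi$'' with ``$P_F$ is a causal proof-net'' is exactly Theorem \ref{thrm:causal_logic_characterisation}, so the only new work is to link these to $\Vdash_\catc F$. The tool for that is Lemma \ref{lemma:unique_causal_extranatural}: an extranatural transformation $\eta_F : \mathcal{I} \to \mathcal{F}_F$ of the prescribed shape, if it exists, is unique, and its component at an interpretation $\Phi$ is forced to be the wiring map $\left(\epsilon_F^\Phi\right)^*$ assembled from compact cups between matching atoms.

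\emph{Forward direction.} Suppose $\eta_F$ exists and fix an FO-respecting $\Phi$. Since first-order atoms are mapped to first-order objects, the restriction of $\Phi$ to the free variables of $F$ is among the objects of $\mathrm{dom}(\mathcal{F}_F)$, so $\eta_F$ has a component there, which by Lemma \ref{lemma:unique_causal_extranatural} equals $\left(\epsilon_F^\Phi\right)^*$. As a morphism of $\caus{\catc}$ it is causal, i.e. $\left(\epsilon_F^\Phi\right)^* \in c_{\Phi(F)}$; under the compact-closure identification this is $\epsilon_F^\Phi \in c_{\Phi(F)}$, i.e. $\Vdash_\catc^\Phi F$.

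\emph{Converse.} The plan is to build $\eta_F$ by declaring its component at each object $\Phi$ of $\mathrm{dom}(\mathcal{F}_F)$ to be $\left(\epsilon_F^\Phi\right)^*$. Three things must be checked. (i) These candidates satisfy every naturality/extranaturality square: each such square, for a morphism acting at an atom, is an instance of the yanking equations (sliding the morphism, or its transpose, around the cup joining the two occurrences of that atom), so it holds already in $\catc$. (ii) Each candidate is actually a morphism of $\caus{\catc}$, i.e. $\left(\epsilon_F^\Phi\right)^*$ is causal. For FO-respecting $\Phi$ this is the hypothesis, by the transposition above. For the remaining objects of $\mathrm{dom}(\mathcal{F}_F)$ — those sending some regular atom to a first-order, first-order dual, or unit object — we invoke the observation recorded in the discussion after Theorem \ref{thrm:causal_logic_characterisation} that the proof-net criterion for $P_F$ is equivalent to $\epsilon_F^\Phi$ being causal for \emph{every} interpretation that respects first-order atoms: the proof of Theorem \ref{thrm:causal_logic_characterisation} reduces causality of $\epsilon_F^\Phi$ to acyclicity of the switching graphs of $P_F$ equipped with a local interpretation read off $\Phi$, and by Lemmas \ref{lemma:transitive_closure_graph_type} and \ref{lemma:transitive_closure_dual_graph_type} making a local object more degenerate only deletes edges from the relevant standard-form graphs, so acyclicity — hence causality — is inherited from the FO-respecting case. (iii) Uniqueness of $\eta_F$ is Lemma \ref{lemma:unique_causal_extranatural} again, so the construction yields ``the'' extranatural transformation, and the ``i.e.'' clause then follows from Theorem \ref{thrm:causal_logic_characterisation}.

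\emph{Main obstacle.} I expect step (ii) for the non-FO-respecting objects of $\mathrm{dom}(\mathcal{F}_F)$ to be the delicate point: one must confirm that the edge-pruning argument behind Lemma \ref{lemma:transitive_graph_edge_pruning} applies uniformly across \emph{all} switching graphs of the proof-structure — including the directed FO-axiom links and both switchings of each regular axiom link — so that degenerating a regular atom can never create a cycle where there was none. Granting that, assembling $\eta_F$ from the maps $\left(\epsilon_F^\Phi\right)^*$ is routine, and the chain $\Vdash_\catc F \Leftrightarrow \bigl(\forall\ \text{FO-respecting}\ \Phi.\ \Vdash_\catc^\Phi F\bigr) \Leftrightarrow P_F\ \text{a causal proof-net}$ is complete.
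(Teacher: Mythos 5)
Your proof is correct and follows the same route as the paper's: Lemma \ref{lemma:unique_causal_extranatural} forces every component of a candidate $\eta_F$ to be $\left(\epsilon_F^\Phi\right)^*$, extranaturality of these candidates is automatic by sliding morphisms around the cups in $\catc$, and Theorem \ref{thrm:causal_logic_characterisation} supplies the link to proof-nets. The one place you go beyond the paper's own (very terse) proof is your step (ii): the paper simply asserts that the only remaining question is whether each $\epsilon_F^\Phi$ is causal and then cites Theorem \ref{thrm:causal_logic_characterisation}, leaving implicit the fact that the objects of $\mathrm{dom}(\mathcal{F}_F)$ include interpretations that are \emph{not} FO-respecting (a regular atom may be sent to a first-order, first-order dual, or unit object), for which $\Vdash_\catc^\Phi F$ as literally defined does not apply. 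Your resolution --- that degenerating a regular atom only deletes edges from the standard-form graphs $\overline{G}$ of Lemma \ref{lemma:transitive_closure_graph_type}, so the acyclicity established by the proof-net criterion in the FO-respecting case is inherited by every component of the domain --- is exactly the right argument, and it is the content of the remark following Theorem \ref{thrm:causal_logic_characterisation} that the proof-net criterion is equivalent to quantifying over \emph{all} interpretations once only the first-order atoms are constrained. Concretely, the pruning conditions $\left| c_{\Gamma(u)}^* \right| > 1$ and $\left| c_{\Gamma(v)} \right| > 1$ can only fail more often as local objects become more degenerate, and a union of subgraphs of an acyclic graph is acyclic, so your ``main obstacle'' is real but your treatment of it is sound.
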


\begin{proof}
By Lemma \ref{lemma:unique_causal_extranatural}, $\Vdash_\catc F$ iff we can build an extranatural transformation where each component (specified by an interpretation function $\Phi$) is given by a cup connecting the pairs of matching variables from $F$, i.e. each such term from $\catc$ is causal $\mathbf{I} \to \Phi(F)$. This exactly describes the transpose of $\epsilon_F^\Phi$ and so occurs iff $\epsilon_F^\Phi : \Phi(F)^* \to \mathbf{I}$, i.e. iff $\Vdash_\catc^\Phi F$. Theorem \ref{thrm:causal_logic_characterisation} then connects this to causal proof-nets.
\end{proof}

\end{document}